\newtheorem{theorem}{Theorem}
\newtheorem{lemma}[theorem]{Lemma}
\newtheorem{corollary}[theorem]{Corollary}
\newtheorem{proposition}[theorem]{Proposition}
\newtheorem{definition}[theorem]{Definition}
\newcommand{\BibTeX}{B\kern-.05em{\sc i\kern-.025em b}\kern-.08em\TeX}
\theoremstyle{definition}
\newcommand{\NN}{\mathbb{N}}
\newcommand{\bigO}{\mathcal{O}}
\newcommand{\sminst}{$\mathcal{I}=(U,W,\mathcal{P})$}
\newtheoremstyle{probenv}%
{0pt}
{0em}
{\hangindent=\parindent}
{}
{\itshape}
{{\normalfont:}}
{.5em}
{}
\theoremstyle{probenv}
\newcommandx{\problemdef}[6][3=Input,5=Question]{
	\begingroup
	\par\noindent\nopagebreak[4]
	\colorbox{gray!17!white}{\textsc{#1}}\nopagebreak[4]\nopagebreak[4]
	\par\noindent\hangindent=\parindent\textbf{#3}:  #4\nopagebreak[4]
	\par\noindent\hangindent=\parindent\textbf{#5}:  #6
	\par\smallskip
	\endgroup
}
\DeclareMathOperator{\rk}{rk}
\pgfplotsset{compat = 1.3}
\title{Worst- and Average-Case Robustness of Stable Matchings:\\ (Counting) Complexity and Experiments}
\author{
	Kimon Boehmer \\ Universit\'e Paris-Saclay \\ France \and 
	Niclas Boehmer \\ Hasso Plattner Institute\\ University of Potsdam \\ Germany
}
\begin{document}
	
	\maketitle

\begin{abstract}
	Focusing on the bipartite \textsc{Stable Marriage} problem, we investigate  different robustness measures related to stable matchings. 
We analyze the computational complexity of computing them and analyze their behavior in extensive experiments  on synthetic instances. 
For instance,
we examine whether a stable matching is guaranteed to remain stable if a given number of adversarial swaps in the agent's preferences are performed and the probability of stability when applying swaps uniformly at random. 
Our results reveal that stable matchings in our synthetic data are highly unrobust to adversarial swaps, whereas the average-case view presents a more nuanced and informative~picture. 
\end{abstract}

	\begin{table*}[t]
	\centering
	\small
	\resizebox{\textwidth}{!}{\begin{tabular}{r|cc|cc|cc|cc}
			& \multicolumn{2}{c|}{Matching} & \multicolumn{2}{c|}{Matching (bps)}  & \multicolumn{2}{|c}{Pair} & \multicolumn{2}{|c}{Agent} \\[1pt]
			& decision & counting & decision & counting & decision & counting & decision & counting \\
			\midrule
			\textsc{Swap} & P (Ob. \ref{s:1})  & \#P-h. (Th. \ref{s:6}) & NP-h. (Th. \ref{s:2})  & $-$ & NP-h. (Pr. \ref{s:3}) & $-$ & NP-h. (Pr. \ref{s:5}) & $-$ \\[1pt] 
			\midrule 
			\textsc{Delete}  & P (Ob. \ref{s:1}) & \#P-h. $(\dagger)$  & NP-h. (Th. \ref{s:2}) & $-$ & ? & \#P-h. (Th. \ref{s:7}) & P (Ob. \ref{s:4}) & \#P-h. (Th. \ref{s:7}) \\ 
	\end{tabular}}
	\caption{Overview of our results. ``Matching (bps)'' stands for the problem to create a given number of blocking pairs for a given matching using a limited number of changes. The result marked with $(\dagger)$ follows from Proposition 6 of \protect\citet{DBLP:conf/sagt/BoehmerBHN20}. }\label{ov:overview}
\end{table*}

\section{Introduction}
In two-sided stable matching problems, there are two sets of agents with each agent having preferences over the agents from the other set. 
The goal is to find a stable matching of agents from one side to agents from the other side, i.e., a matching where no pair of agents prefer each other to their current partner.  
Since their introduction by \citet{gale1962college}, such problems have been studied extensively in economics and computer science (see the survey of \citet{DBLP:books/ws/Manlove13}) and many real-world applications including online dating \cite{hitsch2010matching} and the assignment of students to schools \cite{abdulkadirouglu2005boston} or of children to daycare places \cite{kennes2011daycare} have been identified.  
Importantly, in many applications, agents remain matched for a longer period of time, over which their preferences might change as they learn more about their current match and other
options (as witnessed, e.g., by students
switching colleges or roommates). 
This observation already motivated different lines of research, for instance, the study of \emph{finding} robust stable matchings, that are, matchings which remain stable even if some changes are performed \cite{DBLP:journals/teco/ChenSS21,DBLP:conf/esa/MaiV18,DBLP:journals/corr/abs-2304-02590}. 

We contribute  a new perspective on robustness and stable matchings. Instead of computing different types of robust stable matchings, we focus on \emph{quantifying} the robustness of a given stable matching. 
In addition to making matching markets more transparent and predictable, we see multiple possible use cases for our robustness measures especially for market organizers. 
They can for instance be used to make an informed decision between different proposed stable matchings or more generally might serve as an additional criterion to decide between different matching algorithms. 
Moreover, if a matching is detected to be very non-robust, the organizers of the matching market can initiate some countermeasures to prevent the possibly high costs attached to changing a once-implemented matching to reestablish stability. For instance, they can check whether preferences were elicited correctly or ask agents to reevaluate their preferences after assisting them with further guidance or information. 

A first measure of the robustness of a stable matching is via the study of destructive bribery \cite{DBLP:conf/atal/ShiryaevYE13}:  
The idea is to compute the minimum number of changes (of a certain type) that need to be applied to the instance such that the matching becomes unstable.
This measure gives us a worst-case guarantee on the stability of our matching.
However, this measure disregards that changes in the real world will not be adversarial.
This is why we additionally study the robustness of stable matchings to random noise:
The idea is to compute the probability that the given matching remains stable if we apply a given number $k$ of changes uniformly at random. 
Observing how quickly this probability decreases with increasing $k$ gives us an estimate for the average-case robustness of the matching. 
While these two approaches are conceptually different, they are computationally closely related: Computing the probabilities in our second measure can be done by first counting the number of instances where $k$ changes have been applied and the given matching is stable and then dividing this number by the total number of instances at this distance. 
Consequently, computing our average-case measure requires solving the counting analog of the computational problem underlying our worst-case measure. 

\paragraph{Our Contributions.} 
We contribute a novel average-case perspective to the study of robustness and stable matchings and further explore the previously mentioned worst-case analog. 
In addition to studying the robustness of matchings, we also pioneer the study of the robustness of different local configurations such as whether an agent pair can be included in a stable matching or whether an agent is assigned a partner.
We focus on measuring robustness against swaps in the agent's preferences or the deletion of agents.

In \Cref{com:dec}, we analyze the computational complexity of eight decision problems resulting from our two change types and four goals (see \Cref{ov:overview} for an overview). 
The resulting complexity picture is mixed, as the complexity decisively depends on the type of change and the assessed object. 
Subsequently, in \Cref{com:count}, for all variants for which we did not show hardness in the decision setting, we present involved reductions showing the respective counting problem to be \#P-hard.\footnote{\#P is the counting analog of NP. One consequence of this is that a polynomial-time algorithm for a \#P-hard problem implies a polynomial-time algorithm for all problems in NP.} The \#P-hardness of the counting problems stand in contrast with simple polynomial-time algorithms for the decision variants and require intricate reductions needing new ideas (as counting problems have been rarely studied in the matching under preferences literature).
We conclude the section by presenting some approximation algorithms for the counting problems.

In \Cref{sub:exp}, customizing the large and diverse synthetic dataset introduced by \citet{DBLP:journals/corr/abs-2208-04041}, we perform extensive experiments related to the stability of stable matchings and pairs in case swaps in the agent's preferences are performed, thereby adding a new empirical component to the almost purely theoretical literature on robustness in stable matchings.
Among others, we find that in our dataset almost all stable matchings can be made unstable by performing a single swap. 
To measure matching's robustness to random swaps, motivated by our intractability results from \Cref{com:dec,com:count}, we explore a sampling-based approach building upon the popular Mallows noise model \cite{mal:j:mallows}. 
It turns out that stable matchings generally have a remarkably low robustness to random noise, yet the degree of the non-robustness significantly varies between instances; even within a single instance, different initially stable matchings may have a noticeably different average-case robustness. For instance, matchings produced by the popular Gale-Shapley algorithm tend to be less robust than so-called summed-rank minimizing stable matchings.
We also present a heuristic to measure matching's average-case robustness which builds upon the number of pairs that are close to being blocking, and demonstrate that it is of excellent quality in practice. 
Lastly, we observe that stable pairs are generally much more robust than stable matchings.  

The (full) proofs of all statements and many experimental details are deferred to the appendix. The code for our experiments can be found at github.com/n-boehmer/robustness-of-matchings. 

\paragraph{Related Work.}\label{sec:RW}
Previous work on the robustness of stable matchings predominantly focused on computing matchings that are guaranteed to remain stable even if a given number of changes are performed \cite{DBLP:journals/teco/ChenSS21,DBLP:conf/esa/MaiV18,DBLP:conf/fsttcs/GangamMRV22,DBLP:journals/corr/abs-2304-02590} or for which stability can be easily reestablished by changing only a few pairs in the matching \cite{DBLP:conf/ijcai/Genc0OS17,DBLP:conf/aaai/Genc0OS17,DBLP:journals/tcs/GencSSO19,DBLP:conf/fsttcs/GajulapalliLMV20}.  In contrast, our work proposes and analyzes different ways to \emph{quantify} the robustness of a given stable matching. As a result, while our paper shares a similar motivation, it is technically quite different. 

From a technical perspective, closest to our work are the papers of \citet{DBLP:journals/tcs/0001BFGHMR22,DBLP:journals/algorithmica/AzizBGHMR20}  and \citet{DBLP:conf/sagt/BoehmerBHN20}. 
Related to the worst-case robustness of matchings, \citet{DBLP:conf/sagt/BoehmerBHN20} initiated the study of constructive bribery problems in the stable matching literature, i.e., problems asking whether a matching can be made stable by performing a given number of changes. 
\citet{DBLP:journals/tcs/EibenGNRWY23} and \citet{DBLP:journals/corr/abs-2204-13485} extended their studies by analyzing related problems connected to ensuring  the existence of a stable matching with some desired property. 
Notably, all three of these papers focused purely on decision problems and did not analyze matching's robustness. 
Related to the average-case robustness of matchings, \citet{DBLP:journals/tcs/0001BFGHMR22,DBLP:journals/algorithmica/AzizBGHMR20} analyzed different problems occurring when agent's preferences are uncertain. 
For instance, they consider the problem of computing the probability that a given matching is stable assuming that each agent provides a probability distribution over preference lists or the full preference profile is drawn from some probability distribution. 
Our \textsc{\#Matching-Swap-Robustness} problem~(see~\Cref{ch:prelims}) is related to a special case of the latter problem, where we draw a preference profile uniformly at random from the set of all preference profiles at a given distance from some profile. 
However, a formal polynomial-time reduction cannot be established, as there are exponentially many profiles in the support of this distribution. 

In addition to finding robust stable matchings, there are also multiple other lines of work motivated by the observation that agent's preferences change over time. 
For instance, in the study of dynamic stable matchings the goal is usually to adapt classic stability notions to dynamic settings~\cite{DBLP:conf/atal/BoehmerN21,ALO20,baccara2020optimal,DBLP:journals/geb/DamianoL05,DBLP:journals/corr/abs-1906.11391,DBLP:journals/corr/abs-2007-03794}.
Moreover, various works have studied problems related to minimally adapting stable matchings to reestablish stability after changes occurred  
\cite{DBLP:conf/fsttcs/GajulapalliLMV20,Feigenbaum17,DBLP:conf/icalp/BhattacharyaHHK15,DBLP:conf/approx/KanadeLM16,uschanged,DBLP:conf/aaai/BredereckCKLN20}.

While the idea of using bribery for robustness is new in the context of matching under preferences, there are numerous works on bribery for quantifying robustness in the voting \cite{DBLP:journals/jair/FaliszewskiHH09,DBLP:reference/choice/FaliszewskiR16,DBLP:conf/atal/ShiryaevYE13,DBLP:journals/sncs/BaumeisterH23,DBLP:conf/ijcai/BoehmerBFN21,DBLP:conf/eaamo/BoehmerBFN22}, tournament \cite{DBLP:conf/atal/Doring023,DBLP:journals/ai/BrillSS22} and group identification \cite{DBLP:journals/aamas/BoehmerBKL23} literature . 

\section{Preliminaries}
\label{ch:prelims}

For a given set $S$, let $\mathcal{L}(S)$ denote the set of all strict and complete orders over elements from $S$. 

An instance $\mathcal{I}=(U,W,\mathcal{P})$ of \textsc{Stable Marriage} (SM) is defined by a set  $U=\{m_1,\dots,m_n\}$ of \emph{men} and a set $W=\{w_1,\dots,w_m\}$ of \emph{women}. Each man $m\in U$ is associated with a preference list $\succ_u\in \mathcal{L}(W)$ over women, and each woman $w\in W$ is associated with a preference list $\succ_w\in \mathcal{L}(U)$ over men. The preferences are collected in a \emph{preference profile} $\mathcal{P}=\{\succ_u\in \mathcal{L}(W)\mid u\in U \}\cup \{\succ_w\in \mathcal{L}(U)\mid w\in W\}$. 
We refer to $A=U\cup W$ as the set of \emph{agents}. For $a\in A$, we call $\succ_a$ the preference list of $a$ and say that $a$ \emph{prefers} agent $x$ over $y$ if $x \succ_a y$. We sometimes write $\succ_a$ as $a: a_1 \succ_a a_2 \succ_a a_3 \succ_a \dots$, where the ``$\dots$'' at the end mean that all other agents of opposite gender follow in some arbitrary order.  We write $\rk_b(a)$ for the position that agent $a$ has in the \emph{preference list} $\succ_b$ of agent $b$, i.e., the number of agents $b$ prefers to $a$ plus one. For an agent $a$ and two agents $b$ and $c$ appearing in $\succ_a$, we define the \emph{distance} between $b$ and $c$ (in the preferences of $a$) as $|\rk_a(b)-\rk_a(c)|$.

A \emph{matching} $M$ is a set of pairs $\{u,w\}$ with $ u\in U$ and $w \in W$ such that each agent appears in at most one pair. If an agent is contained in a pair  in $M$, they are \emph{assigned}; otherwise, they are \emph{unassigned}. For a matching $M$ and an agent $a$, $M(a)$ is the agent $a$ is matched to in $M$ if $a$ is assigned; otherwise, we set $M(a):=\bot$. A matching is \emph{complete} if no agent is unassigned. A pair $\{u,w\}$ \emph{blocks} a matching $M$ if 
\begin{enumerate*}[label=(\roman*)]
	\item $u$ prefers $w$ to $M(u)$ or is unassigned, and 
	\item $w$ prefers $m$ to $M(w)$ or is unassigned.
\end{enumerate*}
If a matching does not admit a blocking pair, it is \emph{stable}.
If a pair $\{u,w\}\in U\times W$ appears in some stable matching (in $\mathcal{I}$), then $\{u,w\}$ is a \emph{stable pair} (in $\mathcal{I}$). 
Similarly, if an agent $a\in A$ is assigned in some stable matching (in $\mathcal{I}$), then $a$ is a \emph{stable agent} (in $\mathcal{I}$).
Note that by the Rural Hospitals Theorem \cite{roth1986allocation}, the set of assigned agents is the same in all stable matchings in~$\mathcal{I}$. 

A \emph{swap} operation swaps two neighboring agents in the preference list of one agent.
The \emph{swap distance} between two instances $\mathcal{I}=(U,W,\mathcal{P})$ and $\mathcal{I}'=(U,W,\mathcal{P}')$ is the minimum number of swaps that are needed to transform $\mathcal{P}$ into~$\mathcal{P}'$.
A \emph{delete} operation deletes an agent from the agent set and from the preferences of all other agents.
Instance $\mathcal{I}=(U,W,\mathcal{P})$ is at \emph{deletion distance} $d$ from $\mathcal{I}'=(U',W',\mathcal{P}')$  if $\mathcal{I}'$ can be obtained from $\mathcal{I}$ by deleting $d$ agents. 

We now define our computational problems. In the name of our problems, we first specify the object whose stability we want to measure (i.e., matchings, pairs, or agents) and then the action that we allow (i.e., swaps or deletions).

\newcommand{\dpairexists}{\textsc{Destructive-Pair-$\mathcal{X}$}}
\newcommand{\dexactexists}{\textsc{Destructive-Exact-Exists-$\mathcal{X}$}}
\newcommand{\cagentassigned}{\textsc{Constructive-Agent-Assigned-$\mathcal{X}$}}
\newcommand{\dagentassigned}{\textsc{Destructive-Agent-Assigned-$\mathcal{X}$}}

\problemdef{Matching/Pair/Agent-Swap-Robustness}{dpe}
{An SM instance \sminst, a budget $\ell \in \NN$, and a matching $M$/pair $\{m,w\}\in U\times W$/agent $a\in U\cup W$.}
{Is there an SM instance  $\mathcal{I}'=(U,W,\mathcal{P}')$ at swap distance at most $\ell$ from $\mathcal{I}$ such that $M$/$\{m,w\}$/$a$ is not stable in $\mathcal{I}'$?}
\noindent \textsc{Matching/Pair/Agent-Delete-Robustness} is defined analogously by replacing swap distance with deletion distance. However, for the matching variant we only require that $M\cap (U'\times W')$ is not stable in the instance $\mathcal{I}'=(U',W',\mathcal{P}')$ resulting from the deletion, for the pair variant we require that neither $m$ nor $w$ get deleted, and for the agent variant that $a$ is not deleted.
For all defined decision problems $\mathcal{X}$, in the analogous counting problem $\#\mathcal{X}$, we ask for the number of instances at distance exactly\footnote{The exact constraint here is only for presentation purposes. In fact, the exact and at most variants of the problem can be Turing reduced to each other.} $\ell$ fulfilling the specified~property.

These computational problems can be used to quantify the robustness of matchings, agent pairs, and whether agents are assigned.
For instance, \textsc{Matching-Swap-Robustness} allows for computing the minimum number of swaps that are needed to make a given matching unstable
(in fact, it is easy to see that any stable matching and pair can be made unstable by using at most $n-1$ swaps; see \Cref{app:prelims}).
Similarly, \textsc{\#Matching-Swap-Robustness} can be used to compute the probability that a given matching is stable in case $k$ swaps are performed uniformly at random, by taking the answer to the counting problem and dividing it by the number of instances at swap distance $k$. The latter quantity can be computed in polynomial time using a dynamic program (see \Cref{app:prelims}).

Arguably, maintaing ``perfect'' stability even after preferences change is a quite strong requirement due to the binary nature of this criterion.
This motivates us to quantify the ``degree of instability'' after changes are performed. 
We consider the number of pairs that block a matching as a measure for this. 
To compute this, we face the problem of calculating the (maximum) number of pairs that block a matching when a certain number of changes are performed: 

\problemdef{Blocking Pairs-Swap [Delete]-Robustness}{dpe}
{An SM instance \sminst, a budget $\ell \in \NN$, a matching $M$, and an integer $b\in \NN$.}
{Is there an SM instance  $\mathcal{I}'=(U',W',\mathcal{P}')$ at swap [deletion] distance at most $\ell$ from $\mathcal{I}$ such that $M$ [$M\cap (U'\times W')$] is blocked by at least $b$ pairs in $\mathcal{I}'$?}

Solving the counting version of this problem would allow us to compute the expected ``degree of instability'' as the expected number of pairs blocking a matching after a given number of changes are performed uniformly at random.

\section{Complexity of Decision Variants}\label{com:dec}
We analyze the complexity of our decision problems starting with matchings, then pairs, and lastly agents. 

\paragraph{Stable Matchings.}

Making a given matching unstable is algorithmically straightforward.
For swap, it suffices to iterate over all pairs of agents $\{u,w\}\in U\times W$ that are currently not matched to each other, compute the minimum number of swaps to make this pair blocking (by swapping down $M(u)$ after $w$ in $\succ_u$ and  $M(w)$ after $u$ in $\succ_w$), and return the minimum. 
For delete, we can always make a matching unstable by deleting one woman and one man, as their partners form blocking pairs with each other. Deleting one agent is sufficient in case not all agents are matched to their top choice.
\begin{restatable}{observation}{one}
	\label{s:1}
	\textsc{Matching-Swap/Delete-Robustness} can be solved in $\mathcal{O}((n+m)^2)$ time.
\end{restatable}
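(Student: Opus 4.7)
My plan is to verify the two algorithms sketched in the paragraph preceding the observation. For both variants, destabilizing $M$ is equivalent to creating a blocking pair $\{u,w\}\in (U\times W)\setminus M$, so the algorithm computes, for each candidate pair, the minimum number of swaps (resp.\ deletions) that suffice to make that pair blocking, and compares the overall minimum with $\ell$. Ranks $\rk_a(\cdot)$ for every agent $a$ can be tabulated in $\mathcal{O}((n+m)^2)$ preprocessing time, which dominates the overall runtime.

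For the swap variant, I would establish the closed-form cost $c_u + c_w$ for making a fixed pair $\{u,w\}\notin M$ blocking, where $c_u$ equals $0$ when $u$ is unassigned or already prefers $w$ to $M(u)$, and equals $\rk_u(w)-\rk_u(M(u))$ otherwise (and analogously for $c_w$). The upper bound is realized by bubbling $w$ upward in $\succ_u$ one adjacent swap at a time; the matching lower bound holds because each swap in $\succ_u$ changes the order of only one adjacent pair, so making $w\succ_u M(u)$ requires at least $\rk_u(w)-\rk_u(M(u))$ such swaps, and swaps in $\succ_u$ and $\succ_w$ touch disjoint lists and are therefore additive. Iterating over all $\mathcal{O}(nm)$ candidate pairs, each handled in $\mathcal{O}(1)$ after preprocessing, yields the claimed bound.

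For the delete variant, I would prove a case characterization of the minimum cost. If some agent $u$ prefers an agent $w'\neq M(u)$ to $M(u)$, then stability of $M$ forces $w'$ to be matched with $w'$ preferring $M(w')$ to $u$; deleting $M(w')$ leaves $w'$ unassigned and $\{u,w'\}$ becomes blocking, so one deletion suffices. Otherwise every matched agent sits at their top choice, and a single deletion cannot destabilize $M$ since the remaining matched agents still have their top choice and the one unassigned partner cannot form a blocker with any of them; two deletions, however, always suffice provided $|M|\ge 2$: delete a man and a woman from two distinct matched pairs, leaving their two opposite-side partners both unassigned and hence blocking. The $|M|\le 1$ edge cases can be dispatched by inspection, and all relevant conditions are checkable in $\mathcal{O}((n+m)^2)$ time. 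The main obstacle in writing this up is nothing deep; it is just careful bookkeeping around unassigned agents in the blocking-pair definition and justifying the per-list lower bound for the swap cost.
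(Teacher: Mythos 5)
Your swap argument is essentially the paper's: reduce destabilization to creating the cheapest blocking pair, charge each pair $\{u,w\}\notin M$ the per-list cost $\rk_u(w)-\rk_u(M(u))$ (zero if $u$ is unassigned or already prefers $w$), note that the two lists are disjoint so the costs add, and sweep over all $\mathcal{O}(nm)$ pairs after tabulating ranks. The delete characterization is also the paper's in substance (one deletion suffices iff some agent prefers another agent's partner to their own; otherwise two suffice once $|M|\ge 2$), but as written your ``otherwise'' branch has a concrete hole: you argue that after one deletion the newly unassigned partner cannot block with any \emph{remaining matched} agent, and you never account for agents that were already unassigned in $M$. Take $U=\{m_1,m_2\}$, $W=\{w_1\}$, $M=\{\{m_1,w_1\}\}$ with $w_1$ ranking $m_1$ first: every matched agent has its top choice, so your case split sends this instance to the ``otherwise'' branch and answers no for $\ell=1$, yet deleting $m_1$ leaves $w_1$ and $m_2$ both unassigned, so $\{m_2,w_1\}$ blocks and one deletion does suffice. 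The repair is exactly the bookkeeping you flagged but did not carry out: phrase the one-deletion test as the paper does --- accept iff there exist $a,a'$ with $a'$ preferring $M(a)$ to $M(a')$, where an unassigned $a'$ is deemed to prefer everyone to $\bot$ --- so that every originally unassigned agent triggers the first case and the residual branch genuinely consists of instances in which \emph{all} agents are matched to their top choices, where your two-deletion argument then goes through.
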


If we want to modify the instance to create not only one but a certain number of blocking pairs, then the problem becomes NP-hard for both swap and delete. 
The reason for this is that there can be synergy effects when creating two blocking pairs: Making a pair blocking might become cheaper after we have already made another pair blocking by swapping down an agent's partner in their preferences. This effect allows us to devise reductions from \textsc{Clique} and \textsc{Independent Set}, respectively. In fact, the hardness both holds for the case where we ask for exactly and for at least $b$ blocking pairs.
\begin{restatable}{theorem}{two}
	\label{s:2}
	\textsc{Blocking Pairs-Swap/Delete-Robustness} is NP-complete.
\end{restatable}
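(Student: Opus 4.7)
The plan is to establish membership in NP by guessing the $\ell$ swaps (respectively deletions) and verifying in polynomial time that the modified instance has at least $b$ pairs blocking the (remaining) matching. The bulk of the argument is NP-hardness, which I would prove by two separate reductions mirroring the Clique/Independent-Set hint in the preceding paragraph.

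For the swap variant I would reduce from \textsc{Clique}. Given $(G=(V,E),k)$, for each vertex $v$ I would introduce a matched pair $(m_v,w_v)$, and arrange $m_v$'s preference list so that $w_v$ is on top (keeping $M$ stable) and is followed by the women $\{w_u:u\in N(v)\}$ in some canonical order, symmetrically for $w_v$. Edges of $G$ then correspond to the two potential blocking pairs $\{m_u,w_v\}$ and $\{m_v,w_u\}$. The synergy exploited in the reduction is precisely the one emphasised in the discussion preceding the theorem: a single swap in $m_v$'s list pushes $w_v$ down by one position and thus makes exactly one further neighbour preferred to $w_v$, so $t$ swaps at $m_v$ enable up to $t$ blocking pairs incident to $m_v$. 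For a clique $S$ of size $k$, activating every edge of $G[S]$ costs $k-1$ swaps at each of $m_v$ and $w_v$ for every $v\in S$, i.e.\ $2k(k-1)$ swaps in total, and creates $k(k-1)$ blocking pairs (two per edge). I would set $\ell=2k(k-1)$ and $b=k(k-1)$; the forward direction is then immediate. For the reverse I would use a double-counting argument: every blocking pair needs one swap on its man's side and one on its woman's side, and these resources are only shareable across pairs incident to the same agent, so reaching $b$ within budget $\ell$ forces the set of activated vertices to induce enough edges, hence to be a $k$-clique.

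The main obstacle is making the per-side cost of activating the blocking pairs induced by $S$ equal to \emph{exactly} $|N(v)\cap S|$, independently of which $S$ is chosen: under a naive canonical ordering the cost is governed by the \emph{deepest} $S$-neighbour in $m_v$'s list and can be much larger. I would resolve this by inserting dummy agents, with obedient preferences chosen so that they never themselves form blocking pairs, as padding between the real neighbours, and by tuning $\ell$ and $b$ sufficiently tightly so that any non-clique configuration is strictly sub-optimal. The same construction, with the roles of ``blocks'' and ``exactly'' interchanged, gives hardness of the ``at least $b$'' and ``exactly $b$'' variants alike.

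For the delete variant I would run a dual reduction from \textsc{Independent Set}. Each vertex $v$ is equipped with a fixed bundle of agents whose deletion creates a prescribed number of blocking pairs, while each edge $\{u,v\}$ of $G$ encodes an interference: deleting both endpoints of an edge neutralises some of the blocking pairs that either deletion alone would create. Consequently the number of blocking pairs reachable with $k$ deletions is maximised exactly when the deleted set is an independent set of size $k$, yielding the desired equivalence. The technical crux is again the exact cost/benefit bookkeeping — making sure every edge cancels exactly one new blocking pair, no more and no less — which I would handle by the same dummy-padding trick adapted to the deletion setting.
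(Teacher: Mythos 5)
Your NP-membership argument and your delete reduction are fine: the latter is essentially the paper's own construction (reduce from \textsc{Independent Set}; deleting the woman $w_v$ of a vertex gadget frees $m_v$, who then blocks with a fixed-size set of padding women plus every surviving neighbour-woman $w_u$ with $\{u,v\}\in E$, so each edge inside the deleted set destroys blocking pairs and the target count $b=k(n+2k)$ is met exactly when the deleted vertices form an independent set).

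The swap reduction, however, has a genuine gap that your proposed fix does not close. In your vertex-only encoding, the cost of making $\{m_v,w_u\}$ blocking from $m_v$'s side is the number of positions $w_v$ must be pushed down, i.e.\ $\rk_{m_v}(w_u)-1$, which depends on \emph{where} $w_u$ sits in $m_v$'s canonical ordering of $N(v)$, not on $|N(v)\cap S|$. For a general graph there is no ordering of $N(v)$ under which every $k$-subset $S$ containing $v$ places its members in the first $|N(v)\cap S|$ slots, so the ``cost $k-1$ per activated vertex'' accounting is simply false for most cliques, and the forward direction of your reduction already breaks. Inserting dummies \emph{between} the neighbours only inflates the cost of deep neighbours further; it cannot equalize position-dependent costs, and no tuning of $\ell$ and $b$ recovers a clean iff. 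The paper's way out is to add an \emph{edge agent} $w_e$ for every edge: a vertex man $m_v$ is ``activated'' by one uniform payment (swapping $w_v$ below a fixed block of $p$ dummies and all of his incident edge women, cost exactly $p+n$ regardless of $v$ or of which clique is chosen), and each edge woman $w_e$ is activated by a second uniform payment of $q+2$, after which she blocks with both activated endpoint men. All costs are then constants independent of the chosen subgraph, the budget $\ell=k(p+n)+(q+2)\binom{k}{2}$ is exactly exhausted, and a counting argument over the two sides forces at most $k$ activated vertex men and exactly $\binom{k}{2}$ activated edge women, i.e.\ a $k$-clique. Without some device of this kind (edge gadgets with uniform activation costs), your reverse-direction double-counting also does not go through, because a cheating solution can reach shallow neighbours of many vertices at low cost.
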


\paragraph{Stable Pairs.}
Turning to stable pairs, the simple algorithms for stable matchings can no longer be applied, as there is no unique straightforward way to make a pair unstable. 
In fact, by reducing from the NP-hard constructive problem to make a given pair stable \cite{DBLP:conf/sagt/BoehmerBHN20}, we establish the NP-hardness of \textsc{Pair-Swap-Robustness}.
\begin{restatable}{proposition}{three}
	\label{s:3}\label{thm:dps}
	\textsc{Pair-Swap-Robustness} is NP-complete.
\end{restatable}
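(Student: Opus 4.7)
The plan is to show membership in NP followed by NP-hardness. NP-membership is routine: the certificate is the sequence of at most $\ell$ swaps to perform, and after applying them we can decide in polynomial time whether $\{m,w\}$ remains a stable pair (e.g., via Gale--Shapley to compute the men- and women-optimal stable matchings, followed by standard rotation-enumeration machinery that lists the set of stable pairs).

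For NP-hardness I would reduce from \textsc{Constructive-Pair-Swap} of \citet{DBLP:conf/sagt/BoehmerBHN20}, which asks, given an SM instance together with a pair $\{m^*,w^*\}$ and a budget $\ell$, whether one can perform at most $\ell$ swaps so that $\{m^*,w^*\}$ becomes a stable pair. Given such an instance $(U,W,\mathcal{P})$, I construct a destructive instance $(U',W',\mathcal{P}')$ by adding two new agents $\hat m,\hat w$ together with a small gadget that attaches them to the vicinity of $m^*$ and $w^*$, set the target pair to $\{\hat m,\hat w\}$, keep the same budget $\ell$, and prove the equivalence
\[
\{m^*,w^*\}\text{ can be made stable in }(U,W,\mathcal{P})\text{ with }\le\ell\text{ swaps}\ \Longleftrightarrow\ \{\hat m,\hat w\}\text{ can be made unstable in }(U',W',\mathcal{P}')\text{ with }\le\ell\text{ swaps.}
\]
The gadget must be designed so that (i)~$\{\hat m,\hat w\}$ is a stable pair in $(U',W',\mathcal{P}')$ initially, (ii)~after any sequence of swaps, $\{\hat m,\hat w\}$ remains a stable pair of the modified instance exactly when $\{m^*,w^*\}$ fails to be a stable pair of the correspondingly modified original instance, and (iii)~swaps that touch only $\hat m$, $\hat w$, or the padding agents can never by themselves destabilize $\{\hat m,\hat w\}$. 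Point~(iii) is enforced by placing $\hat m,\hat w$ at extreme positions of every original agent's list and by inserting long blocks of dummy agents in the gadget preferences, so that any cheap perturbation inside the gadget is absorbed without affecting stability; this forces the adversary to spend its entire budget on swaps in the original profile, from which the desired equivalence transfers.

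The main obstacle is engineering the gadget so that the equivalence is tight in \emph{both} directions. The forward direction requires that any original swap sequence making $\{m^*,w^*\}$ stable provably introduces a blocking threat (e.g., a pair $\{\hat m,w^*\}$ or $\{m^*,\hat w\}$ that dominates $\{\hat m,\hat w\}$ in every extension) which removes $\{\hat m,\hat w\}$ from every stable matching of the new instance. The backward direction is more delicate: one must rule out cheap "side-channel" destabilizations of $\{\hat m,\hat w\}$ that do not correspond to stabilizing $\{m^*,w^*\}$. This typically involves arguing structurally, via the rotation poset of $(U',W',\mathcal{P}')$, that exposing $\{\hat m,\hat w\}$ from the stable-pair set necessarily requires the analogous structural change at $\{m^*,w^*\}$ in the original. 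Once the gadget passes both checks, correctness reduces to a blocking-pair bookkeeping argument between stable matchings of the original and the extended instance.
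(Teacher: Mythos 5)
Your high-level plan (NP membership via stable-pair checking, hardness via a reduction from the constructive pair problem of Boehmer et al.) points in the same direction as the paper, but there is a genuine gap at the heart of the hardness argument. You treat the reduction as black-box: given \emph{any} instance of the constructive problem, attach a gadget and claim that making $\{m^*,w^*\}$ stable is \emph{equivalent} to making the new target pair unstable. This equivalence cannot hold for arbitrary instances. The natural coupling (used by the paper) is to add a man $m^{**}$ who ranks $w^*$ first and to place $m^{**}$ immediately after $m^*$ in $w^*$'s list; then every stable matching matches $w^*$ to $m^*$ or to $m^{**}$, and ``$\{m^{**},w^*\}$ is not a stable pair'' is equivalent to ``\emph{every} stable matching contains $\{m^*,w^*\}$''. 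But the constructive problem only asks that \emph{some} stable matching contain $\{m^*,w^*\}$; since $w^*$ may have several stable partners, both $\{m^*,w^*\}$ and $\{m^{**},w^*\}$ can simultaneously be stable pairs, and the forward direction of your equivalence fails. The paper escapes this precisely by \emph{not} reducing black-box: it assumes the constructive instance is the specific one produced by the \textsc{Clique} reduction of Boehmer et al., where a counting argument over the $\binom{k}{2}$ penalizing women and edge men shows that whenever $\{m^*,w^*\}$ is a stable pair, $m^*$ cannot be matched to a penalizing woman in any stable matching, so $w^*$'s stable partner is forced to be $m^*$ alone. Your sketch never supplies a mechanism that plays this role.

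Two further concrete issues. First, you make the target pair $\{\hat m,\hat w\}$ out of two fresh agents, which weakens the coupling even more: you must then arrange that $\hat m$ and $\hat w$ can be matched to each other in a stable matching exactly when $\{m^*,w^*\}$ is not a stable pair, and nothing in your construction ties their fate to $w^*$'s stable partner. The paper instead anchors the target pair at $w^*$ itself. Second, the entire technical content of the proof is the gadget and the two-directional verification; your proposal names the properties the gadget should have (your points (i)--(iii)) but does not construct it, and the padding/``side-channel'' concerns you mention are exactly where the paper spends its effort (it first reduces an \emph{Add}-operation variant to the swap variant using blocks of $\ell+1$ dummy agents so that only designated single swaps are affordable, and only then runs the structural argument on the \textsc{Clique}-reduction instance). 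As written, the proposal is a plan for a proof rather than a proof, and the plan as stated would not go through without restricting to structured source instances.
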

Notably, the analogous \textsc{Constructive-Exists-Delete} problem of making a given pair stable by deleting some agents was shown to be polynomial-time solvable by  \citet{DBLP:conf/sagt/BoehmerBHN20}. However, there seems to be no easy possibility to extend their algorithm to solve our \textsc{Pair-Delete-Robustness} problem. 
Settling the problem's complexity remains an intriguing open question. 

\paragraph{Stable Agents.}
Turning to the problem of making a given agent unstable, we observe the first difference between swap and delete. For delete, it is possible to reduce the problem to the polynomial-time solvable \textsc{Constructive-Exists-Delete} problem:
Assuming that a woman $w^*$ should be made unstable, we add a man $m^*$ that ranks $w^*$ first followed by all other women in an arbitrary ordering and add $m^*$ at the end of the preferences of all women. It is easy to see that stable matchings where $w^*$ is unmatched in the original instance correspond to matchings including $\{m^*,w^*\}$ in the constructed instance. Accordingly, it suffices to compute the number of deletions needed to make $\{w^*,m^*\}$ a stable~pair:
\begin{restatable}{observation}{four} 
	\label{s:4}
	\textsc{Agent-Delete-Robustness} can be solved in $\mathcal{O}(n\cdot m)$ time.
\end{restatable}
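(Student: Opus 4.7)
The plan is to reduce \textsc{Agent-Delete-Robustness} to the \textsc{Constructive-Exists-Delete} problem for a pair, which is known to be solvable in polynomial time by \citet{DBLP:conf/sagt/BoehmerBHN20}. Assume without loss of generality that the target agent is a woman $w^*$ (the case of a man is fully symmetric). By the Rural Hospitals Theorem, $w^*$ fails to be a stable agent after the deletions precisely when $w^*$ is unassigned in every stable matching of the reduced instance, so we want to compute the minimum number of deletions that make $w^*$ unmatched in all stable matchings.

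The reduction is as sketched in the paragraph preceding the observation: construct $\mathcal{I}'$ from $\mathcal{I}$ by introducing a new man $m^*$ whose preference list begins with $w^*$ followed by all other women in arbitrary order, and by appending $m^*$ at the bottom of every woman's preference list. I then want to establish the key correspondence: $w^*$ is unassigned in some (equivalently, every) stable matching of $\mathcal{I}$ if and only if $\{m^*,w^*\}$ is a stable pair in $\mathcal{I}'$. For the forward direction, take a stable matching $M$ of $\mathcal{I}$ in which $w^*$ is unassigned and show that $M'\ceq M\cup\{\{m^*,w^*\}\}$ is stable in $\mathcal{I}'$: no pair $\{m^*,w\}$ with $w\neq w^*$ can block because $m^*$ prefers $w^*$ to all other women, and no pair $\{u,w^*\}$ with $u\neq m^*$ can block because any such $u$ prefers $M(u)$ to $w^*$ (else $\{u,w^*\}$ would already block $M$, using that preferences are complete and $w^*$ was unassigned). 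For the backward direction, given a stable matching $M'$ of $\mathcal{I}'$ containing $\{m^*,w^*\}$, the matching $M'\setminus\{\{m^*,w^*\}\}$ is easily checked to be stable in $\mathcal{I}$ with $w^*$ unassigned; the only non-trivial case involves pairs $\{u,w^*\}$, and since $m^*$ is ranked last by $w^*$, a blocking pair for this restricted matching would also block $M'$, contradicting its stability.

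From this equivalence, the set of deletion sets (avoiding $m^*$ and $w^*$) that make $w^*$ unstable in $\mathcal{I}$ coincides with the set of deletion sets that make $\{m^*,w^*\}$ stable in $\mathcal{I}'$ (note that $m^*$ must be preserved, as otherwise $\{m^*,w^*\}$ cannot be a pair at all, matching the constraint in the destructive agent variant that $w^*$ itself must not be deleted). Thus the minimum budget required is identical in the two formulations, and the decision question with budget $\ell$ reduces to one call of the \textsc{Constructive-Exists-Delete} algorithm on $\mathcal{I}'$.

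The only thing left is the running time. The new instance $\mathcal{I}'$ has size $\mathcal{O}(n+m)$ more than $\mathcal{I}$, and by plugging it into the polynomial-time algorithm of \citet{DBLP:conf/sagt/BoehmerBHN20} (which, when spelled out for this pair-centric constructive deletion problem, performs a computation of Gale--Shapley-type cost), we obtain the claimed $\mathcal{O}(n\cdot m)$ bound. I do not expect any real obstacle here; the subtle point that must be argued carefully is the stability check of $M'$ around $w^*$, which is exactly where placing $m^*$ at the end of every woman's list is essential.
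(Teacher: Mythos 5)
Your reduction is exactly the one the paper uses: adding an agent $a'$ of opposite gender who ranks $a$ first and is ranked last by everyone else, then establishing the bijection between stable matchings leaving $w^*$ unassigned in $\mathcal{I}$ and stable matchings containing $\{m^*,w^*\}$ in $\mathcal{I}'$, with the same two-directional stability check. The argument is correct and matches the paper's proof of Observation \ref{s:4}.
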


For swap, we establish hardness by reducing from one of the NP-hard constructive bribery problems studied by \citet{DBLP:conf/sagt/BoehmerBHN20}:
\begin{restatable}{proposition}{five}
	\label{s:5}
	\textsc{Agent-Swap-Robustness} is NP-complete.
\end{restatable}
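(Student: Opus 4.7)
The plan is to establish NP-hardness by polynomial-time reduction from an NP-hard constructive bribery problem of \citet{DBLP:conf/sagt/BoehmerBHN20}, in the same spirit as the reduction used for Proposition~\ref{s:3}. A natural source is the constructive analog of the agent variant, asking whether, given an agent $a^*$ that is unassigned in every stable matching of $(U,W,\mathcal{P})$ and a budget $\ell$, one can perform $\le\ell$ swaps to make $a^*$ assigned in every stable matching; the constructive pair-stability problem underlying Proposition~\ref{s:3} is an equally viable starting point.

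Given such a constructive instance $(U,W,\mathcal{P},a^*,\ell)$, I would construct an enlarged SM instance $\mathcal{I}'=(U',W',\mathcal{P}')$ by appending a small gadget that introduces a distinguished agent $b^\dagger$ together with a companion partner $w^\dagger$ (and possibly a short chain of filler agents). The preferences are arranged so that $b^\dagger$ is currently stable in $\mathcal{I}'$—assigned to $w^\dagger$ (or a direct consequence thereof) in every stable matching—as long as $a^*$ remains unassigned in the original portion of the instance. The coupling is engineered by placing $a^*$ high in $w^\dagger$'s list and making $w^\dagger$ acceptable to $a^*$, so that as soon as $a^*$ becomes assignable via swaps in $\mathcal{P}$, a blocking-pair argument forces $w^\dagger$ to be drawn away from $b^\dagger$, leaving $b^\dagger$ unassigned in every stable matching of the modified $\mathcal{I}'$.

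The correctness analysis splits into two directions. Forward: any $\le\ell$ swaps witnessing the constructive objective in $\mathcal{I}$ are replayed verbatim in $\mathcal{I}'$; by the gadget's design, the resulting stable matchings leave $b^\dagger$ unmatched. Backward: any budget-respecting set of swaps that destabilizes $b^\dagger$ in $\mathcal{I}'$ can be projected back to a solution of the constructive instance. To rule out shortcut swaps that destabilize $b^\dagger$ purely inside the gadget, I would pad the gadget preferences—e.g.\ by inserting filler women between $w^\dagger$ and the critical positions, or by repeating the key comparisons far apart in the relevant lists—so that any purely gadget-internal route to unmatching $b^\dagger$ costs strictly more than $\ell$ swaps.

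The main obstacle is precisely this gadget construction: simultaneously guaranteeing (i) that $b^\dagger$ is initially stable, (ii) a faithful equivalence between the status of $a^*$ in $\mathcal{I}$ and the status of $b^\dagger$ in $\mathcal{I}'$, and (iii) that the padding makes shortcut swaps infeasible within the budget. Membership in NP is routine: one guesses at most $\ell$ swap operations, applies them to $\mathcal{P}$, and verifies in polynomial time via Gale--Shapley whether there exists a stable matching in which $a$ is unassigned.
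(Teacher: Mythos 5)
Your overall strategy is the paper's: reduce from a constructive bribery problem by appending a designated agent whose assignment status is (supposed to be) anti-correlated with that of the original target, and pad the preference lists with dummy agents so that shortcut swaps inside the gadget exceed the budget. Before getting to the main issue, note one missing ingredient: the constructive source problem you name (make a given agent assigned via at most $\ell$ swaps) is not among the problems proven NP-hard by \citet{DBLP:conf/sagt/BoehmerBHN20}. The paper has to establish its hardness first, via a chain \textsc{Constructive-Pair-Add} $\rightarrow$ \textsc{Constructive-Agent-Add} $\rightarrow$ \textsc{Constructive-Agent-Swap}, each step with its own gadget; your proposal assumes this for free, and starting instead from the pair-stability problem would require you to supply essentially the same bridge from pair stability to agent assignment.

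The more serious gap is that your coupling gadget has its parity inverted. In the chain $a^*\rightarrow w^\dagger\rightarrow b^\dagger$, where $w^\dagger$ ranks $a^*$ above $b^\dagger$ and is acceptable to $a^*$, the blocking-pair analysis runs the opposite way from what you claim: it is when $a^*$ \emph{cannot} be assigned an original partner that $\{a^*,w^\dagger\}$ blocks any matching containing $\{b^\dagger,w^\dagger\}$, so $w^\dagger$ is pulled onto $a^*$ and $b^\dagger$ is stranded; whereas once $a^*$ becomes assignable to an original partner he prefers to $w^\dagger$, the pair $\{a^*,w^\dagger\}$ stops blocking and $w^\dagger$ is \emph{released to} $b^\dagger$. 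Your gadget therefore yields ``$a^*$ assigned $\Leftrightarrow$ $b^\dagger$ assigned,'' i.e., it maps yes-instances of the constructive problem to no-instances of the destructive one, which proves coNP-hardness rather than NP-hardness. The extra hop through $w^\dagger$ is what flips the sign. The paper avoids this by making the designated destructive agent itself the direct fallback partner of the target: it adds a single man $m^*$ whose top choice is $w^*$, inserts $m^*$ into $w^*$'s list just below all original men (and essentially last everywhere else), and adds $|W|-|U|$ filler men $\tilde{m}_1,\dots,\tilde{m}_d$ so that exactly one man is unassigned in every stable matching; then $w^*$ being assignable to an original man strands $m^*$, and $w^*$ being unassignable forces her onto $m^*$. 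To repair your construction you would need to collapse $w^\dagger$ and $b^\dagger$ into a single opposite-gender fallback agent (or insert one further level of indirection), and also handle the side-size bookkeeping that the filler men provide. Your NP-membership argument is fine; by the Rural Hospitals Theorem a single Gale--Shapley run decides whether the agent is assigned.
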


\section{Complexity of Counting Variants}\label{com:count}
For all decision problems for which we have proven NP-hardness, their counting variants are naturally also computationally expensive to solve (in particular, at least as hard as the decision variants). 
We prove the \#P-hardness of all remaining counting problems in this section, starting with the robustness of stable matchings to swaps in the preferences.  Recall that we have observed in Observation~\ref{s:1}  a very simple algorithm for the decision problem, yet the counting version turns out to be computationally intractable:  
\begin{restatable}{theorem}{six}
	\label{s:6}
	\textsc{\#Matching-Swap-Robustness} is \#P-hard.
\end{restatable}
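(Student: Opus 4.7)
Since the decision version admits a trivial polynomial-time algorithm by Observation~\ref{s:1}, the \#P-hardness must be extracted from the exact-distance counting structure. I would prove hardness via a polynomial-time Turing reduction from a \#P-complete problem whose instances naturally encode into the budget parameter $\ell$, for example $\#$Subset-Sum with polynomially bounded weights. The answer is to be recovered by calling the $\#$Matching-Swap-Robustness oracle at polynomially many distinct budgets on a single constructed SM instance.

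Given weights $a_1,\dots,a_n$ and target $T$, I construct the SM instance as a disjoint union of $n$ ``blocker gadgets'' $G_1,\dots,G_n$. Each $G_i$ uses a constant number of agents matched internally by a designated stable matching $M$, and its preferences are rigged so that (i) exactly one non-matching pair $p_i=\{u_i,w_i\}\subseteq G_i$ can be made blocking by internal swaps, with minimum cost exactly $a_i$; (ii) every other potential internal blocking pair costs strictly more than $\ell$ swaps; and (iii) agents of any other gadget $G_j$ sit at the very bottom of every preference list in $G_i$, so no cross-gadget pair can block within budget $\ell$. Let $c_i(k)$ count the local profiles of $G_i$ at internal Kendall-tau distance $k$ in which $p_i$ is blocking, and $n_i(k)$ count those with no internal blocking pair. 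Both are polynomially computable from the standard product formula for the inversion generating function of a permutation. By inclusion-exclusion over the subsets $S\subseteq[n]$ of gadgets whose designated pair blocks, the oracle answer decomposes as
\begin{equation*}
f(\ell)=\sum_{\emptyset\neq S\subseteq[n]}(-1)^{|S|+1}\sum_{k_1+\dots+k_n=\ell}\prod_{i\in S}c_i(k_i)\prod_{i\notin S}n_i(k_i).
\end{equation*}
Since $c_i(k)=0$ for $k<a_i$, this is a polynomial-time computable linear combination of the unknowns $q_T:=|\{S\subseteq[n]:\sum_{i\in S}a_i=T\}|$; querying $f$ at $\mathrm{poly}(n,\max_i a_i)$ budgets and solving the resulting linear system by Gaussian elimination yields every $q_T$, and in particular the desired $\#$Subset-Sum count at $T$.

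The main obstacle is the gadget design. I must guarantee that inside each $G_i$ the minimum cost of \emph{any} blocking pair is exactly $a_i$ and is realized \emph{only} through preference changes involving $p_i$, and that no swap sequence of total length at most $\ell$ can ever create a cross-gadget blocking pair. Both properties are obtained by stretching preference lists with long buffer tails and pushing cross-gadget agents to the very end of every list, but proving sharp lower bounds on the cost of \emph{all} alternative blocking pairs requires careful case analysis, and deriving closed forms for $c_i(k)$ and $n_i(k)$ amounts to counting constrained inversions in small permutations. A secondary technicality is ensuring that the inclusion-exclusion linear system is well-conditioned enough to be solved in polynomial time; choosing the $a_i$ to be distinct and polynomially bounded suffices.
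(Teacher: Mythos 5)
There is a genuine gap, and in fact two related fatal problems. First, your source problem is not \#P-hard: counting the subsets of $\{a_1,\dots,a_n\}$ that sum to $T$ when the weights are polynomially bounded is solvable in polynomial time by the standard dynamic program over partial sums, so a Turing reduction from it proves nothing. You are forced into polynomially bounded weights because the swap budget $\ell$ can never exceed the total number of inversions available in the constructed instance, which is polynomial in the number of agents -- and this is exactly the regime in which \#Subset-Sum becomes easy. Second, and more fundamentally, your gadgets do not interact, which makes the oracle's answer directly computable \emph{without} any oracle. Under your own guarantees (only the designated pair $p_i$ of $G_i$ can ever block, and whether it blocks depends only on the internal profile of $G_i$), the event ``$M$ is stable'' is an intersection of independent per-gadget events. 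Hence the number of profiles at distance exactly $\ell$ where $M$ is unstable equals the total number of profiles at that distance minus the convolution $\sum_{k_1+\cdots+k_n=\ell}\prod_i n_i(k_i)$, and both quantities are polynomial-time computable from the per-gadget inversion generating functions you yourself assume are computable. So $f(\ell)$ contains no hidden unknowns $q_T$, and the linear system you propose to solve is vacuous. Any independent-gadget design with polynomially computable local counts collapses in this way; the hardness has to come from interaction between the parts of the instance.

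That interaction is precisely what the paper's proof engineers. It reduces from the \#P-hard \textsc{\#Bipartite 2-SAT With No Negations} problem: for each variable there is a blocking pair whose resolution forces a choice between modifying the ``positive'' or the ``negative'' literal agent (encoding a truth value), and for each clause there is an additional blocking pair between literal agents of \emph{different} variables, so the per-variable choices constrain one another. The budget is tuned so tightly that each modified agent has exactly one way to spend its allotted swaps (promoting its matched partner to the first position), giving a one-to-one correspondence between satisfying assignments and profiles at distance exactly $\ell$ where $M$ is stable; a duplication of the instance removes the residual ambiguity for literals occurring in no clause. To rescue your plan you would need to replace the numerical subset-sum structure by a source problem whose hardness is combinatorial, and wire the gadgets together so that creating or avoiding a blocking pair cheaply in one gadget depends on what is done in another -- at which point you have essentially rediscovered the paper's construction.
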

\begin{proof}[Proof Sketch]
	We reduce from the \#P-hard \textsc{\#Bipartite 2-SAT With No Negations} problem \cite{provan1983complexity}, where we are given a set $V=U \cup Z$ of variables  and a set $C \subseteq U \times Z$ of clauses, i.e., the formula does not contain any negative literals.
	The task is to count the number of truth assignments of the variables from $V$ such that each clause contains a fulfilled literal.  
	
	In our reduction, for each positive and negative literal, we add a literal man and a literal woman that are matched to each other in the given matching. Moreover, we add a set of $|C|$ dummy men and women. The agent set is
	\begin{align*}
	U'=\{m_v,m_{\bar{v}} \mid v \in V\} \cup \{m_i^d \mid i \in [|C|]\}\\
	W'=\{w_v,w_{\bar{v}}, \mid v \in V\} \cup \{w_i^d \mid i \in [|C|]\}
	\end{align*} 
	and the designated matching is
	\begin{align*}
	M:=&\{\{m_v,w_v\},\{m_{\bar{v}},w_{\bar{v}}\} \mid v \in V\}\\
	& \cup \{\{m_i^d,w_i^d\} \mid i \in [|C|]\}.
	\end{align*} 
	
	We now describe the preferences of the agents, focusing on agents corresponding to variables from $U$ first. 
	We construct the preferences so that matching $M$ is not stable in the initial instance:
	Specifically, for each variable $u\in U$, $m_u$ and $w_{\bar{u}}$ form a blocking pair, which implies that one of the two needs to modify their preferences to resolve the pair. If we modify the preferences of $m_u$, then this means that we set $u$ to true, while modifying $w_{\bar{u}}$ implies setting $u$ to false. 
	To ensure that the induced variable assignment fulfills all clauses, for each clause $\{p,q\}\in C$, we let $\{m_p,w_q\}$ form a blocking pair, implying that one of the involved literal agents needs to modify its preferences.  
	Combining these ideas, the preferences are as follows. 
	Consider a variable $u \in U$ and let $p_1,\dots,p_{c(u)} \in Z$ be all variables that appear together with $u$ in a clause. The preferences for the corresponding agents are as follows:
	\begin{align*}
	& m_u: w_{\bar{u}} \succ  w_{p_1} \succ \dots \succ w_{p_{c(u)}} \succ  \\
	& \qquad \qquad \qquad \qquad \qquad w_1^d \succ \dots \succ w_{|C|-c(u)}^d \succ w_u \succ \dots \\
	& w_{\bar{u}}: m_u \succ m_1^d \succ \dots \succ m_{|C|}^d \succ m_{\bar{u}} \succ \dots & \\
	& w_u: m_u \succ \dots \qquad \qquad m_{\bar{u}}: w_{\bar{u}} \succ \dots 	
	\end{align*}
	For $z \in Z$, the preferences are constructed analogously where the roles of men and women are reversed.	
	The dummy men have preferences $m_i^d: w_i^d \succ \dots$ and the dummy women have preferences $w_i^d: m_i^d \succ \dots$ for $i \in [|C|]$. 
	
	Now, a swap budget of $\ell:=|V|(|C|+1)$ is exactly sufficient to resolve the blocking pair for each variable  by modifying the preferences of one of the two literal agents by executing $|C|+1$ swaps to swap their partner in $M$ in the first position. Doing so, we also need to resolve the blocking pairs induced by the clauses, which ensures that the induced variable assignment satisfies all clauses. 
	However, this is not sufficient to establish a one-to-one correspondence between the  satisfying assignments for the given \textsc{2-SAT} formula and preference profiles at swap distance exactly $\ell$ from $\mathcal{I}'$ where $M$ is stable. In fact, if a literal is not contained in any clause, there are two ways to modify the corresponding agent within the given budget by either swapping the designated partner in $M$ up or the literal agent forming  a blocking pair down. 
	This results in a difficult-to-quantify blow-up in the number of solutions.
	To account for this, in the full construction and proof of correctness presented in \Cref{app:th41}, we alter the construction by adding a second copy of the instance. 
\end{proof}

It remains to consider the problem of counting the number of agent sets whose deletion makes a given agent or pair stable. In the decision world, we proved that the former problem is polynomial-time solvable while we were unable to settle the complexity of the latter problem.
In our most involved construction, we prove that both variants are \#P-hard in their counting version:
\begin{restatable}{theorem}{seven}
	\label{s:7}
	\textsc{\#Pair/Agent-Delete-Robustness} is \#P-hard.
\end{restatable}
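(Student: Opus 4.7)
The plan is to reduce from a known \#P-hard counting problem with bipartite "matching-like" structure. A natural candidate is \textsc{\#Bipartite 2-SAT With No Negations} (as used in Theorem~\ref{s:6}) or \textsc{\#Bipartite Vertex Cover}; either provides the right satisfaction-versus-covering flavor. I would prove hardness for the pair variant directly and then lift it to the agent variant via the same dummy-man construction that yields Observation~\ref{s:4}.

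For the pair variant, given an instance of the source problem with variables $V$ (and, in the \textsc{2-SAT} formulation, clauses $C \subseteq U \times Z$), I would introduce a distinguished pair $\{m^\star, w^\star\}$ together with a pair of "literal" agents for each variable $v$ (one "true-literal" and one "false-literal"), plus a family of dummy agents pinned in place by top-choice mutual matches (so they are never profitably deleted). The preferences are tuned so that, in the residual instance after deleting exactly $\ell := |V|$ agents, the pair $\{m^\star, w^\star\}$ can appear in a stable matching if and only if (i) for every variable exactly one of its two literal agents has been deleted, and (ii) the induced truth assignment satisfies every clause, i.e.\ for every clause $\{p,q\}\in C$ at least one of the two corresponding literal agents has been removed, thereby eliminating a blocking pair that would otherwise obstruct $\{m^\star,w^\star\}$. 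Counting the deletion sets that leave $\{m^\star,w^\star\}$ \emph{stable} then equals the number of satisfying assignments, and the \textsc{\#Pair-Delete-Robustness} answer (destructive deletions) is obtained by subtracting from the total $\binom{N - 2}{\ell}$ of admissible deletion sets, yielding a parsimonious Turing reduction.

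For the agent variant, I would reuse the augmentation from Observation~\ref{s:4}: given the instance above, add a fresh man $m^\star$ who ranks $w^\star$ first (with all remaining women in arbitrary order) and append $m^\star$ to the bottom of every woman's preference list. Under this augmentation, a deletion set (not touching $w^\star$ or $m^\star$) makes the pair $\{m^\star,w^\star\}$ stable in the augmented instance exactly when it makes $w^\star$ unassigned in every stable matching of the original instance. Since $m^\star$ is not deletable by definition of the pair variant, the counts agree up to a trivial shift in budget, transferring \#P-hardness to \textsc{\#Agent-Delete-Robustness}. A symmetric construction handles the case where the designated agent is a man.

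The main obstacle, as in Theorem~\ref{s:6}, is ruling out spurious deletion sets: the budget $\ell$ must be so tight that only the intended "one-per-variable" patterns realize stability of $\{m^\star,w^\star\}$, and deleting dummies, clause-witnesses, or multiple literal agents from the same variable must always be strictly wasteful. Enforcing this cleanly will require (a) placing the dummy agents high in each other's preferences to make their deletion strictly harmful, (b) arranging the clause-induced blocking pairs so that the only way to dissolve them within the budget is by deleting literal agents, and (c) possibly adding a mirrored copy of the gadget or a layer of "filler" deletions, analogous to the duplication trick used in Theorem~\ref{s:6}, to absorb the combinatorial slack that would otherwise inflate the count by an instance-dependent factor. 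Once this bijection is verified, both statements of Theorem~\ref{s:7} follow.
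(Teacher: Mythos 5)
There is a genuine gap in the core of your reduction for the pair variant. You want the number of size-$\ell$ deletion sets that keep $\{m^\star,w^\star\}$ stable to equal the number of satisfying assignments, and you acknowledge the danger of spurious solutions, but the fix you gesture at (a mirrored copy, as in Theorem~\ref{s:6}) does not transfer from swaps to deletions. In the swap setting the slack is a clean, uniform factor (each unconstrained literal agent can be fixed in exactly two ways, and duplication forces consistency), so doubling the gadget kills it. With deletions the overcounting is \emph{instance-dependent}: once a deletion set fails to resolve some obstruction in the ``intended'' way, the number of alternative ways to spend the remaining budget (which other agents to delete so that $w^\star$ still ends up matched or unmatched as required) varies from solution to solution, and no fixed padding or duplication normalizes it to a computable constant. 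This is precisely the difficulty the paper's own proof confronts: its reduction (from \textsc{\#Edge Cover} to the agent variant) gets a many-to-many correspondence with multiplicities of the form $\binom{i'}{i}$ and $\binom{(N+1)(n-i)}{j-j'}$ depending on how many vertices are left uncovered and how much budget is left over, and it needs a whole family of oracle calls $\mathcal{I}_{i,j}^G$ indexed by two parameters, a correction term for solutions that delete ``extra'' agents, and two interleaved dynamic programs to invert the resulting linear system and recover the edge-cover counts. Your sketch contains no analogue of this machinery, and without it the claimed equality with the number of satisfying assignments does not hold.

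Two smaller points. First, your direction of the Pair$\leftrightarrow$Agent transfer is reversed relative to the paper (which proves the agent variant first and obtains the pair variant via the complement of the correspondence preceding Observation~\ref{s:4}); that ordering is in principle legitimate, but to lift Pair-hardness to Agent-hardness via that correspondence you must build your pair gadget so that $m^\star$ already has the special form (ranks $w^\star$ first, ranked last by every woman), which constrains the gadget and is not spelled out. Second, note that the counting problem is over deletion sets of size \emph{exactly} $\ell$, so even ``wasted'' deletions of dummies must be accounted for multiplicatively (this is where the $\binom{(N+1)(n-i)}{j-j'}$ factors come from in the paper); simply declaring dummy deletions ``strictly wasteful'' does not remove them from the count.
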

\begin{proof}[Proof Sketch (Agent)]
	For an instance $\mathcal{I}$, we denote by \textsc{\#AD}$(\mathcal{I})$ the number of solutions for this instance. To show \#P-hardness for \textsc{\#Agent-Delete}, we will give a Turing reduction from the \#P-hard \textsc{\#Edge Cover} problem, where we are given a graph $G=(V,E)$ and an integer $k \in \NN$ and want to compute the number of subsets of edges $E' \subseteq E$ with $|E'|=k$ and $\bigcup_{e \in E'}{e}=V$ \cite{bubley1997graph}. In this proof, we let $n:=|V|$, $m:=|E|$, and $N:=m+1$. 
	
	We start by defining the \textsc{\#Agent-Delete} instances that we will give as input to our oracle. To this end, given a graph $G$, let $\mathcal{I}_{i,j}^G$ be the \textsc{\#Agent-Delete} instance constructed as follows. We set the deletion budget to $j+i \cdot N$. The agent set consists of \emph{edge men} $\{m_e \mid e \in E\}$, \emph{vertex men} $\{m_v^{q} \mid v \in V, q \in [N]\}$, \emph{extra men} $\{m_p^* \mid p \in [i]\}$ and \emph{vertex women} $\{w_v \mid v \in V\}$. The designated agent is $m_i^*$.
	
	The preferences of an edge men $m_e$ with $e=\{u,v\} \in E$ are:
	$
	m_e: w_u \succ w_v \succ \dots
	$. 
	Further, for some $v \in V$, the preferences of vertex women $w_v$ are 
	\begin{align*}
	w_v: m_v^1 \succ \dots \succ m_v^N \succ m_{e_1}& \succ \dots \succ m_{e_{\text{deg}(v)}}\\ &\succ m_1^* \succ \dots \succ m_i^* \succ \dots,
	\end{align*}
	where $e_1,\dots,e_{\text{deg}(v)}$ are the edges incident to $v$. 
	The extra men have arbitrary preferences, and the vertex men $m_v^q$ rank the corresponding vertex woman $w_v$ in the first position. 
	
	To give an intuition for how the solutions to this instance are connected to edge covers of $G$, consider the case $i=1$ with designated agent $m_1^*$ and budget $j+N$. Assume that we are interested in edge covers of size $m-j$.  Given a size-$j$ edge subset $E'\subseteq E$, we delete all edge men that do not correspond to an edge from $E'$. 
	If $E'$ is an edge cover, then $m_1^*$ cannot be made stable by deleting $N$ additional agents: For each vertex woman $w_v$, there is one edge man that was not deleted plus the $N$ corresponding vertex men that $w_v$ prefers to $m_1^*$.
	If, however, $E'$ is not an edge cover, then there is a vertex $v\in V$ such that all edge men corresponding to incident edges are deleted. After deleting all vertex men $m_v^q$ for $q \in [N]$, $w_v$ can be matched to our designated agent $m_1^*$ in a stable matching, since all vertex men and edge men that $w_v$ prefers to $m_1^*$ were deleted. 
	Thus, edge sets that are not edge covers in $G$ correspond to solutions to our constructed instance of the above-described form.
	However, one problem with this idea is that the correspondence is not one-to-one, as in case an edge set $E'$ does not cover multiple vertices, we have multiple ways to ensure that $m_1^*$ gets matched. 
	To deal with this issue, for each non-edge cover $E'$, we need to know how many vertices are not covered to bound the number of corresponding solutions. 
	More formally, we can establish a connection between solutions of the constructed instance and \emph{$i$-vertex-isolating sets}, that are edge sets $E' \subseteq E$ such that after deleting all edges in $E'$ there are exactly $i$ vertices that have no incident edges.
	The crucial ingredient of our Turing reduction is the following lemma whose proof appears in the appendix:
	\begin{lemma}
		Let $G=(V,E)$ be a graph, $i \in [n]$, and $j \in [m]$. It holds that:
		\begin{equation*}
		\text{\textsc{\#AD}}(\mathcal{I}_{i,j}^G)=\sum_{j'=0}^j{\sum_{i'=i}^{n}({\binom{(N+1)(n-i)}{j-j'} \cdot \binom{i'}{i}|\mathcal{E}_{i'}^{j'}|}}) + D_{i,j}, 
		\end{equation*}
		where $\mathcal{E}_i^j$ is the set of all $i$-vertex-isolating sets of size $j$ in $G$ and $D_{i,j}$ is the number of solutions to $\mathcal{I}_{i,j}^G$ that delete at least one extra man.
	\end{lemma}
	After showing that $D_{i,j}$ can be computed via a dynamic program that queries the oracle for \textsc{\#Agent-Delete} with altered versions of  $\mathcal{I}^G_{i,j}$, we can use the above lemma to compute the number of $i$-vertex isolating sets using another dynamic program with oracle calls. From this, the number of size-$k$ edge covers can be computed easily. 
\end{proof}

In light of the computational hardness results from this and the previous section, we seek an algorithm to approximate the probability that a matching/pair is stable after $k$ random changes are performed. 
It turns out that using Hoeffding's inequality, we can establish the effectiveness of a simple Monte-Carlo algorithm that samples profiles at distance $k$ uniformly at random and records the fraction of these profiles in which the matching/pair is stable: 
\begin{restatable}{proposition}{nine}
	Given $\varepsilon,\delta>0$,  $\ell\in \mathbb{N}$, and SM instance $\mathcal{I}$, there is a polynomial-time algorithm that computes for a given matching $M$ an estimate $p$ of the  probability that $M$ is stable at profiles at swap  (or deletion) distance $\ell$ so that $p\in [p^*-\varepsilon,p^*+\varepsilon]$ with probability $1-\delta$, where $p^*$ is the correct probability. The statement also applies to pairs and agents. 
\end{restatable}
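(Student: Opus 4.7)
The plan is to use a standard Monte Carlo sampling scheme combined with Hoeffding's inequality. Let $p^*$ denote the true probability that $M$ (or the given pair/agent) is stable in a profile drawn uniformly at random from the set $\mathcal{S}_\ell$ of all instances at swap (respectively deletion) distance exactly $\ell$ from $\mathcal{I}$. The algorithm will draw $N$ independent samples $\mathcal{I}_1,\dots,\mathcal{I}_N$ uniformly from $\mathcal{S}_\ell$, check the stability of the designated object in each sampled instance in polynomial time (trivial for matchings and pairs; for agents, one runs Gale-Shapley), and return $p := \frac{1}{N}\sum_{i=1}^N X_i$, where $X_i \in \{0,1\}$ indicates stability in $\mathcal{I}_i$.

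The first key step is to show how to sample uniformly from $\mathcal{S}_\ell$ in polynomial time. For the deletion case this is immediate: there are $\binom{n+m}{\ell}$ such instances, and one simply picks an $\ell$-subset of $U\cup W$ uniformly at random. For the swap case, we exploit the dynamic program mentioned in the preliminaries (and detailed in the appendix) that counts, for each agent $a$ and each value $t\in\{0,\dots,\ell\}$, the number of preference lists obtainable from $\succ_a$ by exactly $t$ swaps; standard convolution combines these into a table $N(a_1,\dots,a_k;t)$ giving the number of profiles at distance exactly $t$ using only the first $k$ agents. Walking this table backwards, one can sample an allocation of the $\ell$ swaps across the agents proportional to the DP counts and then, per agent, sample a preference list at the chosen distance uniformly (again via a per-agent sub-DP counting lists by Kendall-tau distance). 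This yields a uniform sample from $\mathcal{S}_\ell$ in time polynomial in $n+m+\ell$.

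The second key step is the concentration argument. Since $X_1,\dots,X_N$ are i.i.d.\ Bernoulli with mean $p^*$ and bounded in $[0,1]$, Hoeffding's inequality gives
\[
\Pr\bigl[|p - p^*| \geq \varepsilon\bigr] \leq 2\exp\bigl(-2N\varepsilon^2\bigr).
\]
Setting $N := \bigl\lceil \tfrac{1}{2\varepsilon^2} \ln(2/\delta)\bigr\rceil$ makes the right-hand side at most $\delta$, and $N$ is polynomial in $1/\varepsilon$ and $\log(1/\delta)$. Combined with polynomial-time sampling and polynomial-time stability checking, the overall running time is polynomial in the input size, $1/\varepsilon$, and $\log(1/\delta)$.

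The main obstacle is the uniform sampling sub-routine for the swap case: one must be careful that the joint distribution over (i) how the $\ell$ swaps are distributed among the agents and (ii) which specific preference list at the prescribed distance is drawn per agent factorises exactly as counted by the DP, so that the overall distribution is truly uniform on $\mathcal{S}_\ell$. Everything else (counting profiles at a given Kendall-tau distance from a fixed list, checking stability, applying Hoeffding) is standard. The pair and agent variants need no change to the algorithm: only the $\{0,1\}$-valued test applied to each sample is replaced by the corresponding polynomial-time check (for agents, running Gale-Shapley and using the Rural Hospitals Theorem; for pairs, testing membership in some stable matching via the standard rotation-based characterisation).
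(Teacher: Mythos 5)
Your proposal is correct and follows essentially the same route as the paper: uniform sampling of profiles at exact swap distance $\ell$ via a dynamic-programming decomposition of the swap budget across agents (the paper phrases this as splitting the instance into two elections and invoking the existing election-sampling DP, which internally does the same per-voter convolution you describe), followed by a standard Monte-Carlo estimate with Hoeffding's inequality and the same sample size $\lceil \ln(2/\delta)/(2\varepsilon^2)\rceil$.
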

Additionally, for the Matching-Swap setting, i.e. for counting the number of preference profiles at some swap distance from a given SM instance such that a given matching is unstable, we can find an $n^2-n$-approximation by counting all profiles where a specific pair is blocking, and summing over all pairs (by that possibly overcounting the number of profiles by a factor of  $n^2-n$). We can estimate the average factor by which we overcount and obtain a fully polynomial-time randomized approximation scheme (FPRAS):
\begin{restatable}{theorem}{ten}
There is a FPRAS for the problem of counting the number of preference profiles at swap distance exactly $\ell$ from $\mathcal{I}$ where $M$ is unstable. 
\end{restatable}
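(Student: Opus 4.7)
The plan is to upgrade the deterministic $(n^2{-}n)$-approximation sketched in the paragraph preceding the theorem into an FPRAS via the Karp--Luby coverage-estimator template. For each pair $\{m,w\}\in U\times W$ with $\{m,w\}\notin M$, let $\mathcal{B}_{m,w}$ denote the set of preference profiles at swap distance exactly $\ell$ from $\mathcal{P}$ in which $\{m,w\}$ blocks $M$, and let $\mathcal{B}=\bigcup_{\{m,w\}}\mathcal{B}_{m,w}$ be the set whose size we want to estimate. The paragraph preceding the theorem already gives $|\mathcal{B}|\le S:=\sum_{\{m,w\}}|\mathcal{B}_{m,w}|\le (n^2{-}n)\,|\mathcal{B}|$.

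The first subroutine is exact polynomial-time computation of each $|\mathcal{B}_{m,w}|$. Since swap distance decomposes across preference lists, $|\mathcal{B}_{m,w}|$ equals the coefficient of $z^\ell$ in $\tilde F_m(z)\cdot \tilde F_w(z)\cdot \prod_{a\neq m,w} F_a(z)$, where $F_a(z)=\sum_k f_a(k)\,z^k$ counts orderings of the opposite side at Kendall--tau distance $k$ from $\succ_a$ (the Mahonian convolution $\prod_i(1+z+\cdots+z^{i-1})$), and $\tilde F_m(z)$ is its restriction to orderings in which $w$ is ranked above $M(m)$ (unconstrained if $m$ is unassigned in $M$); $\tilde F_w(z)$ is defined symmetrically. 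The constrained counts are produced by a DP that inserts elements one at a time into the permutation and tracks both the accumulated inversion count and the current relative order of the two marked elements $w$ and $M(m)$. Reading the same DP tables backwards yields polynomial-time uniform sampling from $\mathcal{B}_{m,w}$: sample the per-list distance vector $(d_a)_a$ with probability proportional to the product of its per-list counts, then sample each list independently from the appropriate (constrained or unconstrained) distribution.

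These subroutines drive the Karp--Luby estimator. Draw $\{m,w\}$ with probability $|\mathcal{B}_{m,w}|/S$, draw $P'$ uniformly from $\mathcal{B}_{m,w}$, and compute in $\mathcal{O}(nm)$ time the number $b(P')$ of pairs blocking $M$ in $P'$. Define $X:=S/b(P')$. The factors $|\mathcal{B}_{m,w}|/S$, $1/|\mathcal{B}_{m,w}|$, and the $S$ in $X$ collapse to $1/b(P')$, leaving
\[
\mathbb{E}[X]\;=\;\sum_{\{m,w\}}\sum_{P'\in\mathcal{B}_{m,w}}\frac{1}{b(P')}\;=\;\sum_{P'\in\mathcal{B}}\frac{b(P')}{b(P')}\;=\;|\mathcal{B}|,
\]
so $X$ is unbiased; since $1\le b(P')\le n^2{-}n$ we have $X\in[S/(n^2{-}n),\,S]$ and $X/\mathbb{E}[X]\in[1/(n^2{-}n),\,n^2{-}n]$. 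Averaging $T=\Theta((n^2{-}n)^2\,\varepsilon^{-2}\log(1/\delta))$ independent copies of $X$ and applying Hoeffding's inequality to the bounded variables yields an $(\varepsilon,\delta)$-approximation of $|\mathcal{B}|$ in total time polynomial in $n$, $1/\varepsilon$, and $\log(1/\delta)$.

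The main obstacle is the constrained generating function $\tilde F_m(z)$: counting orderings of $W$ at a prescribed Kendall--tau distance from $\succ_m$ that additionally place $w$ above $M(m)$. While the unconstrained count is the classical Mahonian convolution, the relative-order constraint forces the insertion-based DP described above, whose state must carry both the number of inversions accumulated so far and the current relative position of the two marked elements; symmetric care is needed for $\tilde F_w$. Once this DP and its reverse-sampling version are in place, everything else---exact per-pair counting, uniform sampling, and concentration---is essentially a textbook application of the Karp--Luby scheme.
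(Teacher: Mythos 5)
Your proposal is correct and follows essentially the same route as the paper: exact polynomial-time computation of the per-pair counts $|\mathcal{B}_{m,w}|$ via an inversion-counting dynamic program with a relative-order constraint on the two marked agents, uniform sampling from each $\mathcal{B}_{m,w}$ by reversing that DP, and a Karp--Luby union estimator whose relative accuracy is controlled by the $n^2-n$ overcounting bound. The only (immaterial) difference is the choice of Karp--Luby variant: you weight each sample by $1/b(P')$, while the paper fixes an ordering of the pairs and uses the indicator that the sampled pair is the first blocking one, yielding a Bernoulli variable with mean $S^*/b \ge 1/(n^2-n)$ and a slightly smaller sample complexity.
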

\section{Experiments}\label{sub:exp}
In this section, we analyze the robustness of stable matchings (\Cref{sec:matchings}) and pairs (\Cref{sec:pairs}) against random or adversarial swaps in a diverse set of synthetic instances.\footnote{To maintain focus, we do not consider delete operations, as we believe that swaps are the more common changes in practice. Moreover, we do not consider stable agents, as this would require the generation of instances where the two sides have a different size, which would make the setup more complicated.}
In our experiments, we sometimes measure the linear correlation between two quantities using the \emph{Pearson Correlation Coefficient} (PCC) which is $1$ in case of a perfect  positive linear correlation, $-1$ in case of a perfect negative linear correlation and $0$ if there is no linear correlation. 

\paragraph{Computing Stability Probabilities.}
Note that the sampling algorithm described at the end of \Cref{com:count} for approximating the probability that a given matching or pair remains stable after $k$ changes are performed requires sampling profiles at some given swap distance; a process that tends to be time-consuming already for $30$ agents \cite{DBLP:conf/ijcai/BoehmerBFN21}. 
This is why, 
following the works of \citet{DBLP:journals/sncs/BaumeisterH23} and  \citet{DBLP:conf/eaamo/BoehmerBFN22} in the context of elections, we make use of the popular Mallows model \cite{mal:j:mallows} for adding noise to preference lists. 
The Mallows model is parameterized by a central preference list $\succ\in \mathcal{L}(A)$ and a dispersion parameter $\phi\in [0,1]$, and samples a preference list $\succ'\in \mathcal{L}(A)$ with probability proportional to  $\phi^{\kappa(\succ,\succ')}$, where $\kappa(\succ,\succ')$ is the swap distance between $\succ$ and $\succ'$. Note that the dispersion parameter controls the level of noise added to the central preference list: For $\phi=0$ only the central preference list is sampled, whereas for $\phi=1$ all lists are sampled with the same probability.
However, as argued by \citet{DBLP:journals/corr/abs-2105-07815} and \citet{icml} the connection between the value of $\phi$ and the expected number of swaps applied to the preference list is non-linear.
Thus, to make our results easier to interpret, we use the normalized Mallows model introduced by \citet{DBLP:journals/corr/abs-2105-07815}: Here, we specify a normalized dispersion parameter norm-$\phi\in [0,1]$, which is internally converted to a value of the dispersion parameter $\phi$ such that the expected swap distance between a sampled preference list and the central one in the resulting Mallows model is a  norm-$\phi$ fraction of the maximum possible one. 

To measure the robustness of an SM instance $\mathcal{I}=(U,W,\mathcal{P})$ in our experiments, we fix a value of the norm-$\phi$ parameter, and for each $a \in A$, we draw a preference list $\succ'_a$ from the Mallows model with this norm-$\phi$ value and $\succ_a$ as the central preference list. We refer to the \emph{stability probability} of a matching or pair at norm-$\phi$ as the probability of the matching or pair being stable when executing this procedure.\footnote{In \Cref{app:Mal}, we prove that our counting problems can be reduced to computing the stability probability under the described model, implying that the latter task is also computationally intractable.} To approximate the stability probabilities in the resulting instance $\mathcal{I}':=(U,W,\mathcal{P}':=\{\succ_a' \mid a \in A\})$, we record whether a certain matching (or pair) is stable in $\mathcal{I}'$  and repeat this process 1000 times to get a Monte-Carlo-style approximation.
Similar in spirit to the works of \citet{DBLP:conf/ijcai/BoehmerBFN21,DBLP:conf/eaamo/BoehmerBFN22}, we sometimes assess the robustness of a matching or pair by the $50\%$-(stability)-threshold, which is the smallest examined value of norm-$\phi$ for which the estimated probability of being stable drops below $50\%$. 

\paragraph{Dataset.} 
Due to the lack of publicly available real-world data, we use the large diverse synthetic dataset created by \citet{DBLP:journals/corr/abs-2208-04041} focusing on instances with $50$ men and women. 
To generate the data, they used $10$ different synthetic models (see \Cref{app:data} for descriptions) including the Impartial Culture (IC) model, where each agent samples its preference list uniformly at random, and Euclidean models, where agents are uniformly at random sampled points in the Euclidean space and rank each other depending on the distance between their points. 
Further, their dataset contains three ``extreme'' instances: \begin{enumerate*}[label=(\roman*)]
	\item The \emph{Identity} instance, where all agents have the same preferences,
	\item the \emph{Mutual Agreement} instance, where $u$ ranks $w$ in position $i$ if $w$ ranks $u$ in position $i$, and
	\item the \emph{Mutual Disagreement} instance, where $u$ ranks $w$ in position $i$ if $w$ ranks $u$ in position~$50-i$. 
\end{enumerate*}

As we find that in instances from the dataset of \citet{DBLP:journals/corr/abs-2208-04041}, the worst-and average-case robustness of stable matchings is quite low, we add a \emph{Robust} extreme instance and instances sampled form a new \emph{Mallows-Robust} model. 
In the \emph{Robust} instance, for $i\in [50]$, the preference list of man $u_i$ is 
$
u_i: w_i\succ w_{i+1} \succ \dots \succ w_n \succ w_1 \succ \dots \succ w_{i-1}
$ and the preference list of woman $w_i$ is 
$
w_i: u_i \succ u_{i+1} \succ \dots \succ u_n \succ u_1 \succ \dots \succ u_{i-1}
$. In this instance, the matching $\{\{u_i,w_i\} \mid i \in [n]\}$ can only be made unstable by executing $50$ swaps.
The Mallows-Robust model is parameterized by the normalized dispersion parameter norm-$\phi\in [0,1]$. Each agent generates their preferences by making a sample from the Mallows model with their preference list from the Robust instance as the  central preference list and parameter norm-$\phi$. We sample $20$ instances for each norm-$\phi\in \{0.2,0.4,0.6,0.8\}$.
In total, our dataset~contains~$544$~instances.

\subsection{Robustness of Stable Matchings} \label{sec:matchings}\label{sub:exp1}
In this section, we analyze the robustness of men-optimal and other types of stable matchings as well as a simple heuristic for matching's average-case robustness.
\paragraph{Men-Optimal Matching.}
We start by considering the robustness of men-optimal matchings computed by the popular Gale-Shapley algorithm, which turn out to be very non-robust to adversarial swaps: In almost all instances from our dataset, the men-optimal matching can be made unstable by a single swap, implying that the worst-case robustness is seemingly not capable to meaningfully distinguish the robustness of matchings in practice. 

Turning to the robustness against random noise, \Cref{fig:distplot} depicts the distribution of the $50\%$-thresholds. In most instances, the $50\%$-threshold of the men-optimal matching is between $0.002$ and $0.003$. This corresponds to performing, on average, between $1.225$ and $1.8375$ swaps in each preference list. Considering that most swaps do not involve an agent's partner and thus do not influence the stability of the matching, this number is remarkably low. The explanation for this behavior is that in our instances there are in fact many agent pairs that only need one swap to become blocking. 
In \Cref{app:men}, we also examine how the $50\%$-threshold depends on the model from which the instance was sampled. Among others, we find that in instances sampled from the IC and Mallows-Robust model the $50\%$-threshold tends to be larger, whereas for instances sampled from Euclidean models it is lower. 
These findings highlight that while matchings are also in general quite non-robust against random noise, the picture is more nuanced than for adversarial noise.

\begin{figure*}
	\centering
	\begin{minipage}[t]{.48\textwidth}
		\centering
		\resizebox{0.7\textwidth}{!}{\begin{tikzpicture}[scale=0.8]
\begin{axis}[
  ymin=0, ymax=220,
  xlabel=norm-$\phi$,
  ylabel=number of instances,
  symbolic x coords={0.001,0.002,0.003,0.004,0.005,0.006,0.007,0.008,0.009,0.01,$>$0.01},
  xtick=data,
  xticklabels={0.001,,0.003,,0.005,,0.007,,0.009,,$>$0.01},
    x tick label style={rotate=90,anchor=east},
    nodes near coords align={vertical},
    x label style={at={(axis description cs:0.5,-0.28)},anchor=north},every tick label/.append style={font=\LARGE}, 
    label style={font=\LARGE },legend style={
    font=\huge
    }
]
\addplot[ybar, nodes near coords, fill=blue] 
    coordinates {
    (0.001,3) 
    (0.002,179) 
    (0.003,188) 
    (0.004,92)
    (0.005,37)
    (0.006,20)
    (0.007,6)
    (0.008,2)
    (0.009,4)
    (0.01,0)
    ($>$0.01,13)
 };

\end{axis}
\end{tikzpicture}} \vspace*{-0.5cm}
		\caption{Distribution of the 50\%-threshold of men-optimal matching.}
		\label{fig:distplot}
	\end{minipage}%
	\hfill 
	\begin{minipage}[t]{.48\textwidth}
		\centering
		\resizebox{0.7\textwidth}{!}{\input{devplots1}}\vspace*{-0.5cm}
		\caption{Stability probability of men-optimal matchings in three different instances for varying levels of noise.}
		\label{fig:fine1}
	\end{minipage}
\end{figure*}

Taking a closer look, in \Cref{fig:fine1} we depict how the stability probability behaves when we increase the added noise (aka.\ the chosen norm-$\phi$ parameter) for men-optimal matchings in three of the extreme instances, which indeed behave extreme here:
The Identity instance is the least robust instance in our dataset (independent of the norm-$\phi$ value). Its stability probability drops down very quickly and is already below $50\%$ at norm-$\phi=0.0006$ (which corresponds to making an expected number of $36$ swaps in the full instance) and below $10\%$ at norm-$\phi=0.002$.
This can be easily explained by the fact that in this instance as soon as an agent swaps down its current partner in its preferences a blocking pair is formed.
In contrast, the Mutual Agreement instance is the most robust instance that is not sampled from the Mallows-Robust model. Its stability probability drops down slower, as in this matching all agents are matched to their most preferred agent.
Lastly, the Robust instance has the highest stability probability in all our instances and we see that only at norm-$\phi=0.2$ does the matching's probability of being stable drop below $99\%$.
Moreover, we observe that for all instances (as for the three depicted ones) the stability probability of the men-optimal matching decreases monotonically (up to sampling errors) with increasing norm-$\phi$.  

\paragraph{Robustness Heuristic.}
Next, we examine a heuristic to estimate the $50\%$-threshold of a matching, which can be used as a fast approximation and sheds further light on what makes a matching robust to random noise. 
For this, for a matching $M$ and a man-woman pair $\{m,w\}$, we let $\beta(m,w)$ be  the minimum number of swaps needed to make $\{m,w\}$ blocking and $\#\beta(M,k)$ the number of man-woman pairs $\{m,w\}\in U\times W$ with $\beta(m,w)=k$. Intuitively, if $\#\beta(M,1)$ is large, then the probability that we create a blocking pair after making a few random swaps is high, as there are many swaps that when executed immediately create a blocking pair. Looking at the number $\#\beta(M,2)$ of pairs for which two swaps are needed, those are slightly less likely to become blocking (as both necessary swaps would need to be performed). If we increase the $k$ further, the likelihood of these pairs becoming blocking decreases exponentially. Accordingly, we derive the blocking pair proximity, which is small if matchings are robust:
\begin{definition}\label{def:bpdist}
	For some $d\in [n]$, we define the \emph{blocking pair proximity} of a matching $M$ as 
	$
	\pi(M):=\log_{n}{\sum_{k=1}^{d}n^{d-k} \cdot \#\beta(M,k)}.
	$
\end{definition}
To avoid large numbers, in our experiments, we set $d=5$, implying that we only examine pairs that need at most $5$ swaps to become blocking (those pairs have the strongest effect on the blocking pair proximity values). 
We find that the blocking pair proximity is indeed a very good indicator for the $50\%$-threshold of a matching: For the men-optimal matchings from our dataset the correlation between the two measures is with $-0.965$ very strong \citep{schober2018correlation}. 
Consequently, in practice, the average-case robustness of a stable matching seems to boil down to how far agent pairs are away from being blocking and can be well approximated using the much faster-to-compute blocking pair proximity measure. 

\begin{figure*}
	\centering
	\begin{minipage}[t]{.48\textwidth}
		\centering
		\resizebox{0.7\textwidth}{!}{\input{devplots3}}\vspace*{-0.5cm}
		\caption{Stability probability of three different matchings in instance sampled from IC model.}
		\label{fig:fine2}
	\end{minipage}%
	\hfill
	\begin{minipage}[t]{.48\textwidth}
		\centering
		\resizebox{0.7\textwidth}{!}{
\begin{tikzpicture}[scale=0.8]

\definecolor{color0}{rgb}{0.12156862745098,0.466666666666667,0.705882352941177}

\begin{axis}[
tick align=outside,
tick pos=left,
x grid style={white!69.0196078431373!black},
xlabel={50\%-threshold},
xmin=0, xmax=0.01,
xtick style={color=black},
y grid style={white!69.0196078431373!black},
ylabel={\#bps at norm-$\phi$ = 0.1},
ymin=0, ymax=60,
ytick style={color=black},every tick label/.append style={font=\huge}, 
label style={font=\huge },legend style={
	font=\huge
},
]
\addplot [semithick, color0, mark=*, mark size=2.5, mark options={solid}, only marks]
table {%
0.001 52.04
0.0115 15.02
0.0005 93.04
0.2926 0
0.0162 5.31
0.2036 0.03
0.0062 9.43
0.2257 0.01
0.0453 1.43
0.0088 9.1
0.2351 0
0.0035 21.46
0.0041 17.35
0.0135 4.68
0.0048 13.09
0.2265 0.06
0.0028 16.1
0.0171 5.13
0.2133 0.03
0.0049 14.02
0.0087 6.69
0.2341 0.01
0.2185 0
0.0083 4.98
0.0052 14.12
0.0082 10.42
0.004 12.23
0.0033 16.86
0.0063 12.76
0.0071 10.24
0.0031 19.92
0.0075 12.01
0.0045 13.44
0.0038 23.55
0.0041 14.49
0.0029 23.01
0.0026 22.54
0.0058 11.11
0.0065 13.15
0.0053 14.75
0.0051 12.76
0.0054 13.69
0.0042 17.68
0.0025 22.64
0.0028 21.19
0.0039 17.04
0.0028 19.89
0.005 14.44
0.0034 18.68
0.0035 19.14
0.0024 16.26
0.0028 19.91
0.0033 19.94
0.0047 21.35
0.0026 23.62
0.0033 17.89
0.0033 21.51
0.0032 23.51
0.0027 20.21
0.0046 18.83
0.0034 19.53
0.0032 18.45
0.0041 18.07
0.0051 16.55
0.0035 18.18
0.0037 20.5
0.0051 18.24
0.0039 16.65
0.0026 24.23
0.0039 17.29
0.0024 18
0.0051 13.87
0.0024 21.33
0.0047 18.43
0.0043 17.03
0.003 22.48
0.0035 19.62
0.0022 27.18
0.0043 19.79
0.004 17.54
0.0034 21.79
0.0043 16.07
0.0023 26.47
0.0024 25.1
0.0023 23.89
0.0037 17.63
0.0038 18.24
0.0034 20.24
0.0052 14.63
0.0048 21.24
0.0027 20.72
0.0022 21.73
0.0025 23.18
0.0034 20.3
0.0027 18.51
0.0037 19.18
0.0028 23.32
0.0031 19.44
0.0038 18.06
0.0062 14.34
0.0037 20.64
0.0033 16.74
0.0038 17.86
0.0034 18.51
0.0028 24.56
0.0023 24.46
0.0022 26.87
0.0029 26.07
0.0027 27.09
0.0022 27.47
0.0037 19.5
0.0019 26.49
0.0022 28.57
0.0028 26.07
0.0023 24.84
0.0022 27.56
0.0033 20.8
0.0031 21.71
0.0028 22.34
0.0037 25.82
0.0024 27.19
0.0052 20.65
0.0054 21.29
0.0041 20.65
0.002 25.07
0.0026 22.69
0.0026 24.41
0.0023 24.66
0.0025 24.35
0.003 26.12
0.0035 23.43
0.0025 25.51
0.0018 28.91
0.0027 25.28
0.0022 26.42
0.0026 26.61
0.0022 26.51
0.0035 21.55
0.0032 24.72
0.0018 26.35
0.0038 24.69
0.0025 23.07
0.0032 21.3
0.0031 24.18
0.0029 32.66
0.0022 36.8
0.0027 31.22
0.0026 32.95
0.0018 38.86
0.0022 36.29
0.0038 31.84
0.0026 32.1
0.0024 35.86
0.0023 36.12
0.0029 32.38
0.0023 34.64
0.0019 33.62
0.0025 32.9
0.0023 33.2
0.0037 29.74
0.0019 38.49
0.0023 32.82
0.0029 29.61
0.0023 35.24
0.003 23.35
0.0031 20.72
0.0034 21.68
0.005 17.14
0.0025 24.42
0.0037 18.89
0.0056 20.09
0.0044 16.04
0.003 21.44
0.0036 19.95
0.0047 18.12
0.0039 21.43
0.0064 18.08
0.0028 21.79
0.0028 22.63
0.0037 23.91
0.0039 19.7
0.0024 23.73
0.0025 21.27
0.0057 18.96
0.0045 16.62
0.004 20.07
0.0028 25.69
0.0043 17.47
0.0035 16.54
0.0044 17.32
0.0051 17.92
0.0047 17.58
0.0052 15.83
0.006 14.75
0.0032 19.82
0.004 16.27
0.0031 22.9
0.0034 20.89
0.0028 21
0.0035 17.71
0.0049 17.63
0.0038 18.82
0.0035 19.02
0.0038 17.89
0.0033 20.14
0.0053 15.96
0.0034 21.16
0.0043 15.38
0.0031 22.26
0.0045 16.6
0.0043 19.35
0.0024 21.54
0.0028 23.4
0.0048 17.33
0.0035 15.58
0.0031 20.48
0.0035 18.61
0.0035 18.67
0.0034 20.7
0.0061 16.21
0.0047 14.88
0.0028 14.86
0.005 16.07
0.0028 17.95
0.0013 41.32
0.0015 33.49
0.0011 43.69
0.0014 39.82
0.0014 40.74
0.0012 33.14
0.0011 40.71
0.0017 33.45
0.0012 40.91
0.0013 40.52
0.0015 37.06
0.0011 46.58
0.0012 44.02
0.0012 38.02
0.0013 34.32
0.0012 40.75
0.001 47.22
0.0012 34.84
0.0011 44.88
0.0015 33.58
0.0015 39.35
0.0015 34.56
0.002 31.61
0.0014 37.03
0.0014 31.72
0.0013 36.45
0.0015 33.32
0.0019 30.84
0.0014 35.54
0.0016 37.08
0.0012 34.9
0.0018 31.83
0.0014 35.97
0.0022 28.57
0.0016 31.3
0.0015 35.92
0.0013 35.88
0.0018 32.48
0.0012 33.6
0.0015 39.07
0.0013 41.55
0.0021 39.19
0.0014 42.45
0.0014 41.28
0.0016 38.83
0.0013 46.24
0.0016 43.08
0.0017 39.7
0.0013 44.69
0.0014 44.71
0.0016 39.63
0.0013 43.03
0.0014 46.31
0.0015 44.07
0.0019 36.4
0.0019 40.22
0.0015 42.26
0.0013 44.47
0.0016 42.29
0.0021 42.99
0.0018 37.62
0.0017 37.89
0.0015 37.25
0.0016 38.94
0.0019 35
0.0016 37.69
0.0016 37.69
0.0018 35.07
0.0016 36.11
0.0016 38.6
0.0018 35.54
0.0017 38.64
0.002 33.43
0.0017 37.7
0.0018 34.34
0.0015 39.5
0.0014 40.17
0.0019 35.17
0.0014 40.11
0.002 33.26
0.0023 30.32
0.0023 29.7
0.0017 35.65
0.0023 31
0.002 30.46
0.0018 29.92
0.0023 32.1
0.0019 32
0.0026 30.21
0.0015 34.42
0.0015 31.47
0.0018 32.37
0.002 34.8
0.002 31.67
0.002 33.8
0.0016 33.69
0.0017 33.34
0.0016 32.14
0.0028 31.47
0.0023 31.92
0.0017 32.62
0.0015 34.73
0.0017 34.84
0.0014 35.72
0.0015 36.72
0.0016 33.64
0.0016 33.82
0.0019 33.48
0.0024 32.34
0.0018 31.65
0.0016 31.77
0.0018 33.2
0.0023 32.39
0.0018 33.41
0.0021 33.83
0.0019 34.86
0.0019 33.81
0.0017 33.28
0.0018 32.65
0.0017 34.35
0.002 30.2
0.002 34.53
0.002 32.75
0.0027 30.57
0.0014 34.7
0.0027 28.51
0.0025 31.09
0.0018 36.26
0.0017 33.82
0.0015 29.97
0.0016 36.87
0.0015 33.79
0.0019 31.85
0.0018 32.71
0.0022 31.2
0.0019 36.24
0.0019 29.44
0.0022 29.62
0.0028 26.74
0.0018 33.31
0.0024 25.77
0.0029 25.94
0.0016 29.61
0.0018 30.31
0.003 26.68
0.0026 28.23
0.0025 28.8
0.0018 30.23
0.0015 33.29
0.0027 30.2
0.002 29.67
0.0021 31.77
0.0022 29.05
0.0021 34.06
0.0031 24.56
0.0019 33.44
0.002 29.41
0.0021 31.2
0.0041 20.6
0.0023 26.1
0.0021 27.74
0.0028 20.44
0.0038 19.63
0.0028 20.63
0.0028 26.4
0.0016 29.03
0.0033 23.8
0.0028 25.69
0.0024 24.81
0.0023 28.06
0.0026 25.84
0.0027 26.01
0.003 22.7
0.0027 24.86
0.0035 21.27
0.0028 21.91
0.0027 26.33
0.0026 24.69
0.0031 22.39
0.0029 23.41
0.0019 28.83
0.0014 39.17
0.0024 25.98
0.0017 32.31
0.0018 32.97
0.0021 31.63
0.0019 31.49
0.0021 30.31
0.0013 33.35
0.0017 32.44
0.0025 29.12
0.0022 29.82
0.0016 33.38
0.0021 29.06
0.0019 32.65
0.0017 25.22
0.0021 32.33
0.0024 27.86
0.0022 25.96
0.0017 28.65
0.0022 25.69
0.0019 29.91
0.0018 30.15
0.0027 24.6
0.0029 26.3
0.0027 26.53
0.0023 29.32
0.0032 24.63
0.0029 28.35
0.0022 27.19
0.0019 27.77
0.0026 24.16
0.0039 21.94
0.0019 26.1
0.002 32.17
0.0019 26.59
0.0019 25.54
0.0022 28.37
0.0023 25.47
0.0019 26.56
0.0022 34.99
0.0023 31.28
0.0025 34.56
0.002 38.55
0.0018 37.12
0.0016 41
0.002 40.07
0.0023 38.01
0.002 36.62
0.0021 37.72
0.003 32.81
0.0024 35.83
0.0019 35.47
0.0022 33.52
0.0024 32.3
0.0029 31.83
0.0019 34.92
0.0026 33.51
0.0021 34.91
0.0021 32.8
0.0041 24.81
0.0028 27.65
0.0034 23
0.0032 19.96
0.0029 27.42
0.0027 25.57
0.0044 18.01
0.0029 27.09
0.0021 31.98
0.0038 19.96
0.0025 27.39
0.0028 26.08
0.0021 26.5
0.0057 16.96
0.0035 24.79
0.0022 32.39
0.0029 28.29
0.0038 20.67
0.0026 23.48
0.0031 19.04
0.005 15.86
0.0031 21.35
0.0034 23.16
0.0036 16.89
0.0035 20.18
0.0032 19.48
0.0023 25.09
0.0036 18.89
0.0033 18.05
0.006 15.25
0.0044 17.71
0.0044 19.67
0.003 22.19
0.0023 22.16
0.0026 19
0.002 24.76
0.0036 15.86
0.0046 17.18
0.0056 20.08
0.0025 22.95
0.002 39.56
0.0022 37.96
0.0013 43.43
0.002 43.3
0.0017 44.75
0.0018 42.52
0.0016 43.94
0.002 41.91
0.0018 43.86
0.0021 40.18
0.0015 46.47
0.0018 41.2
0.0018 39.93
0.0015 42.72
0.0021 43.06
0.0016 42.93
0.0022 39.85
0.002 46.34
0.0015 46.47
0.0018 44.67
0.0027 36.67
0.0027 30.71
0.0031 29.11
0.0026 30.59
0.0024 28.5
0.0024 30.23
0.0027 29.19
0.0023 34.16
0.0026 27.66
0.0025 32.77
0.0021 34.05
0.0027 26.74
0.0023 29.69
0.002 33.03
0.003 34
0.0022 30.5
0.0031 27.92
0.0025 26.38
0.0027 28.18
0.0025 35.64
};
\end{axis}

\end{tikzpicture}}\vspace*{-0.5cm}
		\caption{Correlation between 50\%-threshold and average number of blocking pairs of men-optimal matching at norm-$\phi=0.1$.}
		\label{fig:corr-bps-fif}
	\end{minipage}
\end{figure*}

\paragraph{Other Stable Matchings.}
Stable matchings are often not unique, which motivates the question of whether different initially stable matchings have a different robustness. 
To this end, we consider two alternative types of stable matchings. First, the \emph{summed-rank minimizing stable matching} \cite{knuth1976marriages}, which minimizes the average rank of an agent's partner in their preference list and can be computed in polynomial time \cite{VANDEVATE1989147}. 
Second, the so-called \emph{robust stable matching}, which is a stable matching minimizing the expression from \Cref{def:bpdist} (see \Cref{app:oth} for a description of an ILP to compute this matching).
Note that in $176$ of our instances the three matchings are identical. 

In terms of their worst-case robustness, the three types of stable matching behave almost identically in all instances. 
In contrast, for the average-case robustness, there is a larger difference: In case the three matchings differ from each other, oftentimes the robust and summed-rank minimizing matching have a clearly higher $50\%$-threshold than the men-optimal matching. In particular, for 49 instances, the $50\%$-threshold of the robust and summed-rank minimizing stable matching is more than twice as high as for the men-optimal one; however, there are also few instances where the men-optimal stable matching slightly outperforms the summed-rank minimizing one. 
The average difference between the $50\%$-threshold of the robust and summed-rank minimizing matching is with $0.0004$ less pronounced. 
Nevertheless, within one instance the stability probability of all three matchings typically behaves quite similarly when adding noise (see \Cref{fig:fine2} for an example). 
All in all, we can conclude that from a robustness perspective it is recommendable to use the summed-rank minimizing stable matching instead of the men-optimal one (the small gain from using the robust stable matching instead arguably does not justify the increase in computation time).

\paragraph{Average Number of Blocking Pairs.}
As discussed in \Cref{ch:prelims}, enforcing that an initially stable matching remains stable after changes have been performed might be viewed as a quite strict requirement. 
This motivated the study of the \textsc{Blocking Pairs-Robustness} problems and motivates us in this section to analyze the expected number of pairs by which a given matching is blocked when changes are performed. 
For this, we again make use of the Mallows model and compute the average number of pairs blocking a given matching when applying the Mallows model with a given norm-$\phi$ parameter to all preference lists. 
We find that the average number of pairs blocking a matching is highly correlated with the matching's $50\%$-threshold: In \Cref{fig:corr-bps-fif}, each point corresponds to the men-optimal matching in one of our instances with the $x$-axis showing the matching's $50\%$-threshold and the $y$-axis showing the average number of pairs blocking the matching at norm-$\phi=0.1$ (plots for other values show similar trends).
We see a clear correlation between the two measures, but also clear differences on the instance-level, e.g., a matching with $50\%$-threshold $0.02$ might be blocked by between $23$ and $44$ pairs on average.
The general connection between the measures indicates that the $50\%$-threshold remains informative beyond the focus on binary stability, yet the average number of blocking pairs allows for a more nuanced picture. 
We refer to \Cref{app:bp} for further details, e.g., we find that within one instance, the average number of blocking pairs typically grows linearly with increasing value of norm-$\phi$ (a clearly different behavior compared to the stability probability of matchings).

\begin{figure}[t]
	\centering
	\begin{minipage}[t]{.48\textwidth}
		\centering
		\resizebox{0.7\textwidth}{!}{\begin{tikzpicture}[scale=0.8]
\begin{axis}[
  ymin=0, ymax=150,
    xlabel=norm-$\phi$,
    ylabel=number of instances,
     symbolic x coords={0.02,0.04,0.06,0.08,0.1,0.12,0.14,0.16,0.18,0.2,$>$0.2},
    xtick=data,
    xticklabels={0.02,,0.06,,0.1,,0.14,,0.18,,$>$0.2},
    x tick label style={rotate=90,anchor=east},
    nodes near coords align={vertical},
    x label style={at={(axis description cs:0.5,-0.28)},anchor=north},every tick label/.append style={font=\LARGE}, 
    label style={font=\LARGE },legend style={
    	font=\LARGE
    }
]
\addplot[ybar, nodes near coords, fill=blue] 
    coordinates {
    (0.02,2) 
    (0.04,15) 
    (0.06,105) 
    (0.08,133)
    (0.1,69)
    (0.12,69)
    (0.14,72)
    (0.16,33)
    (0.18,10)
    (0.2,6)
    ($>$0.2,30)
 };

\end{axis}

\end{tikzpicture}}\vspace*{-0.5cm}
		\caption{Distribution of the average $50\%$-threshold of stable pairs in each instance (rounded up to a multiple of $0.02$).}
		\label{fig:avg-pairs-dist}
	\end{minipage}%
	\hfill
	\begin{minipage}[t]{.48\textwidth}
		\centering
		\resizebox{0.7\textwidth}{!}{
\begin{tikzpicture}[scale=0.6,every plot/.append style={line width=2.2pt}]

\definecolor{color0}{rgb}{0.12156862745098,0.466666666666667,0.705882352941177}
\definecolor{color1}{rgb}{1,0.498039215686275,0.0549019607843137}
\definecolor{color2}{rgb}{0.172549019607843,0.627450980392157,0.172549019607843}
\definecolor{burgundy}{rgb}{0.5, 0.0, 0.13}

\definecolor{camel}{rgb}{0.76, 0.6, 0.42}

\definecolor{darkolivegreen}{rgb}{0.33, 0.42, 0.18}

\begin{axis}[
tick align=outside,
tick pos=left,
x grid style={white!69.0196078431373!black},
xlabel={norm-$\phi$},
xmin=-0.0495, xmax=1.0395,
xtick style={color=black},
y grid style={white!69.0196078431373!black},
ylabel={stable pair probability},
ymin=-0.029, ymax=1.049,
ytick style={color=black},every tick label/.append style={font=\huge}, 
label style={font=\huge },
yticklabels={,0\%,20\%,40\%,60\%,80\%,100\%}
]

\addplot [semithick, burgundy]
table {%
0 1
0.005 1
0.01 0.9998
0.015 0.9988
0.02 0.9968
0.025 0.9908
0.03 0.984
0.035 0.974
0.04 0.9672
0.045 0.96
0.05 0.952
0.055 0.947
0.06 0.9392
0.065 0.937
0.07 0.9216
0.075 0.923
0.08 0.9144
0.085 0.8996
0.09 0.891
0.095 0.884
0.1 0.885
0.105 0.8784
0.11 0.8652
0.115 0.866
0.12 0.8592
0.125 0.861
0.13 0.837
0.135 0.8432
0.14 0.847
0.145 0.8374
0.15 0.8176
0.155 0.8224
0.16 0.816
0.165 0.8124
0.17 0.7962
0.175 0.7892
0.18 0.7864
0.185 0.7674
0.19 0.7844
0.195 0.7602
0.2 0.7522 
0.205 0.741
0.21 0.7372
0.215 0.7262
0.22 0.714
0.225 0.7202
0.23 0.7008
0.235 0.68
0.24 0.6892
0.245 0.678
0.25 0.6774
0.255 0.6626
0.26 0.6614
0.265 0.6524
0.27 0.6292
0.275 0.6288
0.28 0.6244
0.285 0.6106
0.29 0.6108
0.295 0.5954
0.3 0.6008
0.305 0.5794
0.31 0.5642
0.315 0.5622
0.32 0.5704
0.325 0.562
0.33 0.542
0.335 0.5382
0.34 0.5392
0.345 0.5206
0.35 0.5224
0.355 0.5144
0.36 0.509
0.365 0.5086
0.37 0.491
0.375 0.4848
0.38 0.478
0.385 0.4648
0.39 0.4678
0.395 0.4686
0.4 0.4574
0.405 0.4372
0.41 0.4422
0.415 0.4356
0.42 0.4296
0.425 0.4322
0.43 0.4232
0.435 0.4334
0.44 0.415
0.445 0.4014
0.45 0.401
0.455 0.3904
0.46 0.38
0.465 0.368
0.47 0.3612
0.475 0.369
0.48 0.365
0.485 0.3688
0.49 0.356
0.495 0.3476
0.5 0.3512
0.505 0.3462
0.51 0.3232
0.515 0.3384
0.52 0.3244
0.525 0.3174
0.53 0.3064
0.535 0.3048
0.54 0.3166
0.545 0.3062
0.55 0.2884
0.555 0.2898
0.56 0.2832
0.565 0.283
0.57 0.2862
0.575 0.2762
0.58 0.275
0.585 0.262
0.59 0.2496
0.595 0.2514
0.6 0.2538
0.605 0.2518
0.61 0.2528
0.615 0.234
0.62 0.2098
0.625 0.2296
0.63 0.2246
0.65 0.2122
0.67 0.199
0.690000000000001 0.1788
0.710000000000001 0.1534
0.730000000000001 0.1458
0.750000000000001 0.1414
0.770000000000001 0.13
0.790000000000001 0.1066
0.810000000000001 0.0998
0.830000000000001 0.0918
0.850000000000001 0.0826
0.870000000000001 0.0762
0.890000000000001 0.0684
0.910000000000001 0.058
0.930000000000001 0.0514
0.950000000000001 0.0498
0.970000000000001 0.0454
0.990000000000001 0.04
};
\addplot [semithick, camel]
table {%
0 1
0.005 0.8746
0.01 0.7554
0.015 0.6654
0.02 0.5964
0.025 0.5286
0.03 0.4864
0.035 0.446
0.04 0.4322
0.045 0.4016
0.05 0.3966
0.055 0.3862
0.06 0.3802
0.065 0.3774
0.07 0.3758
0.075 0.3678
0.08 0.3528
0.085 0.3544
0.09 0.3452
0.095 0.3476
0.1 0.3484
0.105 0.349
0.11 0.3328
0.115 0.3156
0.12 0.3192
0.125 0.3142
0.13 0.3094
0.135 0.3108
0.14 0.3084
0.145 0.2894
0.15 0.292
0.155 0.2802
0.16 0.2914
0.165 0.2836
0.17 0.2738
0.175 0.255
0.18 0.2688
0.185 0.259
0.19 0.2538
0.195 0.245
0.2 0.2368
0.205 0.245
0.21 0.229
0.215 0.223
0.22 0.2264
0.225 0.2238
0.23 0.2166
0.235 0.2064
0.24 0.1938
0.245 0.1944
0.25 0.2
0.255 0.2006
0.26 0.1832
0.265 0.1782
0.27 0.1672
0.275 0.1702
0.28 0.1656
0.285 0.171
0.29 0.1714
0.295 0.1598
0.3 0.159
0.305 0.145
0.31 0.1484
0.315 0.1494
0.32 0.1342
0.325 0.1442
0.33 0.1356
0.335 0.1406
0.34 0.135
0.345 0.1394
0.35 0.1296
0.355 0.1216
0.36 0.1186
0.365 0.115
0.37 0.108
0.375 0.114
0.38 0.1132
0.385 0.1028
0.39 0.096
0.395 0.097
0.4 0.1114
0.405 0.099
0.41 0.1004
0.415 0.0982
0.42 0.0972
0.425 0.0948
0.43 0.0932
0.435 0.0922
0.44 0.0904
0.445 0.0956
0.45 0.0918
0.455 0.08
0.46 0.0924
0.465 0.086
0.47 0.0894
0.475 0.0824
0.48 0.0836
0.485 0.0836
0.49 0.0842
0.495 0.0756
0.5 0.0802
0.505 0.0806
0.51 0.0764
0.515 0.0746
0.52 0.0752
0.525 0.0746
0.53 0.0696
0.535 0.0774
0.54 0.075
0.545 0.0702
0.55 0.065
0.555 0.0698
0.56 0.0686
0.565 0.0666
0.57 0.0676
0.575 0.0684
0.58 0.0654
0.585 0.06
0.59 0.068
0.595 0.0592
0.6 0.0606
0.605 0.0742
0.61 0.0612
0.615 0.0606
0.62 0.072
0.625 0.063
0.63 0.0628
0.65 0.0574
0.67 0.057
0.690000000000001 0.0572
0.710000000000001 0.0548
0.730000000000001 0.0586
0.750000000000001 0.0564
0.770000000000001 0.05
0.790000000000001 0.053
0.810000000000001 0.0482
0.830000000000001 0.0556
0.850000000000001 0.0464
0.870000000000001 0.0454
0.890000000000001 0.047
0.910000000000001 0.0452
0.930000000000001 0.0442
0.950000000000001 0.0414
0.970000000000001 0.0362
0.990000000000001 0.0402
};
\addplot [semithick, darkolivegreen]
table {%
0 1
0.005 0.778
0.01 0.585
0.015 0.46
0.02 0.3678
0.025 0.2932
0.03 0.242
0.035 0.191
0.04 0.1612
0.045 0.14
0.05 0.1316
0.055 0.1098
0.06 0.0992
0.065 0.0972
0.07 0.0828
0.075 0.0798
0.08 0.0702
0.085 0.0716
0.09 0.0622
0.095 0.061
0.1 0.0626
0.105 0.0584
0.11 0.0548
0.115 0.0572
0.12 0.052
0.125 0.0526
0.13 0.0524
0.135 0.0494
0.14 0.0474
0.145 0.055
0.15 0.048
0.155 0.047
0.16 0.0516
0.165 0.0444
0.17 0.046
0.175 0.0462
0.18 0.0472
0.185 0.0478
0.19 0.0462
0.195 0.047
0.2 0.0454
0.205 0.0426
0.21 0.0466
0.215 0.0474
0.22 0.0418
0.225 0.049
0.23 0.0458
0.235 0.042
0.24 0.0424
0.245 0.0414
0.25 0.0468
0.255 0.0438
0.26 0.0404
0.265 0.0448
0.27 0.049
0.275 0.0418
0.28 0.0452
0.285 0.046
0.29 0.0434
0.295 0.0434
0.3 0.0458
0.305 0.0452
0.31 0.0426
0.315 0.0426
0.32 0.0464
0.325 0.0416
0.33 0.0416
0.335 0.0508
0.34 0.0464
0.345 0.0464
0.35 0.045
0.355 0.0478
0.36 0.0424
0.365 0.0508
0.37 0.049
0.375 0.0474
0.38 0.0506
0.385 0.05
0.39 0.0486
0.395 0.0548
0.4 0.0526
0.405 0.0482
0.41 0.0454
0.415 0.0498
0.42 0.0472
0.425 0.048
0.43 0.0524
0.435 0.0506
0.44 0.0514
0.445 0.0506
0.45 0.0494
0.455 0.0494
0.46 0.0512
0.465 0.0508
0.47 0.0564
0.475 0.053
0.48 0.0514
0.485 0.0512
0.49 0.0482
0.495 0.055
0.5 0.0596
0.505 0.0572
0.51 0.0556
0.515 0.0496
0.52 0.0494
0.525 0.0576
0.53 0.0528
0.535 0.0574
0.54 0.0608
0.545 0.0516
0.55 0.0536
0.555 0.055
0.56 0.0574
0.565 0.0592
0.57 0.0538
0.575 0.0548
0.58 0.0574
0.585 0.0592
0.59 0.059
0.595 0.0524
0.6 0.0578
0.605 0.0586
0.61 0.0562
0.615 0.058
0.62 0.0512
0.625 0.0584
0.63 0.053
0.65 0.0588
0.67 0.0556
0.690000000000001 0.0628
0.710000000000001 0.0526
0.730000000000001 0.0552
0.750000000000001 0.056
0.770000000000001 0.0546
0.790000000000001 0.0536
0.810000000000001 0.0512
0.830000000000001 0.05
0.850000000000001 0.0536
0.870000000000001 0.0486
0.890000000000001 0.049
0.910000000000001 0.048
0.930000000000001 0.0428
0.950000000000001 0.0382
0.970000000000001 0.0428
0.990000000000001 0.0374
};
\end{axis}

\end{tikzpicture}}\vspace*{-0.5cm}
		\caption{Stability probability of three pairs in instance sampled from a Euclidean model.}
		\label{fig:fine3}
	\end{minipage}%
\end{figure}

\paragraph{Further Experiments.}
In \Cref{app:var}, we analyze the dependence of our results on the number of agents, observing that the $50\%$-threshold decreases with an increasing number of agents. 
Lastly, in \Cref{app:uns}, we check whether matchings that are not initially stable can have a non-negligible stability probability in case some noise is applied to the instance. In most instances, we were unable to find such matchings. 

\subsection{Stable Pairs}\label{sec:pairs}\label{sub:exp2}
We also analyze the robustness of stable pairs in our instances.
Analgous as for matchings, we define the $50\%$-threshold of a pair as the smallest value of norm-$\phi$ so that the probability of the pair being stable is below $50\%$. 
Comparing the $50\%$-thresholds of initially stable matchings to the $50\%$-thresholds of initially stable pairs, it turns out that pairs are in general much more stable than matchings. In particular, pairs oftentimes have a $50\%$-threshold above norm-$\phi=0.08$: In \Cref{fig:avg-pairs-dist}, we show the distribution of instances' average $50\%$-thresholds computed by taking the average $50\%$-threshold of pairs initially stable in the instance. We observe that for $422$ out of our $544$ instances the average $50\%$-threshold is above norm-$\phi=0.08$. 
Note that this observation is quite intuitive, as creating a single blocking pair is sufficient to make a matching unstable. In contrast, a certain pair
in this matching can still continue to be stable in other stable matchings in the instance, as oftentimes not the full matching needs to be replaced to reestablish stability.

However, there are additional differences between pairs and matchings.  
Recall that we have argued above that different stable matchings in one instances (cf.\ \Cref{fig:fine2}) are often similarly robust to changes. In contrast, the difference between pairs can be more pronounced:
As an example, in \Cref{fig:fine3} we see how the stability probability of three pairs in an instance sampled from a Euclidean model develops when adding more and more noise to the preference list. 
It is easy to think of extremely stable pairs, for instance, pairs that rank each other on the first place and are ranked last by every other agent.
Our experiments indicate that such drastic examples do appear in our dataset.
We point to \Cref{app:exp2} for a more detailed description of our experiments on stable pairs.

\section{Conclusion}
We have conducted an algorithmic and experimental study of the average-case and worst-case robustness of stable matchings, pairs, and agents. For future work, regarding the complexity part, settling the complexity of \textsc{Pair-Delete-Robustness} is the most pressing open question. Moreover, it would also be interesting to examine the parameterized complexity of our problems. For instance, it is open whether our counting problems are \#W[1]-hard when parameterized by the examined distance. 
For future experimental work, it would be interesting to run the experiments on data from other sources to further confirm our findings of the non-robustness of stable matchings in practice. 

\bibliographystyle{plainnat}

\clearpage

\appendix 

\section{Additional Material for \Cref{ch:prelims}}\label{app:prelims}

\begin{proposition}\label{thm:wc_robust_bound}
	Let $\mathcal{I}=(U,W,\mathcal{P})$ be a \textsc{Stable Marriage}-instance with $|U|=|W|=n>1$ and let $M$ be a stable matching in $\mathcal{I}$. Then, $M$ can be made unstable by performing $n-1$ swaps (this upper bound is tight).
\end{proposition}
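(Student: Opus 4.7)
The plan is to exhibit a single non-matched pair $\{u,w'\}\notin M$ and bound the number of swaps needed to turn it into a blocking pair. The key observation is that, because a blocking pair depends only on the preferences of the two agents involved, the minimum number of swaps needed to make $\{u,w'\}$ block $M$ decomposes cleanly as
$$s(u,w') \;=\; \max\!\bigl(0,\rk_u(w')-\rk_u(M(u))\bigr) \;+\; \max\!\bigl(0,\rk_{w'}(u)-\rk_{w'}(M(w')) \bigr),$$
since it suffices to bring $w'$ above $M(u)$ in $u$'s list (cost $=$ the first term) and $u$ above $M(w')$ in $w'$'s list (cost $=$ the second term), and these two edits are independent.

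Now fix any $\{u,w\}=\{u,M(u)\}\in M$. I split into two cases. If $u$ prefers some woman $w'\neq w$ to $w$, then stability of $M$ forces $w'$ to prefer $M(w')$ over $u$, so $\rk_{w'}(u)\geq\rk_{w'}(M(w'))+1$. Then the first term of $s(u,w')$ is $0$ and the second is $\rk_{w'}(u)-\rk_{w'}(M(w'))\leq n-1$, using only that ranks lie in $\{1,\dots,n\}$. By symmetry, the analogous argument from the women's side disposes of the case where some woman prefers a non-partner to her partner. The remaining situation is that every pair in $M$ consists of \emph{mutual top choices}: $\rk_u(M(u))=\rk_{M(u)}(u)=1$ for all $u\in U$. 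Here stability imposes no direct constraint on the cross-rankings $\rk_u(w'),\rk_{w'}(u)$, and we have $s(u,w')=\rk_u(w')+\rk_{w'}(u)-2$ for any $w'\neq M(u)$. The plan is to take $w'$ to be $u$'s second-choice woman so the first term equals $1$, and to pick $u$ cleverly (e.g.\ via an averaging argument over all $n$ men of the quantity $\rk_{w'}(u)$) to force the second term down to $n-2$, yielding total $n-1$.

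For the tightness claim, the plan is to exhibit a cyclic ``Latin-square'' style instance in which the natural stable matching $\{(u_i,w_i)\}$ forces every candidate blocking pair to require close to the full budget; verifying stability of this matching and directly computing $s(u_i,w_j)$ for each $i\neq j$ via the formula above should pin the needed swap count down to exactly $n-1$.

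The main obstacle I expect is the mutual-top subcase: since every man and woman already sits at the top of their partner's list, none of the local stability inequalities ``for free'' bound $\rk_u(w')$ or $\rk_{w'}(u)$, and extracting the bound $n-1$ requires a global counting or pigeonhole argument over all $n(n-1)$ non-matched pairs rather than a purely pair-local one. Getting the constant sharp (i.e.\ $n-1$ and not $n$) is delicate here, since the averaging gives mean value $n$ over all non-matched pairs in the worst case, and one must use the integrality of ranks and the fact that no two women can share a rank in $u$'s list to shave off the final unit.
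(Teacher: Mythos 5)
Your decomposition of $s(u,w')$ into the two independent per-list costs is correct, and your first case is in fact cleaner than the paper's argument: if some agent $u$ already prefers a non-partner $w'$ to $M(u)$, stability forces $\rk_{w'}(M(w'))<\rk_{w'}(u)\le n$, so zero swaps are needed in $u$'s list and at most $n-1$ in $w'$'s, giving $n-1$ outright. (The paper instead always picks the agent $a'$ sitting \emph{directly below} $M(a)$ in some list where $M(a)$ is not ranked last, pays one swap there plus up to $n-1$ in $a'$'s list, and so actually only derives the bound $n$.)

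The genuine gap is your mutual-top-choices case, and it cannot be repaired, because the bound $n-1$ is false there. The paper's own \emph{Robust} instance is exactly such an instance: $u_i$ ranks $w_i,w_{i+1},\dots,w_n,w_1,\dots,w_{i-1}$ and symmetrically for the women, so for every $i\ne j$ one has $\rk_{u_i}(w_j)=((j-i)\bmod n)+1$ and $\rk_{w_j}(u_i)=((i-j)\bmod n)+1$, hence $s(u_i,w_j)=((j-i)\bmod n)+((i-j)\bmod n)=n$ for \emph{every} candidate blocking pair. The minimum, not just the average, equals $n$, so no pigeonhole, integrality, or clever choice of $u$ and $w'$ can shave the final unit; your own worry about the averaging giving mean $n$ was exactly on target. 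This also exposes an off-by-one in the proposition itself: the tight worst-case bound is $n$ swaps (which is what the paper's appendix proof establishes and what the Robust instance witnesses as a matching lower bound), not $n-1$. With the corrected target $n$, your two cases close immediately -- Case 1 gives $n-1\le n$, and in Case 2 taking $w'$ to be $u$'s second choice gives $s(u,w')=1+(\rk_{w'}(u)-1)\le n$ -- and your cyclic tightness construction is precisely the paper's Robust instance, except that it certifies $n$ rather than $n-1$.
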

\begin{proof}
	For contradiction, suppose that there is an instance $\mathcal{I}$ that admits a stable matching $M$ which cannot be made unstable by performing at most $d$ swaps, where $d \geq n$. Since $M$ is stable, there exists an agent $a \in A$ that does not rank $M(a)$ last. Let $a'$ be the agent that directly follows $M(a)$ in the preference list of $a$. As the preference list of $a'$ ranks $n$ agents, it takes at most $n-1$ swaps to swap $a$ in front of $M(a')$ in the preferences of $a'$. Thus, by additionally swapping $a'$ and $M(a)$ in the preference list of $a$, we can create the blocking pair $\{a,a'\}$ with at most $n$ swaps, contradicting the $d$-robustness of $M$. Tightness of the upper bound follows from the Robust instance presented in the main body.
\end{proof}

The same argument also applies to stable pairs. 

\medskip

\paragraph{Computing Stability Probabilities}
To compute stability probabilities from the answers of our counting problems, we need to compute the number of instances at deletion and swap distance $\ell$ from a given instance.
We observe that for delete, this is trivial, as there are exactly $\binom{N}{\ell}$ ways to delete $\ell$ agents from a set of $N$ agents.

For swap, we can reduce the problem to counting the number of elections at some swap distance from a given one, which was shown to be polynomial-time solvable via dynamic programming by \citet{DBLP:conf/ijcai/BoehmerBFN21} (see Section B.2 of their arxiv version).  
Note that the preference profile of men can be viewed as an election with $U$ as voters and $W$ as candidates and that the preference profile of women can be viewed as an election with $W$ as voters and $U$ as candidates. 
By summing over all possible ways to split the swap budget onto the two elections, we can compute the number of instances at swap distance $\ell$. 

\section{Additional Material for \Cref{com:dec}}
\one*
\begin{proof}
	
	\noindent \textbf{Delete.}
	We assume without loss of generality that there are at least two women and that there are at least as many men as women. 
	There is a trivial solution for $\ell \geq 2$: We delete one matched woman $w \in W$ and one matched man $u \in U$ such that $\{u,w\} \not \in M$. Any stable matching $M'$ in the resulting instance will contain the edge $\{M(w),w'\}$ for some $w' \in W \setminus \{w\}$, as there are at least as many men as women and therefore each stable matching matches all men. As a results we have $M' \not \subseteq M$. If our budget is smaller than $2$, we first check whether $M$ is stable and accept if it is not. If $\ell=1$ and $M$ is stable, for all $a\in A$, we check whether there exists an agent $a' \in A$ who prefers $M(a)$ to $M(a')$. In that case, we can delete $a$, as $\{a',M(a)\}$ blocks $M':=M \setminus \{\{a,M(a)\}\}$, so we accept. Else, we decline. 
	
	\medskip
	\noindent \textbf{Swap.}
	Let $c_a(b,b')$ denote the number of swaps needed to bring $b$ in front of $b'$ in the preferences of $a$. Our algorithm first computes $c_a(b,M(a))$ for all $a \in A$ and $b$ of opposite gender and then determines the minimum number of swaps to make a matching $M$ unstable by the following expression: 
	\begin{equation*}
	\min_{\{u,w\} \in (U \times W) \setminus M} c_u(w,M(u))+c_w(u,M(w))
	\end{equation*}
	Over all pairs outside of the matching, we minimize the cost to make one of them blocking. 
\end{proof}

\two*
\begin{proof}
	\textbf{Delete.}
	We reduce from \textsc{Independent Set}, where we are given a graph $G=(V,E)$ and an integer $k$ and we want to decide whether there is a set $V' \subseteq V$ with $|V'| = k$ such that for all $u,v \in V'$ it holds that $\{u,v\} \not \in E$. We construct our \textsc{Blocking Pairs-Delete-Robustness} instance $\mathcal{I}'$ as follows:
	For each $v \in V$, we create a man $m_v$ and a woman $w_v$. Let $\text{deg}(v)$ denote the degree of $v$ and let $u_1,...,u_{\text{deg}(v)}$ be the neighbours of $v$. The preferences of the agents are as follows:
	\begin{align*}
	m_v&: w_v \succ ... &
	w_v: m_{u_1} \succ ... \succ m_{u_{\text{deg}(v)}} \succ ...
	\end{align*}
	Furthermore, we introduce men $m'_i$ and women $w'_i$ for each $i \in \{1,...,2k+|V|\}$. For any $i$, let $u^{i}_1,...,u^{i}_{k(i)}$ be the vertices with degree $< i$. For all $i \in \{1,...,|V|\}$, the preferences of $m'_i$ and $w'_i$ are the following:
	\begin{align*}
	m'_i&: w'_i \succ ... &
	w'_i: m_{u^{i}_1} \succ ... \succ m_{u^{i}_{k(i)}} \succ m'_i \succ ...
	\end{align*}
	For the remaining $2 \cdot 2k$ agents, i.e. $m'_i$ and $w'_i$ for all $i \in \{|V|+1,...,|V|+2k\}$, all $m'_i$ rank $w'_i$ as their top choice and all $w'_i$ rank all $m_v$ before $m'_i$ before all other agents. The rest of the preferences are arbitrary. Finally, we set $\ell:= k$, $b:= k(n+2k)$ and $M:=\{m_v,w_v \mid v \in V\} \cup \{m'_i,w'_i \mid i \in \{1,...,|V|+2k\}\}$. Notice that $M$ is stable since every man is matched to his top choice.
	
	($\Rightarrow$) Assume that $G$ contains an independent set $V'$ of size $k$. We claim that after deleting $w_v$ for all $v \in V'$, $b$ pairs are blocking for $M$. For each $v \in V'$, after deleting $w_v$, $m_v$ is unassigned in $M$. Therefore, each woman $w$ that prefers $m_v$ to $M(w)$ will form a blocking pair with $m_v$. By construction, this is the case for all $w_u$ with $\{v,u\} \in E$. Notice that since the deleted women correspond to an independent set, no woman $w_u$ is deleted in our solution and therefore all $\text{deg}(v)$ women form a blocking pair. Moreover, $(n-\text{deg}(v)) + 2k$ women $w'_i$ prefer $m_v$ to $M(w'_i)$ and since we do not delete any of these women in our solution, each of them forms a blocking pair with $m_v$. Altogether, $m_v$ is involved in $\text{deg}(v)+(n-\text{deg}(v))+2k=n+2k$ blocking pairs. As we deleted $k$ women, we have $k(n+2k)$ blocking pairs in total, fulfilling our required number of $b$ blocking pairs while deleting exactly $\ell$ agents.
	
	($\Leftarrow$) Assume that in $\mathcal{I}'$, we can delete at most $\ell$ agents such that $b$ pairs are blocking for $M$. First, consider the case that some agent $a \in A \setminus \{w_v \mid v \in V\}$ is deleted. No agent $a'$ of the same gender prefers $M(a)$ to $M(a')$. Thus, the only possibility for $M(a)$ to be contained in a blocking pair is to delete the matched partner $M(b)$ of another agent $b$ such that $\{M(a),b\}$ is blocking. As we can only delete $k-1$ more agents, the deletion of $a$ can produce at most $k-1$ blocking pairs. Deleting $w_v$ for any not yet deleted $v \in V$ instead of $a$ will increase the number of blocking pairs, as  $\{w'_i,m_v\}$ is blocking for all $i \in \{|V|+1,...,2k+|V|\}$ such that $m'_i$ is not deleted (at least $k+1$). Therefore, we can assume that only agents $w_v$ for some $v \in V$ are deleted.
	Let $D$ be the set of deleted women $w_v$. Consider a deleted woman $w_v \in D$. $M(w_v)=m_v$ can form a blocking pair with $w'_{\text{deg}(v)},...,w'_{n+2k}$ and with all $w_u$ such that $\{u,v\} \in E$, but not with any other woman, since all other women $w$ prefer $M(w)$ to $m_v$ and we assumed that $M(w)$ is not deleted. It follows that each $m_v$ for which $w_v$ was deleted can be part of at most $n+2k$ blocking pairs. To achieve the total number of $k(n+2k)$ blocking pairs, each of the $k$ men $m_v$ for which we deleted $w_v$ must be part of exactly $n+2k$ blocking pairs. Therefore, each $m_v$ must form a blocking pair with every $w_u$ such that $\{u,v\} \in E$. Consequently, no such $w_u$ can be deleted. It follows that $\{u,v\} \not \in E$ for every two deleted $w_u,w_v \in D$. By definition, $\{v \mid w_v \in D\}$ is an independent set of size $|D|=k$.
	
	\medskip
	\noindent \textbf{Swap.}
	We reduce from \textsc{Clique}, where we are given a graph $G$ and an integer $k \in \NN$ and we want to decide whether $G$ contains a clique of size $k$, i.e. whether there is a $V' \subseteq V$ with $|V'|=k$ and $\{v,u\} \in E$ for all $v,u \in V'$ with $v \neq u$. \textsc{Clique} is very well known to be NP-complete. We construct our \textsc{Blocking Pairs-Swap-Robustness} instance $\mathcal{I}'$ as follows: For each vertex, we introduce a man $m_v$ and a woman $w_v$. For each edge, we introduce a man $m_e$ and a woman $w_e$. Additionally, we set $p:=nk+\binom{k}{2}$, $q:=k(p+n)+\binom{k}{2}$ and we introduce $r:=k(p+n)+(q+2)\binom{k}{2}+1$ dummy men $m^d_i$ and dummy women $w^d_i$ for each $i \in \{1,...,r\}$. 
	Let $v \in V$ be a vertex with incident edges $e_1,...,e_s$. Let $e\in E$ be an edge with endpoints $v_1$ and $v_2$. The preferences are as follows:
	\begin{align*}
	m_v&: w_v \succ w^d_1 \succ ... \succ w^d_p \succ w_{e_1} \succ ... \succ w_{e_s} \succ \\
	& \hspace*{4.5cm} w^d_{p+1} \succ ... \succ w^d_{r} \succ ...  \\
	w_v&: m_v \succ m^d_1 \succ ... m^d_r \succ ... \\
	m_e&: w_e \succ w^d_1 \succ ... \succ  w^d_r \succ ... \\  
	w_e&: m_e \succ m^d_1 \succ ... \succ m^d_q \succ  m_{v_1} \succ m_{v_2} \succ\\ 
	& \hspace*{4.5cm} m^d_{q+1} \succ ... \succ m^d_{r} \succ ...  
	\end{align*}
	For each $i \in \{1,...,k\}$, the preferences of dummy agents are:
	\begin{align*}
	m^d_i&: w^d_i \succ w^d_{i+1} \succ ... \succ w^d_r \succ w^d_1 \succ ... \succ w^d_{i-1} \succ ...  \\
	w^d_i&: m^d_i \succ m^d_{i+1} \succ ... \succ m^d_r \succ m^d_1 \succ ... \succ m^d_{i-1} \succ ...
	\end{align*}
	We set the swap budget $\ell:=r-1=k(p+n)+(q+2)\binom{k}{2}$ and the number of blocking pairs $b:=2 \binom{k}{2}$. We set the designated matching to be: 
	\begin{align*}
	M:= & \{\{m_v,w_v\} \mid v \in V\} \cup \{\{m_e,w_e\} \mid e \in E\} \\
	& \cup \{\{m^d_i,w^d_i\} \mid i \in \{1,...,r\}\}
	\end{align*}
	
	($\Rightarrow$) Assume that there is a clique $C=(\{v_1,...,v_k\},\{e_1,...,e_{\binom{k}{2}}\})$ in $G$. For each $v \in V(C)$, we swap $w_v$ down by $p+n$ positions in the preferences of $m_v$. For each $e \in E(C)$, we swap down $m_e$ by $q+2$ positions in the preferences of $w_e$. Now, for each of the $2 \binom{k}{2}$ pairs $(v,\{v,w\})$ with $v,w \in V(C), e:=\{v,w\} \in E(C)$, we have that $m_v$ prefers $w_e$ to $w_v$ and $w_e$ prefers $m_v$ to $m_e$. It follows that $\{m_v,w_e\}$ is blocking. Notice that we used exactly $k(p+n)+(q+2)\binom{k}{2}$ swaps.
	
	($\Leftarrow$) Assume that in $\mathcal{I'}$, we can perform at most $k(p+n)+(q+2) \binom{k}{2}$ swaps such that $2 \binom{k}{2}$ pairs are blocking. First notice that no blocking pair can contain a dummy agent $a$: One needs at least $r$ swaps to swap $M(a)$ behind a non-dummy agent in the preferences of $a$ and one needs $r$ swaps to make two dummy agents with different index prefer each other over their matched agents, which exceeds the given swap budget. Moreover, no blocking pair can contain $w_v$ or $m_e$ for any $v \in V$ and $e \in E$, since in their preferences, one cannot swap their matched partners behind a non-dummy agent in less than $r$ swaps and no dummy-agent is contained in a blocking pair. It follows that all blocking pairs must be $\{m_v,w_e\}$ for some $v \in V$ and $e \in E$. Let $e=\{u,u'\}$. Every $w_e$ can only form blocking pairs with $m_u$ and $m_{u'}$, since it cannot form blocking pairs with dummy agents and for any other non-dummy agent $m$, we would need at least $r$ operations to swap $m_e$ behind $m$ in the preferences of $w_e$, which exceeds our budget.
	Therefore, at least $\binom{k}{2}$ different women $w_e$ must be contained in at least one blocking pair. On the other hand, for each $w_e$, we need at least $q+1$ swaps in the preferences of $w_e$ to make $w_e$ blocking with some $m_v$. Assume for contradiction that $\binom{k}{2}+1$ different women are contained in a blocking pair. We need 
	\begin{align*}
	&(\binom{k}{2}+1)(q+1) \\
	=&\binom{k}{2}(q+1)+q+1\\
	=&\binom{k}{2}(q+1)+k(p+n)+\binom{k}{2}+1\\
	=&\binom{k}{2}(q+2)+k(p+n)+1\\
	=&r
	\end{align*}
	swaps, exceeding our budget. Therefore, exactly $\binom{k}{2}$ women $w_e$ are contained in a blocking pair. Thus, we need to spend at least $(q+1)\binom{k}{2}$ swaps in the preferences of women and there are $k(p+n)+(q+2) \binom{k}{2} - (q+1)\binom{k}{2} = k(p+n) + \binom{k}{2}$ swaps left in the preferences of men. For any $m_v$ to be contained in a blocking pair, $w_v$ must be swapped down at least $p+1$ positions in the preferences of $m_v$ (all prior swaps involve dummy women). Suppose for contradiction that at least $k+1$ different men $m_v$ are contained in a blocking pair. In total, at least $(k+1)(p+1)$ swaps are needed. We have:
	\begin{align*}
	&(k+1)(p+1)\\
	=&k(p+1)+nk+\binom{k}{2}\\
	=&k(p+n+1)+\binom{k}{2}\\
	>&k(p+n) + \binom{k}{2}
	\end{align*}
	exceeding our remaining budget. Therefore, at most $k$ men $m_v$ can be contained in a blocking pair. However, every woman $w_e$ can only be matched to men corresponding to endpoints of $e$ (remember that we need at least $r$ swaps to make any other pair blocking). Let $E^*$ be the selected edges whose corresponding women are contained in a blocking pair. Since only $k$ men $m_v$ can be contained in a blocking pair, the endpoints of the selected edges are at most $k$ vertices, i.e. $|\bigcup_{e \in E^*} e| \leq k$. By definition, $(\bigcup_{e \in E^*} e,E^*)$ is a clique of size $k$ in $G$.
\end{proof}

\three*
\begin{proof}
	\citet{DBLP:conf/sagt/BoehmerBHN20} showed that the constructive variant of this problem, where we want to perform $\ell$ swaps in order to make a pair stable, is NP-hard. They also remark that with this result, one can show NP-hardness for the destructive variant of their problem. However, in their destructive variant, we are given a pair and a budget $\ell$ and the question is whether it is possible to perform at most $\ell$ swaps so that there is one stable matching where the designated pair is not contained. In contrast, our problem requires that the pair is not contained in any matching.
	We show that their reduction idea also works for our destructive variant. 
	
	To this end, we first consider so-called \emph{Add} operations, where one is allowed to add an agent from a predefined set of agents (such agents come with preferences and their positions in the preferences of all other agents are also already known). 
	We call the respective variant of our pair robustness problem \textsc{Pair-Add-Robustness}.
	We start by showing that \textsc{Pair-Add-Robustness} can be reduced to \textsc{Pair-Swap-Robustness}. Afterwards, we show NP-hardness for \textsc{Pair-Add-Robustness}, implying NP-hardness for \textsc{Pair-Swap-Robustness}.
	\begin{lemma}\label{thm:dpa-dps}
		One can reduce \textsc{Pair-Add-Robustness} to \textsc{Pair-Swap-Robustness} in polynomial time.
	\end{lemma}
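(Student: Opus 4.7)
The plan is to embed each addable agent in the Swap instance as a ``dormant'' agent already present but bound to a single-use dummy partner, and to identify each add-operation with a uniform number of swaps that ``activates'' the corresponding agent by releasing the dummy pairing.

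Given an instance $(\mathcal{I}_0 = (U_0, W_0, \mathcal{P}_0), S, \ell, \{m^*, w^*\})$ of \textsc{Pair-Add-Robustness}, I construct a Swap instance on the agent set $A_0 \cup S$, plus polynomially many dummy agents, as follows. For each addable man $m \in S$ I add a dedicated dummy partner $w_m$, giving $m$ the preferences $w_m \succ [\text{the preferences of } m \text{ prescribed by the Add instance}]$ and $w_m$ the preferences $m \succ \ldots$; the dummy partner $w_m$ sits at the bottom of every other man's list so that she never interacts with anyone besides $m$. In every existing woman $w \in W_0$'s list, the addable man $m$ is inserted at exactly the position $\rho_m(w)$ prescribed by the Add instance. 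The symmetric construction is applied to each addable woman in $S$. Since $\{m, w_m\}$ then lies in every initial stable matching and $m$ is matched to his top choice, $m$ participates in no blocking pair; hence the stable matchings of the Swap instance restricted to $A_0$ coincide with those of $\mathcal{I}_0$, preserving the initial stability status of $\{m^*, w^*\}$.

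The activation of an addable man $m$ consists of swapping $w_m$ down in $m$'s preference list past all women of $W_0 \cup S_W$; afterwards $m$'s preferences restricted to $A_0 \cup S$ are exactly his Add-instance preferences, and $w_m$ becomes a last-resort that remains unmatched in every stable matching. By padding $m$'s list (and symmetrically the lists of addable women) with dummy agents, the activation cost can be made a uniform constant $C$. The key structural property of the construction is that, after activating any subset $S' \subseteq S$, the stable matchings of the Swap instance restricted to $A_0 \cup S'$ coincide exactly with those of the Add instance after adding $S'$, so the stability status of $\{m^*, w^*\}$ is preserved under the correspondence. Setting $\ell' := \ell \cdot C$ immediately yields the forward direction: any add-solution of size $\leq \ell$ translates into a swap-solution of cost $\leq \ell'$.

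The principal obstacle is the backward direction: extracting an add-solution of size $\leq \ell$ from an arbitrary swap-solution of cost $\leq \ell'$. I will argue that, without loss of generality, the optimal swap strategy fully activates a subset $S' \subseteq S$ of size at most $\ell$ and does nothing else, ruling out two alternatives. First, swaps that reorder two real agents within an $A_0$-internal preference list are handled by padding every $A_0$-preference list with a rigidity buffer of $R > \ell'$ dummies between consecutive real agents, so that any such reordering costs at least $R$ swaps and is thus unaffordable. Second, partial activations are rendered useless by designing the activation gadget so that, as long as $w_m$ still sits above every woman of $W_0$ in $m$'s list (equivalently, fewer than $C$ activation swaps have been spent on $m$), the agent $m$ participates in no blocking pair, because his top choice is either $w_m$ itself or a buffer dummy that accepts him. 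Combined, every swap that is not part of a complete activation is either unaffordable or useless, so the budget $\ell' = \ell \cdot C$ admits at most $\ell$ complete activations; the activated set is the desired add-solution. Since the construction and the budget are polynomial in the size of the Add input, the reduction is complete.
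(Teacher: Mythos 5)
Your overall strategy --- embed the addable agents as ``dormant'' agents bound to a dedicated partner, identify each add with a fixed number of swaps, and use expensive padding to forbid all other swaps --- is the same high-level idea as the paper's reduction. The difference, and the source of a genuine gap, is \emph{where} the activation toggle lives. You place it in $m$'s own preference list and realize activation as the multi-swap process of pushing $w_m$ down past \emph{all} real women. This creates intermediate states that your argument does not control: once $w_m$ has been swapped below \emph{some} but not all of the real women in $m$'s list, $m$ prefers a proper prefix of his Add-instance list to $w_m$ and $w_m$ to the rest. Such an $m$ is no longer forced to be matched to $w_m$ (he can form blocking pairs with the women above $w_m$), yet he is also not the fully added agent --- he behaves like an agent with a truncated preference list, a configuration that simply does not exist in the Add instance. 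Your stated safeguard (``as long as $w_m$ still sits above every woman of $W_0$, $m$ participates in no blocking pair'') only covers the harmless phase while $w_m$ is still inside the buffer of dummies; it says nothing about the phase where $w_m$ is interleaved with the real women, which is reachable within budget: with $\ell'=\ell\cdot C$ swaps an adversarial solution can fully activate $\ell-1$ agents and spend the remaining $C$ swaps pushing one more $w_m$ partway into the real portion of some list. So the backward direction --- extracting an add-set of size at most $\ell$ from an arbitrary affordable swap-solution --- does not go through as written.

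The paper sidesteps this by making activation a \emph{single} swap in an auxiliary agent's list rather than a walk through $m$'s own list: for each addable $a$ it introduces $a'$ (opposite gender, list $a \succ a'' \succ \dots$, and $a'$ is $a$'s top choice) and $a''$ (same gender as $a$, top choice $a'$). If the swap between positions $1$ and $2$ of $a'$'s list is not performed, $\{a,a'\}$ is a mutual-top-choice pair and $a$ is inert; if it is performed, $\{a',a''\}$ becomes the mutual-top-choice pair and $a$ is released to act exactly according to its Add-instance preferences. Because the toggle is binary there is no intermediate state to argue about, each activation costs exactly one swap, the budget stays at $\ell$, and the $(\ell+1)$-dummy padding everywhere else makes every other swap unaffordable or irrelevant. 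To repair your proof you would essentially have to replace your gradual-demotion gadget with such a one-swap toggle.
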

	\begin{proof}
		Let  $U_{add} \cup W_{add}$ be the set from which we can add $\ell$ agents to the instance in the given \textsc{Pair-Add-Robustness} instance $\mathcal{I}$.
		Constructing the corresponding instance of \textsc{Pair-Swap-Robustness} we start with all agents including those from $U_{add} \cup W_{add}$ in $\mathcal{I}$. Our swap budget is $\ell$.
		Moreover, for every $a \in U_{add} \cup W_{add}$, we add two agents $a
		',a''$ to our construction where $a''$ has the same gender as $a$, but $a'$ is of opposite gender. We set the preferences of the new agents to be $a': a \succ a'' \succ ...$ and $a'': a' \succ ...$ and add $a'$ as the top choice of $a$.
		
		Moreover, following ideas from   \citet{DBLP:conf/sagt/BoehmerBHN20}, we introduce dummy men $m^d_i$ and dummy women $w^d_i$ for $i \in [q:=(n-1)(l+1)]$. For any $i$, their preferences are $m^d_i: w^d_i \succ w^d{i+1} \succ ... \succ w^d_q \succ w^d_1 \succ ... \succ w^d_{i-1} \succ ...$ and for the dummy women analogous. 
		Lemma 7 by \citet{DBLP:conf/sagt/BoehmerBHN20} showed that if we can perform at most $\ell$ swaps, all stable matchings must contain $\{\{m^d_1,w^d_1\},...,\{m^d_q,w^d_q\}\}$. 
		
		Following Lemma 8 by \citet{DBLP:conf/sagt/BoehmerBHN20}, in every preference list, between any two adjacent agents, we place $\ell+1$ dummy agents that have the corresponding dummy agent of opposite gender as their top choice. However, as the only exception, we do not place these dummy agents between $a$ and $a''$ in the preference list of any $a'$ where $a \in  U_{add} \cup W_{add}$. This way, any other swap will not influence the stability of the matching.
		
		Let $\mathcal{I}$ be an instance of \textsc{Pair-Add-Robustness} where there are $\ell$ agents $a \in U_{add} \cup W_{add}$ such that after their addition to the set of agents, no stable matching contains a designated pair $\{m^{**},w^*\}$. For an instance $\mathcal{I'}$ of \textsc{Pair-Swap-Robustness} with swap budget $\ell$ constructed as described above, for every added agent $a$, we swap $a$ and $a''$ in the preferences of $a'$. Then, every stable matching contains $\{a',a''\}$ and $a$ will be matched exactly as in $\mathcal{I}$ in all stable matchings. Every $a$ where this swap is not performed must be matched to $a'$ (mutual top-choice). Notice that all dummy agents can only be matched among themselves and therefore cannot form any blocking pair. It follows that after these swaps, no stable matching contains $\{m^*,w^*\}$ (if the opposite was the case, adding the corresponding agents in $\mathcal{I}$ would create a stable matching containing the designated pair). 
		
		For the other direction, notice that it is not possible to swap any two non-dummy agents $b$ and $c$ except for $a$ and $a''$ in the preferences of $a'$ for some agent $a \in U_{add} \cup W_{add}$, as there are $\ell+1$ dummy agents between $b$ and $c$, requiring $\ell + 1$ swaps, but our budget is only $\ell$. Assume that a solution contains some swaps involving dummy agents. Then, when omitting these swaps, $\{m^*,w^*\}$ is still not contained in any stable matching (the same non-dummy pairs are blocking for all matchings). Therefore, we can assume that all swaps are between $a$ and $a''$ in the preferences of some $a'$. With the same argument as for the other direction, swapping each of these swaps in $\mathcal{I'}$ is equivalent to adding the corresponding agent $a$ in $\mathcal{I}$. 
	\end{proof}
	
	We now show that \textsc{Pair-Add-Robustness} is NP-hard and by \Cref{thm:dpa-dps} the result follows. To this end, we reduce from \textsc{Constructive-Pair-Add}, where we are given a man-woman-pair $\{m^*,w^*\}$, a set of initially not added men and women $U_{add} \cup W_{add}$, an SM instance $\mathcal{I}=(U,W,\mathcal{P})$ and a budget $\ell \in \NN$ and want to find a set $X \subseteq U_{add} \cup W_{add}$ with $|X| \leq \ell$ such that after adding all agents from $X$ to our instance $\mathcal{I}$, the pair $\{m^*,w^*\}$ is contained in a stable matching. We assume that the given instance is constructed as for the reduction from \textsc{Clique} by Theorem 1 of \citet{DBLP:conf/sagt/BoehmerBHN20}; see their paper for the full, somewhat involved construction. The general idea of their construction is as follows: Man $m^*$ can only be matched with $w^*$ if every \emph{penalizing woman} is matched to an \emph{edge man}. There are $\binom{k}{2}$ penalizing women and each edge man can only be matched to a penalizing woman if one agent corresponding to the edge and two agents corresponding to the endpoints of the edge are added. Since the budget is $\binom{k}{2}+k$, the used edges must form a clique of size $k$. Notice that since $W_{add} = \emptyset$, all agents that can be added are men.
	
	We build the \textsc{Pair-Add-Robustness} instance as follows: We add a man $m^{**}$ with preferences $m^{**}: w^* \succ ...$ . We modify the top two preferences of $w^*$ to $m^* \succ m^{**}$. 
	Finally, we set our designated pair to be $\{m^{**},w^*\}$ and our budget $\ell':=\ell$.
	
	Let $\mathcal{I}$ be an SM instance where one can add $\ell$ agents $a \in U_{add}$ such that $\{m^*,w^*\}$ is contained in a stable matching $M$. Let $X$ be the set of these added agents. We claim that adding the same agents in $\mathcal{I}'$ results in an instance where no stable matching contains $\{m^{**},w^*\}$. 
	
	Assume for contradiction that after performing the same additions in $\mathcal{I}'$, there exists a stable matching $M'$ containing $\{m^{**},w^*\}$. Then, $\{m^*,w^*\}$ must not form a blocking pair in such a matching. As $w^*$ prefers $m^*$ to $m^{**}$, we need that $m^*$ prefers $M'(m^*)$ to $w^*$. Therefore, $M'(m^*)$ has to be a penalizing woman $w^\dagger$. Since $\binom{k}{2}$ edge men were selected in $\mathcal{I}$, one edge man $m_e$ which was selected is not matched to a penalizing woman. However, as $m_e'$ and $m_e''$ are mutual-top choices, $m_e$ prefers $w^\dagger$ to its partner and vice versa. It follows that the pair $\{w^\dagger, m_e\}$ is blocking, leading to a contradiction. 
	
	For the other direction, let $\mathcal{I'}$ be an SM instance where one can add at most $\ell'$ agents such that $\{m^{**},w^*\}$ is not contained in any stable matching. 
	As $\{m^{**},w^*\}$ is not contained in any stable matching $M'$, $M'$ must contain $\{m^*,w^*\}$, because if $w^*$ is matched to an agent $m \in U \setminus \{m^*,m^{**}\}$, the pair $\{m^{**},w^*\}$ is blocking. Thus, every penalizing woman $w^\dagger$ must be matched to an agent she prefers to $m^*$. Such an agent must be an edge man $a$. In $M$, matching every penalizing woman to such an agent and matching $m^*$ to $w^*$, is then obviously stable.  
\end{proof}

\four*
\begin{proof}
	We reduce the problem to the \textsc{Constructive-Pair-Delete} problem of 
	\citet{DBLP:conf/sagt/BoehmerBHN20}, which they showed to be solvable in  $\bigO((n+m)^2)$ time. For this, let $\mathcal{I}=(U,W,\mathcal{P})$  be an SM instance with $a \in A:=U \cup W$ and let $f(\mathcal{I},a)$ be an SM instance that contains the agents of $\mathcal{I}$ and one additional agent $a'$ of the opposite gender of $a$. All other agents rank $a'$ last and $a'$ ranks $a$ first and all other agents in an arbitrary order.
	
	Consider an instance of \textsc{Agent-Delete-Robustness} where we want to exclude $a$ from a stable matching. 
	
	Assume we can delete $\ell$ agents in $\mathcal{I}$ such that $a$ is unassigned in a stable matching $M$. Then, after deleting the same agents in $f(\mathcal{I},a)$, consider the matching $M':=M \cup \{a,a'\}$. Without loss of generality assume $a \in W$ and $a' \in U$. No pair $\{m,w\}$ with $m \neq a'$ can be blocking for $M'$, as the same pair would also be blocking for $M$. No blocking pair for $M'$ can contain $a'$, as $a'$ is matched to its top choice. It follows that $M'$ is stable.
	
	Assume now for the other direction that we can delete $\ell$ agents in $\mathcal{I}'$ such that $\{a,a'\}$ is contained in a stable matching $M'$. We claim that after performing the same deletions in $\mathcal{I}$, $M:=M' \setminus \{a,a'\}$ is stable. Using the same argument as above, a blocking pair for $M$ must include $a$. Assume for contradiction that such a blocking pair $\{a,a''\}$ exists for $M$. Then, $a''$ prefers $a$ to $M(a'')=M'(a)$. However, since $a$ prefers $a''$ to $a'$, $\{a,a''\}$ would be blocking in $M'$, leading to a contradiction.
\end{proof}

\five*
\begin{proof}
	We show the hardness via a series of reductions. 
	For this, we introduce two new constructive problems. 
	First, \textsc{Constructive-Agent-Swap} which is the constructive counterpart to \textsc{Agent-Swap-Robustness} where the question is whether there is an instance at swap distance at most $\ell$ where a given agent $a$ is stable.
	Second, \textsc{Constructive-Agent-Add} which is defined analogously to \textsc{Constructive-Agent-Swap} but instead of being allowed to perform $\ell$ swaps we can add up to $\ell$ agents from a predefined set (which come with specified preference lists and positions in the other agent's lists).
	\begin{lemma}
		\label{lem:caa-cas}
		One can reduce \textsc{Constructive-Agent-Add} to \textsc{Constructive-Agent-Swap} in polynomial time.
	\end{lemma}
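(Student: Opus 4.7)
The plan is to mimic the reduction in \Cref{thm:dpa-dps} from the pair setting, adapting it so that the designated agent plays the role previously played by the designated pair. Given an instance of \textsc{Constructive-Agent-Add} with base instance $\mathcal{I}$, addable set $U_{add}\cup W_{add}$, designated agent $a^*$, and budget $\ell$, I would construct an instance $\mathcal{I}'$ of \textsc{Constructive-Agent-Swap} whose agent set contains all agents of $\mathcal{I}$ together with every addable agent, and, for each $a\in U_{add}\cup W_{add}$, two gadget agents $a'$ (of opposite gender to $a$) and $a''$ (of the same gender as $a$). The preferences of the gadget agents are chosen as $a': a\succ a''\succ\dots$ and $a'': a'\succ\dots$, while $a$ ranks $a'$ first and then its prescribed preferences; all other non-gadget agents rank $a'$ and $a''$ last. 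The designated agent in $\mathcal{I}'$ is $a^*$ and the swap budget is $\ell$.

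Next, I would add the dummy padding used in \citet{DBLP:conf/sagt/BoehmerBHN20}: a set of dummy men $m^d_i$ and dummy women $w^d_i$ with cyclic preferences that force $\{m^d_i,w^d_i\}$ into every stable matching as long as at most $\ell$ swaps are performed, and, between any two adjacent non-dummy agents in any preference list, $\ell+1$ dummies that prevent those non-dummies from being swapped past each other within budget. The only exception is that no padding is inserted between $a$ and $a''$ in the preferences of $a'$; this single "unpadded" swap is the one genuinely available action per addable agent. The intuition is that without any swap, $\{a,a'\}$ is a mutual top choice pair so $a$ is locked out of the matching (effectively not added), while a single swap of $a$ and $a''$ in $\succ_{a'}$ makes $\{a',a''\}$ the mutual top choice pair, freeing $a$ to interact with the rest of the agents exactly as if $a$ had been added in $\mathcal{I}$.

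For the forward direction, given a set $X$ of at most $\ell$ additions in $\mathcal{I}$ making $a^*$ stable, I perform the corresponding $|X|$ swaps in $\mathcal{I}'$ and extend the witnessing stable matching by pairing each unused addable agent $a$ with its $a'$, each used $a$ as in the witnessing matching (with $a'$ paired to $a''$), and dummies among themselves; verification that this matching is stable reduces to checking that none of the gadget or dummy agents participates in a blocking pair, which follows from their last-place rankings elsewhere. For the backward direction, any $\ell$-swap solution must (by the dummy padding) be decomposable into swaps of the form $(a,a'')$ in $\succ_{a'}$ plus swaps that only rearrange dummies; the latter cannot affect stability among non-dummy pairs, so the set $X$ of addable agents whose gadget swap is performed gives a solution to the add instance.

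The main obstacle is the backward direction: I must argue that any creative use of swaps, for instance trying to create or destroy blocking pairs involving gadget or dummy agents with the designated agent $a^*$, cannot help. This reduces to carefully verifying that (i) the dummies stay matched among themselves in every stable matching of the swapped instance, which follows from their cyclic structure and the $\ell+1$-padding bound on the number of reachable non-trivial rearrangements, and (ii) that $a'$ and $a''$ cannot end up in a blocking pair with anyone other than $a$ because no other agent can be swapped above them within budget. Once these two invariants are established, the correspondence between swap sequences and addable subsets is one-to-one, giving the desired polynomial-time reduction.
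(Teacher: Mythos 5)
Your construction is essentially identical to the paper's: the same gadget agents $a'$ and $a''$ per addable agent, the same unpadded swap between $a$ and $a''$ in $\succ_{a'}$ to simulate an Add operation, and the same dummy padding to block all other swaps, with the correctness argument deferred to the analogy with the pair-setting reduction. The paper's own proof is in fact terser than yours, giving only the construction and citing \Cref{thm:dps} for the verification you sketch.
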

	\begin{proof}
		We model the \emph{Add} operations by \emph{Swap} operations analogously to \Cref{thm:dps}: For each $a \in U_{add} \cup W_{add}$, we add an agent of opposite gender $a'$ and an agent $a''$ of the same gender as $a$, with preference lists $a': a \succ a'' \succ ...$ and $a'': a' \succ ...$ and add $a'$ as the top choice of $a$. We introduce $r:=(\ell+1)(|U|+|W|)$ dummy men $m_{d}^{i}$ and the same number of dummy women $w_{d}^{i}$ with preferences as in \Cref{thm:dps}, who will prevent undesired swaps. For every adjacent pair $a_u \succ a_v$ in a preference list of the original instance such that $u$ and $v$ are not the top two choices or that the preference list belongs to an agent who is not $a'$ for some $a \in U_{add} \cup W_{add}$, we add $\ell$ dummy agents of opposite gender $m_{i\ell},...,m_{(i+1)\ell}$ (or $w_{i\ell},...,w_{(i+1)\ell}$) between $u$ and $v$, where $u$ is ranked in $i$-th place in the original preference list. For the top two choices of an agent $a'$ with $ a \in U_{add} \cup W_{add}$, we do not add these dummy agents, thus allowing swaps between these two agents to model an \emph{Add} operation.
	\end{proof}
	\begin{lemma}\label{thm:cas}
		\textsc{Constructive-Agent-Swap} is NP-complete.
	\end{lemma}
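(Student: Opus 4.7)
The plan is to first dispense with NP membership and then prove NP-hardness by chaining a reduction through \Cref{lem:caa-cas}. For membership in NP, a polynomial-size certificate consists of the specification of the target profile $\mathcal{P}'$ at swap distance at most $\ell$ together with a stable matching $M'$ in the resulting instance in which the designated agent $a$ is assigned; both the swap distance and the stability of $M'$ are verifiable in polynomial time. Since \Cref{lem:caa-cas} already gives a polynomial-time reduction from \textsc{Constructive-Agent-Add} to \textsc{Constructive-Agent-Swap}, it suffices to establish NP-hardness of \textsc{Constructive-Agent-Add}.

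For the hardness of \textsc{Constructive-Agent-Add}, I would reduce from \textsc{Constructive-Pair-Add}, whose NP-hardness is (essentially) established by the \textsc{Clique}-reduction of Boehmer et al.~\citep{DBLP:conf/sagt/BoehmerBHN20} used already in the proof of \Cref{thm:dps}: there one is given an instance with designated pair $\{m^*,w^*\}$ and an add-set $U_{add}\cup W_{add}$ and one must decide whether $\ell$ additions can make $\{m^*,w^*\}$ a stable pair. Given such an instance, I construct a \textsc{Constructive-Agent-Add} instance as follows. Introduce one fresh woman $w^{**}$ and add her to the preferences of every initially present man in the last position; let $w^{**}$ rank all initially present men arbitrarily. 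The designated agent is $w^{**}$, and the add-budget remains $\ell$. By the Rural Hospitals Theorem, since the side of women is now strictly larger than the side of men (after adjusting the sizes if necessary, using the fact that in the Boehmer et al.\ construction $W_{add}=\emptyset$ so that all potential additions are men), exactly one woman is unassigned in every stable matching. The preferences are designed so that this unassigned woman is precisely $w^{**}$ unless $w^*$ is matched to $m^*$: concretely, $w^{**}$ and $w^*$ are made interchangeable on the men's side except that $m^*$ ranks $w^*$ immediately above $w^{**}$. Then $w^{**}$ becomes stable after additions iff those additions make $\{m^*,w^*\}$ stable in the original instance.

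The correctness argument splits in the usual two directions: (i) if $\ell$ additions in the \textsc{Constructive-Pair-Add} instance make $\{m^*,w^*\}$ stable, then performing the same additions in the constructed instance yields a stable matching in which $w^*$ takes the slot of $m^*$ and $w^{**}$ can slot into the original match of $w^*$ without creating a blocking pair (using that $w^{**}$ is ranked last by every other man); (ii) conversely, if $w^{**}$ is assigned in some stable matching of the constructed instance after $\ell$ additions, then by the ranking of $w^{**}$ last, $w^{**}$ can only be matched by ``pushing'' $w^*$ into $m^*$'s slot, which forces $\{m^*,w^*\}$ to be stable in the original instance. Composing this reduction with \Cref{lem:caa-cas} completes the proof.

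The main obstacle I foresee is calibrating the insertion positions of $w^{**}$ so that the only way to make $w^{**}$ stable really is via the pair $\{m^*,w^*\}$; in particular I have to rule out that some entirely different stable matching assigning $w^{**}$ arises after additions. This is where the structural specifics of the Boehmer et al.\ \textsc{Clique} gadget (penalising women, edge men, etc.) become necessary: one exploits that all non-designated agents are already forced to be matched to fixed partners in every stable matching, so the only ``degree of freedom'' left concerns $w^*$ and $m^*$. Once this is in place, the correctness proof reduces to a short case analysis on blocking pairs.
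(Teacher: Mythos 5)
Your high-level plan matches the paper's: membership is routine, and hardness is obtained by reducing \textsc{Constructive-Pair-Add} to \textsc{Constructive-Agent-Add} and then composing with \Cref{lem:caa-cas}. The gap is in your gadget. You introduce a fresh designated woman $w^{**}$ who is ranked \emph{last} by every man and you make the women's side strictly larger than the men's side. But then $w^{**}$ can never be assigned in any stable matching: if $w^{**}$ were matched to some man $m$, the one woman $w\neq w^{**}$ who is left unassigned (there is one, since women outnumber men and, by the Rural Hospitals Theorem, the set of unassigned agents is fixed) would form a blocking pair with $m$, because $m$ prefers every woman -- in particular $w$ -- to his last-ranked partner $w^{**}$. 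So the ``yes'' direction of your equivalence fails structurally, independently of how you calibrate $m^*$'s list. Conversely, if the additions (which are all men) ever make the men's side at least as large as the women's side, every woman including $w^{**}$ is assigned in every stable matching, and $w^{**}$ becomes stable without $\{m^*,w^*\}$ being stable, breaking the other direction. The informal claim that $w^{**}$ and $w^*$ are ``interchangeable on the men's side'' does not rescue this, and moving $w^*$ to the bottom of $m^*$'s list would additionally destroy the Boehmer et al.\ clique gadget, which relies on $m^*$ preferring the penalizing women to $w^*$.

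The paper avoids this by not introducing a new designated agent at all: the designated agent is $w^*$ itself, and one adds $|U|$ filler women $w'_1,\dots,w'_{|U|}$ that every man \emph{except} $m^*$ ranks just above $w^*$ (with $w^*$ moved to last place in those men's lists), while $m^*$ ranks all $w'_i$ last. Then if $w^*$ were matched to any $m\neq m^*$, some filler woman would be unassigned and would block with $m$; hence $w^*$ is assigned if and only if she is matched to $m^*$, i.e., if and only if $\{m^*,w^*\}$ is a stable pair. That is the padding idea your construction is missing.
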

	\begin{proof}
		We prove the NP-hardness of \textsc{Constructive-Agent-Add} and then \Cref{lem:caa-cas} implies the result. In \textsc{Constructive-Pair-Add}, we are given an SM instance $\mathcal{I}$ and a budget $\ell$ and the question is whether we can add at most $\ell$ agents from a predefined set $U_{add} \cup W_{add}$ such that a designated pair $\{m^*,w^*\}$ is contained in a stable matching. It was shown to be NP-complete (see Theorem 1 by \citet{DBLP:conf/sagt/BoehmerBHN20}. See \Cref{thm:dps} for the general idea of their construction. We assume that we are given an instance of \textsc{Constructive-Pair-Add}. In the new instance $\mathcal{I'}$, we start with all agents from the given instance (including the ones from the to be added sets). 
		Moreover, we add $|U|$ many women $w'_1,...,w'_{|U|}$ with arbitrary preferences. We modify the preferences as follows: All men except $m^*$ rank $w^*$ last and rank all $w'_i$ just in front of $w^*$. The man $m^*$ ranks all $w'_i$ last. All other preferences are not changed. We set the designated agent to be $w^*$. 
		
		Assume there are $p \leq \ell$ agents such that after adding them to the instance, $\{m^*,w^*\}$ is contained in a stable matching $M$. In $\mathcal{I}'$, when adding the same agents to the instance, matching $M$ is still stable, since no woman $w'_i$ can form a blocking pair (every man $m$ is assigned since $|U|\leq|W|$ and $m$ prefers $M(m)$ to $w'_i$). Obviously, $w^*$ is assigned in $M$.
		
		For the other direction, assume there are at most $\ell$ agents such that after adding them, $w^*$ is assigned in a stable matching $M'$. Consider instance $\mathcal{I}$ after adding the same set of agents. Notice that $w^*$ cannot be matched to any man $m \in U \setminus \{m^*\}$, as some $w'_i$ would be unassigned and would form a blocking pair with $m$. Thus $\{m^*,w^*\} \in M'$. Matching $M:=M' \setminus \{\{m,w'_i\} \mid i \in [|U|]\}$ is stable in $\mathcal{I}$: If $M'$ contains any pair $\{m,w'_i\}$, then deleting it will not affect the stability in $\mathcal{I}$, because the unassigned $m$ cannot form any blocking pair. Therefore, $\{m^*,w^*\} \in M$.
	\end{proof}
	
	We show NP-hardness of \textsc{Agent-Swap-Robustness} by reducing from the constructive version \textsc{Constructive-Agent-Swap}.  For an instance $\mathcal{I}$ of the constructive version, we add a man $m^*$ with preferences $m^*: w^* \succ ...$ who will be the designated agent and $d:=|W|-|U|$ men $\tilde{m}_1,...,\tilde{m}_d$ which will ensure that $m^*$ cannot be matched to any woman who was unassigned in $\mathcal{I}$. We modify the preferences of any $w \in W \setminus \{w^*\}$ to $w: ... \succ \tilde{m}_1 \succ ... \succ \tilde{m}_d \succ m^*$ where the relation between two men $m,m' \in U$ is not changed. We modify the preferences of $w^*$ to $w^*: ... \succ m^* \succ \tilde{m}_1 \succ \tilde{m}_d$ while again not changing the relation between any two men $m,m' \in U$. Finally, we introduce $(d+1)(\ell + 1)$ dummy men and the same number of dummy women which ensure that we cannot change the preference relation of a pair involving a newly added man $m \in U':=\{m^*,\tilde{m}_1,...,\tilde{m}_d\}$. To this end, between any two neighboring agents $m,m'$ with $m \in U' \setminus U$ in any preference list, we add $\ell+1$ dummy men. All dummy women are matched last by all non-dummy men. Dummy agents prefer each other as described in \Cref{thm:dps}.
	
	Assume that we can perform at most $\ell$ swaps in $\mathcal{I}$ such that $w^*$ is assigned in a stable matching $M$. We call the resulting instance $\mathcal{I}_\sigma$. We claim that after performing the same swaps in $\mathcal{I}'$ (resulting in $\mathcal{I}_\sigma'$), $m^*$ will not be assigned in any stable matching.
	
	Assume for contradiction that there is a stable matching $M'$ in $\mathcal{I}_\sigma'$ with $\{m^*,w\} \in M'$ for some $w \in W'$. Since our instance contains at least one more man than women, some man is unassigned. A dummy man is always matched with its partner and therefore,  there is some $i \in \{1,...,d\}$ such that $\tilde{m}_i$ is unassigned (if any other man was unassigned, he would form a blocking pair with the partner of $\tilde{m}_i$). If $w \neq w^*$, $\{\tilde{m}_i,w\}$ is blocking for $M'$. For the case $w=w^*$, in order for $\{M(w^*),w^*\}$ not to be blocking, he needs to be matched to some woman $w$ he prefers to $w^*$. The same holds for any man from the original instance $\mathcal{I}$. But then, $M' \setminus \{\{m,w\} \mid m \not \in U\}$ would be stable in $\mathcal{I}_\sigma$, which is a contradiction to the Rural Hospitals Theorem (since now, $w^*$ is unassigned). It follows that $M'$ is not stable, a contradiction.\\
	For the other direction, assume that we can perform at most $\ell$ swaps such that $m^*$ is not assigned in a stable matching $M'$. As argued in \Cref{thm:dps}, any swap involving dummy agents will not change the preference order of non-dummy agents and can therefore be omitted. Moreover, $w^*$ must be matched to a man she prefers to $m^*$, thus $\{m,w^*\} \in M'$ for some $m \in U$. We claim that after performing the same swaps in $\mathcal{I}$ as in $\mathcal{I}'$ (which is possible since we omitted all other swaps), $M:=M' \setminus \{\{m,w\} \mid m \not \in U\}$  is stable in $\mathcal{I}$. 
	
	Suppose for contradiction that there is a blocking pair $\{m',w'\}$ for $M$. If $w'$ is assigned in $M$, the same pair is also blocking $M'$, since $M(m')=M'(m')$, $M(w')=M'(w')$ and the preference relations between $m',M(m'),w'$ and $M(w')$ are not altered. If $w'$ is not assigned, then, in $M'$, it has to be assigned to $\tilde{m}_i$ for some $i \in \{1,...,d\}$. Again, $\{m',w'\}$ is blocking as $w'$ prefers $m'$ to $\tilde{m}_i$. Thus, we showed that $M'$ is unstable, which is a contradiction.
\end{proof}

\section{Additional Material for \Cref{com:count}}
\subsection{Theorem 4.1} \label{app:th41}

\begin{figure*}
	\centering
	\definecolor{darkgreen}{HTML}{0D9228}
\definecolor{darkred}{HTML}{C42323}
\definecolor{lightblue}{HTML}{23AFC4}
\begin{tikzpicture}[scale=.9,auto=center,every node/.style={circle,fill=lightblue},inner sep=2pt]
  \node[label=above:$m_u$] (m1) at (2,11) {};
  \node[color=gray!35] (m1c) at (2,10.5) {};
  \node[label=above:$w_x$] (w2) at (5,11) {};
  \node[label=above:$w_y$] (w3) at (5,8) {};
  \node[color=gray!35] (w2c) at (5,10.5) {};
  \node[color=gray!35] (w3c) at (5,7.5) {};
  \node[label=right:$w_{\bar{x}}$] (w2n) at (9,10) {};
  \node[label=right:$w_{\bar{y}}$] (w3n) at (9,7) {};
  \node[color=gray!35] (w2nc) at (9,9.5) {};
  \node[color=gray!35] (w3nc) at (9,6.5) {};
  \node[label=left:$m_{\bar{u}}$] (m1n) at (-2,10) {};
  \node[color=gray!35] (m1nc) at (-2,9.5) {};
  \node[label=above:$w_u$] (w1) at (0,12) {};
  \node[color=gray!35] (w1c) at (0,11.5) {};
  \node[label=above:$m_x$] (m2) at (7,12) {};
  \node[label=right:$m_y$] (m3) at (7,9) {};
  \node[color=gray!35] (m2c) at (7,11.5) {};
  \node[color=gray!35] (m3c) at (7,8.5) {};
  \node[label=above:$m_{\bar{x}}$] (m2n) at (7,10) {};
  \node[label=above:$m_{\bar{y}}$] (m3n) at (7,7) {};
  \node[color=gray!35] (m2nc) at (7,9.5) {};
  \node[color=gray!35] (m3nc) at (7,6.5) {};
  \node[label=above:$w_{\bar{u}}$] (w1n) at (0,10) {};
  \node[color=gray!35] (w1nc) at (0,9.5) {};
  \draw[color=darkred, very thick] (m1) --  (w2);
  \draw[color=darkred, very thick] (m1) --  (w3);
  \draw[color=darkred!80] (m1c) --  (w2c);
  \draw[color=darkred!80] (m1c) --  (w3c);
  \draw[color=darkgreen, very thick] (m1) -- (w1);
  \draw[color=darkred, very thick] (m1) -- (w1n);
  \draw[color=darkgreen, very thick] (w2) -- (m2);
  \draw[color=darkred, very thick] (w2) -- (m2n);
  \draw[color=darkgreen, very thick] (w3) -- (m3);
  \draw[color=darkred, very thick] (w3) -- (m3n);
  \draw[color=darkgreen, very thick] (w1n) -- (m1n);
  \draw[color=darkgreen, very thick] (w2n) -- (m2n);
  \draw[color=darkgreen, very thick] (w3n) -- (m3n);
  \draw[color=darkgreen!80] (m1c) -- (w1c);
  \draw[color=darkred!80] (m1c) -- (w1nc);
  \draw[color=darkred!80] (m1c) -- (w1n);
  \draw[color=darkgreen!80] (w2c) -- (m2c);
  \draw[color=darkred!80] (w2c) -- (m2nc);
  \draw[color=darkred!80] (w2c) -- (m2n);
  \draw[color=darkgreen!80] (w3c) -- (m3c);
  \draw[color=darkred!80] (w3c) -- (m3nc);
  \draw[color=darkred!80] (w3c) -- (m3n);
  \draw[color=darkgreen!80] (w1nc) -- (m1nc);
  \draw[color=darkgreen!80] (w2nc) -- (m2nc);
  \draw[color=darkgreen!80] (w3nc) -- (m3nc);
  \draw[color=darkred, very thick] (m1) -- (w1nc);
  \draw[color=darkred, very thick] (w2) -- (m2nc);
  \draw[color=darkred, very thick] (w3) -- (m3nc);
\end{tikzpicture}
	\caption{Exemplary SM instance constructed from the formula $(u \vee x) \wedge (u \vee y)$. The copied part of the instance is colored gray. Edges in the designated matching are colored green, while edges that are initially blocking are colored red. The dummy agents are omitted.}
	\label{fig:th41}
\end{figure*}
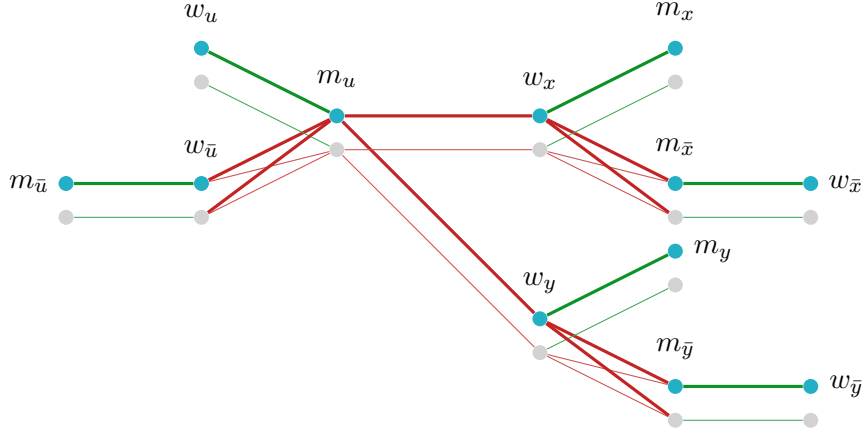
\six*
\begin{proof}
	We have already described the intuition behind the proof in the main body and give the full construction and proof of correctness here (see \Cref{fig:th41} for a visualization).
	For any literal $p$, we define the number of clauses where $p$ is contained as $c(p):=|\{c \mid c \in C, p \in c\}|$. Let $N=\max_{p \in L} c(p)$. 
	
	Let $\mathcal{I}=(V=U\cup Z,C)$ be an instance of \textsc{\#Bipartite 2-SAT With No Negations} and let $L := L_U \cup  L_Z$ be the set of all literals with $L_U := \{u, \bar{u} \mid u\in U\}$ and
	$L_Z := \{z, \bar{z} \mid z \in  Z\}$. The agents of our \textsc{\#Matching-Swap} instance $\mathcal{I}'=((U',W',\mathcal{P}), M, \ell)$ are as follows: 
	\begin{align*}
	U'&=\{m_v,m_v',m_{\bar{v}},m_{\bar{v}}' \mid v \in V\} \cup \{m_i^d \mid i \in N\}\\
	W'&=\{w_v,w_v',w_{\bar{v}},w_{\bar{v}}' \mid v \in V\} \cup \{w_i^d \mid i \in N\}
	\end{align*}
	For the preferences, consider a variable $u \in U$ and let $p_1,...,p_{c(u)} \in Z$ be all literals such that there exists some clause $c=\{u,p_i\}$ for each $i \in [c(u)]$. The preferences for the corresponding agents are as follows:
	\begin{align*}
	m_u&: w_{\bar{u}} \succ w'_{\bar{u}} \succ w_{p_1} \succ ... \succ w_{p_{c(u)}} \succ & \\
	& \hspace*{2.5cm} w_1^d \succ ... \succ w_{N-c(u)}^d \succ w_u \succ ... \\ 
	w_u&: m_u \succ ...& \\
	w_{\bar{u}}&: m_u \succ m'_u \succ m_1^d \succ ... \succ m_N^d \succ m_{\bar{u}} \succ ... &\\
	m_{\bar{u}}&: w_{\bar{u}} \succ ... &\\
	m'_u&: w_{\bar{u}} \succ w'_{\bar{u}} \succ w'_{p_1} \succ ... \succ w'_{p_{c(u)}} \succ \\
	& \hspace*{2.5cm} w_1^d \succ ... \succ w_{N-c(u)}^d \succ w'_u \succ ... \\
	w'_u&: m'_u \succ ...& \\
	w'_{\bar{u}}&: m_u \succ m'_u \succ m_1^d \succ ... \succ m_N^d \succ m'_{\bar{u}} \succ ... \\
	m'_{\bar{u}}&: w'_{\bar{u}} \succ ... 
	\end{align*}
	Now, let $z \in Z$. The preferences are very similar to agents that correspond to a vertex from $U$, but the roles of men and women are interchanged. Again, let $p_1,...,p_{c(z)} \in U$ be all literals such that there exists some clause $c=\{z,p_i\}$ for each $i \in [c(z)]$. The preferences for the corresponding agents are as follows:
	\begin{align*}
	w_z&: m_{\bar{z}} \succ m'_{\bar{z}} \succ m_{p_1} \succ ... \succ m_{p_{c(z)}} \succ\\
	& \hspace*{2.5cm} m_1^d \succ ... \succ m_{N-c(z)}^d \succ m_z \succ ... \\
	m_z&: w_z \succ ... \\
	m_{\bar{z}}&: w_z \succ w'_z \succ w_1^d \succ ... \succ w_N^d \succ w_{\bar{z}} \succ ... \\
	w_{\bar{z}}&: m_{\bar{z}} \succ ... \\
	w'_z&: m_{\bar{z}} \succ m'_{\bar{z}} \succ m'_{p_1} \succ ... \succ m'_{p_{c(z)}} \succ\\
	& \hspace*{2.5cm} m_1^d \succ ... \succ m_{N-c(z)}^d \succ m'_z \succ ... \\
	m'_z&: w'_z \succ ...\\
	m'_{\bar{z}}&: w_z \succ w'_z \succ w_1^d \succ ... \succ w_N^d \succ  w'_{\bar{z}} \succ ... \\
	w'_{\bar{z}}&: m'_{\bar{z}} \succ ... 
	\end{align*}
	
	The dummy men have preferences $m_i^d: w_i^d \succ ...$ and the dummy women have preferences $w_i^d: m_i^d \succ ...$ for $i \in [N]$. Our designated matching is 
	$
	M:=\{\{m_v,w_v\},\{m_{\bar{v}},w_{\bar{v}}\},\{m'_v,w'_v\},\{m'_{\bar{v}},w'_{\bar{v}}\} \mid v \in V\} \cup \{\{m_i^d,w_i^d\} \mid i \in [N]\}.
	$
	Notice that all agents $m_u,w_{\bar{u}},m_u',w_{\bar{u}}'$ for $u \in U$ and all agents $w_z,m_{\bar{z}},w_z',m_{\bar{z}}'$ for $z \in Z$ rank their matched partner on place $(N+3)$. Moreover, $M$ is not stable in the constructed instance: The set of blocking pairs is 
	\begin{align*}
	B:=&\{\{m_p,w_{\bar{p}}\}, \{m_p,w'_{\bar{p}}\},\{m'_p,w'_{\bar{p}}\}, \{m'_p,w_{\bar{p}}\} \mid p \in L\}\\
	\cup &\{\{m_p,w_q\},\{m'_p,w'_q\} \mid \{p,q\} \in C\}
	\end{align*} 
	
	Note that $p$ can both be a positive and negative literal. One can easily verify that these pairs are indeed blocking. We now argue why there are no other blocking pairs in $\mathcal{I}$: For a blocking pair $\{m,w\}$, it must hold that $w \succ_m M(m)$. Let $m \in U'$. For all $w \in W' \setminus \{w_i^d \mid i \in [N]\}$ that fulfill this condition, we already have that $\{m,w\} \in B$. But since $w_i^d$ is matched to her top choice for all $i \in [N]$, $\{m,w_i^d\}$ is not blocking.
	
	We set our swap budget to $\ell:=2|V|(N+2)$ and claim that the number of satisfying assignments for the given \textsc{2-SAT} formula is equal to the number of preference profiles at swap distance exactly $\ell$ from $\mathcal{I}'$ in which $M$ is stable. 
	
	To this end, we define a function $f$ which maps valid assignments of the \textsc{2-SAT} formula to valid preference profiles in $\mathcal{I}'$. Let $\mathcal{P}$ be the preference profile in $\mathcal{I}'$. For an assignment $A$, we define $f(A):=\mathcal{P}'$ as follows: Starting from $\mathcal{P}$, for each $p \in L_U$, we swap $w_p$ to the front of the preferences of $m_p$ and we swap $w_p'$ to the front of the preferences of $m_p'$. For each $p \in L_Z$, we swap $m_p$ to the front of the preferences of $w_p$ and we swap $m_p'$ to the front of the preferences of $w_p'$.
	
	We now proceed to show that $f$ describes a one-to-one correspondence between satisfying assignments and preference profiles with the desired properties, which implies that their number is equal.
	\begin{lemma}\label{lemma:injective}
		$f$ is injective.
	\end{lemma}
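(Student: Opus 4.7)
The plan is to show that distinct satisfying assignments of the $2$-SAT formula produce distinct preference profiles under $f$. I would start by taking two satisfying assignments $A_1 \neq A_2$ and picking any variable $v \in V$ on which they disagree, then trace through the definition of $f$ to exhibit an agent whose preference list differs between $f(A_1)$ and $f(A_2)$.

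The first key observation I would establish is that, for a single variable $v$, the two possible values of $A(v)$ cause $f$ to touch disjoint sets of preference lists. Concretely, for $v \in U$, setting $v$ to true causes $f$ to bring $w_v$ (respectively $w_v'$) to the top of the preference list of $m_v$ (respectively $m_v'$), while setting $v$ to false causes $f$ to bring $m_{\bar v}$ (respectively $m_{\bar v}'$) to the top of the preference list of $w_{\bar v}$ (respectively $w_{\bar v}'$). An analogous dichotomy holds for $v \in Z$ with the roles of men and women swapped. The effect of the choice $A(v)$ on the resulting profile can therefore be read off from the top entry of just one specific preference list.

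Next I would note that modifications made on account of distinct variables do not interfere: the preference lists that $f$ can possibly alter because of the value of $A(v)$ belong only to agents whose index references $v$ (for $v \in U$ these are $m_v, m_v', w_v, w_v', m_{\bar v}, m_{\bar v}', w_{\bar v}, w_{\bar v}'$, and analogously for $v \in Z$). Combining the two observations, if $A_1(v) \neq A_2(v)$ for some $v \in U$, then without loss of generality $A_1(v) = \text{true}$ and $A_2(v) = \text{false}$, so the top entry of the preference list of $m_v$ is $w_v$ under $f(A_1)$ but $w_{\bar v}$ under $f(A_2)$. Hence $f(A_1) \neq f(A_2)$, proving injectivity.

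The main obstacle, if one can call it that, is the bookkeeping of which preference lists each swap in the definition of $f$ actually touches; once this is carefully set out by direct inspection of the construction (and by verifying that the total swap count does indeed equal $\ell = 2|V|(N+2)$, so that $f(A)$ is a legitimate profile at the required distance), injectivity follows immediately with no counting or algebraic argument needed.
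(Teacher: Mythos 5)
Your proof is correct and follows essentially the same route as the paper's: both arguments locate a variable (equivalently, a literal) on which the two assignments disagree and observe that $f$ then modifies the preference list of the corresponding agent in one profile but not in the other, so the profiles differ. Your version is marginally more explicit about which lists each modification touches and which top entry witnesses the difference, but this is the same one-step argument.
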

	\begin{proof}
		Consider two different valid assignments $A \neq B$. There is at least one literal $l$ with $l \in A$ and $l \not \in B$. Without loss of generality, let $l \in L_U$. Now, consider the two constructed instances $f(A)$ and $f(B)$. In $f(A)$, the preferences of $m_l$ are altered, while in $f(B)$ they are not. It follows that $f(A) \neq f(B)$ and thus $f$ is injective.
	\end{proof}
	\begin{lemma}\label{lemma:formula_to_prefprofile}
		Let $A$ be a satisfying assignment of the \textsc{2-SAT} formula $\mathcal{I}$. Then, $f(A)$ is a preference profile at swap distance exactly $\ell$ from $\mathcal{I}'$ where the designated matching $M$ of $\mathcal{I}'$ is stable.
	\end{lemma}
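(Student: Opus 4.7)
The plan is to verify the lemma by separately checking its two claims: (a) that $f(A)$ lies at swap distance exactly $\ell$ from $\mathcal{I}'$, and (b) that $M$ is stable in $f(A)$. Step (a) reduces to a direct position count in the preference lists of the modified agents, while (b) requires arguing that every pair in the initial blocking-pair set $B$ is resolved and that no new blocking pairs are introduced.

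For (a), I would inspect the preferences of the agents touched by $f$. In $m_u$'s list, $M(m_u) = w_u$ sits at position $N+3$: the two literal women $w_{\bar u}, w'_{\bar u}$, then $c(u)$ clause-neighbor women, then $N - c(u)$ dummy women, and only then $w_u$. Hence promoting $w_u$ to the front costs exactly $N+2$ swaps; the same count holds by symmetry for $m'_u$, for $w_{\bar u}, w'_{\bar u}$, and for the analogously-defined $Z$-side agents. Since $A$ selects exactly one literal per variable, $f(A)$ modifies exactly two agents per variable (a primed and an unprimed copy), each at cost $N+2$. These swap sequences act on pairwise-disjoint preference lists, so no interference occurs and the total swap distance is exactly $2|V|(N+2) = \ell$.

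For (b), I would first observe that all edits performed by $f$ are of the form ``promote $M(a)$ upward in $a$'s list.'' Such a promotion can only delete blocking pairs involving $a$, never create new ones: the set of agents ranked above $M(a)$ in $\succ_a$ can only shrink. Since no other preferences are touched, it suffices to show that every pair in $B$ is resolved. The variable-type pairs $\{m_p, w_{\bar p}\}, \{m_p, w'_{\bar p}\}, \{m'_p, w'_{\bar p}\}, \{m'_p, w_{\bar p}\}$ are eliminated for every variable because $A$ assigns truth to exactly one of $p, \bar p$; either way, one endpoint of each such pair has its $M$-partner promoted to the front and can no longer block with the other endpoint. For the clause pairs $\{m_p, w_q\}$ and $\{m'_p, w'_q\}$ with $\{p,q\} \in C$, satisfiability of the clause by $A$ forces $p \in A$ or $q \in A$; in the first case (with $p \in L_U$) $m_p$ is modified so that $w_p = M(m_p) \succ_{m_p} w_q$, and in the second case (with $q \in L_Z$) $w_q$ is modified so that $m_q = M(w_q) \succ_{w_q} m_p$. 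The primed copies are resolved identically.

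The main bookkeeping obstacle is to make sure the case distinction cleanly covers all four variable-type pairs and both clause-type copies for each clause; this follows because $f$ is symmetric in the primed/unprimed copies, modifying $m_p$ and $m'_p$ together (resp.\ $w_p$ and $w'_p$ together). Combining (a) and (b) then yields the lemma.
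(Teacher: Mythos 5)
Your proof is correct and follows essentially the same route as the paper's: it computes the cost of each modification as $N+2$ from the position $N+3$ of the matched partner (giving total swap distance exactly $2|V|(N+2)=\ell$), observes that promoting $M(a)$ can never create new blocking pairs, and resolves the variable-type pairs via validity of the assignment and the clause-type pairs via satisfiability. Your write-up is if anything slightly more explicit than the paper's (e.g., noting the disjointness of the modified lists and spelling out the clause-pair case split), but there is no substantive difference in approach.
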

	\begin{proof}
		First notice that we only improve $M(a)$ in the preferences of any agent $a$ and thus no new blocking pairs can arise. As $A$ is a valid assignment, and thus contains one literal for each variable, for any blocking pair $\{m_p,w_{\bar{p}}\}$ with $p \in L$, we swap $M(m_p)$ to the front of $m_p$'s preferences or $M(w_{\bar{p}})$ to the front of $w_{\bar{p}}$'s preferences, resolving the blocking pair. The same holds for the blocking pairs of the duplicated part. As $A$ is a satisfying assignment, and thus contains one literal for each clause, for any blocking pair $\{m_p,w_q\}$ with $\{p,q\} \in C$, we swap $M(m_p)$ to the front of $m_p$'s preferences or $M(w_q)$ to the front of $w_q$'s preferences, resolving the blocking pair. Again, the same holds for the blocking pairs of the duplicated part. We argued above that no other pairs are blocking in the original instance and thus it follows that $M$ is stable. Notice that to swap $M(a)$ and $M(a')$ to the first place in the preferences of $a$ and $a'$, we need $2(N+2)$ swaps and since any satisfying assignment contains exactly $|V|$ literals, the overall budget is exactly matched.
	\end{proof}
	
	\begin{lemma}\label{lemma:help_swap_distance}
		Let $L:=a: b_1 \succ b_2 \succ ... \succ b_i \succ b^* \succ b_{i+1} \succ ... \succ b_n$ be a preference list. $L':=a: b^* \succ b_1 \succ b_2 \succ ... \succ b_n$ is the only preference list at swap distance $i$ to $L$ where $a$ prefers $b^*$ to $b_1$ and to $b_2$. 
	\end{lemma}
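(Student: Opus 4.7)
The plan is to prove the lemma by a direct inversion-counting argument using the Kendall-tau characterization of swap distance: the swap distance between two lists equals the number of pairs $\{x,y\}$ whose relative order differs between them.

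First I would verify that $L'$ itself achieves the two conditions. Starting from $L$ and moving $b^*$ upward one step at a time past $b_i, b_{i-1}, \dots, b_1$ gives exactly $i$ adjacent swaps and produces $L'$, and clearly in $L'$ the element $b^*$ precedes both $b_1$ and $b_2$. Counting inversions confirms this distance: the only pairs whose relative order changes are $\{b_j, b^*\}$ for $j \in \{1,\dots,i\}$, contributing exactly $i$ inversions.

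For uniqueness, I would take any list $L''$ at swap distance exactly $i$ from $L$ in which $a$ ranks $b^*$ above $b_1$ and above $b_2$, and show $L''=L'$. Let $K \subseteq \{3,\dots,i\}$ be the set of indices $k$ such that $b_k$ still precedes $b^*$ in $L''$. Then I can lower-bound the number of inversions between $L$ and $L''$ as follows: (i) the pairs $\{b_1,b^*\}$ and $\{b_2,b^*\}$ are inverted, giving $2$ inversions; (ii) for every $j \in \{3,\dots,i\} \setminus K$, the pair $\{b_j,b^*\}$ is inverted, giving $(i-2) - |K|$ further inversions; (iii) for every $k \in K$, the element $b_k$ now precedes both $b_1$ and $b_2$ in $L''$ (since $b_k$ precedes $b^*$ which precedes $b_1,b_2$), while $b_1$ and $b_2$ preceded $b_k$ in $L$, so the pairs $\{b_1,b_k\}$ and $\{b_2,b_k\}$ are inverted, giving $2|K|$ additional inversions. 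Adding these disjoint contributions gives at least $i + |K|$ inversions in total, forcing $|K|=0$ to stay within budget.

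Having shown $K=\emptyset$, exactly $i$ inversions are already accounted for, so the budget is fully used and no further pair may be inverted. Hence no $b_j$ with $j>i$ can precede $b^*$ in $L''$ (that would add another inversion), so $b^*$ occupies the first position in $L''$, and the relative order of $b_1,\dots,b_n$ among themselves is identical to that in $L$. This pins down $L''$ uniquely as $L'$. The only potentially delicate step is the case analysis distinguishing the three disjoint families of inversions without double counting, and guaranteeing that inversions contributed by $K$ via $b_1$ and $b_2$ are indeed new (which follows because they involve $b_k$ and one of $b_1,b_2$, whereas the other counted inversions all involve $b^*$).
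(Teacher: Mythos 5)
Your proof is correct, and it takes a genuinely different route from the one in the paper. The paper argues operationally about a minimal swap \emph{sequence}: it observes that each of the $i$ available swaps must either move $b^*$ up or move $b_1$ or $b_2$ down, and then derives a contradiction from the remaining budget if $b_1$ or $b_2$ is ever swapped down. Your argument instead works purely with the Kendall--tau characterization of swap distance as the number of discordant pairs in the \emph{final} list: you partition the forced inversions into the three disjoint families involving $b^*$ with $b_1,b_2$, with the uncrossed $b_j$ ($3\le j\le i$), and the ``compensating'' inversions $\{b_1,b_k\},\{b_2,b_k\}$ created by transitivity whenever some $b_k$ stays ahead of $b^*$, obtaining the lower bound $i+|K|$ and hence $K=\emptyset$; the tight budget then freezes every remaining pair and pins down $L''=L'$. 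Your version is somewhat more rigorous --- the paper's claim that ``with each swap, we have to improve $b^*$ or swap down $b_1$ or $b_2$'' implicitly relies on a potential argument over the rank difference that is never spelled out, and its case analysis speaks loosely of ``swaps left'' --- whereas the inversion count is sequence-independent and checks disjointness of the counted pairs explicitly. The paper's argument is shorter and arguably more intuitive for a reader thinking of swaps as bubble-sort moves. Both are elementary and both establish the statement.
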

	\begin{proof}
		Since there are $i-1$ agents between $b_1$ and $b^*$, one needs at least $i$ swaps to reach a preference list where $b^*$ is preferred to $b_1$. Clearly, $L'$ has swap distance $i$ to $L$, as we only improve $b^*$ by $i$ positions. Now consider a preference list $L''$ at swap distance $i$ to $L$ where $b^*$ is preferred to $b_1$ and $b_2$. With each swap, we have to improve $b^*$ or swap down $b_1$ or $b_2$ (otherwise $b^*$ cannot be preferred to $b_1$). Assume that we swap down $b_1$ or $b_2$ in at least one step. If we swap down $b_2$, the distance between $b_1$ and $b^*$ is still $i$, but we only have $i-1$ swaps left, a contradiction. If we swap down $b_1$, then $b_2$ is improved with the same swap and the distance between $b_2$ and $b^*$ is $i$. Again, since there are only $i-1$ swaps left, this is not possible. It follows that we do not swap down $b_1$ or $b_2$ in $L''$, and thus $L''=L'$.
	\end{proof}
	
	\begin{lemma}\label{lemma:prefprofile_to_formula}
		Let $\mathcal{P}'$ be a preference profile at swap distance at most $\ell$ from $\mathcal{I}'$ such that $M$ is stable. There exists a satisfying assignment $A$ with $f(A)=\mathcal{P}'$.
	\end{lemma}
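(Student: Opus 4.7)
My plan is to show that any profile $\mathcal{P}'$ at swap distance at most $\ell$ in which $M$ is stable must equal $f(A)$ for some satisfying assignment $A$, which I will construct explicitly from $\mathcal{P}'$. The strategy proceeds via (i) a tight per-variable cost lower bound, (ii) a rigidity step using Lemma~\ref{lemma:help_swap_distance}, and (iii) an extraction of a clause-satisfying assignment.

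For step (i), I would show that for each variable $v \in V$, at least $2(N+2)$ swaps must be used within the preferences of the four ``cost-bearing'' agents $\{m_v, m'_v, w_{\bar v}, w'_{\bar v}\}$ (for $v \in U$; the analogous four agents with the roles of men and women swapped for $v \in Z$). These four agents form a $K_{2,2}$ of initial blocking pairs in $\mathcal{I}'$. The ``top-top'' pair $\{m_v, w_{\bar v}\}$ alone requires at least $N+2$ swaps in $m_v$'s list (to move the partner $w_v$ from its initial position $N+3$ above position $1$) or at least $N+2$ swaps in $w_{\bar v}$'s list. The remaining three pairs have a somewhat asymmetric cost structure -- resolution via the ``second-position'' rival is achievable with only $N+1$ swaps -- but a direct case analysis shows that any feasible combination of (partial or full) actions that resolves all four pairs costs at least $2(N+2)$, with equality attained only by two options: full $N+2$-swap actions on both of $\{m_v, m'_v\}$ (which corresponds to setting $v$ to true) or full actions on both of $\{w_{\bar v}, w'_{\bar v}\}$ (setting $v$ to false). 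Since the four-agent groups are disjoint across variables, summing these bounds over the $|V|$ variables yields a lower bound of $2|V|(N+2) = \ell$ on the total number of swaps. Because $\mathcal{P}'$ has swap distance at most $\ell$, every per-variable bound is tight with equality and no swaps are spent in preference lists of other agents (dummies, or the non-cost-bearing agents $m_{\bar v}, w_v$, etc.).

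For step (ii), Lemma~\ref{lemma:help_swap_distance} forces each $N+2$-swap action to take the unique form of swapping the acting agent's partner $M(a)$ from position $N+3$ all the way to the top of $a$'s list. I read off the assignment $A$ from the per-variable choice: for $u \in U$ set $u$ true if $m_u, m'_u$ each swapped their partner to the top and false if $w_{\bar u}, w'_{\bar u}$ did so, with the analogous convention for $z \in Z$. For step (iii), I verify that $A$ satisfies every clause. For $\{p, q\} \in C$ the initial blocking pair $\{m_p, w_q\}$ must be resolved in $\mathcal{P}'$; the only modifications in the relevant preference lists are the per-variable top-swap actions above, so $\{m_p, w_q\}$ is resolved iff $m_p$ swapped $w_p$ to the top (i.e.\ $p$ is true in $A$) or $w_q$ swapped $m_q$ to the top (i.e.\ $q$ is true in $A$); hence at least one of $p,q$ is true in $A$, and the analogous argument in the primed copy is automatic since the same choice governs both copies. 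By construction the swaps realizing $\mathcal{P}'$ coincide exactly with those that $f$ performs on the literals in $A$, yielding $f(A) = \mathcal{P}'$.

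The main obstacle is the case analysis in step (i). The asymmetry between the $N+2$-swap action required for the ``top-top'' blocking pair $\{m_v, w_{\bar v}\}$ and the cheaper $N+1$-swap partial actions available for the other three pairs raises the possibility that some combination of partial actions might match or beat the $2(N+2)$ threshold, which would invalidate the uniqueness of the extracted assignment. The technical heart of the argument is thus to explicitly bound, for each variable, the cost of every feasible mix -- showing that any combination involving a strictly partial action costs strictly more than $2(N+2)$ -- and to conclude that only the two ``pure'' options $\{m_v, m'_v\}$ and $\{w_{\bar v}, w'_{\bar v}\}$ can be realized within budget.
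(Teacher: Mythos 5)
Your proposal follows essentially the same route as the paper's proof: a per-variable lower bound of $2(N+2)$ swaps that forces the budget $\ell=2|V|(N+2)$ to be exhausted exactly with no swaps elsewhere, Lemma~\ref{lemma:help_swap_distance} to pin down the unique form of each $(N+2)$-swap modification, and the clause blocking pairs $\{m_p,w_q\}$ to certify that the extracted assignment is satisfying. The only difference is that you spell out the $K_{2,2}$ case analysis ruling out mixed resolutions that exploit the cheaper $(N+1)$-swap partial actions, a step the paper asserts without justification (``we need to modify $m_p$ and $m_p'$ or $w_{\bar{p}}$ and $w_{\bar{p}}'$''); your added care is warranted and the analysis goes through.
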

	\begin{proof}
		To resolve the blocking pairs $\{m_p,w_{\bar{p}}\}, \{m_p,w'_{\bar{p}}\},\{m'_p,w'_{\bar{p}}\}$ and $ \{m'_p,w_{\bar{p}}\}$ for any $p \in L$, we need to modify $m_p$ and its duplicated version $m_p'$ or we need to modify $w_{\bar{p}}$ and its duplicated version $w_{\bar{p}}'$, that is, swap $M(m_p)$ and $M(m_p')$ in front of $w_{\bar{p}}$ and $w_{\bar{p}}'$ in the preferences of $m_p$ and $m_p'$, respectively, or swap $M(w_{\bar{p}})$ and $M(w_{\bar{p}}')$ in front of $m_p$ and $m_p'$ in the preferences of $w_{\bar{p}}$ and $w_{\bar{p}}'$, respectively.  Clearly, we need at least $N+2$ swaps to perform such a modification in one preference list. Since we need to modify $2|V|$ preference lists and our total budget is $2|V|(N+2)$, each modification of a preference list can only take $N+2$ swaps. By \Cref{lemma:help_swap_distance}, the only way to achieve this is to swap the matched agent to the first place. Since we exhausted the complete swap budget, no additional swaps are possible.
		
		As we must modify $m_u$ or $m_{\bar{u}}$ and $w_v$ or $w_{\bar{v}}$ but can only modify at most $2|V|$ agents, the assignment $A:=\{p \in L \mid m_p \text{ or } w_p \text{ was modified}\}$ is valid. Clearly, $f(A)=P'$. Since $M$ is stable in $P'$, no pair corresponding to a clause in $\mathcal{I}$ can be blocking. It follows that for each clause $c=\{u,v\} \in C$, either $m_u$ or $w_v$ must have been modified, as otherwise $\{m_u,w_v\}$ is blocking. Therefore, for each clause $c \in C$, there is a literal $p \in A \cap c$ and by definition, $A$ is a satisfying assignment.
	\end{proof}
	\Cref{lemma:injective}, \Cref{lemma:formula_to_prefprofile} and \Cref{lemma:prefprofile_to_formula} together show that $f$ describes a one-to-one correspondence between satisfying assignments in $\mathcal{I}$ and preference profiles with swap distance $\ell$ to $\mathcal{I}'$ where $M$ is stable.
	
	The problem is contained in \#P, since we can construct a NTM that nondeterministically chooses a preference profile with swap distance exactly $\ell$ and checks whether $M$ is stable. Then, the number of accepting branches equals the number of solutions to the problem.
\end{proof}

\subsection{Theorem 4.2}

\seven*
\begin{proof}
	\noindent\textbf{Agent.}
	For the sake of completeness, we repeat the construction which we intuitively introduced and discussed in the main body.  
	Let $\mathcal{I}_{i,\ell}^G$ be a \textsc{\#Agent-Delete} instance constructed from a given graph $G$ as follows (see \Cref{fig:adconstr} for a visualization). The agents consist of \emph{edge men} $m_e$, \emph{vertex men} $m_v^{q}$, \emph{extra men} $m_p^*$ and \emph{vertex women} $w_v$. More formally, the sets of men and women are:\\
	\begin{align*}
	U=&\{m_e \mid e \in E\} \cup \{m_v^{q} \mid v \in V, q \in [N]\} \cup \{m_p^* \mid p \in [i]\}\\
	W=&\{w_v \mid v \in V\}
	\end{align*}
	where $N:=m+1$. Let $e=\{u,v\} \in E$. The preferences of the edge men are the following:
	\begin{align*}
	m_e: w_u \succ w_v \succ ...
	\end{align*}
	Let $v \in V$ and $e_1,...,e_{\text{deg}(v)}$ be the edges incident to $v$. Let $q \in [N]$ and $p \in [i]$. The preferences for the remaining agents are:
	\begin{align*}
	m_v^{q}&: w_v \succ ...\\
	w_v&: m_v^1 \succ ... \succ m_v^N \succ m_{e_1} \succ ... \\
	& \hspace*{2.2cm}\succ m_{e_{\text{deg}(v)}} \succ m_1^* \succ ... \succ m_i^* \succ ...\\
	m_p^*&: ...
	\end{align*}
	Our deletion budget is $\ell$ and will in the most cases be equal to $\ell(i,j):=j+i \cdot N$, where $j$ is the size of edge men that we want to delete. We set the designated agent to be $m_i^*$.\\
	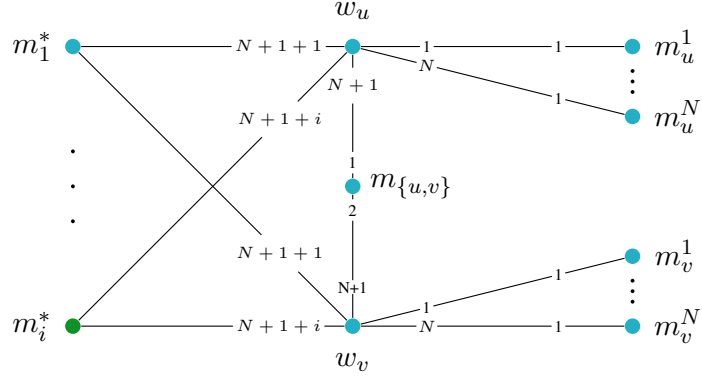
\begin{figure}
		\centering
		\resizebox{0.6\textwidth}{!}{\definecolor{darkgreen}{HTML}{0D9228}
\definecolor{darkred}{HTML}{C42323}
\definecolor{lightblue}{HTML}{23AFC4}
\begin{tikzpicture}[scale=.9,auto=center,every node/.style={circle,fill=lightblue}, inner sep = 2pt]
    \node[label=above:$w_u$] (wu) at (4,9) {};
    \node[label=below:$w_v$] (wv) at (4,5) {};
    \node[label=right:$m_u^1$] (mu1) at (8,9) {};
    \node[label=right:$m_v^1$] (mv1) at (8,6) {};
    \node[label=right:$m_u^N$] (muN) at (8,8) {};
    \node[label=right:$m_v^N$] (mvN) at (8,5) {};
    \node[label=right:$m_{\{u,v\}}$] (muv) at (4,7) {};
    \node[label=left:$m_1^*$] (ms1) at (0,9) {};
    \node[label=left:$m_i^*$,fill=darkgreen] (msi) at (0,5) {};
    \node[inner sep=0.5pt,fill=black] (1) at (8,8.65) {};
    \node[inner sep=0.5pt,fill=black] (2) at (8,8.5) {};
    \node[inner sep=0.5pt,fill=black] (3) at (8,8.35) {};
    \node[inner sep=0.5pt,fill=black] (4) at (8,5.65) {};
    \node[inner sep=0.5pt,fill=black] (5) at (8,5.5) {};
    \node[inner sep=0.5pt,fill=black] (6) at (8,5.35) {};
    \node[inner sep=0.5pt,fill=black] (7) at (0,7.5) {};
    \node[inner sep=0.5pt,fill=black] (8) at (0,7) {};
    \node[inner sep=0.5pt,fill=black] (9) at (0,6.5) {};
    \tiny
    \draw (muv) -- (wu) node [very near start, fill=white, inner sep=1pt] {1} node [near end, fill=white, inner sep=-3pt] {$N+1$};
    \draw (muv) -- (wv) node [very near start, fill=white, inner sep=1pt] {2} node [near end, fill=white, inner sep=-3pt] {N+1};
    \draw (wu) -- (mu1) node [near start, fill=white, inner sep=0pt] {1} node [near end, fill=white, inner sep=0pt] {1};
    \draw (wu) -- (muN) node [near start, fill=white, inner sep=0pt] {$N$} node [near end, fill=white, inner sep=0pt] {1};
    \draw (wv) -- (mv1) node [near start, fill=white, inner sep=0pt] {1} node [near end, fill=white, inner sep=0pt] {1};
    \draw (wv) -- (mvN) node [near start, fill=white, inner sep=0pt] {$N$} node [near end, fill=white, inner sep=0pt] {1};
    \draw (wu) -- (ms1) node [near start, fill=white, inner sep=1pt] {$N+1+1$};
    \draw (wu) -- (msi) node [near start, fill=white, inner sep=-10pt] {$N+1+i$};
    \draw (wv) -- (ms1) node [near start, fill=white, inner sep=-10pt] {$N+1+1$};
    \draw (wv) -- (msi) node [near start, fill=white, inner sep=1pt] {$N+1+i$};

\end{tikzpicture}}
		\caption{One edge and two vertices from an SM instance $\mathcal{I}_{i,\ell}^G$. The designated agent $m_i^*$ is coloured green. The edge labels denote the rank of the distant agent of the edge in the preferences of the close agent of the edge. }
		\label{fig:adconstr}
	\end{figure} 
	
	We first define two helpful notions:
	\begin{definition} 
		Consider the \textsc{\#Agent-Delete} instance $\mathcal{I}_{i,\ell(i,j)}^G$. We define $D_{i,j}$ as the number of solutions to $\mathcal{I}_{i,\ell(i,j)}^G$ that contain at least one extra man $m_p^*$ for some $p \in [i]$.
	\end{definition}
	\begin{definition}
		For a graph $G=(V,E)$, an \emph{$i$-vertex-isolating set} is an edge set $E' \subseteq E$ such that after deleting all edges in $E'$, there are exactly $i$ vertices that have no incident edges. We denote by $\mathcal{E}_i^j$ the set of all $i$-vertex-isolating sets of size $j$.
	\end{definition}
	
	Notice that for any graph $G=(V,E)$, $C \subseteq E$ is an edge cover if and only if there is no $i \in \{1,...,n\}$ such that $E \setminus C$ is an $i$-vertex-isolating set. We will now introduce the most important lemma of the proof, which will establish a connection between vertex-isolating sets and Agent-Delete solutions.
	\begin{lemma}\label{lemma:agent_delete_char}
		Let $G=(V,E)$ be a graph and $i \in [n], j \in [m]$. The following equation holds:
		\begin{equation*}
		\text{\textsc{\#AD}}(\mathcal{I}_{i,\ell(i,j)}^G)=\sum_{j'=0}^j{\sum_{i'=i}^{n}({\binom{(N+1)(n-i)}{j-j'} \cdot \binom{i'}{i}|\mathcal{E}_{i'}^{j'}|}}) + D_{i,j}
		\end{equation*}
	\end{lemma}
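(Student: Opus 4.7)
The plan is to separate \emph{clean} solutions (those deleting no extra man) from solutions deleting at least one extra man (counted by $D_{i,j}$), and to exhibit a bijection between clean solutions and the triples enumerated by the double sum. For a clean solution $S$, set $E'(S) := \{e \in E : m_e \in S\}$ and define $V^*(S)$ as the set of vertices $v$ that (i) are isolated in $(V, E \setminus E'(S))$, (ii) have all $N$ vertex men $m_v^1,\dots,m_v^N$ in $S$, and (iii) satisfy $w_v \notin S$. The central claim to prove is that $|V^*(S)| = i$ \emph{exactly} for every clean $S$.

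The upper bound $|V^*(S)| \le i$ is immediate: each $v \in V^*(S)$ contributes $N$ deleted vertex men, so $N\cdot|V^*(S)| \le |S| = j+iN$ yields $|V^*(S)| \le i + j/N < i+1$, using the sharp choice $N = m+1 > j$. The lower bound requires a blocking-pair analysis. I will fix a stable matching $M$ in which $m_i^*$ is matched and first observe that all $i$ extra men are matched in $M$, since an unmatched $m_p^*$ with $p<i$ would block with $M^{-1}(m_i^*)$ (who ranks $m_p^*$ above $m_i^*$). Setting $T := \{v : M(w_v) \in \{m_1^*,\dots,m_i^*\}\}$, we have $|T| = i$. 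For each $v \in T$, any surviving $m_v^q$ would block with $w_v$, so all $m_v^q$ lie in $S$. For any surviving edge man $m_e$ with $e$ incident to $v \in T$: if $v$ is the first endpoint in $m_e$'s list then $m_e$ blocks with $w_v$ (contradiction); if $v$ is the second endpoint then $m_e$ must be matched to the other endpoint's woman $w_u$, forcing all $m_u^q$ into $S$. Collecting such ``dragged'' vertices into $U$, the budget bound $(i+|U|)N + |E'(S)| \le j + iN$ gives $|U|N \le j < N$, so $U = \emptyset$; a parallel blocking argument excludes any surviving edge within $T$. Hence every $v\in T$ is isolated in $(V, E\setminus E'(S))$, giving $T \subseteq V^*(S)$ and $|V^*(S)| \ge i$. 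This case analysis, which leverages $N = m+1$ throughout, is the main obstacle I anticipate.

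With $|V^*(S)| = i$ established, the bijection is clean. The forward map sends a triple $(E', T, X)$, where $E' \in \mathcal{E}_{i'}^{j'}$, $T$ is an $i$-subset of the $i'$ isolated vertices of $(V, E\setminus E')$, and $X$ is a $(j-j')$-subset of the $(N+1)(n-i)$ vertex men and vertex women at the $n-i$ non-$T$ vertices, to
\[ S := \{m_e : e \in E'\} \cup \{m_v^q : v \in T,\ q \in [N]\} \cup X. \]
The three pieces are pairwise disjoint with total size $j+iN$, and a stable matching realising $m_i^*$'s match is obtained by stably matching the $i$ women $\{w_v : v\in T\}$ to the $i$ extras (each such $w_v$ ranks only extras among surviving agents in her top levels) and extending by any stable matching of the residual instance; no inter-component blocking can arise because $|X| < N$ forces each $v \notin T$ to retain a surviving vertex man who monopolises $w_v$. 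The backward map sends clean $S$ to $\bigl(E'(S),\ V^*(S),\ S \setminus \{m_e : e \in E'(S)\} \setminus \{m_v^q : v \in V^*(S),\ q\in[N]\}\bigr)$; the third coordinate lies in the required pool precisely because $w_v \notin S$ for $v \in V^*(S)$. Counting triples yields the double sum $\sum_{j'=0}^{j}\sum_{i'=i}^{n} |\mathcal{E}_{i'}^{j'}| \binom{i'}{i} \binom{(N+1)(n-i)}{j-j'}$, and adding $D_{i,j}$ completes the formula.
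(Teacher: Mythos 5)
Your proposal is correct and follows essentially the same route as the paper's proof: split off the solutions deleting extra men into $D_{i,j}$, use the budget bound with $N=m+1>j$ to cap the number of ``selected'' vertices at $i$, use the blocking-pair analysis (surviving own vertex men and incident edge men would block) to force the $i$ vertices matched to extras to be isolated with all vertex men deleted and $w_v$ retained, and count the remaining $j-j'$ deletions as an arbitrary choice from the $(N+1)(n-i)$ vertex agents at unselected vertices. The only cosmetic difference is that you package this as an explicit bijection with the invariant $|V^*(S)|=i$, whereas the paper phrases it as three constraints that every clean solution must satisfy and counts the constrained solutions directly.
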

	\begin{proof}
		We first only consider solutions not containing any extra men. Clearly, if the number of such solutions is $\sum_{j'=0}^j{\sum_{i'=i}^{n}{\binom{(N+1)(n-i)}{j-j'} \cdot \binom{i'}{i}|\mathcal{E}_{i'}^{j'}|}}$, then adding $D_{i,j}$ will give the correct total number of solutions.
		
		As an additional restriction, we first only count solutions that satisfy the following constraints:
		\begin{enumerate}
			\item the edges corresponding to the deleted edge men form a $i'$-vertex isolating set of size $j'$, where $i' \geq i$ and $j' \leq j$.
			\item There are exactly $i$ vertices $v$ such that $m_v^q$ is deleted for all $q \in [N]$. We call such a vertex \emph{selected}. Notice that our budget $\ell=j+i \cdot N$ does not suffice to select more than $i$ vertices, since $j<N$.
			\item No woman $w_v$ is deleted such that $v$ is selected.
		\end{enumerate}
		We first show that the number of solutions that fulfil these constraints is exactly the total number of solutions as stated by the theorem minus $D_{i,j}$. Let $E'$ be an $i'$-vertex-isolating set of size $j'$ in $G$ and $i \leq i' \leq n$, $j' \leq j$. After deleting all edge men $m_e$ with $e \in E'$, the remaining deletion budget is $j-j'+i \cdot N$. Let $V' \subseteq V \setminus \bigcup_{e \in E \setminus E'}e$ with $|V'|=i$ be a subset of $i$ vertices that are isolated in $(V,E \setminus E')$. Clearly, there are $\binom{i'}{i}$ such subsets. After deleting $m_v^{q}$ for all $v \in V'$ and $q \in [N]$, there are $i$ women who prefer $m_p^*$ for any $p \in [i]$ to all other men, because we deleted all men that corresponded to any $v \in V'$ and to any edge incident to any $v \in V'$. It follows that $m_i^*$ must be assigned. Finally, we can delete $j-j'$ arbitrary vertex agents (except any agent $w_v$ with $v \in V'$) to reach the required deletion budget. There are $N \cdot (|V|-i)$ not yet deleted vertex men and $|V|-i$ vertex women that we can delete. Notice that we cannot delete more edge men since we count all solutions corresponding to a certain vertex-isolating set separately. Thus, for each $i'$-vertex-isolating set of size $j'$, there exist $\binom{(N+1)(n-i)}{j-j'} \cdot\binom{i'}{i}$ different sets of agents such that after their deletion the designated agent $m_i^*$ is assigned. It follows that there are at least $\sum_{j'=0}^j{\sum_{i'=i}^{n}{\binom{(N+1)(n-i)}{j-j'} \cdot \binom{i'}{i}|\mathcal{E}_{i'}^{j'}|}}$ many different sets that do not contain extra men and whose deletion ensure that $m_i^*$ is assigned. Clearly, all solutions that we counted until now are different, since we delete different edge men, vertex men or vertex women.
		
		Note that all solutions that we counted until now consisted of $j'$ edge men with $j' \leq j$, whose corresponding edges form a $i'$-vertex isolating set of size $j'$ with $i' \geq i$, $i \cdot N$ vertex men $m_v^q$ that all belong to the same vertices and $j-j'$ arbitrary vertex agents except $w_v$ for a selected vertex $v$. 
		
		We now argue that there are no solutions that violate the above constraints. To this end, for each violated constraint, we will show that at most $i-1$ vertex women prefer extra man to any not-deleted vertex man. From this it follows that one of the $i$ extra men is not matched to a vertex woman, and since all vertex women prefer $m_j^*$ to $m_i^*$ for any $j<i$, $m_i^*$ is not assigned.  \begin{enumerate}
			
			\item Assume that the deleted edge men do not correspond to an $i'$-vertex isolating set of size $j'$ in $G$ for some $i' \geq i$, $j'\leq j$. If $j'>j$, the remaining budget does not suffice to delete $i \cdot N$ vertex men and we are in case 2. Thus, we can assume $i'<i$, i.e. less than $i$ vertices are isolated. Thus, for each set of $i$ vertex women, at least one of them will prefer an edge man to each extra man. Assume that some extra man $m^*$ is matched to this vertex woman $w_v$ in a stable matching and let $e$ be an edge such that $v \in e$ and $m_e$ is not deleted. In order for $\{m_e,w_v\}$ not to be blocking, $m_e$ must be matched to another vertex woman $w_u$. It follows that $m_u^q$ was deleted for all $q \in [N]$ (else, $w_u$ would be matched to some $m_u^q$) and that $w_u$ is not matched to any extra man. Because of our limited budget, we cannot delete all vertex men for more than $i$ vertices. It follows that there are only $i-1$ vertex women that can be matched to extra men (since $u$ is one of the selected vertices but $w_u$ is not matched to any extra man). We can conclude that $m_i^*$ is not assigned. We can repeat the same argument for the case where $w_v$ is selected but not matched to any extra man.
			\item Assume we deleted less than $i \cdot N$ vertex men. For each set of $i$ vertex women, at least one of them will prefer a vertex man to each extra man. Thus, $m_i^*$ cannot be assigned.
			\item Assume that we deleted some $w_v$ such that $v$ was selected, i.e. all $m_v^q$ for $q \in [N]$ were deleted. Because of our limited budget, we cannot select more than $i$ vertices. Since $w_v$ is deleted, there are only $i-1$ not deleted women that were selected and therefore do not prefer a vertex man to any extra man. Again, it follows that $m_i^*$ cannot be assigned.
		\end{enumerate} 
		Thus, there are no other solutions and adding all solutions involving the deletion of extra men, we obtain $\text{\textsc{\#AD}}(\mathcal{I}_{i,\ell(i,j)}^G)=\sum_{j'=0}^j{\sum_{i'=i}^{n}{\binom{(N+1)(n-i)}{j-j'} \cdot \binom{i'}{i}|\mathcal{E}_{i'}^{j'}|}} + D_{i',j'}$.
		
	\end{proof}
	With the help of \Cref{lemma:agent_delete_char}, we now want to develop a polynomial-time algorithm that, given an oracle for \textsc{\#Agent-Delete}, computes \textsc{\#Edge Cover}. To this end, we first show that we can compute $D_{i,j}$ in polynomial time via dynamic programming.
	\begin{lemma}\label{lemma:agent_delete_d_table_poly}
		Given an oracle that solves \textsc{\#Agent-Delete}, we can compute $D_{i,j}$ in polynomial time for any $i \in [n], j \in [m]$.
	\end{lemma}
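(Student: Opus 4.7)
The plan is to partition the solutions counted by $\text{\textsc{\#AD}}(\mathcal{I}_{i,\ell(i,j)}^G)$ according to which subset $S \subseteq \{m_1^*,\dots,m_{i-1}^*\}$ of non-designated extra men is deleted, and then to extract $D_{i,j}$ via a dynamic program with oracle calls based on the resulting recurrence. Note that the designated agent $m_i^*$ cannot be deleted, so $S$ is necessarily a subset of the other $i-1$ extra men.

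The crucial step will be a symmetry claim: for any $k \in \{0,\dots,i-1\}$, the number of solutions to $\mathcal{I}_{i,\ell(i,j)}^G$ that delete exactly a given subset $S$ of size $k$ depends only on $k$, not on which specific extra men are chosen. I would argue this by observing that extra men have arbitrary preferences and appear in the vertex women's lists only through their relative rank. After relabeling the remaining non-designated extras as $m_1^*,\dots,m_{i-1-k}^*$ and the designated $m_i^*$ as $m_{i-k}^*$, the residual counting problem is structurally equivalent to counting solutions to the \textsc{\#Agent-Delete} instance on $G$ with $i-k$ extra men and budget $\ell(i,j)-k$ that themselves delete no extra man. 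Verifying this correspondence rigorously, and checking that it preserves the set of valid deletions and stable matchings in which $m_i^*$ is assigned, will be the main obstacle of the plan.

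Once the symmetry is in hand, for $i' \ge 1$ and $\ell' \ge 0$ let $B(i',\ell')$ denote the number of solutions to the \textsc{\#Agent-Delete} instance on $G$ with $i'$ extra men, budget $\ell'$, and designated agent $m_{i'}^*$ that delete no extra man; and let $A(i',\ell')$ be the total number of solutions, which is obtainable directly from the oracle for any $\ell'$. The partition together with the symmetry yields
\begin{equation*}
A(i',\ell') \;=\; \sum_{k=0}^{i'-1} \binom{i'-1}{k}\, B(i'-k,\ell'-k),
\end{equation*}
which rearranges to
\begin{equation*}
B(i',\ell') \;=\; A(i',\ell') - \sum_{k=1}^{i'-1} \binom{i'-1}{k}\, B(i'-k,\ell'-k).
\end{equation*}

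With base case $B(1,\ell') = A(1,\ell')$ (the single extra man is the designated one and cannot be deleted), I would fill a table of $B$ values in order of increasing $i'$, using one oracle call and $O(n)$ arithmetic operations per entry. Since $\ell(i,j) \le m + n(m+1) = O(nm)$, the relevant range of $\ell'$ is polynomial and the table has polynomial size. Finally $D_{i,j} = A(i,\ell(i,j)) - B(i,\ell(i,j))$, computable in polynomial time with polynomially many oracle calls.
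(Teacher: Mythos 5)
Your proposal is correct and is essentially the paper's own argument in a different parameterization: the paper's table entry $R_{i,j}[k]$ (solutions deleting exactly the first $k$ extra men) is precisely your $B(i-k,\ell(i,j)-k)$, its recurrence $R[k]=\text{\textsc{\#AD}}(\mathcal{I}_{i-k,j+iN-k}^G)-\sum_{r}\binom{i-1-k}{r}R[k+r]$ is your rearranged identity for $B$, and its final summation $D_{i,j}=\sum_{k\ge 1}\binom{i-1}{k}R[k]$ equals your $A(i,\ell(i,j))-B(i,\ell(i,j))$. The symmetry among extra men and the reduction to a smaller instance with decremented budget, which you flag as the point needing verification, is exactly the (implicitly used) ingredient in the paper's proof as well.
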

	\begin{proof}
		We will compute $D_{i,j}$ via a dynamic programming table $R_{i,j}[k]$, which will denote the number of solutions to \textsc{\#Agent-Delete} for $\mathcal{I}_{i,\ell(i,j)}^G$ where the first $k$ extra men (and no other extra men) are deleted. We claim that for each $k \in [i]$ we can compute the entries as follows:
		\begin{equation*}
		\begin{aligned}
		R_{i,j}[i-1]&=\text{\textsc{\#AD}}(\mathcal{I}_{1,j,j+iN-(i-1)}^G)\\
		R_{i,j}[k]&=\text{\textsc{\#AD}}(\mathcal{I}_{i-k,j,j+iN-k}^G)\\
		& \hspace*{0.2cm} -\sum_{r=1}^{i-1-k}{\binom{i-1-k}{r}R[k+r]}, \phantom{a} \forall k<i-1
		\end{aligned}
		\end{equation*}
		We show by induction over the number of not deleted extra men $i-k$ that $R_{i,j}[k]$ is correctly computed. For $i-k=1$ (and thus $k=i-1$), all extra men except for $m_i^*$ must be deleted (note that by definition, $m_i^*$ cannot be deleted). Therefore, we can reduce our instance by deleting all extra men except for one and subtracting $i-1$ from the budget. The resulting instance is exactly $\mathcal{I}_{1,j,j+iN-(i-1)}^G$.
		
		Now, for any $k$, let $R_{i,j}[k']$ be correctly computed for all $k' \in \{k+1,...,i-1\}$. From this we can conclude that $R_{i,j}[k]$ is correctly computed: Since we must not count the solutions where more extra men are deleted (as we only are interested in the solutions with exactly $k$ deleted extra men), we subtract these solutions from the result of our oracle call. Consider a solution for $\mathcal{I}_{i-k,j,j+iN-k}^G$ that deletes $r$ extra men. By induction hypothesis, the total number of solutions where the first $r$ extra men are deleted is $R_{i,j}[k+r]$. If a solution with $r$ deleted extra men exists, then exchanging these $r$ extra men with arbitrary different $r$ extra men is still a solution. It follows that for each solution where the first $r$ extra men were deleted, we have $\binom{i-1-k}{r}$ ways to choose the extra men. Consequently, the total number of solutions where $r$ extra men were deleted are $\binom{i-1-k}{r}R_{i,j}[k+r]$. We must subtract this number for each number $r$ of extra men in $\mathcal{I}_{i,j,j+iN-k}^G$ and thus we get $R[k]=\text{\textsc{\#AD}}(\mathcal{I}_{i-k,j,j+iN-k}^G)-\sum_{r=1}^{i-1-k}{\binom{i-1-k}{r}R[k+r]}$.
		
		Now, we can compute the total number of solutions that contain extra men, which is
		\begin{equation*}
		D_{i,j}=\sum_{k=1}^{i-1}{\binom{i-1}{k}R[k]}
		\end{equation*}
		because there are $\binom{i-1}{k}$ ways to choose $k$ out of $i-1$ extra men to delete.
	\end{proof} 
	We now move on to computing the number of $i$-vertex-isolating sets of size $j$ for any $i \in [n]$, $j \in [m]$. For this, we will fill a dynamic programming table $T[i,j]$ which will store exactly that number. We will fill the table from column by column, from $j=1$ to $j=m$, and for each column we start with $i=n$ and end with $i=1$. We will calculate $T[i,j]$ for any $i \in [n], j \in [m]$ as follows:
	\begin{align*}
	T[i,j] = & \text{\textsc{\#AD}}(\mathcal{I}_{i,\ell(i,j)}^G)-D_{i,j} \\
	& -\sum_{j'=0}^{j-1}{\sum_{i'=i}^{n}\binom{(N+1)(n-i')}{j-j'} \cdot \binom{i'}{i} \cdot T[i',j']}\\
	& -\sum_{i'=i+1}^n{\binom{i'}{i} \cdot T[i',j]}
	\end{align*}
	The following lemma states that $T[i,j]$ is equal to the number of $i$-vertex-isolating sets of size $j$.
	\begin{lemma}\label{lemma:table_is_correct}
		$T[i,j]=|\mathcal{E}_i^{j}|$ for all $i \in [n], j \in [m]$.
	\end{lemma}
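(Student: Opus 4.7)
The plan is to prove $T[i,j] = |\mathcal{E}_i^j|$ by induction on the pairs $(j,i)$ ordered lexicographically with $j$ as the primary (ascending) key and $i$ as the secondary (descending) key; this matches exactly the order in which the dynamic program fills the table, so when we compute $T[i,j]$ every entry $T[i',j']$ referenced on the right-hand side has already been computed (and, by induction hypothesis, already agrees with $|\mathcal{E}_{i'}^{j'}|$). Concretely, the strategy is to start from the identity proved in the preceding Lemma,
\[
\text{\textsc{\#AD}}(\mathcal{I}_{i,\ell(i,j)}^G) - D_{i,j} = \sum_{j'=0}^{j}\sum_{i'=i}^{n}\binom{(N+1)(n-i)}{j-j'}\binom{i'}{i}|\mathcal{E}_{i'}^{j'}|,
\]
isolate the distinguished summand with $(i',j') = (i,j)$, and match the remaining sum with the formula defining $T[i,j]$.

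For the base case I take the lex-smallest pair $(j,i) = (0,n)$: the double sum on the right-hand side of the Lemma collapses to a single term (only $j'=0$ and $i'=n$ are admissible), whose binomial coefficients both equal $1$, so $\text{\textsc{\#AD}}(\mathcal{I}_{n,\ell(n,0)}^G) - D_{n,0} = |\mathcal{E}_n^0|$, which coincides with the defining formula for $T[n,0]$. For the inductive step I assume $T[i',j'] = |\mathcal{E}_{i'}^{j'}|$ for all $(j',i')$ strictly earlier in the lex order, and split the double sum into the three pieces: (i) the unique term $(i',j') = (i,j)$ which equals $|\mathcal{E}_i^j|$; (ii) the terms with $j' = j$ and $i' > i$ (where $\binom{(N+1)(n-i)}{0} = 1$, leaving $\sum_{i'=i+1}^{n}\binom{i'}{i}|\mathcal{E}_{i'}^{j}|$); and (iii) the terms with $j' < j$ and $i' \geq i$. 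Solving for $|\mathcal{E}_i^j|$ and replacing each occurrence of $|\mathcal{E}_{i'}^{j'}|$ on the right by $T[i',j']$ via the induction hypothesis yields precisely the recurrence defining $T[i,j]$.

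The main obstacle is purely bookkeeping: one must verify that every index pair appearing on the right of the recurrence is strictly earlier than $(j,i)$ in the chosen lex order (so that the induction is well-founded), and one must carefully check that the binomial factor $\binom{(N+1)(n-i)}{j-j'}$ appearing in the Lemma is in agreement with the factor used in the definition of $T[i,j]$ (after noting that for $j' = j$ it degenerates to $1$, matching the second correction term in the recurrence). Both checks are routine, so the induction closes without further algebraic surprises, and the desired equality $T[i,j] = |\mathcal{E}_i^j|$ holds for every $i \in [n]$ and $j \in [m]$.
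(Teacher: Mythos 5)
Your proposal is correct and follows essentially the same route as the paper: the paper's induction on $n(j+1)-i$ is exactly your lexicographic order ($j$ ascending, $i$ descending), with the same base case $(j,i)=(0,n)$ and the same step of expanding $\text{\textsc{\#AD}}(\mathcal{I}_{i,\ell(i,j)}^G)-D_{i,j}$ via the preceding lemma, substituting the induction hypothesis, and cancelling everything except the $(i',j')=(i,j)$ term. The only caveat is the paper's own notational slip between $\binom{(N+1)(n-i)}{j-j'}$ in the lemma and $\binom{(N+1)(n-i')}{j-j'}$ in the table recurrence, which your ``routine check'' would surface but does not affect the structure of the argument.
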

	\begin{proof}
		We will show the claim by induction over $n(j+1)-i$. For $n(j+1)-i=0$, we have $i=n$ and $j=0$, and thus, using \Cref{lemma:agent_delete_char}, we obtain
		\begin{align}
		\begin{split}
		T[n,0]&=\text{\textsc{\#AD}}(\mathcal{I}_{n,\ell(i,0)}^G)-D_{n,0}\\
		&-\sum_{j'=0}^{-1}{\sum_{i'=n}^{n}\binom{(N+1)(n-i')}{j-j'} \cdot \binom{i'}{i} \cdot T[i',j']}\\
		&-\sum_{i'=n+1}^n{\binom{i'}{i} \cdot T[i',j]} \label{e1}
		\end{split}\\
		&=\binom{0}{0}\binom{n}{n}|\mathcal{E}_n^0|+D_{n,0}-D_{n,0}=|\mathcal{E}_n^0| \label{e2}
		\end{align}
		where in \Cref{e1} we use the definition of $T[i,j]$ and in \Cref{e2} we apply \Cref{lemma:agent_delete_char}.
		Now, assume that the equation is true for all $i \in [n], j \in [m]$ with $ n(j+1)-i<r$. We show that the equation is true for $i \in \{1,...,n\},j \in [m]$ such that $n(j+1)-i=r$. Using \Cref{lemma:agent_delete_char} and our induction hypothesis that $T[i',j']=\mathcal{E}_{i'}^{j'}$ for $j'<j$ and $T[i',j]=\mathcal{E}_{i'}^{j}$ for $i'>i$, we obtain
		
		{\small
			\begin{align}
			\begin{split}
			T[i,j]=&\text{\textsc{\#AD}}(\mathcal{I}_{i,\ell(i,j)}^G)-D_{i,j}\\
			-&\sum_{j'=0}^{j-1}{\sum_{i'=i}^{n}\binom{(N+1)(n-i')}{j-j'} \cdot \binom{i'}{i} \cdot T[i',j']}\\
			-&\sum_{i'=i+1}^n{\binom{i'}{i} \cdot T[i',j]} \label{e3}
			\end{split}\\
			\begin{split}
			=&\sum_{j'=0}^j{\sum_{i'=i}^n{\binom{(N+1)(n-i')}{j-j'} \cdot \binom{i'}{i} \cdot |\mathcal{E}_{i'}^{j'}|+D_{i,j}}}-D_{i,j}\\
			-&\sum_{j'=0}^{j-1}{\sum_{i'=i}^{n}\binom{(N+1)(n-i')}{j-j'} \cdot \binom{i'}{i} \cdot |\mathcal{E}_{i'}^{j'}|}\\
			& -\sum_{i'=i+1}^n{\binom{i'}{i} \cdot |\mathcal{E}_{i'}^{j'}|}  \label{e4}
			\end{split}\\
			=&\binom{(N+1)(n-i)}{j-j}\binom{i}{i}|\mathcal{E}_i^j|=|\mathcal{E}_i^j|  \label{e5}
			\end{align}}
		
		where \Cref{e3} uses the definition of $T[i,j]$ and in \Cref{e4}, we apply \Cref{lemma:agent_delete_char} and also replace $T[i',j']$ by $|\mathcal{E}_i^j|
		$ by our induction hypothesis.
	\end{proof}
	
	\begin{lemma}
		Given an oracle that solves \textsc{\#Agent-Delete}, we can compute $T[i,j]$ in polynomial time.
	\end{lemma}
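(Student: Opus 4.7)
The plan is to show that the table $T[\cdot,\cdot]$ can be filled by a double loop whose order of evaluation matches exactly the dependency pattern of the recurrence, and that each individual entry requires only a polynomial amount of oracle work and arithmetic.

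First, I would observe that the recurrence for $T[i,j]$ references table entries of three types: (a) the cell $T[i,j]$ itself is computed directly; (b) cells $T[i',j']$ with $j' < j$ (and $i' \geq i$) in the double sum; (c) cells $T[i',j]$ with $i' > i$ in the single sum. This suggests processing the columns in increasing order of $j = 0, 1, \dots, m$, and within each column processing the rows in decreasing order of $i = n, n-1, \dots, 1$. Under this order, by the time we compute $T[i,j]$ every entry appearing on the right-hand side has already been filled in: cells with smaller $j'$ were computed in an earlier column, and cells with larger $i'$ in the same column were computed earlier in the current column.

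Next, I would verify that computing a single entry $T[i,j]$ under this evaluation order takes polynomial time. The value $\text{\textsc{\#AD}}(\mathcal{I}_{i,\ell(i,j)}^{G})$ is obtained by one call to the \textsc{\#Agent-Delete} oracle on an instance whose size is polynomial in $|G|$. The value $D_{i,j}$ is computable in polynomial time by the preceding lemma on the $D$-table. The two nested sums together involve at most $O(nm)$ terms, each a product of a binomial coefficient and an already-computed $T$ entry; all of these numbers have bit length polynomial in $n + m$ (as the counts in question are bounded by $2^{O((n+m)\log(n+m))}$), so every arithmetic operation runs in polynomial time. By \Cref{lemma:table_is_correct} the value produced is exactly $|\mathcal{E}_i^{j}|$, so the recurrence is valid regardless of the evaluation order, provided the order respects the dependency pattern.

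Finally, I would add up the total cost: the table has $nm$ entries, each of which requires one oracle call for $\text{\textsc{\#AD}}$, a polynomial-time subroutine for $D_{i,j}$, and $O(nm)$ further arithmetic operations on polynomial-size integers. This yields a total running time polynomial in the size of $G$ and establishes the claim. There is no real obstacle here beyond carefully justifying the evaluation order; the work has already been done in \Cref{lemma:agent_delete_char}, \Cref{lemma:agent_delete_d_table_poly}, and \Cref{lemma:table_is_correct}.
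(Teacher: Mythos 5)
Your proposal is correct and follows essentially the same route as the paper: fill the table column by column in increasing $j$ and, within a column, in decreasing $i$, noting that each entry needs one oracle call, one polynomial-time computation of $D_{i,j}$ (from the preceding lemma), and a sum over polynomially many previously computed entries. Your additional remarks on the evaluation order respecting the dependency pattern and on the bit lengths of the intermediate integers are just a more careful spelling-out of what the paper states tersely.
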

	\begin{proof}
		To compute the table entry $T[i,j]$, we need to compute at most $i \cdot j$ table entries before. Each table entry can be computed in polynomial time, since, by \Cref{lemma:agent_delete_d_table_poly}, $D_{i,j}$ can be computed in polynomial time, an oracle query needs constant time and we sum over polynomially many table entries. Therefore, in total, the running time is polynomial.
	\end{proof}
	\begin{lemma}
		Let $G=(V,E)$ be a graph and let $C_k$ be the number of edge covers of size $k$ in $G$. Then, it holds that $C_k=\binom{m}{k}-\sum_{i=1}^nT[i,m-k]$.
	\end{lemma}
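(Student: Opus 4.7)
The plan is to observe that edge covers are in bijective correspondence with $0$-vertex-isolating sets of complementary size, and then to use a trivial partition of all edge subsets of a given size by the number of vertices they isolate.

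First, I would set up the duality: a subset $E' \subseteq E$ with $|E'|=k$ is an edge cover of $G$ if and only if every vertex is incident to some edge of $E'$, i.e., no vertex is isolated in the subgraph $(V,E')$. Taking complements in $E$, this is equivalent to saying that deleting the edges in $E \setminus E'$ (of size $m-k$) from $G$ leaves exactly $0$ isolated vertices. Hence the complement map $E' \mapsto E\setminus E'$ is a bijection between edge covers of size $k$ and $0$-vertex-isolating sets of size $m-k$, giving $C_k = |\mathcal{E}_0^{m-k}|$.

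Second, I would partition all $\binom{m}{m-k}=\binom{m}{k}$ edge subsets of size $m-k$ according to how many vertices they isolate. Each such subset isolates exactly some $i\in\{0,1,\dots,n\}$ vertices, so
\[
\binom{m}{k} \;=\; \sum_{i=0}^{n} |\mathcal{E}_i^{m-k}|.
\]
Subtracting the $i=0$ term and using $C_k=|\mathcal{E}_0^{m-k}|$ together with the previous lemma that $T[i,j]=|\mathcal{E}_i^j|$ immediately yields
\[
C_k \;=\; \binom{m}{k} - \sum_{i=1}^{n} |\mathcal{E}_i^{m-k}| \;=\; \binom{m}{k} - \sum_{i=1}^{n} T[i,m-k].
\]

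There is essentially no obstacle to overcome; the identity is a direct consequence of the complement bijection and the observation that every edge subset isolates a well-defined number of vertices. The only thing worth double-checking is the boundary case $k=m$ (all edges chosen, $m-k=0$), for which the empty edge set isolates all $n$ vertices if and only if $G$ has isolated vertices to begin with, and the formula remains consistent. Combining this identity with the polynomial-time computability of the table $T[\cdot,\cdot]$ established in the preceding lemmas completes the Turing reduction from \textsc{\#Edge Cover} to \textsc{\#Agent-Delete-Robustness}.
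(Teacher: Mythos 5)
Your proof is correct and follows essentially the same route as the paper: the complement bijection between size-$k$ edge covers and size-$(m-k)$ edge sets isolating zero vertices, combined with partitioning all $\binom{m}{k}$ subsets of size $m-k$ by the number of vertices they isolate and invoking $T[i,j]=|\mathcal{E}_i^j|$. Your write-up is only slightly more explicit than the paper's about the partition identity and the boundary case.
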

	\begin{proof}
		By \Cref{lemma:table_is_correct}, $\sum_{i=1}^nT[i,m-k]$ gives the number of vertex-isolating sets of size $m-k$. If $E' \subseteq E$ with $|E'|=m-k$ is not a $i'$-vertex-isolating set for some $i' \in [n]$, then clearly $E \setminus E'$ is an edge cover of size $k$. If $E' \subseteq E$ with $|E'|=m-k$ is a $i'$-vertex-isolating set for some $i' \in [n]$, then clearly $E \setminus E'$ is not an edge cover. It follows that either $E'$ is an edge cover or $E \setminus E'$ is a vertex-isolating set and the result follows.
	\end{proof}
	\begin{lemma}
		Given an oracle that solves \textsc{\#Agent-Delete}, for a given graph $G=(V,E)$ and an integer $j \in [m]$, we can compute the number of edge covers of size $j$ in $G$ in polynomial time.
	\end{lemma}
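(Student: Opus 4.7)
The plan is to directly combine the two immediately preceding lemmas. The previous lemma gives the explicit formula
\[
C_j \;=\; \binom{m}{j} - \sum_{i=1}^{n} T[i, m-j],
\]
which expresses the number of size-$j$ edge covers as a closed-form expression in the table entries $T[i, m-j]$. Since we already have a polynomial-time procedure (via the oracle for \textsc{\#Agent-Delete}) for computing any individual entry $T[i,j']$, the remaining work is essentially bookkeeping.

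Concretely, I would proceed as follows. First, for each $i \in [n]$, invoke the polynomial-time procedure from the preceding lemma to compute $T[i, m-j]$; this requires at most $n$ such computations, each of which is polynomial in the input size. Second, compute the binomial coefficient $\binom{m}{j}$, which can clearly be done in polynomial time. Third, output
\[
C_j \;=\; \binom{m}{j} - \sum_{i=1}^{n} T[i, m-j].
\]
Correctness is immediate from the previous lemma, and the running time is polynomial since it is a sum of $n+1$ polynomial-time computable quantities.

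There is no substantive obstacle at this stage: the hard work was absorbed into showing \textsc{\#Agent-Delete}-hardness via \Cref{lemma:agent_delete_char}, into establishing the dynamic program $T[i,j]$ in \Cref{lemma:table_is_correct}, and into handling the $D_{i,j}$ correction terms in \Cref{lemma:agent_delete_d_table_poly}. Given that \textsc{\#Edge Cover} is \#P-hard \cite{bubley1997graph}, this final lemma then yields the desired Turing reduction, establishing \#P-hardness of \textsc{\#Agent-Delete-Robustness}, and completes the agent case of the theorem.
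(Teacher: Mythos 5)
Your proposal is correct and matches the paper's (implicit) argument exactly: the paper also derives this lemma by combining the formula $C_k=\binom{m}{k}-\sum_{i=1}^{n}T[i,m-k]$ with the polynomial-time computability of the table entries $T[i,j]$ via oracle calls, leaving the final step as immediate bookkeeping. No gaps.
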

	In the end, we show membership in \#P. Consider a NTM that nondeterministically chooses $\ell$ agents (that are not $a$) and checks whether after their deletion, $a$ is assigned in a stable matching. The number of accepting branches of this NTM is equal to the number of solutions to our problem.
	
	\medskip
	\noindent\textbf{Pair.}
	We can use the hardness of \textsc{\#Agent-Delete} to show hardness for the analogous counting problem for stable pairs, i.e. \textsc{\#Pair-Delete}. We use the one-to-one correspondence between unstable agents and stable pairs as described before Observation~\ref{s:4}. Notice that the set of possible deletions is the same for both instances ($\mathcal{I}$ and $f(\mathcal{I},a)$), since we are not allowed to delete $a$ in $\mathcal{I}$ and we are not allowed to delete $a$ or $a'$ in $f(\mathcal{I},a)$. The one-to-one-correspondence between these two instances directly implies \textsc{\#Agent-Delete}$(\mathcal{I})=N-$\textsc{\#Pair-Delete}, where $N$ is the total number of possible deletions of $\ell$ agents. Thus, we can clearly compute the number of stable agent solutions of an instance in polynomial time when we are given an oracle for \textsc{\#Pair-Delete}. This implies \#P-hardness for \textsc{\#Pair-Delete}, and since we can nondeterministically choose $\ell$ agents and check whether a pair is stable after their deletion in polynomial time, the statement follows.
\end{proof}

\subsection{Approximation algorithms}
\nine*
\begin{proof}
Let $\sigma_\mathcal{I}^{k}$ be the number of preference profiles at swap distance exactly $k$ to an election or SM instance $\mathcal{I}$.
We first argue why we can uniformly sample a preference profile at swap distance $\ell$ in polynomial time. The same result for elections was shown in \citet{DBLP:conf/ijcai/BoehmerBFN21} and we will use it for the stable matching setting.
    We interpret the SM instance $\mathcal{I}=(U,W,\mathcal{P})$ as two elections $\mathcal{E}$ with the men as voters and the women as candidates and $\mathcal{F}$ with the women as voters and the men as candidates. We first compute, for all $0 \leq \ell_U \leq \ell$, the number $\sigma_\mathcal{I}^{\ell_U,\ell-\ell_U}$ of preference profiles such that the preferences of the men have swap distance exactly $\ell_U$ and the preferences of the women have swap distance exactly $\ell-\ell_U$ from $\mathcal{I}$. Note that this is possible in polynomial time since $\sigma_\mathcal{I}^{\ell_U,\ell-\ell_U}=\sigma_\mathcal{E}^{\ell_U} \cdot \sigma_\mathcal{F}^{\ell-\ell_U}$ and the values for the elections have been shown to be polynomial-time computable. We choose now $\ell_U \in \{0,...,\ell\}$ with probability $\frac{\sigma_\mathcal{I}^{\ell_U,\ell-\ell_U}}{\sum_{i=0}^\ell \sigma_\mathcal{I}^{i,\ell-i}}$, and sample an election $(U,W,\mathcal{P}_\mathcal{E})$ at swap distance $\ell_U$ from $\mathcal{E}$ and another election $(W,U,\mathcal{P}_\mathcal{F})$ at swap distance $\ell-\ell_U$ from $\mathcal{F}$. Our sampled SM instance at swap distance $\ell$ from $\mathcal{I}$ is $\mathcal{I}':=(U,W,\mathcal{P}_\mathcal{E} \cup \mathcal{P}_\mathcal{F})$. Now, it is easy to see that each SM instance at swap distance $\ell$ from $\mathcal{I}$ (where the men's preferences are at swap distance $\ell_U$) is sampled with probability
\begin{align*}
    \frac{\sigma_\mathcal{I}^{\ell_U,\ell-\ell_U}}{\sum_{i=0}^\ell \sigma_\mathcal{I}^{i,\ell-i}} \cdot \frac{1}{\sigma_\mathcal{E}^{\ell_U}} \cdot \frac{1}{\sigma_\mathcal{F}^{\ell-\ell_U}}
    =\frac{\sigma_\mathcal{I}^{\ell_U,\ell-\ell_U}}{\sigma_\mathcal{I}^\ell} \cdot \frac{1}{\sigma_\mathcal{I}^{\ell_U,\ell-\ell_U}}
    =\frac{1}{\sigma_\mathcal{I}^\ell}
\end{align*}
All steps can be computed in polynomial time. Sampling a preference profile at deletion distance $\ell$ is trivial, just delete $\ell$ agents uniformly at random. We can conclude:
\begin{lemma}\label{lem:samples}
For an SM instance $\mathcal{I}$ and an integer $\ell$, we can uniformly sample a preference profile at swap distance $\ell$ from $\mathcal{I}$.
\end{lemma}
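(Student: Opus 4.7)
The plan is to decompose the SM preference profile into two independent election-style profiles and then reduce to the known uniform-sampling result for elections. Specifically, I view $\mathcal{I}=(U,W,\mathcal{P})$ as a pair of disjoint ``elections'': $\mathcal{E}$ with $U$ as voters and $W$ as candidates, and $\mathcal{F}$ with the roles swapped. Since swaps in men's lists and swaps in women's lists are independent and swap distance is additive over disjoint lists, every SM profile at swap distance exactly $\ell$ from $\mathcal{I}$ corresponds bijectively to a pair $(\mathcal{P}_\mathcal{E},\mathcal{P}_\mathcal{F})$ where $\mathcal{P}_\mathcal{E}$ is at swap distance $\ell_U$ from $\mathcal{E}$ and $\mathcal{P}_\mathcal{F}$ is at swap distance $\ell-\ell_U$ from $\mathcal{F}$, for some $\ell_U\in\{0,\dots,\ell\}$.

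Next, I would invoke the result of \citet{DBLP:conf/ijcai/BoehmerBFN21}, which provides, in polynomial time, both (i) the count $\sigma_{\mathcal{E}}^{k}$ of elections at swap distance exactly $k$ from a fixed election $\mathcal{E}$, and (ii) a uniform-at-random sampler over those elections. By the decomposition, the total count is
\[
\sigma_{\mathcal{I}}^{\ell}=\sum_{\ell_U=0}^{\ell}\sigma_{\mathcal{E}}^{\ell_U}\cdot\sigma_{\mathcal{F}}^{\ell-\ell_U},
\]
which is computable in polynomial time from the election counts.

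To sample an SM profile uniformly, I would first draw the budget split $\ell_U\in\{0,\dots,\ell\}$ with probability $\sigma_{\mathcal{E}}^{\ell_U}\cdot\sigma_{\mathcal{F}}^{\ell-\ell_U}/\sigma_{\mathcal{I}}^{\ell}$, and then, using the election sampler twice independently, draw a men's profile at distance $\ell_U$ from $\mathcal{E}$ and a women's profile at distance $\ell-\ell_U$ from $\mathcal{F}$; I output their concatenation. Any particular target profile at swap distance $\ell$ (whose men's part has distance $\ell_U$) is produced with probability
\[
\frac{\sigma_{\mathcal{E}}^{\ell_U}\cdot\sigma_{\mathcal{F}}^{\ell-\ell_U}}{\sigma_{\mathcal{I}}^{\ell}}\cdot\frac{1}{\sigma_{\mathcal{E}}^{\ell_U}}\cdot\frac{1}{\sigma_{\mathcal{F}}^{\ell-\ell_U}}=\frac{1}{\sigma_{\mathcal{I}}^{\ell}},
\]
so the distribution is indeed uniform. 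For the deletion-distance variant the task is trivial: uniformly sample an $\ell$-subset of the agent set and delete it.

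The main obstacle is essentially outsourced: the bulk of the work (counting elections at a prescribed swap distance and sampling from the resulting distribution) is already handled by the cited election result. What remains is simply to verify that the additivity of swap distance across the disjoint men's and women's lists yields a clean bijection, so that a two-stage sampler---first a weighted choice of the split $\ell_U$, then two independent election samples---produces the correct uniform distribution in polynomial time.
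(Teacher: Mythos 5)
Your proposal matches the paper's proof essentially verbatim: both decompose the SM profile into the men's and women's elections, use the counting and uniform-sampling results of \citet{DBLP:conf/ijcai/BoehmerBFN21}, draw the budget split $\ell_U$ with probability proportional to $\sigma_{\mathcal{E}}^{\ell_U}\cdot\sigma_{\mathcal{F}}^{\ell-\ell_U}$, and verify that each profile is produced with probability $1/\sigma_{\mathcal{I}}^{\ell}$. The argument is correct and no further changes are needed.
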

For the randomized algorithm, we sample $s:=\lceil \frac{\ln \frac{2}{\delta}}{2 \varepsilon^2} \rceil $ preference profiles at swap distance $\ell$ from $\mathcal{I}$ and determine the ratio $p$ of preference profiles that are stable with respect to our definition of interest. Notice that checking stability for matchings, pairs and agents can all be done in polynomial time. We define random variables $X_i$ for $i \in [s]$ which are $1$ if in the $i$-th sampled preference profile, the matching, pair or agent is stable, and $0$ else. These are clearly Bernoulli random variables with parameter $p^*=\frac{S^*}{\sigma_\mathcal{I}^{\ell}}$, where $S^*$ is the exact number of preference profiles at distance $\ell$ where the matching/pair/agent is stable. Defining $X:=\sum_{i=1}^s$, we have 
\begin{align*}
\text{Pr}(|p-p^*|>\varepsilon)=
    \text{Pr}(|(X - E(X))/s| > \varepsilon)
    \leq 2 e^{-2s\varepsilon^2}\leq\delta
\end{align*}
using the Hoeffding inequality.
\end{proof}
Now, for a fixed SM instance $\mathcal{I}$, matching $M$ and integer $\ell$, we wish to approximate the number $S^*$ of preference profiles at swap distance $\ell$ from $\mathcal{I}$ such that $M$ is not stable.
Our approach will be to count the number of preference profiles where a specific pair is blocking. By summing over all pairs, we overcount the total number of preference profiles where at least one pair is blocking, but only by a polynomial factor. We first explain how to compute the number of profiles at some swap distance such that a specific pair is blocking.
\begin{proposition}
    Given four integers $i<j\leq n$ and $k$, One can determine the number of permutations $\pi$ of $n$ elements with $k$ inversions such that $\pi(i)<\pi(j)$ in polynomial time.
\end{proposition}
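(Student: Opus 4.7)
The plan is to compute the generating polynomial $G_{ij}(q) := \sum_{\pi \colon \pi(i) < \pi(j)} q^{\operatorname{inv}(\pi)}$ by a dynamic program that builds $\pi$ position by position, and then read off the coefficient of $q^k$. I would use the standard rank parameterization of permutations: at step $m$, pick $r_m \in \{1,\dots,n-m+1\}$, namely the rank of $\pi(m)$ among the values not yet placed. This is a bijection between $S_n$ and such rank sequences, and each choice of $r_m$ contributes exactly $r_m - 1$ to $\operatorname{inv}(\pi)$; in the unconstrained setting one recovers the familiar identity $\sum_{\pi} q^{\operatorname{inv}(\pi)} = \prod_{m=1}^{n} [n-m+1]_q = [n]_q!$, where $[k]_q := 1 + q + \cdots + q^{k-1}$.

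To enforce $\pi(i) < \pi(j)$, I would track, between steps $i$ and $j$, a single auxiliary integer $x$ equal to the number of still-unplaced values that are strictly smaller than $\pi(i)$. Just after step $i$ one has $x = r_i - 1$; at each subsequent step $x$ either drops by one (if the step's rank is at most $x$, meaning a value below $\pi(i)$ is placed) or stays fixed (otherwise). The constraint $\pi(j) > \pi(i)$ at step $j$ then translates exactly to $r_j > x$. Accordingly, I would maintain polynomials $D[m][x]$ for $i \le m < j$, initialize $D[i][x] := D[i-1] \cdot q^{x}$ for $x \in \{0,\dots,n-i\}$, and use the transition
\[
D[m+1][x] \;:=\; D[m][x+1] \cdot [x+1]_q \;+\; D[m][x] \cdot q^{x} \cdot [n-m-x]_q,
\]
whose two summands correspond to placing a value below or above $\pi(i)$, respectively. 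At step $j$ only the "above" branch survives, which collapses the $x$-coordinate; before step $i$ and after step $j$ one uses the standard update $D[m] := D[m-1] \cdot [n-m+1]_q$.

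The final polynomial $D[n]$ is precisely $G_{ij}(q)$, so the desired count is $[q^k]\,D[n]$. The DP has $O(n^2)$ states, each storing a polynomial of degree at most $\binom{n}{2}$, and every transition multiplies by a polynomial of degree $O(n)$, giving a running time polynomial in $n$. The only nontrivial thing to verify carefully is the transition: one must check that in the rank parameterization the inversions produced at step $m+1$ depend on $r_{m+1}$ alone, so that the "smaller" and "larger" branches correctly enumerate the ranges $\{1,\dots,x\}$ and $\{x+1,\dots,n-m\}$ with $q$-weights $[x]_q$ and $q^{x}[n-m-x]_q$, respectively. Once this is granted, correctness follows from the interpretation of $x$ and the polynomial-time bound is immediate.
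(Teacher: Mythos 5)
Your proof is correct, and it takes a genuinely different route from the paper's. The paper defines a table $T_d[n,k,i]$ and recurses on $n$ by inserting the largest element into a permutation of $n-1$ elements (which adds between $0$ and $n-1$ inversions without disturbing the relative order of the two constrained elements), combined with a translation step reducing to $i=1$ and a base case $n=d+1$ in which the inversions involving the two extremal constrained elements are counted jointly against the unconstrained table $T[n,k]$. You instead scan positions left to right via the Lehmer-code parameterization, under which the inversion statistic factorizes across positions, and you enforce the constraint with a single auxiliary counter $x$ recording how many unplaced values lie below $\pi(i)$, so that $\pi(j)>\pi(i)$ becomes the purely local condition $r_j>x$. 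Your initialization and transition are right: the two branches enumerate the rank ranges $\{1,\dots,x+1\}$ and $\{x+1,\dots,n-m\}$ with weights $[x+1]_q$ and $q^{x}[n-m-x]_q$ (where $[t]_q:=1+q+\cdots+q^{t-1}$), the Lehmer-code fact you flag is standard, and the $O(n^2)$ states with degree-$O(n^2)$ polynomials give a polynomial bound. Your approach buys the full generating polynomial $G_{ij}(q)$ in one pass (all $k$ simultaneously) and avoids the paper's translation-invariance step; the paper's version works directly with the integer table indexed by $k$ and reuses the unconstrained count $T[n,k]$ that the surrounding reduction needs anyway. One cosmetic remark: the paper reads the constraint as "element $i$ precedes element $j$" rather than "the value in position $i$ is smaller than the value in position $j$"; the two counts coincide via the inversion-preserving bijection $\pi\mapsto\pi^{-1}$, so your reading of the statement is equivalent.
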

\begin{proof}
 We will compute the table $T_d[n,k,i]$, which counts the number of permutations of $n$ elements with $k$ inversions such that $i$ remains in front of $i+d$, by dynamic programming: $T_d[n,0,i]=1$, $T_d[n,k<0,i]=0$ and
\begin{align*}
    T_d[d+1,k,1] =& \sum_{\ell=0}^{n-2} T[d-1,k-\ell] \cdot (\ell+1)\\
    T_d[i+d,k,i>1] =& T_d[i+d,k,1]\\
    T_d[n>i+d,k,i] =&\sum_{\ell=0}^{n-1} T_d[n-1,k-\ell,i]
\end{align*}
where $T[n,k]$ is the total number of permutations of $n$ elements with $k$ inversions.
Assume we are in the first case. The two fixed elements are $i=1$ and $j=d+1$. Consider an arbitrary permutation $\pi$ with $k$ inversions such that $\pi(i)<\pi(j)$. The $\ell$ inversions where $i$ and $j$ are involved are at most $n-2$, as otherwise $\pi(j)<\pi(i)$. Clearly, the remaining $d-1$ elements must form exactly $k-\ell$ inversions. There are $\ell +1$ different permutations with a fixed permutation of the other $d-1$ elements that achieve $\ell$ inversions: The number $q$ of elements we shift $i$ to the left can be between $0$ and $\ell$ and immediately determines that $j$ must be shifted $\ell-q$ elements to the right.\\
Assume we are in the third case (the second case is symmetrical). The last element of our ordering is not one of the two fixed elements. We can thus consider the subproblem for the first $n-1$ elements and then add $j \in \{0,...,n-1\}$ inversions by shifting the $n$-th element $j$ positions to the right of our fixed permutation of the other elements. The running time is clearly polynomial and bounded by $\mathcal{O}(n^4)$, which finishes the proof.
\end{proof}
This gives exactly the number $t(n,k,i,j):=T_{j-i}[n,k,i]$ of size-$n$-preference orders at swap distance $k$ such that $i$ remains in front of $j$ for some $i<j$. Notice that we can then easily compute the number $t(n,k,j,i):=T[n,k]-T_{j-i}[n,k,i]$ of orders where $i$ and $j$ are inversed. \\
Consider a blocking pair $\{m,w\}$. The number of preference profiles at swap distance $\ell$ such that $\{m,w\}$ is blocking for matching $M$ is
\begin{align*}
    b_{m,w}:=\sum_{\ell_1+\ell_2 \leq \ell} t(n,\ell_1,rk_m(w),M(m)) \cdot t(n,\ell_2,rk_w(m),M(w))\cdot \sigma_{\mathcal{I}'}^{\ell-\ell_1-\ell_2}
\end{align*}
where $\sigma_\mathcal{I}^{k}$ is the number of profiles at swap distance $k$ from $\mathcal{I}$, $\mathcal{I}'$ is the SM instance $(U-\{m\},W-\{w\},\mathcal{P}-\{P_u,P_w\})$ and $rk_a(b)$ denotes the position of agent $b$ in the preference list of agent $a$.\\
Apart from counting, we also need to argue that we can uniformly sample from these profiles:
\begin{lemma}\label{lem:samplet}
    One can uniformly sample a preference profile at swap distance $\ell$ from $\mathcal{I}$ such that $\{m,w\}$ is blocking in polynomial time.
\end{lemma}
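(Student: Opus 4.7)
I would follow the additive structure of the counting identity
$$b_{m,w}=\sum_{\ell_1+\ell_2\le\ell} t(n,\ell_1,\rk_m(w),\rk_m(M(m)))\cdot t(n,\ell_2,\rk_w(m),\rk_w(M(w)))\cdot \sigma_{\mathcal{I}'}^{\ell-\ell_1-\ell_2}$$
given just above the lemma. Since each summand counts the profiles with a specific split of the swap budget into swaps inside $\succ_m$, swaps inside $\succ_w$, and swaps in the remaining $2n-2$ preference lists, the sampler works in three stages: first pick $(\ell_1,\ell_2)$ proportionally to the summand, then independently sample each of the three pieces uniformly from the corresponding conditional support.

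\textbf{First stage: pick the split.} Evaluate the $O(\ell^2)$ weights of the sum using the polynomial-time DP that defines $t(n,k,i,j)$ and the polynomial-time formula for $\sigma_{\mathcal{I}'}^{\cdot}$ already used in \Cref{lem:samples}. Sample $(\ell_1,\ell_2)$ from this explicit discrete distribution; its normalising constant is $b_{m,w}$ by construction.

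\textbf{Second stage: sample the two constrained lists.} Given $\ell_1$, I need a preference list $\succ'_m$ drawn uniformly from the $t(n,\ell_1,\rk_m(w),\rk_m(M(m)))$ lists at swap distance $\ell_1$ from $\succ_m$ in which $w$ still precedes $M(m)$. This is exactly the set counted by $T_{d}[n,\ell_1,\rk_m(w)]$ with $d=\rk_m(M(m))-\rk_m(w)$, so uniform sampling is obtained by a backward traversal of that DP: at each recursion step the parent entry is a sum of a bounded number of child entries, and one transition is chosen with probability proportional to its child entry. Proceeding recursively until the base case reconstructs a single permutation with the prescribed number of inversions, which is then translated back into a preference list over $W$. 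The same method samples $\succ'_w$ at distance $\ell_2$ with the constraint that $m$ precedes $M(w)$.

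\textbf{Third stage: sample the remainder, and wrap up.} For the remaining $2n-2$ preference lists, apply \Cref{lem:samples} to the reduced SM instance $\mathcal{I}'=(U\setminus\{m\},W\setminus\{w\},\mathcal{P}\setminus\{\succ_m,\succ_w\})$ with budget $\ell-\ell_1-\ell_2$; this produces a uniformly random profile at that exact swap distance. Gluing the three pieces back together with the new $\succ'_m$ and $\succ'_w$ yields a profile $\mathcal{P}'$ at swap distance exactly $\ell$ from $\mathcal{I}$ in which $\{m,w\}$ is blocking. By the independence of the three stages, the overall probability of producing any particular valid profile is $1/b_{m,w}$, i.e.\ uniform. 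The main obstacle is the careful design of the backward traversal so that every constrained permutation is reached with probability exactly $1/T_{d}[n,\ell_1,i]$; handling the boundary cases of the recurrence (when the last slot is one of the two fixed elements) is where one must be most careful, but once the recurrence is checked, sampling is routine, and the total running time is polynomial because both the DP tables and the trace-backs have size $\mathrm{poly}(n,\ell)$.
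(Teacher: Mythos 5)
Your proposal matches the paper's proof essentially step for step: choose the split $(\ell_1,\ell_2)$ proportionally to the corresponding summand of $b_{m,w}$, sample the two constrained preference lists by tracing back the dynamic program for $T_d$, and sample the remaining $2n-2$ lists with the unconstrained sampler of \Cref{lem:samples} applied to the reduced instance. The one detail you gloss over is that for $\{m,w\}$ to become blocking one typically needs the \emph{inverted} order of $w$ and $M(m)$ (the complementary count $T[n,k]-T_{d}[n,k,i]$), for which the paper gives a separate trace-back in the base case of the recurrence; this is precisely the boundary issue you flag, and it works out.
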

\begin{proof}
    We can backtrack the calculation of $b_{m,w}$ in a canonical way to obtain a uniform sampling. We first choose a pair $(\ell_1,\ell_2)$ with $\ell_1+\ell_2 \leq \ell$ with probability 
    \begin{align*}
        \frac{ t(n,\ell_1,rk_m(w),M(m)) \cdot t(n,\ell_2,rk_w(m),M(w))\cdot \sigma_{\mathcal{I}'}^{\ell-\ell_1-\ell_2}}{b_{m,w}}
    \end{align*}
    Then we sample the two preference lists of $m$ and $w$ as well as the remaining profile independently. We showed that the sampling of a preference profile without further restrictions is possible in polynomial time in \cref{lem:samples} . For the preference lists with the restriction $\pi(i)<\pi(j)$, for the third equation, we can sample $\ell \in \{0,...,n-1\}$ with probability $\frac{T_d[n-1,k-\ell,i]}{T[n,k,i]}$, sample the remaining list recursively and in the end shift the last element $\ell$ positions to the left. For the first case, we have to distinguish two cases: If $i<j$, we can sample $\ell \in \{0,...,n-2\}$ with probability $\frac{T[d-1,k-\ell]}{T_d[d+1,k,1]} \cdot (\ell+1)$, then sample a permutation of $d-1$ elements with $k-\ell$ inversions as seen in \citet{DBLP:conf/ijcai/BoehmerBFN21} and finally sample $q \in \{0,...,\ell\}$ with probability $\frac{1}{\ell+1}$ to determine how many shifts of $i$ to the right to perform. If $i>j$, we need to proceed differently: Since the two elements must form an inversion, we need at least $n-1$ swaps. We will thus instead sample $\ell \in \{n-1,...,2(n-1)-1\}$ with probability $\frac{T[d-1,k-\ell]}{T_d[d+1,k,1]} \cdot (2(n-1)-\ell)$, where $(2(n-1)-\ell)$ corresponds to the number of ways to shift $i$ to the right and $j$ to the left by $\ell$ positions in total, requiring that they form an inversion. It is now easy to see that each allowed profile will be sampled with probability $\frac{1}{b_{m,w}}$.
\end{proof}
In each profile of swap distance $\ell$ where $M$ is not stable anymore, at least one pair and at most $n^2-n$ pairs are blocking. Thus, summing over all possible blocking pairs already gives a deterministic approximation:
\begin{corollary}\label{corr:n2approx}
    For $n>1$, the algorithm that computes $b:=\sum_{\{m,w\} \in (U \times W) \setminus M} b_{m,w}$ is an $n^2-n$-approximation for $S^*$.
\end{corollary}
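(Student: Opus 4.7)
\textbf{Proof proposal for Corollary~\ref{corr:n2approx}.} The plan is to establish the two-sided inequality $S^* \le b \le (n^2-n)\,S^*$ via a straightforward double-counting argument over the set
\[
\mathcal{U} := \{\mathcal{P}' : \mathcal{P}' \text{ at swap distance exactly } \ell \text{ from } \mathcal{I},\ M \text{ not stable in } \mathcal{P}'\},
\]
so that $|\mathcal{U}| = S^*$. By definition of $b_{m,w}$, we have the identity
\[
b \;=\; \sum_{\{m,w\} \in (U\times W)\setminus M} |\{\mathcal{P}' \in \Pi_\ell(\mathcal{I}) : \{m,w\} \text{ blocks } M \text{ in } \mathcal{P}'\}| \;=\; \sum_{\mathcal{P}' \in \Pi_\ell(\mathcal{I})} \beta(\mathcal{P}'),
\]
where $\Pi_\ell(\mathcal{I})$ is the set of profiles at swap distance exactly $\ell$ and $\beta(\mathcal{P}')$ denotes the number of pairs in $(U\times W)\setminus M$ that block $M$ in $\mathcal{P}'$. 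Since $\beta(\mathcal{P}') = 0$ iff $\mathcal{P}' \notin \mathcal{U}$, the sum collapses to $b = \sum_{\mathcal{P}' \in \mathcal{U}} \beta(\mathcal{P}')$.

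The lower bound $b \ge S^*$ then follows because every profile in $\mathcal{U}$ admits at least one blocking pair (this is precisely what it means for $M$ to be unstable in the profile), so $\beta(\mathcal{P}') \ge 1$ for each $\mathcal{P}' \in \mathcal{U}$. For the upper bound, I would observe that any blocking pair belongs to $(U \times W) \setminus M$, and since $|U \times W| = n^2$ while $|M| \ge 0$ (but also $\beta$ counts only pairs outside $M$), we get $\beta(\mathcal{P}') \le |(U\times W)\setminus M| \le n^2 - n$ (noting $M$ is assumed to be a matching, so in the interesting case where the counted pairs are man-woman pairs different from those in $M$, the bound $n(n-1)$ applies once we use $|M|\ge n$; more simply, $\beta(\mathcal{P}')$ is bounded by the number of man-woman pairs where the two agents do not form a matching edge, which is at most $n^2-n$ since each agent can be involved in at most $n-1$ such pairs with partners other than $M(a)$). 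Hence $b \le (n^2-n)\,S^*$.

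Combining the two inequalities gives $\frac{b}{n^2-n} \le S^* \le b$, which is exactly what it means for $b$ to be an $(n^2-n)$-approximation of $S^*$. Computational efficiency is not part of the statement, but I would remark that each $b_{m,w}$ is computable in polynomial time by the explicit formula given just before the corollary, and there are only $O(n^2)$ such terms, so $b$ itself can be computed in polynomial time. The only subtlety worth double-checking is the edge case $n=1$: when $|U|=|W|=1$ the only pair is in $M$, so $(U\times W)\setminus M$ is empty, $b=0$, and the factor $n^2-n = 0$ makes the claim degenerate; this is exactly why the hypothesis $n>1$ appears in the statement. No further obstacle arises.
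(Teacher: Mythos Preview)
Your proposal is correct and takes essentially the same approach as the paper: the paper's entire justification is the one-line remark that ``in each profile of swap distance $\ell$ where $M$ is not stable anymore, at least one pair and at most $n^2-n$ pairs are blocking,'' which is exactly your double-counting argument $b = \sum_{\mathcal{P}'\in\mathcal{U}} \beta(\mathcal{P}')$ with $1 \le \beta(\mathcal{P}') \le n^2-n$. Your discussion of the upper bound is a bit roundabout; the clean version is simply that the sum defining $b$ ranges over $(U\times W)\setminus M$, which has exactly $n^2 - |M| = n^2-n$ elements in the paper's setting where $|U|=|W|=n$ and $M$ is perfect (this is implicit throughout the approximation section), so $\beta(\mathcal{P}') \le n^2-n$ trivially.
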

We now describe a FPRAS for the problem.
Fix an arbitrary ordering of all pairs $\{m,w\} \in U \times W$. Repeat the following $s$ times: Choose a pair $\{m,w\}$ with probability $\frac{b_{m,w}}{b}$ and uniformly sample a preference profile at swap distance $\ell$ where $\{m,w\}$ is blocking (which is possible by Lemma \cref{lem:samplet}). Let $X_i$ be $1$ if all pairs that preceed $\{m,w\}$ in our fixed ordering are \emph{not} blocking for $M$ in the sampled profile, and $0$ else, for $i \in [s]$. Notice that the number of pairs $(\{m,w\},\mathcal{P}')$ such that $\{m,w\}$ is the first pair to be blocking for $M$ in $\mathcal{P}'$ exactly corresponds to $S^*$ (since for each profile that we want to count, there is exactly one pair fulfilling the property). Thus, the $X_i$ are Bernoulli random variables with parameter $\frac{S^*}{b}$. Defining $X:=\sum_{x=1}^s X_i$, we output $S:=X \cdot \frac{b}{s}$. Using a Chernoff bound and the fact that $\frac{S^*}{b} \geq \frac{1}{n^2-n}\geq \frac{1}{n^2}$, which follows from \cref{corr:n2approx}, we get
\begin{align*}
&\text{Pr}(|S-S^*|>\varepsilon \cdot S^*)\\
=&\text{Pr}(|X \cdot \frac{b}{s} - S^*| >\varepsilon \cdot S^*)\\
=&\text{Pr}(|X - S^* \cdot \frac{s}{b}| >\varepsilon \cdot S^* \cdot \frac{s}{b})\\
        =&\text{Pr}(|X - E(X)| >\varepsilon \cdot E(X))\\
    \leq& 2 e^{-s \cdot (S^*/b)\varepsilon^2/3}\\
    \leq& 2 e^{-s \varepsilon^2/3n^2}\\
    =&\frac{1}{4}
\end{align*}
for $s=\frac{3n^2 \cdot \ln 8 }{\varepsilon^2}$ repeats, which is polynomial. So, our approximate solution is within factor $\varepsilon$ of the exact solution with probability $\geq \frac{3}{4}$. Additionally, the running time for each repeat is determined by the time required for the sampling and for the stability checking, which is both quadratic. The running time to initially create all tables is bounded by $\mathcal{O}(n^4)$. So the running time of the algorithm is $\mathcal{O}(\frac{n^4}{\varepsilon^2})$, which is polynomial. We get:
\ten*
\section{Additional Material for \Cref{sub:exp}}

\subsection{Mallows Noise Model} \label{app:Mal}
In the main body, we showed that computing the exact stable matching probability for a specific number of swaps is hard. In fact, the hardness result for \textsc{\#Matching-Swap} carries over to the Mallows model. We show this in a very similar fashion to the analogous proof for the election setting by \citet{DBLP:journals/sncs/BaumeisterH23}. Formally, given a stable marriage instance $\mathcal{I}$, a matching $M$ and a rational number $\phi \in [0,1]$, the \textsc{Mallows Stable Matching Probability} problem asks for computing the probability that $M$ is stable in an instance generated by the Mallows model with dispersion parameter $\phi$ from $\mathcal{I}$.
\begin{lemma}
	One can Turing-reduce \textsc{\#Matching-Swap} to \textsc{Mallows Stable Matching Probability} in polynomial time.
\end{lemma}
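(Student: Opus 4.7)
The plan is to exploit the fact that, for fixed central profile $\mathcal{P}$ and matching $M$, the Mallows stability probability is a rational function of $\phi$ whose numerator has the counts we want as coefficients, and then recover those coefficients by polynomial interpolation via oracle queries.

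Concretely, because the Mallows model samples each agent's preference list independently, and each list is over the same number of alternatives (either $n$ or $m$), the probability of drawing any fixed target profile $\mathcal{P}'$ at total swap distance $d$ from $\mathcal{P}$ equals $\phi^{d}/\bigl(Z_{m}(\phi)^{|U|}Z_{n}(\phi)^{|W|}\bigr)$, where $Z_{k}(\phi)=\prod_{i=1}^{k}(1+\phi+\dots+\phi^{i-1})$ is the standard Mallows normalizer for a length-$k$ ranking. Summing over all profiles in which $M$ is stable yields
\[
 \Pr(M \text{ stable}\mid\phi) \;=\; \frac{1}{Z_{m}(\phi)^{|U|}Z_{n}(\phi)^{|W|}} \sum_{k=0}^{K} s_{k}\,\phi^{k},
\]
where $s_{k}$ is the answer to \textsc{\#Matching-Swap} at distance exactly $k$ and $K\le |U|\binom{m}{2}+|W|\binom{n}{2}$ is polynomial in the input size.

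The reduction then proceeds as follows. Choose $K+1$ distinct nonzero rationals $\phi_{0},\dots,\phi_{K}$ of polynomially bounded bit-length (e.g.\ $\phi_i = i/(K+2)$), and for each one query the oracle to obtain $p_{i}:=\Pr(M\text{ stable}\mid\phi_{i})$. Multiplying through by the (exactly computable) normalizer turns each query into the linear equation $p_{i}\cdot Z_{m}(\phi_{i})^{|U|}Z_{n}(\phi_{i})^{|W|}=\sum_{k=0}^{K}s_{k}\phi_{i}^{k}$. The resulting $(K+1)\times(K+1)$ coefficient matrix is Vandermonde in the $\phi_{i}$'s, hence invertible over $\mathbb{Q}$; solving by Gaussian elimination recovers every $s_{k}$, in particular the one requested by the given \textsc{\#Matching-Swap} instance.

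The main technical obstacle is bit-complexity rather than conceptual: we must argue that all arithmetic stays polynomial in size. This is handled by picking the $\phi_{i}$ of polynomially bounded denominator, by noting that $Z_{k}(\phi)$ is a polynomial in $\phi$ of polynomial degree with integer coefficients (so $Z_{k}(\phi_{i})$ is a rational with polynomial bit-length), and by invoking standard polynomial-time bounds for solving Vandermonde systems over $\mathbb{Q}$. Because the oracle is assumed to return the exact rational probability, no approximation issues arise, and the entire reduction runs in polynomial time with polynomially many oracle calls. This mirrors the analogous reduction of \citet{DBLP:journals/sncs/BaumeisterH23} for elections, adapted to the two-sided matching setting by accounting for two independent families of preference lists in the normalizer.
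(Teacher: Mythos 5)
Your reduction is correct, but it takes a genuinely different route from the paper's. You query the oracle at $K+1$ distinct moderate values of $\phi$, observe that $\Pr(M\text{ stable}\mid\phi)\cdot Z_{m}(\phi)^{|U|}Z_{n}(\phi)^{|W|}$ is the polynomial $\sum_{k}s_{k}\phi^{k}$ whose coefficients are exactly the \textsc{\#Matching-Swap} answers, and recover all $s_k$ by solving a Vandermonde system; your bit-complexity accounting (polynomially bounded $\phi_i$, polynomial-degree normalizers, Gaussian elimination over $\mathbb{Q}$) is the right thing to check and goes through. The paper instead makes a \emph{single} oracle call at one exponentially small dispersion parameter, $\phi=1/(n!)^{2n}$, chosen so that the probability mass contributed by any one profile at swap distance $j$ exceeds the total mass of all profiles at distance greater than $j$; the counts $N_j$ are then peeled off greedily by repeated floor division, in effect reading the digits of the returned probability in base $(n!)^{2n}$. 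Your interpolation approach is the more standard template for such counting reductions and keeps every queried $\phi$ of constant magnitude at the cost of polynomially many queries; the paper's approach buys a one-query reduction at the cost of a carefully scaled (though still polynomial-bit-length) parameter. Both rely equally on the oracle returning the exact rational probability. One cosmetic nit: your example points $\phi_i=i/(K+2)$ include $\phi_0=0$ despite your stipulation that they be nonzero; this is harmless for the Vandermonde argument (and $\phi=0$ is a legal input), but shift the indices if you want the stipulation to hold literally.
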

\begin{proof}
	Assume we are given an SM instance $\mathcal{I}=(U,W,\mathcal{P})$, a matching $M$ and a swap budget $r$. Our new \textsc{Mallows Stable Matching Probability} instance consists of the exact same SM instance $\mathcal{I}$ and matching $M$. We set the dispersion parameter $\phi:=\frac{1}{(n!)^{2n}}$. Consider an arbitrary preference profile $\mathcal{P}'$ with swap distance $j$ to $\mathcal{P}$. We have 
	\begin{align*}
	\mathcal{D}_\text{Mallows}^{P,\phi}(P')=&\frac{1}{Z}\frac{1}{(n!)^{2n \cdot j}}=(n!)^{2n} \cdot \frac{1}{Z}\frac{1}{(n!)^{2n \cdot (j+1)}}\\
	& >\sum_{P'':\kappa(P,P')>j}\mathcal{D}_\text{Mallows}^{P,\phi}(P'')
	\end{align*}
	i.e. the probability assigned to $P'$ is larger than the summed probabilities assigned to all preference profiles with a larger swap distance. This is because there can be at most $n!$ different preference lists per agent and thus the total number of preference profiles is bounded by $(n!)^{2n}$. Therefore, we can decompose the \textsc{Mallows Stable Matching Probability} answer $\Phi$ as follows. Clearly, the probability that $M$ is stable is $N_j \cdot (\frac{1}{z}\phi^j)+y$, where $y<(\frac{1}{z}\phi^j)$ and $N_j$ is the number of preference profiles at swap distance $j$ where $M$ is stable. $y$ resembles the probability that $M$ is stable in some preference profile with higher swap distance. We set $\Phi_0=\Phi$ and compute for $0\leq j \leq \frac{2n^2 \cdot (n-1)}{2}-1$:
	\begin{equation*}
	N_j=\lfloor \Phi_j/((\frac{1}{n!^{2n}})^j/Z) \rfloor \text{  and  } \Phi_{j+1}=\Phi_j-N_j \cdot ((\frac{1}{n!^{2n}})^j/Z)
	\end{equation*}
	Then, $N_j$ gives the number of preference profiles at swap distance $j$ from $\mathcal{I}$ where $M$ is stable.
\end{proof}

\subsection{Dataset} \label{app:data}
We now present the statistical cultures. We use the cultures from \citet{DBLP:journals/corr/abs-2208-04041} and the Mallows-Robust one.
\begin{itemize}
	\item \textbf{Impartial Culture (IC)}: Each preference list $\succ_a$ is sampled uniformly at random from the set of all agents of opposite gender.
	\item \textbf{2 Group-Impartial Culture (2-IC)}: We randomly partition $U$ and $W$ into $U_1 \uplus U_2=U$ and $W_1 \uplus W_2 = W$ with $\frac{U_1}{U}=\frac{W_1}{W}=p$ for some parameter $p \in [0,0.5]$. Each man $m \in M_i$ prefers all women $w \in W_i$ over all women of the other group (and vice versa). The preferences within the groups are drawn uniformly at random from the set of all agents of opposite gender for each agent. Our data set contains 20 instances for each $p \in \{0.25,0.5\}$.
	\item \textbf{1-Dimensional Euclidean (1D)}: For each agent $a \in A$, we sample a point $p_a \in [0,1]$ uniformly at random. Agent $a$ prefers an agent $a'$ to another agent $a''$ if $|p_a-p_{a'}|<|p_a-p_{a''}|$.
	\item \textbf{2-Dimensional Euclidean (2D)}: For each agent $a \in A$, we sample a point $p_a \in [0,1]^2$ uniformly at random. Agent $a$ prefers an agent $a'$ to another agent $a''$ if the Euclidean distance between $a$ and $a'$ is smaller than the Euclidean distance between $a$ and $a''$.
	\item \textbf{2-Dimensional Reverse-Euclidean (Rev-Euc)}: In addition to the setting in 2D, we partition $U$ and $W$ into two sets each ($U_1,U_2$ and $W_1,W_2$). An agent $a\in U_1 \cup W_1$ ranks the other agents as described in 2D. An agent $b \in U_2 \cup W_2$ ranks the other agents in opposite order, i.e. they prefer agents with a greater Euclidean distance. We are given a parameter $p \in [0,1]$ that describes the percentage of agents in $U_2$ and $W_2$, respectively. Our data set contains 20 instances for each $p \in \{0.05,0.15,0.25\}$.
	\item \textbf{Fame-Euclidean (Fame-Euc)}: Given a parameter $f \in [0,1]$, for each agent, we sample random points $p_a \in [0,1]^2$ (as for the 2D culture) and $f_a \in [0,f]$. An agent $a$ evaluates another agent of opposite gender $b$ analogously to the 2d culture, but subtracts $f_a$ from this evaluation. The points $f_a$ correspond to the fame of an agent, which influences their rank in the preferences of the other agents. Our data set contains 20 instances for each $f \in \{0.2,0.4\}$.
	\item \textbf{Expectations-Euclidean (Ex-Euc)}: We first sample a point $p_a$ for each agent $a \in A$ as for the 2D culture. We then generate a second point $q_a$ for each agent using a 2-dimensional Gaussian distribution with mean $p_a$ and standard deviation $\sigma$. Each agent $a$ ranks the agents of opposite gender $b$ increasingly by the distance between $p_a$ and $q_b$. The points $p_a$ correspond to where the agents would like their partner to be located, while $q_a$ is their real location. Our data set contains 20 instances for each $\sigma \in \{0.2,0.4\}$.
	\begin{figure}
		\centering
		\includegraphics[width=0.7\columnwidth]{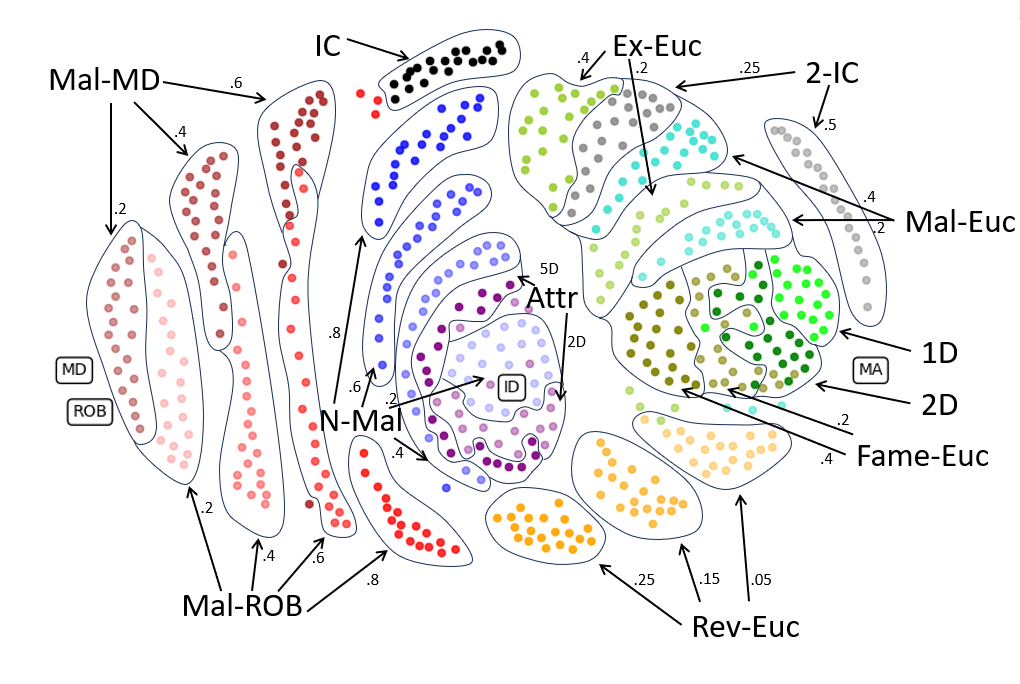}
		\caption{Map of SM instances, each point is colored according to the culture from which the instance was sampled. Instances of the same culture that have a higher parameter (e.g. norm-$\phi$) have the same color, but are more transparent.}
		\label{fig:map-cul}
	\end{figure}
	\item \textbf{Attributes (Attr)} Given a dimension $d \in \NN$, we sample two vectors for each agent $p^{a},w^{a} \in [0,1]^d$ uniformly at random. Agent $a$ ranks other agents $b$ decreasingly by $\sum_{i \in [d]}{w_i^{a} \cdot p_i^{b}}$. Our data set contains 20 instances for each $d \in \{2,5\}$.
	\item \textbf{Norm-Mallows (N-Mal) }: Starting from a random ordering of men and women, the preference list of each agent is an ordering drawn from the Mallows model with dispersion parameter norm-$\phi$ from the initial ordering of the agents of opposite gender. Our data set contains 20 instances for each norm-$\phi$ such that norm-$\phi \in \{0.2,0.4,0.6,0.8\}$.
	\item \textbf{Mallows-Euclidean (Mal-Euc)}: We create the preferences for all agents as described for the 2D-culture. From this instance, we draw a new instance according to the Mallows model with dispersion parameter norm-$\phi$. Our data set contains 20 instances for each norm-$\phi \in \{0.2,0.4\}$.
	\item \textbf{Mallows-Asymmetric (Mal-MD)}: The procedure is analogous to the Mallows-Euclidean culture. Instead of an instance from the 2D culture, we start with the MD instance. Our data set contains 20 instances for each norm-$\phi \in \{0.2,0.4,0.6\}$.
	\item \textbf{Mallows-Robust (Mal-ROB)}: The procedure is analogous to the Mallows-Euclidean culture. Instead of an instance from the 2D culture, we start with the Robust instance. We added this culture to our data set in the expectation that its instances will be among the most robust. Our data set contains 20 instances for each norm-$\phi \in \{0.2,0.4,0.6,0.8\}$.
\end{itemize}

\begin{figure}
	\centering
	\includegraphics[width=0.7\columnwidth]{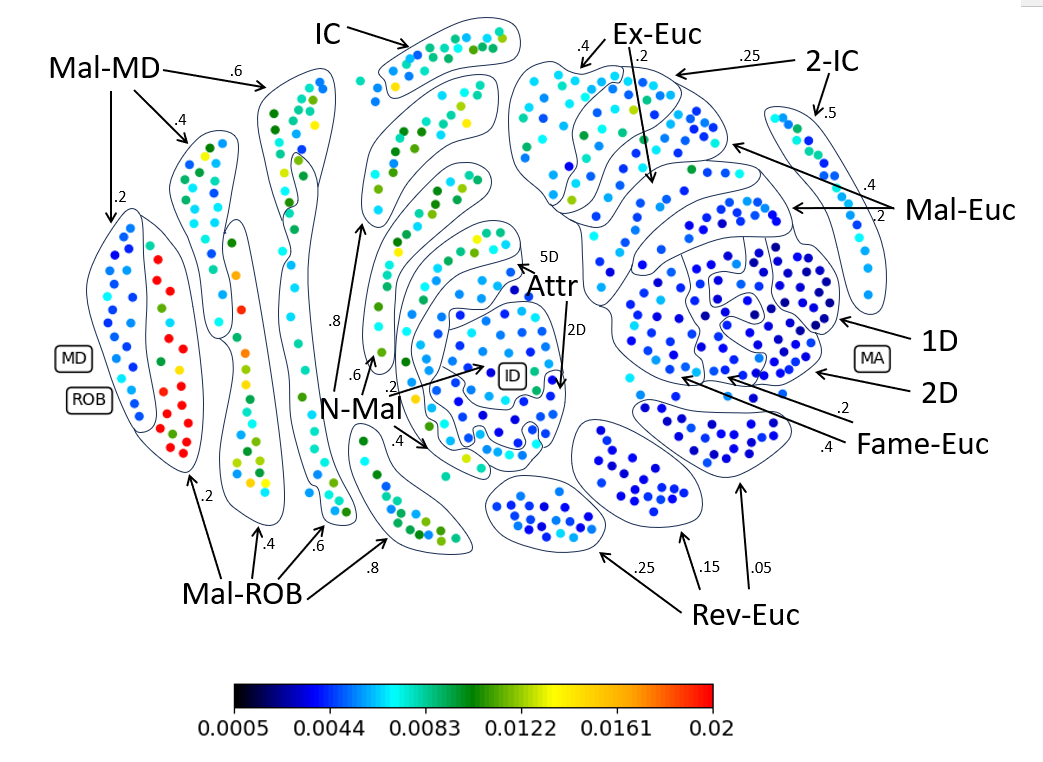}
	\caption{50\%-threshold for the man-optimal stable matchings of all instances of the data set. The scale is logarithmic and capped at norm-$\phi=0.02$.}
	\label{fig:map-fif}
\end{figure}

\paragraph{Map.}
To better understand which cultures generate robust stable matchings and which instances do not, we use the map of stable marriage instances introduced by \citet{DBLP:journals/corr/abs-2208-04041}. Each point on this map corresponds to one instance in our data set. The distance between two points models how different the two instances are according to the so-called \emph{mutual attraction distance}, introduced by \citet{DBLP:journals/corr/abs-2208-04041}. Each agent $a$ is associated with a vector where the $i$-th entry is the position in which $a$ appears in the preferences of the agent that $a$ ranks in position $i$. We then match agents from the two instances we want to compare such that the $\ell_1$ distance between the associated vectors of matched agents is minimized.  In the map, the labels of the different cultures are connected to the corresponding instances. In \Cref{fig:map-cul}, we see the map for our data set where each culture has a different color. 

\subsection{Additional Material for \Cref{sub:exp1}: Men-Optimal Matching} \label{app:men}

We now take a closer look how the average-case robustness of men-optimal matchings depends on the model from which the instance was sampled. 
For this, we show in \Cref{fig:map-fif} a map of the 50\%-threshold of the man-optimal stable matchings of all instances.	
Examining this map, we can observe that most Euclidean instances have a very low average-case robustness. Instances from the IC culture or close to IC are quite robust. The ROB instance and the instances from the Mal-ROB culture have the highest average-case robustness. The ROB instance has a 50\%-threshold close to 0.3, which implies that one needs almost 200 random swaps in each preference list to make the matching unstable. This is 100 times more swaps than for standard instances from our dataset. 

One can observe that more structured cultures tend to be less robust than cultures that involve more ``randomness''. For example, most Euclidean instances have a very low 50\%-threshold while the uniformly at random generated IC instances are surprisingly robust. Intuitively, in a structured instance, there are natural candidates for blocking pairs that can very easily become blocking because of their mutual liking. For example, consider a 2D instance where the points of one man $m$ and two women $w,w'$ were sampled very close to each other. Suppose that $m$ and $w$ are mutual top-choices and thus $\{m,w\}$ is contained in the initial stable matching. It could be that $m$ ranks $w'$ second and $w'$ ranks $m$ first. In that case, the pair $\{m,w'\}$ has a high probability of becoming blocking.

\begin{figure*}[t!]
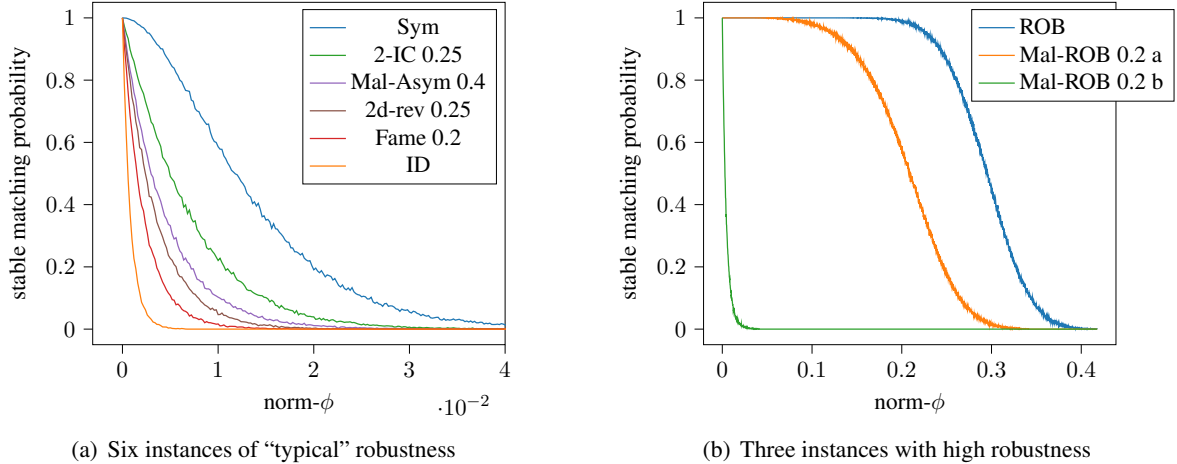

	\centering
	\begin{minipage}[t]{0.49\linewidth}
		\centering
		\subfigure[Six instances of ``typical'' robustness]{\input{plot_data/devplots1.tex}}
		\label{fig:manopt-plots-1}
	\end{minipage}
	\hfill
	\begin{minipage}[t]{0.49\linewidth}
		\centering
		\subfigure[Three instances with high robustness]{\input{plot_data/devplots2.tex}}
		\label{fig:manopt-plots-2}
	\end{minipage}
	\caption{Average-case robustness for nine exemplary instances.}
	\label{fig:manopt-plots}
\end{figure*} 

\begin{figure*}[t]
	\begin{align}
	\sum_{i \in [n]} x[i,j]&\leq 1 & \forall j \in [n] \label{e6}\\
	\sum_{j \in [n]} x[i,j]&\leq 1 & \forall i \in [n] \label{e7}\\
	\sum_{i,j \in [n]: \rk_{i^*}(j)\leq \rk_{i^*}(j^*) \vee \rk_{j^*}(i)\leq \rk_{j^*}(i^*)} x[i,j] &\geq 1 &\forall i^*,j^* \in [n] \label{e8}\\
	\begin{split}
	\sum_{k=1}^{\rk_i(j)-1}{k \cdot x[i,p_i(\rk_i(j)-k)]}  \label{e9}\\
	+\sum_{k=1}^{\rk_j(i)-1}{k \cdot x[p_j(\rk_j(i)-k)]}+2n \cdot x[i,j] &= y[i,j] 
	\end{split}
	&\forall i,j \in [n]\\
	k+1-y[i,j]-k \cdot \sum_{\ell=1}^{k-1}z[i,j,\ell] &\leq z[i,j,k] & \forall i,j \in [n], k \in [d]  \label{e10}\\
	x[i,j] &\in \{0,1\} &\forall i,j \in [n]  \label{e11}\\
	\min \sum_{k=1}^{d}n^{d-k} \cdot \sum_{i,j \in [n]} z[i,j,k]  \label{e12}
	\end{align}
	\caption{ILP for computing the robust matching.}\label{ILP}
\end{figure*}
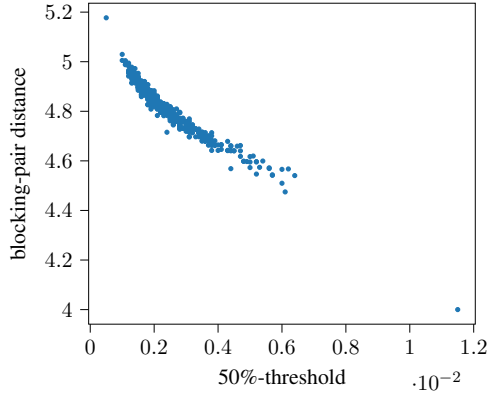
\begin{figure}[t]
	\centering
\begin{tikzpicture}[scale=0.75]

\definecolor{color0}{rgb}{0.12156862745098,0.466666666666667,0.705882352941177}

\begin{axis}[
tick align=outside,
tick pos=left,
x grid style={white!69.0196078431373!black},
xlabel={50\%-threshold},
xmin=-5e-05, xmax=0.01205,
xtick style={color=black},
y grid style={white!69.0196078431373!black},
ylabel={blocking-pair distance},
ymin=3.9412535158, ymax=5.2359260862,
ytick style={color=black}
]
\addplot [semithick, color0, mark=*, mark size=1, mark options={solid}, only marks]
table {%
0.001 5.005164261
0.0115 4.000102269
0.0005 5.177077333
0.0023 4.795266396
0.0037 4.68117209
0.0038 4.680858693
0.0034 4.697417045
0.0052 4.595813107
0.0048 4.598356842
0.0027 4.759106951
0.0022 4.81562148
0.0025 4.793923977
0.0034 4.714525556
0.0027 4.780995167
0.0037 4.680852302
0.0028 4.74483653
0.0031 4.769426744
0.0038 4.678035614
0.0062 4.56742626
0.0037 4.683327055
0.0033 4.697067832
0.0038 4.680152013
0.0034 4.713290967
0.0028 4.75762845
0.0023 4.815647554
0.0022 4.796688926
0.0029 4.731546381
0.0027 4.783671199
0.0022 4.828564797
0.0037 4.683031434
0.0019 4.845809828
0.0022 4.818375535
0.0028 4.796043918
0.0023 4.781986709
0.0022 4.817750349
0.0033 4.727973251
0.0031 4.727971837
0.0028 4.742652524
0.0037 4.700135561
0.0024 4.797565984
0.0052 4.546717538
0.0054 4.599338718
0.0041 4.646175517
0.002 4.837396227
0.0026 4.757063169
0.0026 4.794660099
0.0023 4.806029216
0.0025 4.769458298
0.003 4.744897682
0.0035 4.716741026
0.0025 4.816709132
0.0018 4.827595935
0.0027 4.770464484
0.0022 4.795771931
0.0026 4.771965851
0.0022 4.817349138
0.0035 4.699114109
0.0032 4.745976059
0.0018 4.826158515
0.0038 4.71366962
0.0025 4.771690453
0.0032 4.728581727
0.0031 4.71585258
0.0029 4.772490684
0.0022 4.818837491
0.0027 4.758698308
0.0026 4.772010874
0.0018 4.85715077
0.0022 4.827648142
0.0038 4.700835058
0.0026 4.759501088
0.0024 4.785115858
0.0023 4.82841802
0.0029 4.732364677
0.0023 4.807154287
0.0019 4.8383918
0.0025 4.796690285
0.0023 4.817162395
0.0037 4.700861039
0.0019 4.837694647
0.0023 4.807763024
0.0029 4.769736378
0.0023 4.796481938
0.003 4.757327
0.0031 4.697748982
0.0034 4.728535815
0.005 4.572897951
0.0025 4.806212668
0.0037 4.663170451
0.0056 4.569154304
0.0044 4.568513144
0.003 4.758354922
0.0036 4.713874769
0.0047 4.661937875
0.0039 4.680895215
0.0064 4.540633147
0.0028 4.742891829
0.0028 4.756548828
0.0037 4.68271393
0.0039 4.661592763
0.0024 4.793677023
0.0025 4.781005194
0.0057 4.543614075
0.0045 4.640438426
0.004 4.664284337
0.0028 4.745124799
0.0043 4.64245829
0.0035 4.695370289
0.0044 4.640797003
0.0051 4.619501944
0.0047 4.640420911
0.0052 4.59675512
0.006 4.509782663
0.0032 4.713298252
0.004 4.642026294
0.0031 4.729155009
0.0034 4.715429712
0.0028 4.757548723
0.0035 4.713892659
0.0049 4.597330224
0.0038 4.662352722
0.0035 4.69708766
0.0038 4.642094429
0.0033 4.72877842
0.0053 4.573962285
0.0034 4.713931115
0.0043 4.67868965
0.0031 4.729709123
0.0045 4.640787663
0.0043 4.641247077
0.0024 4.715455369
0.0028 4.769926626
0.0048 4.598315642
0.0035 4.679422736
0.0031 4.729962601
0.0035 4.71264381
0.0035 4.696767059
0.0034 4.714212871
0.0061 4.474770789
0.0047 4.618121413
0.0028 4.72759182
0.005 4.595285751
0.0028 4.793423591
0.0013 4.977872693
0.0015 4.953195586
0.0011 4.98796391
0.0014 4.929022584
0.0014 4.941415324
0.0012 4.953806499
0.0011 4.988674979
0.0017 4.891559447
0.0012 4.982948426
0.0013 4.959993438
0.0015 4.915699185
0.0011 5.000003551
0.0012 4.994392384
0.0012 4.983712392
0.0013 4.934589894
0.0012 4.989012154
0.001 5.029604919
0.0012 4.9592796
0.0011 5.00413771
0.0015 4.914827749
0.0015 4.915133307
0.0015 4.884739421
0.002 4.849720129
0.0014 4.954816222
0.0014 4.914253419
0.0013 4.947360369
0.0015 4.914836131
0.0019 4.847888078
0.0014 4.948123449
0.0016 4.891747418
0.0012 4.964825944
0.0018 4.884433763
0.0014 4.920360669
0.0022 4.821107238
0.0016 4.898943271
0.0015 4.936329223
0.0013 4.947245216
0.0018 4.85768971
0.0012 4.941767347
0.0015 4.935662553
0.0013 4.954318162
0.0021 4.849359712
0.0014 4.948241481
0.0014 4.935657618
0.0016 4.921347596
0.0013 4.929459544
0.0016 4.92246592
0.0017 4.876960158
0.0013 4.960609787
0.0014 4.928744894
0.0016 4.906924842
0.0013 4.914296907
0.0014 4.972382341
0.0015 4.941775384
0.0019 4.866466508
0.0019 4.875799834
0.0015 4.929030748
0.0013 4.954340506
0.0016 4.921191671
0.0021 4.861786247
0.0018 4.891109236
0.0017 4.89986164
0.0015 4.898490541
0.0016 4.858950035
0.0019 4.883290589
0.0016 4.883301514
0.0016 4.90657477
0.0018 4.882670409
0.0016 4.8664556
0.0016 4.883488366
0.0018 4.84880197
0.0017 4.875606873
0.002 4.829806462
0.0017 4.891718755
0.0018 4.865426493
0.0015 4.906146715
0.0014 4.92188623
0.0019 4.867137361
0.0014 4.94086722
0.002 4.837671236
0.0023 4.797350206
0.0023 4.817395507
0.0017 4.89262286
0.0023 4.829774367
0.002 4.848946388
0.0018 4.865037752
0.0023 4.807997596
0.0019 4.819027916
0.0026 4.806488714
0.0015 4.899261295
0.0015 4.897859718
0.0018 4.873427774
0.002 4.839363454
0.002 4.838207085
0.002 4.857518518
0.0016 4.866607202
0.0017 4.905978653
0.0016 4.89786294
0.0028 4.748249624
0.0023 4.796947649
0.0017 4.905683323
0.0015 4.906889472
0.0017 4.892964519
0.0014 4.926959688
0.0015 4.906878035
0.0016 4.906128871
0.0016 4.89061373
0.0019 4.866103995
0.0024 4.809995486
0.0018 4.86524266
0.0016 4.882980501
0.0018 4.907445125
0.0023 4.831381568
0.0018 4.883919543
0.0021 4.859276847
0.0019 4.867480937
0.0019 4.883629218
0.0017 4.882658035
0.0018 4.874776803
0.0017 4.92033264
0.002 4.857490013
0.002 4.858399199
0.002 4.873760012
0.0027 4.809551907
0.0014 4.946469457
0.0027 4.785811384
0.0025 4.797133596
0.0018 4.891721053
0.0017 4.897124478
0.0015 4.918510771
0.0016 4.87559601
0.0015 4.906580479
0.0019 4.882970866
0.0018 4.875569604
0.0022 4.819219213
0.0019 4.866813684
0.0019 4.882786159
0.0022 4.818205273
0.0028 4.795341118
0.0018 4.865941049
0.0024 4.817320309
0.0029 4.759956622
0.0016 4.898155391
0.0018 4.880394351
0.003 4.719200023
0.0026 4.758707758
0.0025 4.806270794
0.0018 4.874415678
0.0015 4.89786624
0.0027 4.772753372
0.002 4.837454894
0.0021 4.809517266
0.0022 4.829012786
0.0021 4.848953849
0.0031 4.745175345
0.0019 4.857679295
0.002 4.828773008
0.0021 4.82883747
0.0041 4.665490277
0.0023 4.807540792
0.0021 4.816921996
0.0028 4.743718867
0.0038 4.682302524
0.0028 4.743723774
0.0028 4.745484038
0.0016 4.881814033
0.0033 4.729740429
0.0028 4.758918308
0.0024 4.795753281
0.0023 4.817336773
0.0026 4.771456542
0.0027 4.770945531
0.003 4.75755444
0.0027 4.783659871
0.0035 4.713611683
0.0028 4.770393512
0.0027 4.772429992
0.0026 4.771216321
0.0031 4.759673281
0.0029 4.757317382
0.0019 4.8652315
0.0014 4.941253299
0.0024 4.807983383
0.0017 4.891217297
0.0018 4.875748794
0.0021 4.848035671
0.0019 4.839218109
0.0021 4.808896629
0.0013 4.945580549
0.0017 4.875749072
0.0025 4.761552376
0.0022 4.828801617
0.0016 4.919924102
0.0021 4.83878828
0.0019 4.866610519
0.0017 4.865030126
0.0021 4.858558626
0.0024 4.785380748
0.0022 4.837794003
0.0017 4.866079407
0.0022 4.795585989
0.0019 4.856588142
0.0018 4.864555205
0.0027 4.770709933
0.0029 4.759950861
0.0027 4.783898581
0.0023 4.828796393
0.0032 4.731762183
0.0029 4.730937182
0.0022 4.828392227
0.0019 4.838226088
0.0026 4.784169304
0.0039 4.665499366
0.0019 4.855513671
0.002 4.848038922
0.0019 4.863796789
0.0019 4.82658992
0.0022 4.829961106
0.0023 4.793993202
0.0019 4.826944887
0.0022 4.80761124
0.0023 4.805182018
0.0025 4.817980815
0.002 4.849175316
0.0018 4.847876894
0.0016 4.913279744
0.002 4.882981362
0.0023 4.80892793
0.002 4.818035146
0.0021 4.81927398
0.003 4.732148894
0.0024 4.808253905
0.0019 4.864357765
0.0022 4.797819605
0.0024 4.82798567
0.0029 4.760249565
0.0019 4.808668679
0.0026 4.759244866
0.0021 4.847331073
0.0021 4.818388862
0.0041 4.664398091
0.0028 4.73205988
0.0034 4.71454507
0.0032 4.742573694
0.0029 4.770228383
0.0027 4.759426861
0.0044 4.658943628
0.0029 4.759421848
0.0021 4.837832808
0.0038 4.68231555
0.0025 4.784359554
0.0028 4.732334696
0.0021 4.782916736
0.0057 4.541778703
0.0035 4.682331604
0.0022 4.819219047
0.0029 4.759725745
0.0038 4.681556373
0.0026 4.744818771
0.0031 4.696399689
0.005 4.617253476
0.0031 4.743449869
0.0034 4.713643913
0.0036 4.678345114
0.0035 4.697756016
0.0032 4.729078274
0.0023 4.806862311
0.0036 4.696458554
0.0033 4.727033939
0.006 4.565787556
0.0044 4.6612379
0.0044 4.641637222
0.003 4.72971546
0.0023 4.792987882
0.0026 4.769159594
0.002 4.837191387
0.0036 4.695383228
0.0046 4.659300119
0.0056 4.571886541
0.0025 4.793264047
0.002 4.849558703
0.0022 4.819062226
0.0013 4.914582579
0.002 4.867498606
0.0017 4.88363776
0.0018 4.874777642
0.0016 4.908346272
0.002 4.83924928
0.0018 4.883340737
0.0021 4.819708954
0.0015 4.892374631
0.0018 4.88430963
0.0018 4.875450597
0.0015 4.892834334
0.0021 4.830435755
0.0016 4.883984597
0.0022 4.810198332
0.002 4.858283345
0.0015 4.93527388
0.0018 4.907046272
0.0027 4.774775078
0.0027 4.761265623
0.0031 4.732362785
0.0026 4.747118668
0.0024 4.785081899
0.0024 4.784621855
0.0027 4.772474031
0.0023 4.807567557
0.0026 4.79597735
0.0025 4.808438513
0.0021 4.828201545
0.0027 4.783669782
0.0023 4.816076931
0.002 4.838798762
0.003 4.746596574
0.0022 4.836281834
0.0031 4.716753172
0.0025 4.806426798
0.0027 4.782963805
0.0025 4.818430145
};
\end{axis}

\end{tikzpicture}
	\caption{Correlation between blocking pair proximity and the 50\%-threshold for all instances from the data set except for the instances from the robust cultures (whose values are very extreme and would disturb the plot).}\label{fig:heur-fif-corr}
\end{figure}

To obtain more information than only the 50\%-threshold, we examine how the stable matching probability changes when we increase the norm-$\phi$ value. To this end, we have chosen nine exemplary instances presented in \Cref{fig:manopt-plots}. They are chosen to represent an instance with average robustness as well as instances with extreme robustness values. Since the range of three of the instances greatly differs from the other six instances, we depict these three more robust instances in \Cref{fig:manopt-plots}(b).

\begin{figure*}[t]
	\begin{minipage}[t]{0.45\textwidth}
		\centering
			\subfigure[Difference between man-opt and summed-rank min stable matchings. All values $\geq 0.01$ are depicted with the same color.\label{fig:map-man-sum}]{\includegraphics[width=\columnwidth]{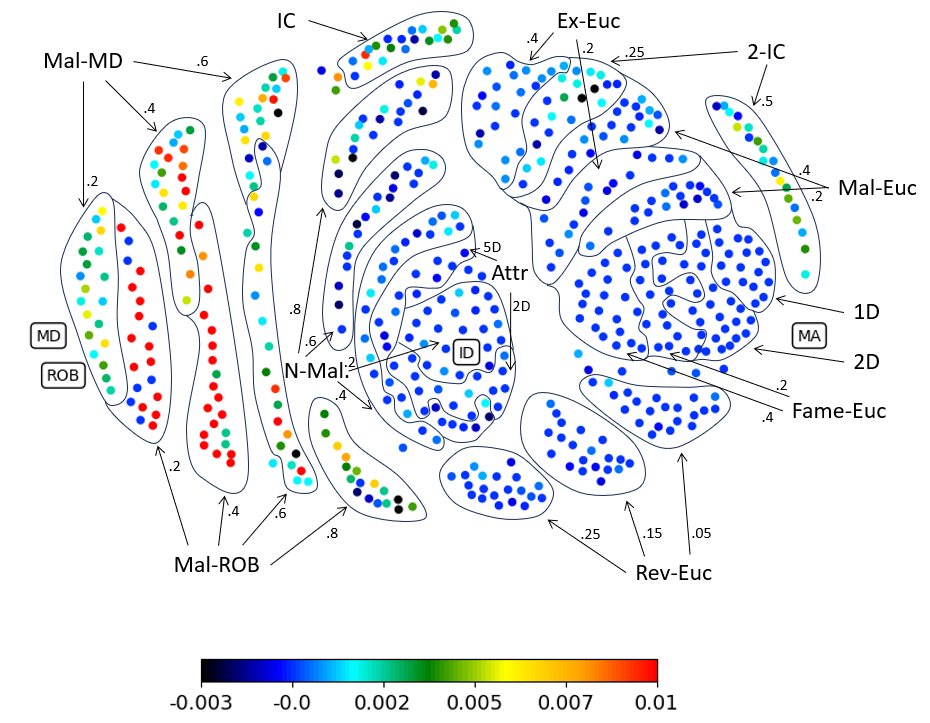}}
	\end{minipage}
	\hfill
	\begin{minipage}[t]{0.45\textwidth}
		\centering
			\subfigure[Difference between summed-rank-min and robust stable matchings.]{\includegraphics[width=\columnwidth]{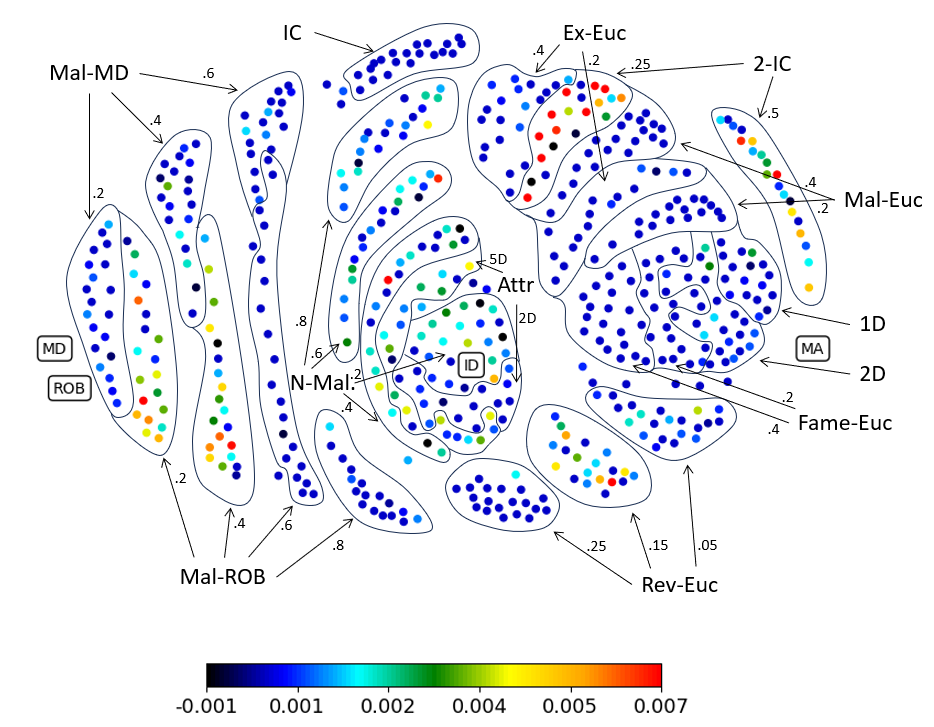}}
		\label{fig:map-sum-rob}
	\end{minipage}
	\caption{Difference between the 50\%-threshold values of (a) the man-optimal and summed-rank-minimal stable matching and (b) the summed-rank-minimal stable matching and the robust stable matching for all instances of the data set.}\label{fig:mapTog}
\end{figure*}

First, we observe that the stable matching probability decreases monotonically with increasing norm-$\phi$. Particularly intriguing are the two Mal-ROB instances with norm-$\phi=0.2$ from \Cref{fig:manopt-plots}(b): 
We observed two different kinds of instances: Very robust instances (marked red in the map) and not very robust instances. In \Cref{fig:manopt-plots}(b), we can see the development of the stable matching probability for the ROB instance and for two Mal-ROB instances with norm-$\phi=0.2$. Clearly, while the robustness of the orange instance is close to the ROB instance, the green instance is far less robust, despite having the same distance from the ROB instance as the orange instance. The intuitive explanation for this large difference is that, in the orange instance, the man-optimal stable matching is still the same as for the blue ROB instance, while for the green instance, the man-optimal stable matching is different and thus has a far worse worst-case robustness - and thereby also a far worse average-case robustness. 

\subsection{Additional Material for \Cref{sub:exp1}: Robustness Heuristic} \label{app:rob}
In \Cref{fig:heur-fif-corr}, we show a correlation plot between the $50\%$-threshold and the blocking pair proximity, where each point corresponds to one instance.
The observed correlation is quite strong. 

\subsection{Additional Material for \Cref{sub:exp1}: Other Stable Matchings} \label{app:oth}

In the following, we give a formulation of computing the robust matching (as defined in the main body) as an Integer Linear Program (ILP), where we will have that $U=W=[n]$ and  $x[i,j]$ is set to $1$ if man $i$ is matched to woman $j$ and to $0$ else. Variable $y[i,j]$ is the blocking distance of pair $\{i,j\}$. It is set to some large value (in our case $2n$) if $i$ and $j$ are matched. Finally, $z[i,j,\ell]$ is set to $1$ if the blocking distance of the pair $\{i,j\}$ is exactly $\ell$. Variable $z$ is needed to count the number of pairs at a certain blocking distance. See \Cref{ILP} for the constraints.

\Cref{e6,e7} guarantee that the result is a matching. \Cref{e8} guarantees that the matching is stable. \Cref{e9} sets $y[i,j]$ to the distance of the pair $\{i,j\}$ to being blocking. \Cref{e10} ensures that $z[i,j,\ell]$ is set to 1 if the distance of the pair $\{i,j\}$ from being blocking is exactly $\ell$. Finally, in \Cref{e12}, we minimize the blocking pair proximity of the matching. Obviously, we can omit the logarithm and still the same matching minimizes the expression, since the logarithm is monotonic.

\begin{figure*}[t]
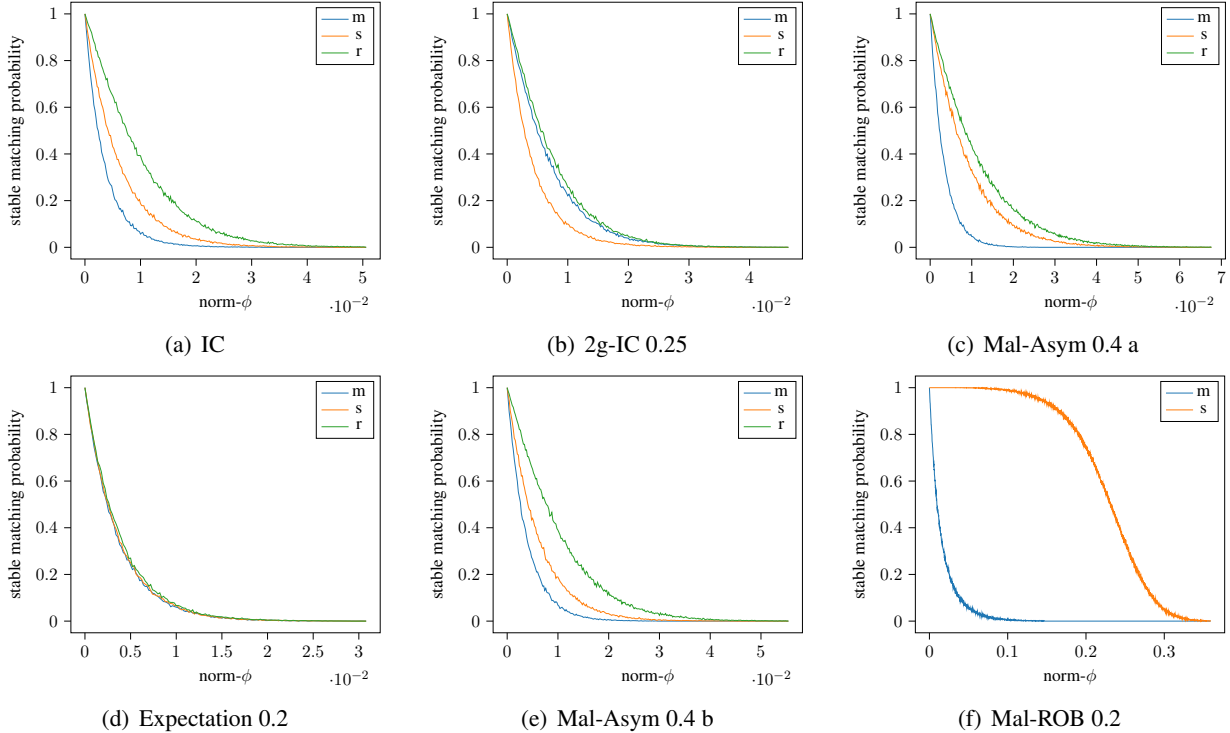

	\centering
	\begin{minipage}[t]{0.32\textwidth}
		\centering
		\subfigure[IC]{\input{plot_data/devplots3.tex}}
		\label{fig:dev3}
	\end{minipage}
	\hfill
	\begin{minipage}[t]{0.32\textwidth}
		\centering
		\subfigure[2g-IC 0.25]{\input{plot_data/devplots4.tex}}
		\label{fig:dev4}
	\end{minipage}
	\hfill
	\begin{minipage}[t]{0.32\textwidth}
		\centering
		\subfigure[Mal-Asym 0.4 a]{\input{plot_data/devplots5.tex}}
		\label{fig:dev5}
	\end{minipage}
	\hfill
	\begin{minipage}[t]{0.32\textwidth}
		\centering
		\subfigure[Expectation 0.2]{
\begin{tikzpicture}[scale=0.6]

\definecolor{color0}{rgb}{0.12156862745098,0.466666666666667,0.705882352941177}
\definecolor{color1}{rgb}{1,0.498039215686275,0.0549019607843137}
\definecolor{color2}{rgb}{0.172549019607843,0.627450980392157,0.172549019607843}

\begin{axis}[
tick align=outside,
tick pos=left,
x grid style={white!69.0196078431373!black},
xlabel={norm-$\phi$},
xmin=-0.00154, xmax=0.0323399999999999,
xtick style={color=black},
y grid style={white!69.0196078431373!black},
ylabel={stable matching probability},
ymin=-0.05, ymax=1.05,
ytick style={color=black}
]
\addplot [semithick, color0]
table {%
0 1
0.0002 0.941
0.0004 0.8952
0.0006 0.844
0.0008 0.7966
0.001 0.745
0.0012 0.7074
0.0014 0.6702
0.0016 0.6422
0.0018 0.6014
0.002 0.5556
0.0022 0.5372
0.0024 0.5064
0.0026 0.4678
0.0028 0.4566
0.003 0.4306
0.0032 0.4106
0.0034 0.3702
0.0036 0.3602
0.0038 0.328
0.004 0.3174
0.0042 0.2936
0.0044 0.287
0.0046 0.262
0.0048 0.262
0.005 0.24
0.0052 0.221
0.0054 0.2256
0.0056 0.2046
0.0058 0.1942
0.006 0.1734
0.0062 0.1764
0.0064 0.1554
0.0066 0.1486
0.0068 0.1488
0.007 0.1318
0.0072 0.1306
0.0074 0.1198
0.00759999999999999 0.1176
0.00779999999999999 0.1102
0.00799999999999999 0.1058
0.0082 0.0982
0.0084 0.0912
0.0086 0.0838
0.0088 0.0826
0.009 0.0788
0.0092 0.0732
0.0094 0.0616
0.0096 0.066
0.0098 0.0624
0.01 0.0596
0.0102 0.0576
0.0104 0.0514
0.0106 0.054
0.0108 0.0448
0.011 0.0412
0.0112 0.0422
0.0114 0.039
0.0116 0.0408
0.0118 0.0344
0.012 0.0348
0.0122 0.0322
0.0124 0.032
0.0126 0.029
0.0128 0.0278
0.013 0.0256
0.0132 0.024
0.0134 0.0238
0.0136 0.0224
0.0138 0.0168
0.014 0.0186
0.0142 0.0152
0.0144 0.0188
0.0146 0.0146
0.0148 0.0174
0.015 0.0124
0.0152 0.0162
0.0154 0.0112
0.0156 0.0126
0.0158 0.0108
0.016 0.011
0.0162 0.0114
0.0164 0.0116
0.0166 0.0084
0.0168 0.0062
0.017 0.0074
0.0172 0.0082
0.0174 0.0082
0.0176 0.0086
0.0178 0.0044
0.018 0.0064
0.0182 0.005
0.0184 0.0044
0.0186 0.0048
0.0188 0.0056
0.019 0.0058
0.0192 0.0072
0.0194 0.005
0.0196 0.0048
0.0198 0.0024
0.02 0.0024
0.0202 0.0042
0.0204 0.0048
0.0206 0.0032
0.0208 0.0032
0.021 0.0026
0.0212 0.0034
0.0214 0.0024
0.0216 0.002
0.0218 0.003
0.022 0.0012
0.0222 0.002
0.0224 0.0014
0.0226 0.0012
0.0228 0.0024
0.023 0.0014
0.0232 0.0026
0.0234 0.0008
0.0236 0.0014
0.0238 0.0014
0.024 0.0016
0.0242 0.0012
0.0244 0.0008
0.0246 0.0008
0.0248 0.0008
0.025 0.002
0.0252 0.0016
0.0254 0.0006
0.0256 0.0006
0.0258 0.0008
0.026 0.0008
0.0262 0.0008
0.0264 0.0004
0.0266 0.0004
0.0267999999999999 0.001
0.0269999999999999 0.001
0.0271999999999999 0.0008
0.0273999999999999 0.0002
0.0275999999999999 0.0006
0.0277999999999999 0.0002
0.0279999999999999 0.0004
0.0281999999999999 0.0004
0.0283999999999999 0.0004
0.0285999999999999 0.0002
0.0287999999999999 0.0002
0.0289999999999999 0.0004
0.0291999999999999 0.0002
0.0293999999999999 0.0004
0.0295999999999999 0.0002
0.0297999999999999 0.0002
0.0299999999999999 0.0004
0.0301999999999999 0.0006
0.0303999999999999 0.0004
0.0305999999999999 0.0002
0.0307999999999999 0
};
\addplot [semithick, color1]
table {%
0 1
0.0002 0.9498
0.0004 0.8922
0.0006 0.8378
0.0008 0.8062
0.001 0.7662
0.0012 0.7138
0.0014 0.6836
0.0016 0.641
0.0018 0.6108
0.002 0.5764
0.0022 0.5382
0.0024 0.5152
0.0026 0.4934
0.0028 0.4636
0.003 0.4354
0.0032 0.4104
0.0034 0.395
0.0036 0.3706
0.0038 0.3386
0.004 0.332
0.0042 0.3044
0.0044 0.2862
0.0046 0.2778
0.0048 0.2602
0.005 0.251
0.0052 0.2338
0.0054 0.2322
0.0056 0.2012
0.0058 0.1912
0.006 0.1886
0.0062 0.1794
0.0064 0.1634
0.0066 0.1586
0.0068 0.1528
0.007 0.147
0.0072 0.1304
0.0074 0.1156
0.00759999999999999 0.1134
0.00779999999999999 0.1104
0.00799999999999999 0.106
0.0082 0.1074
0.0084 0.0958
0.0086 0.0948
0.0088 0.0836
0.009 0.0834
0.0092 0.0772
0.0094 0.0754
0.0096 0.0742
0.0098 0.066
0.01 0.0674
0.0102 0.0618
0.0104 0.0542
0.0106 0.048
0.0108 0.0516
0.011 0.0478
0.0112 0.0448
0.0114 0.0378
0.0116 0.0426
0.0118 0.037
0.012 0.03
0.0122 0.0376
0.0124 0.0318
0.0126 0.0266
0.0128 0.0248
0.013 0.0254
0.0132 0.0248
0.0134 0.0218
0.0136 0.0218
0.0138 0.0192
0.014 0.0212
0.0142 0.0174
0.0144 0.0182
0.0146 0.014
0.0148 0.0126
0.015 0.015
0.0152 0.0152
0.0154 0.0132
0.0156 0.0128
0.0158 0.0128
0.016 0.0092
0.0162 0.0118
0.0164 0.0124
0.0166 0.0086
0.0168 0.0086
0.017 0.007
0.0172 0.0064
0.0174 0.009
0.0176 0.0066
0.0178 0.0056
0.018 0.0088
0.0182 0.0054
0.0184 0.005
0.0186 0.0058
0.0188 0.0074
0.019 0.004
0.0192 0.0056
0.0194 0.0036
0.0196 0.0034
0.0198 0.0056
0.02 0.003
0.0202 0.0044
0.0204 0.0036
0.0206 0.0026
0.0208 0.0036
0.021 0.0038
0.0212 0.0012
0.0214 0.0036
0.0216 0.003
0.0218 0.002
0.022 0.0024
0.0222 0.0018
0.0224 0.0014
0.0226 0.0022
0.0228 0.0012
0.023 0.003
0.0232 0.0006
0.0234 0.0014
0.0236 0.0024
0.0238 0.0014
0.024 0.001
0.0242 0.002
0.0244 0.0012
0.0246 0.0016
0.0248 0.0004
0.025 0.0012
0.0252 0.001
0.0254 0.0016
0.0256 0.0008
0.0258 0.0008
0.026 0.0004
0.0262 0.0006
0.0264 0.0006
0.0266 0.0004
0.0267999999999999 0.0006
0.0269999999999999 0.0008
0.0271999999999999 0.0004
0.0273999999999999 0.0002
0.0275999999999999 0.0006
0.0277999999999999 0.0008
0.0279999999999999 0.0004
0.0281999999999999 0.0004
0.0283999999999999 0.0002
0.0285999999999999 0
0.0287999999999999 0
0.0289999999999999 0
0.0291999999999999 0
0.0293999999999999 0
0.0295999999999999 0
0.0297999999999999 0
0.0299999999999999 0
0.0301999999999999 0
0.0303999999999999 0
0.0305999999999999 0
0.0307999999999999 0
};
\addplot [semithick, color2]
table {%
0 1
0.0002 0.9478
0.0004 0.904
0.0006 0.8556
0.0008 0.8112
0.001 0.7662
0.0012 0.7312
0.0014 0.682
0.0016 0.6662
0.0018 0.639
0.002 0.5916
0.0022 0.5688
0.0024 0.525
0.0026 0.508
0.0028 0.4758
0.003 0.4584
0.0032 0.4344
0.0034 0.4176
0.0036 0.397
0.0038 0.3734
0.004 0.3536
0.0042 0.328
0.0044 0.3154
0.0046 0.2976
0.0048 0.2684
0.005 0.27
0.0052 0.2334
0.0054 0.2382
0.0056 0.2216
0.0058 0.214
0.006 0.2004
0.0062 0.1936
0.0064 0.187
0.0066 0.1718
0.0068 0.163
0.007 0.1498
0.0072 0.1566
0.0074 0.1342
0.00759999999999999 0.1308
0.00779999999999999 0.125
0.00799999999999999 0.1166
0.0082 0.1214
0.0084 0.1008
0.0086 0.1064
0.0088 0.0988
0.009 0.0942
0.0092 0.0826
0.0094 0.08
0.0096 0.0802
0.0098 0.0776
0.01 0.0656
0.0102 0.0672
0.0104 0.0644
0.0106 0.0598
0.0108 0.056
0.011 0.0554
0.0112 0.0482
0.0114 0.0486
0.0116 0.041
0.0118 0.0486
0.012 0.0338
0.0122 0.039
0.0124 0.0342
0.0126 0.0318
0.0128 0.028
0.013 0.0286
0.0132 0.031
0.0134 0.0246
0.0136 0.0274
0.0138 0.022
0.014 0.0226
0.0142 0.0214
0.0144 0.0198
0.0146 0.0194
0.0148 0.0186
0.015 0.0188
0.0152 0.0158
0.0154 0.0172
0.0156 0.0158
0.0158 0.0134
0.016 0.0128
0.0162 0.0158
0.0164 0.0112
0.0166 0.0116
0.0168 0.011
0.017 0.01
0.0172 0.0136
0.0174 0.009
0.0176 0.0092
0.0178 0.01
0.018 0.0074
0.0182 0.0072
0.0184 0.0064
0.0186 0.006
0.0188 0.006
0.019 0.0056
0.0192 0.0082
0.0194 0.0054
0.0196 0.006
0.0198 0.005
0.02 0.0044
0.0202 0.0036
0.0204 0.0042
0.0206 0.0044
0.0208 0.0042
0.021 0.0038
0.0212 0.0046
0.0214 0.0028
0.0216 0.0016
0.0218 0.0038
0.022 0.0042
0.0222 0.0028
0.0224 0.0018
0.0226 0.002
0.0228 0.0034
0.023 0.0022
0.0232 0.0016
0.0234 0.0004
0.0236 0.002
0.0238 0.0022
0.024 0.0022
0.0242 0.002
0.0244 0.0008
0.0246 0.0006
0.0248 0.0022
0.025 0.0012
0.0252 0.0012
0.0254 0.0014
0.0256 0.0018
0.0258 0.0012
0.026 0.0004
0.0262 0.001
0.0264 0.0014
0.0266 0.0004
0.0267999999999999 0.0008
0.0269999999999999 0.0008
0.0271999999999999 0.0004
0.0273999999999999 0.0008
0.0275999999999999 0.0004
0.0277999999999999 0.0006
0.0279999999999999 0.0006
0.0281999999999999 0.0004
0.0283999999999999 0.0004
0.0285999999999999 0.0006
0.0287999999999999 0.0004
0.0289999999999999 0.0004
0.0291999999999999 0
0.0293999999999999 0
0.0295999999999999 0
0.0297999999999999 0
0.0299999999999999 0
0.0301999999999999 0
0.0303999999999999 0
0.0305999999999999 0
0.0307999999999999 0
};
\addlegendentry{m}
\addlegendentry{s}
\addlegendentry{r}
\end{axis}

\end{tikzpicture}}
		\label{fig:dev6}
	\end{minipage}
	\hfill
	\begin{minipage}[t]{0.32\textwidth}
		\centering
		\subfigure[Mal-Asym 0.4 b]{\input{plot_data/devplots7.tex}}
		\label{fig:dev7}
	\end{minipage}
	\hfill
	\begin{minipage}[t]{0.32\textwidth}
		\centering
		\subfigure[Mal-ROB 0.2]{\input{plot_data/devplots8.tex}}
		\label{fig:dev8}
	\end{minipage}
	\caption{Average-case robustness for the man-optimal (m), summed-rank minimizing (s) and robust (r) stable matchings in six exemplary instances. The cultures from which the models have been sampled are in the caption.}
	\label{fig:combdevplots}
\end{figure*}

Expanding on our discussion from the main body, we now compare the robustness of the three matchings.
In \Cref{fig:mapTog}(a), we see that the difference between the summed-rank minimal stable matching and the man-optimal stable matching is particularly large for the Mal-ROB, Mal-MD, IC and 2-IC instances. As we will see later, these are exactly the cultures that produce instances with many stable pairs (see \Cref{fig:map_stablepairs}). This suggests that in the above cultures, the summed-rank min matching can differ more significantly from the man-optimal stable matching and thus also have a significantly higher robustness. In total, we observed that there are 121 instances where the 50\%-threshold of the summed-rank-min matching is at least 0.001 greater than the man-optimal stable matching. For 222 instances, the two matchings are completely identical. There are only very few (7) instances where the man-optimal matching performs better than the summed-rank minimal matching. An example can be seen in \Cref{fig:combdevplots}(b). On the other hand, the summed-rank-min matching can also be extremely more robust, as seen in \Cref{fig:combdevplots}(f). In this Mal-ROB instance, the summed-rank-min matching corresponds to the stable matching of the ROB instance, while the man-optimal matching differs.

We now turn to the robust stable matching, which never performs worse than the man-optimal matching and also outperforms the summed-rank minimal matching in many instances. While usually the difference to the summed-rank minimal matching is rather small, \Cref{fig:combdevplots}(a,e) show that the robust stable matching can still be clearly more robust than the other matchings. As seen in \Cref{fig:mapTog}(b), the difference between summed-rank-min and robust stable matchings is particularly large for instances from the Mal-ROB, 2-IC, Rev-Euc, N-Mal and Attr cultures. We observed that there are 77 instances where the 50\%-threshold of the robust stable matching is at least 0.001 greater than the 50\%-threshold of the summed-rank-min stable matching (so one could argue that for around 14\% of the instances, the robust stable matching is significantly more robust than the summed-rank-min matching), while for 284 instances, the two matchings are exactly identical. For all remaining instances, the two matchings have a very similar robustness. 

As mentioned above, for some instances, two or all three matchings coincide. The instances we chose to present in \Cref{fig:combdevplots} have distinct man-optimal, summed-rank-min and robust matchings (except for \Cref{fig:dev8}, where the summed-rank-min and robust matching coincide) but we aim to represent all instances regarding their robustness (that is, for each instance in the data set, there is some exemplary instance that has a similar robustness). For some, mostly Euclidean, instances, the matchings are the same or they differ but their robustness is very similar. An example is shown in \Cref{fig:combdevplots}(d). But, generally, for instances where the robustness differs, the robust and summed-rank-min matchings are mostly closer to each other than to the man-optimal matching (for example in \Cref{fig:combdevplots}(c)).

\subsection{Additional Material for \Cref{sub:exp1}: Varying the Number of Agents} \label{app:var}
Now, we want to examine the impact of the instance size on the average-case robustness. Recall that in this chapter we assume $n=m$, that is, the number of men and women is equal. Therefore, the instance size only depends on the number of men $n$. Intuitively, there are two opposing factors that influence how and whether the robustness of matchings changes when increasing the number of agents: On the one hand, when increasing the instance size, we can perform more ``useless'' swaps, that 
is, swaps in the preference list $\succ_a$ that do not involve $M(a)$. This would make the instance more robust. On the other hand, by adding more agents, we obtain more potential blocking pairs, making the instance less robust.

We observed that these two effects almost cancle out: \Cref{fig:corr-inst} shows the number of swaps needed per agent to reach the 50\%-threshold (i.e. norm-$\phi \cdot \frac{n \cdot (n-1)}{4}$, where norm-$\phi$ is the 50\%-threshold) for 300 IC instances and 300 2D instances of different size. The 50\%-threshold itself rapidly decreases with increasing instance size, since the total number of possible swaps grows quadratically. In \Cref{fig:corr-inst}(a), one can observe that larger instances lead to a higher robustness, but the robustness grows very slowly. In \Cref{fig:corr-inst}(b), far less swaps are required (which also matches the results of \Cref{fig:map-fif}) and the robustness does not change significantly when adding more agents. 

\begin{figure*}[t]
	\centering
	\begin{minipage}[t]{0.49\linewidth}
		\subfigure[IC culture]{
\begin{tikzpicture}[scale=0.8]

\definecolor{color0}{rgb}{0.12156862745098,0.466666666666667,0.705882352941177}

\begin{axis}[
tick align=outside,
tick pos=left,
x grid style={white!69.0196078431373!black},
xlabel={number of men},
xmin=-3.95, xmax=104.95,
xtick style={color=black},
y grid style={white!69.0196078431373!black},
ylabel={av. swaps per agent to reach 50\%-threshold},
ymin=-0.2751875, ymax=5.7789375,
ytick style={color=black}
]
\addplot [semithick, color0, mark=*, mark size=1, mark options={solid}, only marks]
table {%
1 0
2 0.5
3 0.3475
4 0.945
5 0.825
6 1.875
7 1.05
8 1.2425
9 1.12
10 0.8325
11 0.95
12 3.63
13 1.47
14 0.92625
15 2.1175
16 0.95625
17 1.88
18 2.38
19 1.1925
20 1.71
21 1.75
22 0.84
23 1.815
24 1.2075
25 2.34
26 2.625
27 1.235
28 1.9575
29 2.31
30 1.34125
31 1.875
32 3.875
33 1.4
34 1.52625
35 1.7425
36 2.31875
37 3.96
38 1.665
39 1.4725
40 2.29125
41 3
42 1.07625
43 1.68
44 2.9025
45 2.42
46 2.25
47 4.255
48 1.58625
49 1.68
50 2.94
51 2.375
52 5.355
53 2.34
54 2.84875
55 2.2275
56 2.3375
57 2.03
58 2.28
59 1.595
60 3.245
61 2.325
62 2.135
63 1.86
64 2.52
65 3.12
66 3.25
67 1.7325
68 2.1775
69 3.57
70 2.32875
71 2.275
72 2.57375
73 2.25
74 1.825
75 3.0525
76 2.0625
77 3.8
78 1.82875
79 3.4125
80 3.65375
81 4
82 4.455
83 2.255
84 3.32
85 2.625
86 4.56875
87 3.7625
88 2.8275
89 3.08
90 2.11375
91 1.8
92 2.38875
93 4.6
94 4.06875
95 2.115
96 3.8
97 2.64
98 1.94
99 3.5525
100 3.83625
1 0
2 0.5
3 0.4925
4 0.37875
5 0.62
6 1.525
7 0.57
8 1.015
9 1.24
10 1.78875
11 2.7375
12 2.0625
13 1.08
14 1.7225
15 1.645
16 1.35
17 2.96
18 2.44375
19 1.17
20 1.49625
21 0.825
22 3.6225
23 1.3475
24 2.04125
25 2.1
26 1.28125
27 1.3975
28 2.2275
29 3.185
30 1.55875
31 2.7
32 1.97625
33 1.88
34 2.145
35 2.6775
36 2.1
37 1.575
38 5.50375
39 2.8975
40 1.60875
41 2.45
42 1.48625
43 2.205
44 2.63375
45 1.98
46 4.5
47 1.495
48 2.2325
49 2.04
50 2.45
51 1.625
52 2.805
53 3.25
54 3.64375
55 2.295
56 3.09375
57 2.17
58 1.78125
59 3.335
60 2.8025
61 1.725
62 2.44
63 2.4025
64 3.07125
65 3.04
66 2.1125
67 4.3725
68 1.8425
69 2.975
70 4.83
71 2.1
72 2.21875
73 2.07
74 2.7375
75 2.775
76 3.5625
77 4.75
78 3.08
79 2.145
80 2.46875
81 2
82 2.2275
83 3.485
84 5.1875
85 3.57
86 3.71875
87 1.72
88 2.3925
89 2.42
90 4.005
91 2.5875
92 2.16125
93 3.22
94 5.3475
95 3.6425
96 2.85
97 3.48
98 2.425
99 3.0625
100 2.35125
1 0
2 0.49875
3 1.0325
4 0.9375
5 0.575
6 0.73125
7 0.9525
8 1.61875
9 1.18
10 1.665
11 1.6125
12 0.9075
13 1.095
14 1.18625
15 1.8725
16 1.3125
17 1.62
18 1.21125
19 0.99
20 2.2325
21 2.475
22 1.155
23 1.4025
24 1.12125
25 2.16
26 1.375
27 2.405
28 3.1725
29 1.645
30 1.74
31 1.875
32 2.4025
33 4.16
34 2.2275
35 2.55
36 2.84375
37 1.395
38 1.3875
39 2.3275
40 4.5825
41 2.05
42 2.9725
43 1.8375
44 2.0425
45 1.485
46 1.6875
47 2.645
48 2.29125
49 2.1
50 2.3275
51 2.125
52 1.6575
53 3.64
54 1.325
55 2.835
56 1.1
57 2.03
58 1.63875
59 1.74
60 3.31875
61 2.7
62 2.82125
63 2.48
64 2.59875
65 2.4
66 2.275
67 2.7225
68 2.8475
69 2.72
70 2.9325
71 3.15
72 2.39625
73 2.97
74 4.015
75 2.405
76 2.8125
77 4.37
78 2.98375
79 3.705
80 1.975
81 2.3
82 3.4425
83 3.3825
84 2.59375
85 2.205
86 2.44375
87 2.4725
88 4.02375
89 3.96
90 2.8925
91 4.05
92 3.07125
93 3.335
94 2.67375
95 4.465
96 1.78125
97 2.52
98 4.24375
99 3.185
100 2.475
};
\end{axis}

\end{tikzpicture}}
		\label{fig:corr-inst-1}
	\end{minipage}
	\hfill
	\begin{minipage}[t]{0.49\linewidth}
		\subfigure[2D culture]{
\begin{tikzpicture}[scale=0.8]

\definecolor{color0}{rgb}{0.12156862745098,0.466666666666667,0.705882352941177}

\begin{axis}[
tick align=outside,
tick pos=left,
x grid style={white!69.0196078431373!black},
xlabel={number of men},
xmin=-3.95, xmax=104.95,
xtick style={color=black},
y grid style={white!69.0196078431373!black},
ylabel={av. swaps per agent to reach 50\%-threshold},
ymin=-0.0982125, ymax=2.0624625,
ytick style={color=black}
]
\addplot [semithick, color0, mark=*, mark size=1, mark options={solid}, only marks]
table {%
1 0
1 0
1 0
2 0.357999999999995
2 0.4885
2 0.492375
3 0.94025
3 0.471
3 0.42025
4 1.137375
4 0.422625
4 0.785249999999996
5 0.861
5 0.7815
5 1.2445
6 1.364375
6 0.960624999999999
6 0.938125
7 0.411
7 0.63975
7 0.81825
8 0.872375
8 0.4375
8 0.86275
9 0.852
9 0.563
9 0.845
10 0.7065
10 0.617625
10 1.063125
11 0.86875
11 0.86125
11 0.8625
12 0.521125
12 1.027125
12 0.897875
13 0.721500000000002
13 0.694500000000002
13 0.7545
14 0.77675
14 0.567125
14 0.93275
15 1.12875
15 0.8225
15 0.5145
16 1.145625
16 1.130625
16 0.815625
17 0.701999999999998
17 0.962
17 1.042
18 0.8245
18 1.187875
18 0.898875
19 1.96425
19 0.681750000000001
19 0.66825
20 0.9405
20 0.893
20 1.1115
21 0.927499999999999
21 1.235
21 0.93
22 0.90825
22 0.693
22 0.937125
23 0.858
23 0.9405
23 0.979
24 0.94875
24 0.698625000000002
24 1.0925
25 0.882
25 0.936
25 0.591
26 0.728125
26 0.687500000000001
26 0.940625
27 0.6825
27 0.67925
27 0.74425
28 0.84375
28 1.002375
28 0.958500000000001
29 1.099
29 0.938
29 0.7945
30 0.822875
30 0.80475
30 1.06575
31 0.92625
31 0.87
31 0.8775
32 0.732375
32 0.8525
32 0.957125
33 0.788
33 0.736
33 0.876
34 1.14675
34 0.71775
34 1.117875
35 1.10075
35 1.071
35 0.63325
36 0.93625
36 1.010625
36 0.93625
37 0.7335
37 0.7695
37 0.9585
38 0.7955
38 0.999
38 0.744625
39 1.045
39 0.89775
39 0.85025
40 0.901875
40 0.862875
40 0.9165
41 1.01
41 0.985
41 0.91
42 1.15825
42 1.040375
42 1.05575
43 0.787499999999999
43 1.12875
43 1.50675
44 0.757875
44 1.1395
44 0.89225
45 1.034
45 0.572
45 0.8525
46 0.894375
46 0.770625
46 0.68625
47 0.8165
47 1.10975
47 1.081
48 1.122125
48 0.97525
48 0.7755
49 1.338
49 0.966
49 0.828
50 0.8085
50 1.329125
50 1.194375
51 0.95
51 0.95
51 0.8
52 0.949875
52 0.822375
52 1.16025
53 1.1505
53 0.858
53 0.715
54 0.7155
54 0.848
54 0.907625
55 0.80325
55 1.053
55 0.77625
56 0.78375
56 1.134375
56 0.99
57 0.875
57 0.742
57 0.798
58 0.81225
58 1.118625
58 0.69825
59 0.96425
59 0.841
59 0.84825
60 1.069375
60 0.877625
60 1.025125
61 0.87
61 1.05
61 0.9
62 1.159
62 0.86925
62 0.976
63 1.2555
63 0.99975
63 0.98425
64 0.984375
64 1.1025
64 0.945
65 0.808
65 0.88
65 0.832
66 0.918125
66 0.788125
66 0.853125
67 0.81675
67 0.792
67 1.0395
68 0.98825
68 0.979875
68 0.804
69 0.8585
69 0.9775
69 1.1305
70 0.836625
70 0.8625
70 1.060875
71 1.015
71 0.95375
71 0.8925
72 0.949625
72 0.94075
72 0.887499999999999
73 1.125
73 0.774
73 0.981
74 0.976375
74 0.757375
74 0.903375
75 0.69375
75 0.8695
75 1.2025
76 0.99375
76 0.84375
76 1.059375
77 1.007
77 1.064
77 0.874
78 0.837375
78 0.952875
78 1.164625
79 0.90675
79 1.014
79 0.80925
80 0.977625
80 0.88875
80 0.957875
81 1
81 0.93
81 0.94
82 0.860625
82 1.053
82 1.0125
83 0.85075
83 0.89175
83 0.738
84 0.76775
84 0.923374999999999
84 0.93375
85 0.8925
85 0.924
85 0.7455
86 1.051875
86 0.87125
86 0.85
87 0.903
87 0.999749999999999
87 0.9245
88 0.8265
88 0.76125
88 1.10925
89 0.968
89 0.748
89 0.891
90 0.86775
90 0.86775
90 1.012375
91 1.0575
91 1.0125
91 0.9
92 0.819
92 0.966875
92 1.02375
93 0.966
93 0.7935
93 0.9545
94 0.8835
94 0.7905
94 0.848625
95 0.752
95 0.81075
95 0.85775
96 0.914375
96 1.128125
96 0.771875
97 0.792
97 0.948
97 0.828
98 0.97
98 0.860875
98 0.848750000000001
99 1.1025
99 0.833
99 1.2005
100 1.225125
100 0.9405
100 0.928125
};
\end{axis}

\end{tikzpicture}}
		\label{fig:corr-inst-2}
	\end{minipage}
	\caption{Correlation of instance size and average-case robustness of men-optimal matching for 300 instances (three instances of size $n=m=i$ for each $i \in [100]$) of two different cultures}
	\label{fig:corr-inst}
\end{figure*}
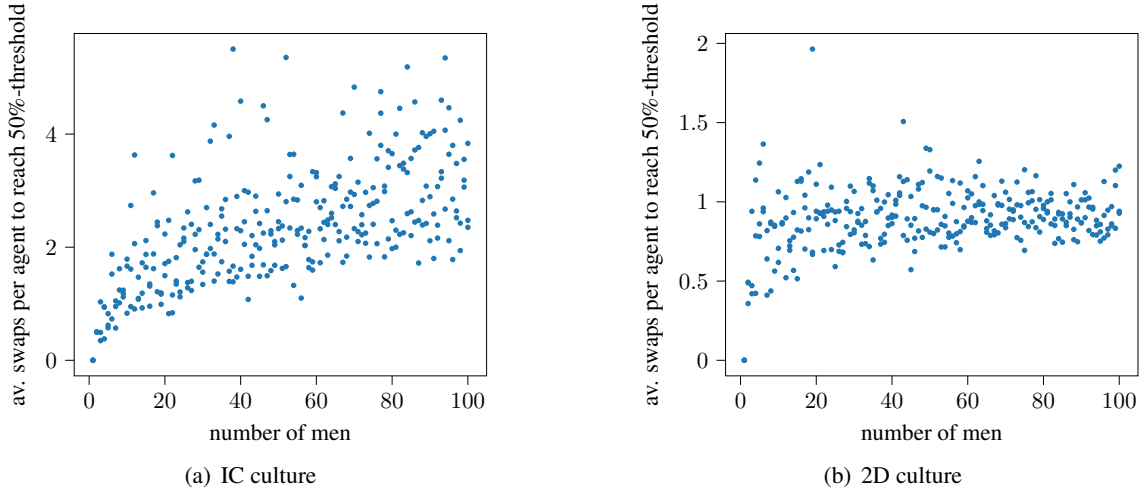

\subsection{Additional Material for \Cref{sub:exp1}: Blocking Pairs}\label{app:bp}

In this section, instead of only considering stable and unstable matchings, we count the number of blocking pairs for a matching when adding random noise. The justification is that while an instance may have a low stable matching probability, the average number of blocking pairs in the perturbed instance may still be relatively low. In the main body, we showed that this problem is already hard for the worst-case robustness and thus we cannot efficiently compute the average number of blocking pairs for a specific number of swaps. Therefore, we again use the Mallows model to determine the average number of blocking pairs. 

\Cref{fig:dev-bps} shows the development of the average number of blocking pairs for increasing norm-$\phi$ values for five exemplary instances of the data set. The instances not depicted in the plot behave quite similarly to these five selected instances. Contrary to the stable matching probability, and as can be seen in \Cref{fig:dev-bps}, the number of blocking pairs increases almost linearly with increasing norm-$\phi$ (at least for small norm-$\phi$ values). Therefore, we use the average number of blocking pairs for norm-$\phi=0.1$ as our measure instead of looking for a specific value (e.g. 50\%, as we did for the stability setting).
We observe that the number of blocking pairs and the 50\% stability threshold have a quite strong correlation. However, for very (similarly) unrobust instances, the number of blocking pairs after some random noise can significantly differ.

In \Cref{fig:corr-bps-fif}, we can see that, at least for the man-optimal matchings, two instances that have a very similar 50\%-stability-threshold can have a very different average number of blocking pairs. Therefore, it is sensible to examine which instances have a particularly low number of average blocking pairs. \Cref{fig:map-avgbps} shows the map of our instances colored according to their average number of blocking pairs. Notice that a small value now corresponds to a robust instance. 
While the robust instances remain similar to the stability setting, we can now better differentiate between non-robust instances. We observe that Euclidean cultures that do not involve perturbing single preference lists (like 1D, 2D, Attr and Rev-Euc) have the most blocking pairs. The instances have less blocking pairs when they are closer to the Mal-ROB and IC instances.
\begin{figure}[t]
	\centering
		\includegraphics[width=0.7\columnwidth]{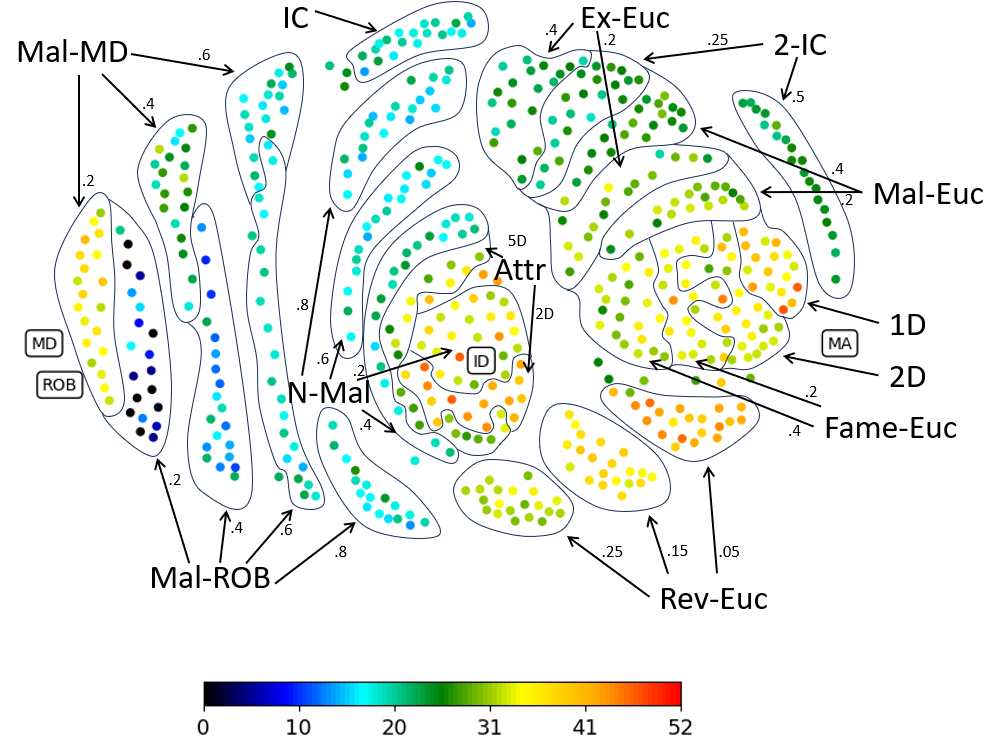}
		\caption{Average number of blocking pairs  of men-optimal matching for norm-$\phi=0.1$ for each instance of our data set.}
		\label{fig:map-avgbps}
\end{figure}

\subsection{Additional Material for \Cref{sub:exp1}: Unstable Matchings}\label{app:uns}
We want to shortly examine the stable matching probability of matchings that are initially not stable. As there is a very large number of unstable matchings, we restrict ourselves to one specific (possibly unstable) matching, which is defined analogously to the summed-rank min stable matching, but the stability criterion is omitted, i.e. we consider the matching maximizing social welfare, regardless of whether it is stable or not. More formally, we define this matching as follows:
\begin{definition}
	Let $\mathcal{I}=(U,W,\mathcal{P})$ be an SM instance. The \emph{minimum weight matching} is the (not necessarily stable) matching $M$ that minimizes $\sum_{a \in A}\rk_a(M(a))$.
\end{definition}
Examining the robustness of the minimum weight matching for all instances of the data set, we find that most matchings that are not stable in the initial preferences have a very low stable matching probability for any norm-$\phi$ value.

For any instance except for the Mal-ROB culture and the MA and ROB instances, it contained at least 4 blocking pairs which resulted in having a stable matching probability of 0\% for every norm-$\phi$ value. However, in \Cref{fig:dev15}, we present one special mallows-robust instance with norm-$\phi=0.2$ where the min-weight matching initially contains only one blocking pair. In this case, the matching becomes even more robust than the initially stable matchings. Notice that this is an outlier and for usual instances that are not closely related to the ROB instance, the minimum-weight matching is very unrobust.
\begin{figure}[t]
	\begin{minipage}[t]{.49\textwidth}
		\centering
		\input plot_data/devplots15.tex
		\caption{Average-Case Robustness for the man-optimal and summed-rank minimal stable matchings as well as for the min-weight matching for Mal-ROB 0.25.}
		\label{fig:dev15}
	\end{minipage}
	\hfill
	\begin{minipage}[t]{.49\textwidth}
		\centering
\begin{tikzpicture}[scale=0.8]

\definecolor{color0}{rgb}{0.12156862745098,0.466666666666667,0.705882352941177}
\definecolor{color1}{rgb}{1,0.498039215686275,0.0549019607843137}
\definecolor{color2}{rgb}{0.172549019607843,0.627450980392157,0.172549019607843}
\definecolor{color3}{rgb}{0.83921568627451,0.152941176470588,0.156862745098039}
\definecolor{color4}{rgb}{0.580392156862745,0.403921568627451,0.741176470588235}

\begin{axis}[
tick align=outside,
tick pos=left,
x grid style={white!69.0196078431373!black},
xlabel={norm-\(\displaystyle \phi\)},
xmin=-0.0125, xmax=0.2625,
xtick style={color=black},
y grid style={white!69.0196078431373!black},
ylabel={average number of blocking pairs},
ymin=-6.485, ymax=136.185,
ytick style={color=black},
legend style={at={(0,0.98)},xshift=0.2cm,anchor=north west,nodes=right}
]
\addplot [semithick, color0]
table {%
0 0
0.025 14.18
0.05 26.03
0.075 39.41
0.1 52.15
0.125 62.93
0.15 75.22
0.175 90.84
0.2 103.91
0.225 115.79
0.25 129.7
};
\addplot [semithick, color4]
table {%
0 0
0.025 12.74
0.05 23.24
0.075 33.02
0.1 40.97
0.125 48.29
0.15 54.46
0.175 64.07
0.2 70.43
0.225 76.84
0.25 86.64
};

\addplot [semithick, color3]
table {%
0 0
0.025 6
0.05 11.88
0.075 16.63
0.1 21.82
0.125 29.48
0.15 36.04
0.175 43.03
0.2 49.7
0.225 61.47
0.25 70.83
};
\addplot [semithick, color2]
table {%
0 0
0.025 4.63
0.05 9.31
0.075 14.09
0.1 18.27
0.125 23.86
0.15 30.61
0.175 35.3
0.2 41.36
0.225 47.21
0.25 54.3
};
\addplot [semithick, color1]
table {%
0 0
0.025 1.37
0.05 2.24
0.075 3.85
0.1 5.11
0.125 6.3
0.15 7.4
0.175 8.07
0.2 9.8
0.225 11.57
0.25 13.47
};
\addlegendentry{Asym}
\addlegendentry{1d}
\addlegendentry{N-Mal 0.4}
\addlegendentry{IC}
\addlegendentry{Mal-ROB 0.2}
\end{axis}

\end{tikzpicture}
		\caption{Average number of blocking pairs for the man-optimal matchings of five exemplary instances, for norm-$\phi$-values from 0 to 0.25.}
		\label{fig:dev-bps}
	\end{minipage}
	
\end{figure}

\subsection{Additional Material for \Cref{sub:exp2}} \label{app:exp2}
In this section, we analyze the robustness of stable pairs. Analogously to the matching setting, for each SM instance $\mathcal{I}$ and stable pair $\{m,w\}$, we define the 50\%-stability threshold as the smallest norm-$\phi$-value such that the probability of $\{m,w\}$ being stable in an instance drawn from the Mallows model with this dispersion parameter from $\mathcal{I}$ is below 50\%. To compute all stable pairs of an instance, we use the algorithm by \citet{gusfield1987three}. Clearly, each instance contains at least $n$ stable pairs. In \Cref{fig:map_stablepairs}, one can see the number of stable pairs for each instance in our data set.
Especially the MD and Mal-MD have many stable pairs. The distribution of the 50\%-threshold of the stable pairs of all instances can be seen in \Cref{fig:distpairs}. Notably, stable pairs with a 50\%-threshold above 0.1 are by no means an exception. In fact, as we will see later with the average stable pair robustness, an average stable pair often achieves a robustness greater than 0.1 and the low robustness values seen in \Cref{fig:distpairs} are mostly due to the many and unrobust stable pairs of the MD and Mal-MD instances.
\begin{figure*}[t]
	\begin{minipage}[t]{0.45\linewidth}
		\centering
		\includegraphics[width=\textwidth]{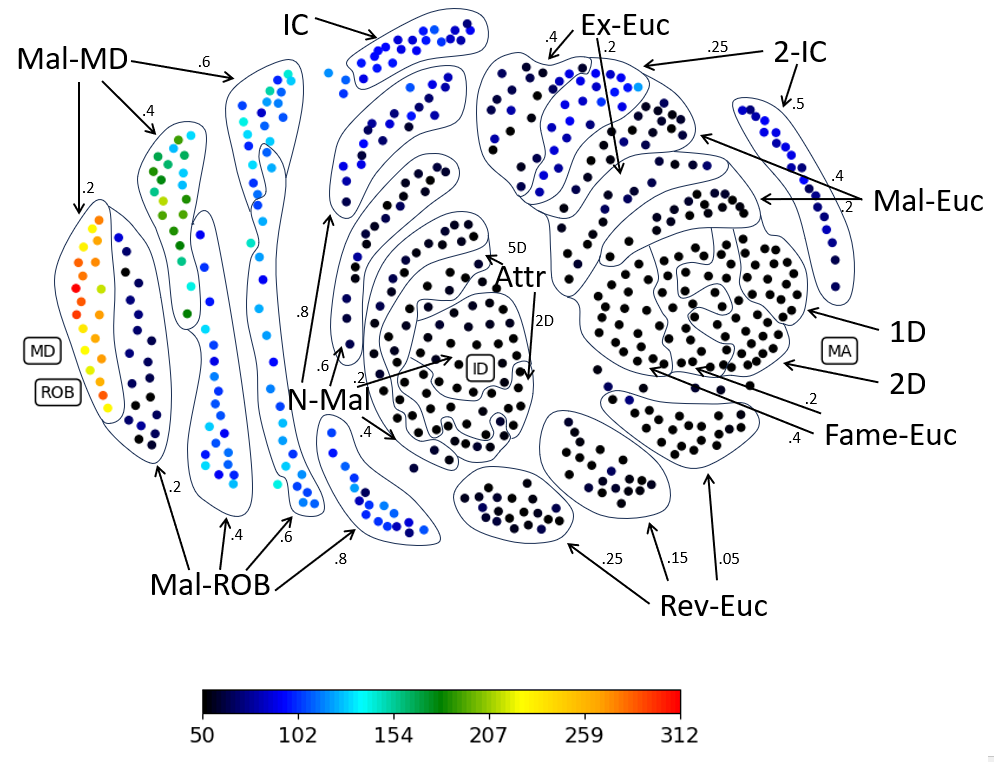}
		\caption{Number of stable pairs in the initial instance.}
		\label{fig:map_stablepairs}
	\end{minipage}\hfill
	\begin{minipage}[t]{0.45\linewidth}
\begin{tikzpicture}[scale=0.8]

\definecolor{color0}{rgb}{0.12156862745098,0.466666666666667,0.705882352941177}

\begin{axis}[
tick align=outside,
tick pos=left,
x grid style={white!69.0196078431373!black},
xlabel={50\%-threshold},
xmin=-0.05445, xmax=0.6,
xtick style={color=black},
y grid style={white!69.0196078431373!black},
ylabel={number of stable pairs},
ymin=0, ymax=7254.45,
ytick style={color=black}
]
\draw[draw=none,fill=color0] (axis cs:-0.0045,0) rectangle (axis cs:0.0045,3698);
\draw[draw=none,fill=color0] (axis cs:0.0055,0) rectangle (axis cs:0.0145,6909);
\draw[draw=none,fill=color0] (axis cs:0.0155,0) rectangle (axis cs:0.0245,3893);
\draw[draw=none,fill=color0] (axis cs:0.0255,0) rectangle (axis cs:0.0345,2907);
\draw[draw=none,fill=color0] (axis cs:0.0355,0) rectangle (axis cs:0.0445,2397);
\draw[draw=none,fill=color0] (axis cs:0.0455,0) rectangle (axis cs:0.0545,2065);
\draw[draw=none,fill=color0] (axis cs:0.0555,0) rectangle (axis cs:0.0645,1895);
\draw[draw=none,fill=color0] (axis cs:0.0655,0) rectangle (axis cs:0.0745,1747);
\draw[draw=none,fill=color0] (axis cs:0.0755,0) rectangle (axis cs:0.0845,1628);
\draw[draw=none,fill=color0] (axis cs:0.0855,0) rectangle (axis cs:0.0945,1424);
\draw[draw=none,fill=color0] (axis cs:0.0955,0) rectangle (axis cs:0.1045,1289);
\draw[draw=none,fill=color0] (axis cs:0.1055,0) rectangle (axis cs:0.1145,1150);
\draw[draw=none,fill=color0] (axis cs:0.1155,0) rectangle (axis cs:0.1245,1084);
\draw[draw=none,fill=color0] (axis cs:0.1255,0) rectangle (axis cs:0.1345,1025);
\draw[draw=none,fill=color0] (axis cs:0.1355,0) rectangle (axis cs:0.1445,865);
\draw[draw=none,fill=color0] (axis cs:0.1455,0) rectangle (axis cs:0.1545,811);
\draw[draw=none,fill=color0] (axis cs:0.1555,0) rectangle (axis cs:0.1645,776);
\draw[draw=none,fill=color0] (axis cs:0.1655,0) rectangle (axis cs:0.1745,687);
\draw[draw=none,fill=color0] (axis cs:0.1755,0) rectangle (axis cs:0.1845,610);
\draw[draw=none,fill=color0] (axis cs:0.1855,0) rectangle (axis cs:0.1945,569);
\draw[draw=none,fill=color0] (axis cs:0.1955,0) rectangle (axis cs:0.2045,575);
\draw[draw=none,fill=color0] (axis cs:0.2055,0) rectangle (axis cs:0.2145,533);
\draw[draw=none,fill=color0] (axis cs:0.2155,0) rectangle (axis cs:0.2245,468);
\draw[draw=none,fill=color0] (axis cs:0.2255,0) rectangle (axis cs:0.2345,483);
\draw[draw=none,fill=color0] (axis cs:0.2355,0) rectangle (axis cs:0.2445,412);
\draw[draw=none,fill=color0] (axis cs:0.2455,0) rectangle (axis cs:0.2545,402);
\draw[draw=none,fill=color0] (axis cs:0.2555,0) rectangle (axis cs:0.2645,399);
\draw[draw=none,fill=color0] (axis cs:0.2655,0) rectangle (axis cs:0.2745,399);
\draw[draw=none,fill=color0] (axis cs:0.2755,0) rectangle (axis cs:0.2845,387);
\draw[draw=none,fill=color0] (axis cs:0.2855,0) rectangle (axis cs:0.2945,319);
\draw[draw=none,fill=color0] (axis cs:0.2955,0) rectangle (axis cs:0.3045,266);
\draw[draw=none,fill=color0] (axis cs:0.3055,0) rectangle (axis cs:0.3145,284);
\draw[draw=none,fill=color0] (axis cs:0.3155,0) rectangle (axis cs:0.3245,273);
\draw[draw=none,fill=color0] (axis cs:0.3255,0) rectangle (axis cs:0.3345,218);
\draw[draw=none,fill=color0] (axis cs:0.3355,0) rectangle (axis cs:0.3445,220);
\draw[draw=none,fill=color0] (axis cs:0.3455,0) rectangle (axis cs:0.3545,192);
\draw[draw=none,fill=color0] (axis cs:0.3555,0) rectangle (axis cs:0.3645,198);
\draw[draw=none,fill=color0] (axis cs:0.3655,0) rectangle (axis cs:0.3745,147);
\draw[draw=none,fill=color0] (axis cs:0.3755,0) rectangle (axis cs:0.3845,152);
\draw[draw=none,fill=color0] (axis cs:0.3855,0) rectangle (axis cs:0.3945,152);
\draw[draw=none,fill=color0] (axis cs:0.3955,0) rectangle (axis cs:0.4045,149);
\draw[draw=none,fill=color0] (axis cs:0.4055,0) rectangle (axis cs:0.4145,131);
\draw[draw=none,fill=color0] (axis cs:0.4155,0) rectangle (axis cs:0.4245,92);
\draw[draw=none,fill=color0] (axis cs:0.4255,0) rectangle (axis cs:0.4345,102);
\draw[draw=none,fill=color0] (axis cs:0.4355,0) rectangle (axis cs:0.4445,104);
\draw[draw=none,fill=color0] (axis cs:0.4455,0) rectangle (axis cs:0.4545,106);
\draw[draw=none,fill=color0] (axis cs:0.4555,0) rectangle (axis cs:0.4645,104);
\draw[draw=none,fill=color0] (axis cs:0.4655,0) rectangle (axis cs:0.4745,123);
\draw[draw=none,fill=color0] (axis cs:0.4755,0) rectangle (axis cs:0.4845,130);
\draw[draw=none,fill=color0] (axis cs:0.4855,0) rectangle (axis cs:0.4945,133);
\draw[draw=none,fill=color0] (axis cs:0.4955,0) rectangle (axis cs:0.5045,142);
\draw[draw=none,fill=color0] (axis cs:0.5055,0) rectangle (axis cs:0.5145,148);
\draw[draw=none,fill=color0] (axis cs:0.5155,0) rectangle (axis cs:0.5245,94);
\draw[draw=none,fill=color0] (axis cs:0.5255,0) rectangle (axis cs:0.5345,75);
\draw[draw=none,fill=color0] (axis cs:0.5355,0) rectangle (axis cs:0.5445,56);
\draw[draw=none,fill=color0] (axis cs:0.5455,0) rectangle (axis cs:0.5545,24);
\draw[draw=none,fill=color0] (axis cs:0.5555,0) rectangle (axis cs:0.5645,11);
\draw[draw=none,fill=color0] (axis cs:0.5655,0) rectangle (axis cs:0.5745,1);
\draw[draw=none,fill=color0] (axis cs:0.5755,0) rectangle (axis cs:0.5845,3);
\draw[draw=none,fill=color0] (axis cs:0.5855,0) rectangle (axis cs:0.5945,1);
\draw[draw=none,fill=color0] (axis cs:0.5955,0) rectangle (axis cs:0.6045,0);
\draw[draw=none,fill=color0] (axis cs:0.6055,0) rectangle (axis cs:0.6145,0);
\draw[draw=none,fill=color0] (axis cs:0.6155,0) rectangle (axis cs:0.6245,0);
\draw[draw=none,fill=color0] (axis cs:0.6255,0) rectangle (axis cs:0.6345,0);
\draw[draw=none,fill=color0] (axis cs:0.6355,0) rectangle (axis cs:0.6445,0);
\draw[draw=none,fill=color0] (axis cs:0.6455,0) rectangle (axis cs:0.6545,0);
\draw[draw=none,fill=color0] (axis cs:0.6555,0) rectangle (axis cs:0.6645,0);
\draw[draw=none,fill=color0] (axis cs:0.6655,0) rectangle (axis cs:0.6745,0);
\draw[draw=none,fill=color0] (axis cs:0.6755,0) rectangle (axis cs:0.6845,0);
\draw[draw=none,fill=color0] (axis cs:0.6855,0) rectangle (axis cs:0.6945,0);
\draw[draw=none,fill=color0] (axis cs:0.6955,0) rectangle (axis cs:0.7045,0);
\draw[draw=none,fill=color0] (axis cs:0.7055,0) rectangle (axis cs:0.7145,0);
\draw[draw=none,fill=color0] (axis cs:0.7155,0) rectangle (axis cs:0.7245,0);
\draw[draw=none,fill=color0] (axis cs:0.7255,0) rectangle (axis cs:0.7345,0);
\draw[draw=none,fill=color0] (axis cs:0.7355,0) rectangle (axis cs:0.7445,0);
\draw[draw=none,fill=color0] (axis cs:0.7455,0) rectangle (axis cs:0.7545,0);
\draw[draw=none,fill=color0] (axis cs:0.7555,0) rectangle (axis cs:0.7645,0);
\draw[draw=none,fill=color0] (axis cs:0.7655,0) rectangle (axis cs:0.7745,0);
\draw[draw=none,fill=color0] (axis cs:0.7755,0) rectangle (axis cs:0.7845,0);
\draw[draw=none,fill=color0] (axis cs:0.7855,0) rectangle (axis cs:0.7945,0);
\draw[draw=none,fill=color0] (axis cs:0.7955,0) rectangle (axis cs:0.8045,0);
\draw[draw=none,fill=color0] (axis cs:0.8055,0) rectangle (axis cs:0.8145,0);
\draw[draw=none,fill=color0] (axis cs:0.8155,0) rectangle (axis cs:0.8245,0);
\draw[draw=none,fill=color0] (axis cs:0.8255,0) rectangle (axis cs:0.8345,0);
\draw[draw=none,fill=color0] (axis cs:0.8355,0) rectangle (axis cs:0.8445,0);
\draw[draw=none,fill=color0] (axis cs:0.8455,0) rectangle (axis cs:0.8545,0);
\draw[draw=none,fill=color0] (axis cs:0.8555,0) rectangle (axis cs:0.8645,0);
\draw[draw=none,fill=color0] (axis cs:0.8655,0) rectangle (axis cs:0.8745,0);
\draw[draw=none,fill=color0] (axis cs:0.8755,0) rectangle (axis cs:0.8845,0);
\draw[draw=none,fill=color0] (axis cs:0.8855,0) rectangle (axis cs:0.8945,0);
\draw[draw=none,fill=color0] (axis cs:0.8955,0) rectangle (axis cs:0.9045,0);
\draw[draw=none,fill=color0] (axis cs:0.9055,0) rectangle (axis cs:0.9145,0);
\draw[draw=none,fill=color0] (axis cs:0.9155,0) rectangle (axis cs:0.9245,0);
\draw[draw=none,fill=color0] (axis cs:0.9255,0) rectangle (axis cs:0.9345,0);
\draw[draw=none,fill=color0] (axis cs:0.9355,0) rectangle (axis cs:0.9445,0);
\draw[draw=none,fill=color0] (axis cs:0.9455,0) rectangle (axis cs:0.9545,0);
\draw[draw=none,fill=color0] (axis cs:0.9555,0) rectangle (axis cs:0.9645,0);
\draw[draw=none,fill=color0] (axis cs:0.9655,0) rectangle (axis cs:0.9745,0);
\draw[draw=none,fill=color0] (axis cs:0.9755,0) rectangle (axis cs:0.9845,0);
\draw[draw=none,fill=color0] (axis cs:0.9855,0) rectangle (axis cs:0.9945,0);
\end{axis}

\end{tikzpicture}
		\caption{Distribution of the 50\%-thresholds of all stable pairs in our data set. }
		\label{fig:distpairs}
	\end{minipage}
\end{figure*}
\subsubsection{Easy Robustness measures}
As for the matching setting, we want to find a measure that has a high correlation with the 50\%-stability-threshold. Again, we cannot hope for an exact measure due to related hardness result. Furthermore, the stable pair robustness does not only depend on the distance of some specific pairs to being blocking (as in the matching case), but it also depends on the stability of other pairs, making it more unlikely to find an easy measure with a strong correlation. 

For each woman $w'$ that $m$ prefers to $w$, we will add a penalty depending on the rank of $m$ in $w'$ (namely $n-\rk_{w'}(m)$). If $w'$ ranks $m$ high, $\{m,w'\}$ has a good chance to be blocking in a matching containing $\{m,w\}$.

\begin{definition}
	Let $\mathcal{I}=(U,W,\mathcal{P})$ be an SM instance with $|U|=|W|=n$ and let $\{m,w\}$ be a pair with $m \in U$ and $w \in W$. The \emph{blocking score} of $\{m,w\}$ is 
	\begin{align*}
	\mathcal{B}_\mathcal{I}(m,w)=&\sum_{i=1}^{\rk_m(w)-1}{n-\rk_{p_m(i)}(m)}\\
	&+\sum_{i=1}^{\rk_w(m)-1}{n-\rk_{p_w(i)}(w)}
	\end{align*}
\end{definition}

Note that if $m$ and $w$ are mutual top-choices, the blocking score is 0. As for the blocking pair proximity of a matching, small values correspond to a high robustness.

This measure does not achieve a correlation as strong as the blocking pair proximity in the matching setting, the Pearson Correlation Coefficient being around -0.54. However, it can still be a good way to compare the pair robustness of stable pairs of the same instance. \Cref{fig:corr-pairheur-fif} shows the correlation for the stable pairs of three exemplary instances. For two of them, we can observe a somewhat strong correlation, while for the Norm-Mal instance, the blocking score is not a good indicator of the 50\%-threshold.

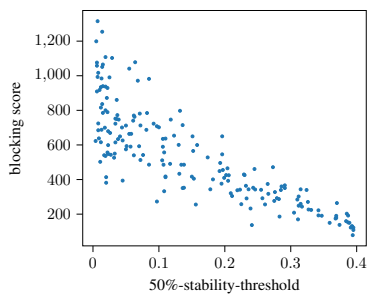
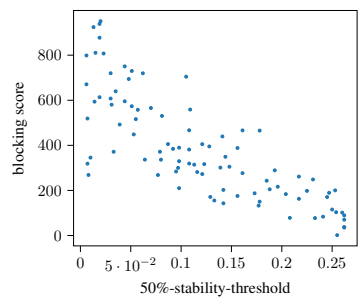
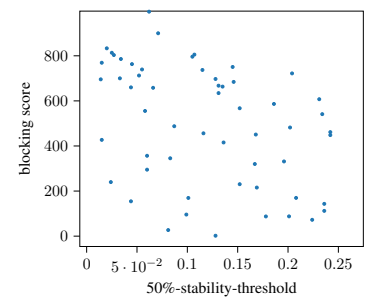
\begin{figure*}[t]
	\begin{minipage}[t]{0.32 \linewidth}
		\centering
		\subfigure[Mal-Asym 0.4]{
\begin{tikzpicture}[scale=0.55]

\definecolor{color0}{rgb}{0.12156862745098,0.466666666666667,0.705882352941177}

\begin{axis}[
tick align=outside,
tick pos=left,
x grid style={white!69.0196078431373!black},
xlabel={50\%-stability-threshold},
xmin=-0.01555, xmax=0.41455,
xtick style={color=black},
y grid style={white!69.0196078431373!black},
ylabel={blocking score},
ymin=16.632050096, ymax=1377.142800464,
ytick style={color=black}
]
\addplot [semithick, color0, mark=*, mark size=1, mark options={solid}, only marks]
table {%
0.034 614.73198976
0.368 174.55489776
0.138 485.81868816
0.106 635.06624112
0.0970000000000001 707.9492304
0.061 740.32966528
0.182 400.8170776
0.0720000000000001 512.90399024
0.153 405.04439312
0.23 425.69286576
0.0760000000000001 541.86968592
0.022 929.25964816
0.0850000000000001 982.022488
0.225 257.56962544
0.206 424.78126912
0.022 789.58360448
0.239 232.57744016
0.02 381.47043648
0.387 191.75055376
0.202 425.87900368
0.291 366.43864768
0.0920000000000001 723.00911312
0.0720000000000001 712.71471424
0.11 413.56300512
0.019 536.42044
0.023 547.8303272
0.025 990.5898264
0.315 343.47162544
0.393 130.47475568
0.317 243.44151984
0.178 343.50447984
0.007 1315.30140272
0.312 248.77106016
0.136 350.68052672
0.064 1078.07882848
0.245 460.7911328
0.369 190.36104976
0.132 797.43509008
0.266 317.41237968
0.279 292.31172096
0.037 772.49756048
0.015 1064.85882496
0.106 589.55151248
0.062 539.98366592
0.196 545.66271776
0.0820000000000001 784.52185168
0.265 374.44643632
0.291 350.82745904
0.219 338.06212544
0.342 222.12351376
0.15 416.87396672
0.019 800.14750896
0.394 78.47344784
0.013 984.30643088
0.158 599.56145488
0.041 649.26691696
0.331 224.31115344
0.014 1051.05059984
0.05 712.03118064
0.232 363.89970096
0.05 574.69600832
0.174 527.9587896
0.312 302.0550112
0.11 636.81764592
0.193 495.87477776
0.275 276.41063408
0.276 327.49154352
0.0810000000000001 687.29822144
0.359 149.67644368
0.393 128.1889264
0.105 511.0525952
0.123 653.81160768
0.282 287.29369968
0.021 871.7191312
0.039 746.48492816
0.389 149.76771568
0.034 722.9535104
0.31 284.2946376
0.285 339.35837952
0.038 633.66731104
0.013 687.77032224
0.385 202.60971632
0.023 679.33416992
0.044 700.8759768
0.029 1102.76466736
0.014 1254.41162864
0.055 1041.08042608
0.327 226.76374992
0.156 255.50593776
0.347 190.75885984
0.017 540.43946848
0.131 433.304572
0.151 649.634336
0.241 136.72444672
0.283 185.88555584
0.033 549.83124048
0.071 781.21435984
0.152 565.23570256
0.243 352.5965288
0.211 304.68960976
0.342 192.67144864
0.019 1107.82880176
0.058 664.25773088
0.311 169.7320624
0.045 393.85873152
0.02 413.77695776
0.13 600.3682296
0.395 121.9737712
0.199 409.75310144
0.006 1076.09611216
0.009 637.96518656
0.236 343.633664
0.036 749.88727344
0.273 471.37789648
0.387 152.22776368
0.008 684.90425968
0.197 455.28388944
0.004 622.72448032
0.015 939.60170176
0.205 392.79334432
0.324 339.73477472
0.036 860.27600448
0.018 699.72072736
0.062 769.99312064
0.019 937.2761784
0.01 917.7272184
0.134 485.61752016
0.108 332.35501568
0.233 290.93926176
0.118 746.53331152
0.006 909.15787008
0.223 341.56830304
0.138 352.63329728
0.027 543.20394688
0.107 485.47301696
0.032 526.27520064
0.023 598.80004256
0.117 490.79512416
0.07 592.47138032
0.229 429.470152
0.326 270.67150176
0.011 502.4346392
0.286 360.25737216
0.136 714.60537808
0.255 340.74235088
0.055 663.83158576
0.196 649.79118384
0.015 785.29251328
0.006 1057.86074512
0.051 593.36883392
0.057 590.30251184
0.195 376.05444768
0.192 448.91804416
0.063 762.62003488
0.395 108.37338112
0.246 342.5146136
0.261 256.08459856
0.374 137.9674272
0.315 259.41734368
0.108 556.77626896
0.026 669.36386032
0.007 993.08271728
0.005 1198.95868112
0.305 208.53380832
0.391 122.50350416
0.0970000000000001 272.9051304
0.014 836.19672944
0.008 1016.85190624
0.068 971.3506888
0.045 625.9612584
0.0850000000000001 486.09858896
0.204 464.95353088
0.1 701.84059648
0.012 816.03035728
0.209 320.84318768
0.11 417.82045792
0.008 724.23599056
0.012 934.23401648
0.384 159.53192432
0.369 178.40370848
0.022 556.34389776
0.012 615.61966848
0.257 291.82798336
0.37 264.09144208
0.034 752.0587784
};
\end{axis}

\end{tikzpicture}}
	\end{minipage}
	\hfill
	\begin{minipage}[t]{0.32 \linewidth}
		\centering
			\subfigure[2g-IC 0.5]{
\begin{tikzpicture}[scale=0.55]

\definecolor{color0}{rgb}{0.12156862745098,0.466666666666667,0.705882352941177}

\begin{axis}[
tick align=outside,
tick pos=left,
x grid style={white!69.0196078431373!black},
xlabel={50\%-stability-threshold},
xmin=-0.00680000000000001, xmax=0.2748,
xtick style={color=black},
y grid style={white!69.0196078431373!black},
ylabel={blocking score},
ymin=-45.890201976, ymax=998.183314136,
ytick style={color=black}
]
\addplot [semithick, color0, mark=*, mark size=1, mark options={solid}, only marks]
table {%
0.245 170.63134624
0.051 573.48725216
0.02 950.72542704
0.044 750.32470544
0.26 101.67097072
0.231 248.63252448
0.0980000000000001 210.09108496
0.108 381.4277712
0.0800000000000001 336.51068336
0.064 336.73419712
0.062 719.65135296
0.035 639.71555728
0.156 175.5707096
0.008 268.65112592
0.007 318.77143024
0.254 103.3939168
0.177 132.741924
0.113 314.4079864
0.121 272.59819584
0.142 142.92937264
0.178 150.29341072
0.053 448.47135664
0.0920000000000001 384.54709872
0.144 349.41187904
0.193 289.236596
0.0870000000000001 405.55051888
0.185 243.08953184
0.139 301.2526368
0.031 580.3878176
0.116 281.51150976
0.108 466.89117968
0.057 557.65257552
0.25 115.30745344
0.033 371.52996288
0.006 670.68821248
0.013 924.41597392
0.0770000000000001 268.43598912
0.241 83.71022496
0.161 276.7262944
0.007 518.98886288
0.039 492.44383888
0.055 516.30294048
0.161 466.3310616
0.023 807.42942576
0.014 593.80619472
0.0980000000000001 389.71332736
0.109 558.83097232
0.148 305.15620608
0.019 613.46786112
0.178 465.47582704
0.133 155.39019184
0.129 171.16622
0.253 200.31527616
0.019 877.50947088
0.006 798.63181232
0.044 595.3918344
0.048 694.27536864
0.128 395.19259632
0.247 189.53424576
0.0810000000000001 530.61056528
0.173 187.3462672
0.262 35.55051616
0.141 439.46136432
0.123 316.75734752
0.217 162.87184096
0.01 345.5765088
0.204 183.7265096
0.03 719.88932144
0.051 729.93219376
0.208 78.42790064
0.262 37.62879504
0.0790000000000001 371.70982784
0.156 388.26973904
0.03 607.773004
0.0950000000000001 283.97544944
0.233 77.44975664
0.262 89.36713968
0.019 938.65270544
0.0980000000000001 329.65222784
0.142 202.39157408
0.262 70.304996
0.225 197.95259536
0.015 810.51342128
0.105 704.62529136
0.217 261.52345872
0.255 1.56768512
0.0970000000000001 300.19467776
0.121 405.28904736
0.07 565.6893544
0.196 216.56694128
0.108 319.16731792
0.188 204.90650864
};
\end{axis}

\end{tikzpicture}}
	\end{minipage}
	\hfill
	\begin{minipage}[t]{0.32 \linewidth}
		\centering
			\subfigure[Norm-Mallows 0.4]{
\begin{tikzpicture}[scale=0.55]

\definecolor{color0}{rgb}{0.12156862745098,0.466666666666667,0.705882352941177}

\begin{axis}[
tick align=outside,
tick pos=left,
x grid style={white!69.0196078431373!black},
xlabel={50\%-stability-threshold},
xmin=-0.00680000000000001, xmax=0.2748,
xtick style={color=black},
y grid style={white!69.0196078431373!black},
ylabel={blocking score},
ymin=-45.890201976, ymax=998.183314136,
ytick style={color=black}
]
\addplot [semithick, color0, mark=*, mark size=1, mark options={solid}, only marks]
table {%
0.06 294.95398176
0.152 230.45816176
0.128 1.87791696
0.0990000000000001 95.75225328
0.101 169.47190208
0.0830000000000001 345.41356192
0.06 356.54866752
0.044 154.5122616
0.208 169.70686096
0.178 87.43261312
0.201 87.74619856
0.055 739.35014304
0.116 456.18927936
0.196 331.380592
0.024 239.78629024
0.236 143.3061432
0.052 712.50929104
0.0810000000000001 26.99220384
0.224 72.51269952
0.136 415.54684944
0.045 763.11515888
0.062 995.81292464
0.033 700.01648944
0.169 215.54963872
0.234 541.17478768
0.071 899.91791472
0.202 481.95391392
0.066 658.07401136
0.167 320.02693648
0.236 112.1629368
0.204 721.9199752
0.058 555.35752352
0.231 607.42549936
0.168 450.708532
0.145 750.5198632
0.105 795.99402096
0.135 663.35883184
0.03 1003.5298384
0.013 1130.78535024
0.152 567.3302856
0.044 659.92898288
0.034 785.81618752
0.186 586.48657824
0.131 634.55305312
0.146 684.22973888
0.128 697.16450752
0.242 461.51221216
0.0870000000000001 487.57372896
0.115 737.19328928
0.131 667.47261792
0.014 695.58147696
0.025 813.4428672
0.015 769.20202416
0.242 448.22485872
0.02 833.18217104
0.015 427.1077872
0.027 803.33322608
0.107 805.27540352
};
\end{axis}

\end{tikzpicture}}
	\end{minipage}
	\caption{Correlation between the 50\%-threshold and the blocking score for all stable pairs of three instances. The cultures from which the instances have been sampled can be found in the caption.}
	\label{fig:corr-pairheur-fif}
\end{figure*}

Based on these stable pairs, for each instance, we examine the following values:
\begin{itemize}
	\item the \textbf{most robust stable pair} is the stable pair with the highest 50\%-stability threshold.
	\item the \textbf{least robust stable pair} is the stable pair with the lowest 50\%-stability threshold.
	\item the \textbf{average stable pair robustness} is the average over the 50\%-stability thresholds of all stable pairs.
	\item the \textbf{robustness variance} is the variance over the 50\%-stability thresholds of all stable pairs.
\end{itemize}
In the following subsections, we will closer examine these indicators for the instances of our data set.
In \Cref{fig:devpairs}, we chose six instances that are representative for the data set regarding the robustness of their stable pairs. For each instance, we depict the most and least robust stable pair as well as some interesting stable pairs in between. We will use these plots to draw attention to some interesting observations regarding our pair robustness indicators. 

\begin{figure*}[t]
	\begin{minipage}[t]{0.3\linewidth}
		\centering
		\subfigure[Mal-Asym 0.6]{
\begin{tikzpicture}[scale=0.6]

\definecolor{color0}{rgb}{0.12156862745098,0.466666666666667,0.705882352941177}
\definecolor{color1}{rgb}{1,0.498039215686275,0.0549019607843137}
\definecolor{color2}{rgb}{0.172549019607843,0.627450980392157,0.172549019607843}

\begin{axis}[
tick align=outside,
tick pos=left,
x grid style={white!69.0196078431373!black},
xlabel={norm-$\phi$},
xmin=-0.04925, xmax=1.03425,
xtick style={color=black},
y grid style={white!69.0196078431373!black},
ylabel={stable pair probability},
ymin=-0.02879, ymax=1.04899,
ytick style={color=black}
]

\addplot [semithick, color0]
table {%
0 1
0.005 1
0.01 1
0.015 1
0.02 1
0.025 1
0.03 0.9988
0.035 0.9986
0.04 0.9992
0.045 0.998
0.05 0.9982
0.055 0.9972
0.06 0.9962
0.065 0.9964
0.07 0.9964
0.075 0.995
0.08 0.993
0.085 0.991
0.09 0.9936
0.095 0.992
0.1 0.9896
0.105 0.992
0.11 0.9882
0.115 0.9856
0.12 0.9832
0.125 0.9812
0.13 0.979
0.135 0.9796
0.14 0.9762
0.145 0.971
0.15 0.969
0.155 0.9666
0.16 0.9594
0.165 0.9524
0.17 0.951
0.175 0.949
0.18 0.9404
0.185 0.9384
0.19 0.9374
0.195 0.9302
0.2 0.9258 
0.205 0.9218
0.21 0.9124
0.215 0.9032
0.22 0.9028
0.225 0.8876
0.23 0.8794
0.235 0.872
0.24 0.8782
0.245 0.8674
0.25 0.855
0.255 0.8384
0.26 0.8414
0.265 0.8198
0.27 0.8164
0.275 0.8112
0.28 0.809
0.285 0.7896
0.29 0.7932
0.295 0.7718
0.3 0.7662
0.305 0.7652
0.31 0.7456
0.315 0.7272
0.32 0.7244
0.325 0.7162
0.33 0.699
0.335 0.697
0.34 0.6732
0.345 0.6802
0.35 0.6668
0.355 0.6658
0.36 0.645
0.365 0.639
0.37 0.6234
0.375 0.621
0.38 0.6058
0.385 0.5996
0.39 0.5934
0.395 0.5824
0.4 0.574
0.405 0.553
0.41 0.537
0.415 0.5464
0.42 0.53
0.425 0.5126
0.43 0.5058
0.435 0.5034
0.44 0.4968
0.445 0.4928
0.45 0.4882
0.455 0.4514
0.46 0.4418
0.465 0.4512
0.47 0.4388
0.475 0.4258
0.48 0.4274
0.485 0.414
0.49 0.4122
0.495 0.399
0.5 0.3902
0.505 0.3766
0.51 0.3766
0.515 0.3664
0.52 0.3448
0.525 0.3562
0.53 0.3464
0.535 0.3312
0.54 0.3294
0.545 0.331
0.55 0.3088
0.555 0.3086
0.56 0.3084
0.565 0.288
0.57 0.2892
0.575 0.2666
0.58 0.2782
0.585 0.2684
0.59 0.263
0.595 0.2642
0.6 0.2494
0.605 0.2424
0.625 0.2358
0.645 0.2024
0.665 0.184
0.685 0.192
0.705000000000001 0.1546
0.725000000000001 0.1402
0.745000000000001 0.1338
0.765000000000001 0.1158
0.785000000000001 0.1148
0.805000000000001 0.097
0.825000000000001 0.0926
0.845000000000001 0.0818
0.865000000000001 0.07
0.885000000000001 0.0722
0.905000000000001 0.0592
0.925000000000001 0.0536
0.945000000000001 0.0436
0.965000000000001 0.0422
0.985000000000001 0.0388
};
\addplot [semithick, color2]
table {%
0 1
0.005 0.84
0.01 0.7478
0.015 0.7028
0.02 0.6692
0.025 0.665
0.03 0.6478
0.035 0.6298
0.04 0.6382
0.045 0.6286
0.05 0.621
0.055 0.6128
0.06 0.6146
0.065 0.614
0.07 0.6068
0.075 0.5974
0.08 0.5974
0.085 0.6038
0.09 0.5854
0.095 0.5756
0.1 0.5902
0.105 0.5896
0.11 0.5826
0.115 0.5674
0.12 0.5724
0.125 0.5702
0.13 0.5696
0.135 0.5554
0.14 0.562
0.145 0.5618
0.15 0.5562
0.155 0.5586
0.16 0.5516
0.165 0.5552
0.17 0.5472
0.175 0.546
0.18 0.5368
0.185 0.5388
0.19 0.5284
0.195 0.5288
0.2 0.5226 
0.205 0.507
0.21 0.5144
0.215 0.5044
0.22 0.508
0.225 0.502
0.23 0.4986
0.235 0.4904
0.24 0.5018
0.245 0.4908
0.25 0.4754
0.255 0.4806
0.26 0.4808
0.265 0.4726
0.27 0.47
0.275 0.4462
0.28 0.4568
0.285 0.4492
0.29 0.4536
0.295 0.452
0.3 0.4342
0.305 0.4274
0.31 0.4342
0.315 0.4126
0.32 0.4204
0.325 0.4038
0.33 0.3886
0.335 0.4074
0.34 0.4008
0.345 0.3942
0.35 0.3928
0.355 0.3792
0.36 0.3702
0.365 0.381
0.37 0.3594
0.375 0.3666
0.38 0.3702
0.385 0.3558
0.39 0.3374
0.395 0.3382
0.4 0.3418
0.405 0.3384
0.41 0.336
0.415 0.3258
0.42 0.3228
0.425 0.3132
0.43 0.3074
0.435 0.3
0.44 0.29
0.445 0.2974
0.45 0.291
0.455 0.2966
0.46 0.281
0.465 0.2872
0.47 0.2796
0.475 0.2624
0.48 0.255
0.485 0.2642
0.49 0.2596
0.495 0.2526
0.5 0.2568
0.505 0.2482
0.51 0.2356
0.515 0.2428
0.52 0.2328
0.525 0.2458
0.53 0.2246
0.535 0.225
0.54 0.2214
0.545 0.2258
0.55 0.2216
0.555 0.2108
0.56 0.2056
0.565 0.2072
0.57 0.2074
0.575 0.1888
0.58 0.1918
0.585 0.188
0.59 0.1874
0.595 0.1864
0.6 0.1882
0.605 0.1706
0.625 0.1548
0.645 0.1484
0.665 0.1538
0.685 0.135
0.705000000000001 0.127
0.725000000000001 0.1112
0.745000000000001 0.102
0.765000000000001 0.096
0.785000000000001 0.1024
0.805000000000001 0.0816
0.825000000000001 0.0842
0.845000000000001 0.0666
0.865000000000001 0.0668
0.885000000000001 0.0562
0.905000000000001 0.06
0.925000000000001 0.055
0.945000000000001 0.0464
0.965000000000001 0.0458
0.985000000000001 0.039
};
\addplot [semithick, color1]
table {%
0 1
0.005 0.8432
0.01 0.7294
0.015 0.6202
0.02 0.5328
0.025 0.4678
0.03 0.42
0.035 0.3696
0.04 0.328
0.045 0.3006
0.05 0.2694
0.055 0.2362
0.06 0.222
0.065 0.1906
0.07 0.1826
0.075 0.1558
0.08 0.1516
0.085 0.1326
0.09 0.1236
0.095 0.112
0.1 0.0978
0.105 0.1034
0.11 0.09
0.115 0.0886
0.12 0.0762
0.125 0.0748
0.13 0.0714
0.135 0.0672
0.14 0.0622
0.145 0.0652
0.15 0.0686
0.155 0.058
0.16 0.0524
0.165 0.0464
0.17 0.0486
0.175 0.0502
0.18 0.0432
0.185 0.0454
0.19 0.0358
0.195 0.0442
0.2 0.0412
0.205 0.0394
0.21 0.0366
0.215 0.0404
0.22 0.0346
0.225 0.037
0.23 0.0336
0.235 0.0352
0.24 0.0358
0.245 0.031
0.25 0.033
0.255 0.0314
0.26 0.0284
0.265 0.0282
0.27 0.0306
0.275 0.028
0.28 0.0284
0.285 0.0264
0.29 0.0324
0.295 0.024
0.3 0.0304
0.305 0.0306
0.31 0.0258
0.315 0.027
0.32 0.0244
0.325 0.0214
0.33 0.0254
0.335 0.027
0.34 0.0268
0.345 0.0252
0.35 0.0222
0.355 0.025
0.36 0.0278
0.365 0.0238
0.37 0.0226
0.375 0.0266
0.38 0.0282
0.385 0.0262
0.39 0.0248
0.395 0.0256
0.4 0.0242
0.405 0.0254
0.41 0.025
0.415 0.023
0.42 0.0234
0.425 0.029
0.43 0.0252
0.435 0.0252
0.44 0.0288
0.445 0.0248
0.45 0.0202
0.455 0.025
0.46 0.0288
0.465 0.027
0.47 0.0226
0.475 0.0222
0.48 0.0248
0.485 0.0234
0.49 0.0256
0.495 0.027
0.5 0.028
0.505 0.0306
0.51 0.0274
0.515 0.0294
0.52 0.0252
0.525 0.0246
0.53 0.0322
0.535 0.0232
0.54 0.0246
0.545 0.0266
0.55 0.0214
0.555 0.0286
0.56 0.0286
0.565 0.0306
0.57 0.0274
0.575 0.0294
0.58 0.0322
0.585 0.0246
0.59 0.0278
0.595 0.0322
0.6 0.0264
0.605 0.0252
0.625 0.0298
0.645 0.0328
0.665 0.034
0.685 0.0334
0.705000000000001 0.0334
0.725000000000001 0.0394
0.745000000000001 0.0344
0.765000000000001 0.0314
0.785000000000001 0.037
0.805000000000001 0.035
0.825000000000001 0.0332
0.845000000000001 0.037
0.865000000000001 0.0318
0.885000000000001 0.0384
0.905000000000001 0.0374
0.925000000000001 0.0342
0.945000000000001 0.0346
0.965000000000001 0.0428
0.985000000000001 0.0382
};

\addlegendentry{A (78)}
\addlegendentry{B (249)}
\addlegendentry{C (1189)}
\end{axis}

\end{tikzpicture}}
		\label{fig:dev11}
	\end{minipage}
	\hfill
	\begin{minipage}[t]{0.3\linewidth}
		\centering
		\subfigure[Fame-Euc 0.4]{
\begin{tikzpicture}[scale=0.6]

\definecolor{color0}{rgb}{0.12156862745098,0.466666666666667,0.705882352941177}
\definecolor{color1}{rgb}{1,0.498039215686275,0.0549019607843137}
\definecolor{color2}{rgb}{0.172549019607843,0.627450980392157,0.172549019607843}

\begin{axis}[
tick align=outside,
tick pos=left,
x grid style={white!69.0196078431373!black},
xlabel={norm-$\phi$},
xmin=-0.04925, xmax=1.03425,
xtick style={color=black},
y grid style={white!69.0196078431373!black},
ylabel={stable pair probability},
ymin=-0.0332, ymax=1.0492,
ytick style={color=black}
]

\addplot [semithick, color0]
table {%
0 1
0.005 0.9862
0.01 0.9588
0.015 0.929
0.02 0.8976
0.025 0.8658
0.03 0.8298
0.035 0.8188
0.04 0.7974
0.045 0.7646
0.05 0.7554
0.055 0.7364
0.06 0.7282
0.065 0.7196
0.07 0.6936
0.075 0.6914
0.08 0.6814
0.085 0.6808
0.09 0.6648
0.095 0.642
0.1 0.648
0.105 0.6464
0.11 0.6482
0.115 0.6434
0.12 0.6352
0.125 0.6244
0.13 0.6214
0.135 0.6098
0.14 0.6052
0.145 0.5958
0.15 0.5916
0.155 0.5864
0.16 0.5884
0.165 0.5808
0.17 0.5896
0.175 0.5658
0.18 0.5796
0.185 0.5592
0.19 0.5618
0.195 0.5466
0.2 0.5398
0.205 0.539
0.21 0.5448
0.215 0.5292
0.22 0.513
0.225 0.5194
0.23 0.508
0.235 0.5214
0.24 0.492
0.245 0.5058
0.25 0.5002
0.255 0.494
0.26 0.488
0.265 0.4918
0.27 0.4756
0.275 0.467
0.28 0.4648
0.285 0.4614
0.29 0.4518
0.295 0.445
0.3 0.4334
0.305 0.4292
0.31 0.4424
0.315 0.4326
0.32 0.4294
0.325 0.4112
0.33 0.407
0.335 0.4148
0.34 0.4016
0.345 0.4092
0.35 0.3962
0.355 0.3924
0.36 0.3908
0.365 0.3876
0.37 0.3774
0.375 0.3832
0.38 0.364
0.385 0.3776
0.39 0.372
0.395 0.361
0.4 0.3578
0.405 0.3542
0.41 0.3528
0.415 0.3464
0.42 0.3484
0.425 0.3398
0.43 0.3512
0.435 0.3286
0.44 0.3276
0.445 0.3288
0.45 0.329
0.455 0.3296
0.46 0.3184
0.465 0.3258
0.47 0.3144
0.475 0.3256
0.48 0.3192
0.485 0.2976
0.49 0.3088
0.495 0.312
0.5 0.2954
0.505 0.2962
0.51 0.2884
0.515 0.2934
0.52 0.2912
0.525 0.283
0.53 0.2824
0.535 0.266
0.54 0.276
0.545 0.2582
0.55 0.2614
0.555 0.268
0.56 0.2644
0.565 0.2626
0.57 0.2608
0.575 0.263
0.58 0.2522
0.585 0.248
0.605 0.2298
0.625 0.227
0.645 0.2122
0.665 0.2092
0.685 0.1976
0.705000000000001 0.1828
0.725000000000001 0.1822
0.745000000000001 0.1658
0.765000000000001 0.153
0.785000000000001 0.134
0.805000000000001 0.1368
0.825000000000001 0.1136
0.845000000000001 0.1044
0.865000000000001 0.0996
0.885000000000001 0.0818
0.905000000000001 0.0782
0.925000000000001 0.067
0.945000000000001 0.0588
0.965000000000001 0.0504
0.985000000000001 0.0478
};
\addplot [semithick, color1]
table {%
0 1
0.005 0.9946
0.01 0.9814
0.015 0.968
0.02 0.9576
0.025 0.9346
0.03 0.9214
0.035 0.9106
0.04 0.8992
0.045 0.8786
0.05 0.8696
0.055 0.8556
0.06 0.849
0.065 0.829
0.07 0.8118
0.075 0.7898
0.08 0.7784
0.085 0.7742
0.09 0.7512
0.095 0.7344
0.1 0.7224
0.105 0.7022
0.11 0.6828
0.115 0.6672
0.12 0.6558
0.125 0.6446
0.13 0.6256
0.135 0.6154
0.14 0.5898
0.145 0.572
0.15 0.571
0.155 0.5486
0.16 0.5376
0.165 0.523
0.17 0.517
0.175 0.5152
0.18 0.4852
0.185 0.469
0.19 0.4668
0.195 0.4528
0.2 0.439 
0.205 0.4386
0.21 0.4296
0.215 0.413
0.22 0.399
0.225 0.3958
0.23 0.3886
0.235 0.3858
0.24 0.3688
0.245 0.3744
0.25 0.3506
0.255 0.3424
0.26 0.3466
0.265 0.3412
0.27 0.3198
0.275 0.3036
0.28 0.3186
0.285 0.3078
0.29 0.2986
0.295 0.297
0.3 0.2982
0.305 0.2796
0.31 0.2786
0.315 0.2708
0.32 0.2732
0.325 0.2682
0.33 0.259
0.335 0.2472
0.34 0.2482
0.345 0.2366
0.35 0.2386
0.355 0.2344
0.36 0.2266
0.365 0.2278
0.37 0.2288
0.375 0.2226
0.38 0.2166
0.385 0.2128
0.39 0.2022
0.395 0.201
0.4 0.1908
0.405 0.1898
0.41 0.2004
0.415 0.1842
0.42 0.1844
0.425 0.1816
0.43 0.1794
0.435 0.178
0.44 0.1796
0.445 0.1664
0.45 0.1566
0.455 0.1742
0.46 0.1654
0.465 0.1634
0.47 0.1624
0.475 0.1592
0.48 0.1642
0.485 0.1628
0.49 0.1524
0.495 0.162
0.5 0.1582
0.505 0.146
0.51 0.1492
0.515 0.1432
0.52 0.1472
0.525 0.14
0.53 0.1404
0.535 0.1382
0.54 0.138
0.545 0.142
0.55 0.139
0.555 0.1212
0.56 0.1228
0.565 0.1324
0.57 0.1326
0.575 0.1318
0.58 0.1282
0.585 0.12
0.605 0.119
0.625 0.1068
0.645 0.1106
0.665 0.0976
0.685 0.103
0.705000000000001 0.099
0.725000000000001 0.0992
0.745000000000001 0.0876
0.765000000000001 0.0852
0.785000000000001 0.0756
0.805000000000001 0.0726
0.825000000000001 0.0654
0.845000000000001 0.061
0.865000000000001 0.0646
0.885000000000001 0.0562
0.905000000000001 0.0516
0.925000000000001 0.0518
0.945000000000001 0.0474
0.965000000000001 0.0388
0.985000000000001 0.041
};
\addplot [semithick, color2]
table {%
0 1
0.005 0.8638
0.01 0.7292
0.015 0.59
0.02 0.4946
0.025 0.4182
0.03 0.3394
0.035 0.2838
0.04 0.2508
0.045 0.2206
0.05 0.193
0.055 0.1842
0.06 0.1814
0.065 0.1628
0.07 0.1498
0.075 0.1456
0.08 0.1448
0.085 0.1414
0.09 0.1218
0.095 0.1258
0.1 0.111
0.105 0.116
0.11 0.1042
0.115 0.1018
0.12 0.1044
0.125 0.0992
0.13 0.1048
0.135 0.1006
0.14 0.0844
0.145 0.09
0.15 0.0922
0.155 0.086
0.16 0.0804
0.165 0.075
0.17 0.0744
0.175 0.0838
0.18 0.0714
0.185 0.0782
0.19 0.0742
0.195 0.0748
0.2 0.0656
0.205 0.0672
0.21 0.0728
0.215 0.0584
0.22 0.0664
0.225 0.064
0.23 0.0564
0.235 0.0626
0.24 0.0628
0.245 0.056
0.25 0.0592
0.255 0.0606
0.26 0.0568
0.265 0.0618
0.27 0.0608
0.275 0.0598
0.28 0.0606
0.285 0.0556
0.29 0.0486
0.295 0.054
0.3 0.0602
0.305 0.055
0.31 0.0518
0.315 0.0518
0.32 0.053
0.325 0.0528
0.33 0.0552
0.335 0.0524
0.34 0.0506
0.345 0.0466
0.35 0.0532
0.355 0.0512
0.36 0.0508
0.365 0.0536
0.37 0.0494
0.375 0.0502
0.38 0.0524
0.385 0.0538
0.39 0.0446
0.395 0.0496
0.4 0.0484
0.405 0.0474
0.41 0.0484
0.415 0.0448
0.42 0.047
0.425 0.0456
0.43 0.049
0.435 0.0504
0.44 0.0414
0.445 0.0456
0.45 0.0472
0.455 0.0444
0.46 0.048
0.465 0.051
0.47 0.0404
0.475 0.0478
0.48 0.044
0.485 0.0406
0.49 0.0456
0.495 0.0418
0.5 0.0422
0.505 0.0448
0.51 0.041
0.515 0.0442
0.52 0.0444
0.525 0.0392
0.53 0.0418
0.535 0.037
0.54 0.0442
0.545 0.045
0.55 0.041
0.555 0.04
0.56 0.042
0.565 0.0436
0.57 0.043
0.575 0.0412
0.58 0.036
0.585 0.0412
0.605 0.0388
0.625 0.043
0.645 0.0396
0.665 0.0336
0.685 0.0394
0.705000000000001 0.0352
0.725000000000001 0.0348
0.745000000000001 0.0384
0.765000000000001 0.034
0.785000000000001 0.0418
0.805000000000001 0.0372
0.825000000000001 0.0434
0.845000000000001 0.0382
0.865000000000001 0.0364
0.885000000000001 0.0354
0.905000000000001 0.0388
0.925000000000001 0.0404
0.945000000000001 0.039
0.965000000000001 0.0368
0.985000000000001 0.0394
};

\addlegendentry{A (121)}
\addlegendentry{B (448)}
\addlegendentry{C (1035)}
\end{axis}

\end{tikzpicture}}
		\label{fig:dev14}
	\end{minipage}
	\hfill
	\begin{minipage}[t]{0.3\linewidth}
		\centering
		\subfigure[Rev-Euc 0.05 a]{
\begin{tikzpicture}[scale=0.6]

\definecolor{color0}{rgb}{0.12156862745098,0.466666666666667,0.705882352941177}
\definecolor{color1}{rgb}{1,0.498039215686275,0.0549019607843137}
\definecolor{color2}{rgb}{0.172549019607843,0.627450980392157,0.172549019607843}

\begin{axis}[
tick align=outside,
tick pos=left,
x grid style={white!69.0196078431373!black},
xlabel={norm-$\phi$},
xmin=-0.04975, xmax=1.04475,
xtick style={color=black},
y grid style={white!69.0196078431373!black},
ylabel={stable pair probability},
ymin=-0.02963, ymax=1.04903,
ytick style={color=black}
]

\addplot [semithick, color0]
table {%
0 1
0.005 0.9462
0.01 0.9006
0.015 0.861
0.02 0.819
0.025 0.7896
0.03 0.7616
0.035 0.732
0.04 0.7132
0.045 0.6962
0.05 0.6676
0.055 0.6436
0.06 0.6286
0.065 0.6216
0.07 0.5908
0.075 0.5798
0.08 0.5794
0.085 0.5426
0.09 0.5382
0.095 0.5084
0.1 0.5194
0.105 0.496
0.11 0.474
0.115 0.4622
0.12 0.4544
0.125 0.4284
0.13 0.4378
0.135 0.4218
0.14 0.4178
0.145 0.4074
0.15 0.3872
0.155 0.395
0.16 0.3752
0.165 0.3612
0.17 0.3624
0.175 0.36
0.18 0.3526
0.185 0.3334
0.19 0.3294
0.195 0.3284
0.2 0.3204 
0.205 0.3044
0.21 0.2878
0.215 0.311
0.22 0.2944
0.225 0.287
0.23 0.2818
0.235 0.27
0.24 0.2654
0.245 0.2692
0.25 0.2548
0.255 0.2482
0.275 0.2202
0.295 0.2196
0.315 0.1986
0.335 0.1898
0.355 0.1774
0.375 0.1606
0.395 0.1638
0.415 0.1424
0.435 0.1308
0.455 0.1292
0.475 0.1204
0.495 0.1184
0.515 0.1114
0.535 0.107
0.555 0.101
0.575 0.1044
0.595 0.087
0.615 0.0834
0.635 0.0852
0.655 0.076
0.675 0.08
0.695000000000001 0.0748
0.715000000000001 0.0736
0.735000000000001 0.0674
0.755000000000001 0.0614
0.775000000000001 0.062
0.795000000000001 0.0588
0.815000000000001 0.058
0.835000000000001 0.053
0.855000000000001 0.052
0.875000000000001 0.048
0.895000000000001 0.0492
0.915000000000001 0.0468
0.935000000000001 0.0438
0.955000000000001 0.0392
0.975000000000001 0.0426
0.995000000000001 0.032
};

\addplot [semithick, color1]
table {%
0 1
0.005 0.7962
0.01 0.6748
0.015 0.5798
0.02 0.5212
0.025 0.4762
0.03 0.4376
0.035 0.4182
0.04 0.3994
0.045 0.3744
0.05 0.3582
0.055 0.352
0.06 0.345
0.065 0.328
0.07 0.3278
0.075 0.3174
0.08 0.3148
0.085 0.2988
0.09 0.2926
0.095 0.2858
0.1 0.2676
0.105 0.2702
0.11 0.2702
0.115 0.269
0.12 0.2444
0.125 0.2476
0.13 0.2324
0.135 0.2322
0.14 0.242
0.145 0.2342
0.15 0.2178
0.155 0.206
0.16 0.2106
0.165 0.2096
0.17 0.2006
0.175 0.1982
0.18 0.1836
0.185 0.2046
0.19 0.1836
0.195 0.1734
0.2 0.174 
0.205 0.1578
0.21 0.1606
0.215 0.1588
0.22 0.1564
0.225 0.151
0.23 0.1476
0.235 0.1398
0.24 0.139
0.245 0.1366
0.25 0.134
0.255 0.1382
0.275 0.1228
0.295 0.11
0.315 0.1018
0.335 0.106
0.355 0.0916
0.375 0.084
0.395 0.0884
0.415 0.0786
0.435 0.0752
0.455 0.0828
0.475 0.076
0.495 0.069
0.515 0.075
0.535 0.0704
0.555 0.0694
0.575 0.0688
0.595 0.0614
0.615 0.0706
0.635 0.0712
0.655 0.072
0.675 0.0628
0.695000000000001 0.0662
0.715000000000001 0.0622
0.735000000000001 0.0546
0.755000000000001 0.0484
0.775000000000001 0.0602
0.795000000000001 0.0518
0.815000000000001 0.0504
0.835000000000001 0.0562
0.855000000000001 0.0478
0.875000000000001 0.0468
0.895000000000001 0.0492
0.915000000000001 0.05
0.935000000000001 0.0412
0.955000000000001 0.0422
0.975000000000001 0.038
0.995000000000001 0.0334
};
\addplot [semithick, color2]
table {%
0 1
0.005 0.6674
0.01 0.4882
0.015 0.361
0.02 0.2762
0.025 0.2258
0.03 0.205
0.035 0.1824
0.04 0.1622
0.045 0.1296
0.05 0.1336
0.055 0.127
0.06 0.114
0.065 0.1086
0.07 0.0994
0.075 0.0894
0.08 0.0886
0.085 0.0854
0.09 0.082
0.095 0.08
0.1 0.0686
0.105 0.0628
0.11 0.0662
0.115 0.0658
0.12 0.064
0.125 0.0598
0.13 0.0548
0.135 0.06
0.14 0.0504
0.145 0.0574
0.15 0.0462
0.155 0.0538
0.16 0.0494
0.165 0.0418
0.17 0.0498
0.175 0.0428
0.18 0.041
0.185 0.0428
0.19 0.0406
0.195 0.0498
0.2 0.0446
0.205 0.0362
0.21 0.0362
0.215 0.0424
0.22 0.0328
0.225 0.0372
0.23 0.0392
0.235 0.0354
0.24 0.0366
0.245 0.0336
0.25 0.0362
0.255 0.031
0.275 0.035
0.295 0.0356
0.315 0.0322
0.335 0.0354
0.355 0.0366
0.375 0.0342
0.395 0.0386
0.415 0.0356
0.435 0.0364
0.455 0.0384
0.475 0.0416
0.495 0.0414
0.515 0.0378
0.535 0.04
0.555 0.0362
0.575 0.0484
0.595 0.046
0.615 0.0478
0.635 0.0508
0.655 0.0488
0.675 0.0516
0.695000000000001 0.042
0.715000000000001 0.0484
0.735000000000001 0.0546
0.755000000000001 0.0484
0.775000000000001 0.0516
0.795000000000001 0.0478
0.815000000000001 0.0558
0.835000000000001 0.0486
0.855000000000001 0.0482
0.875000000000001 0.0516
0.895000000000001 0.0458
0.915000000000001 0.047
0.935000000000001 0.0454
0.955000000000001 0.0382
0.975000000000001 0.0392
0.995000000000001 0.0386
};
\addlegendentry{A (735)}
\addlegendentry{B (1186)}
\addlegendentry{C (1426)}
\end{axis}

\end{tikzpicture}}
		\label{fig:dev13}
	\end{minipage}
	\hfill
	\begin{minipage}[t]{0.3\linewidth}
		\centering
		\subfigure[Rev-Euc 0.05 b]{
\begin{tikzpicture}[scale=0.6]

\definecolor{color0}{rgb}{0.12156862745098,0.466666666666667,0.705882352941177}
\definecolor{color1}{rgb}{1,0.498039215686275,0.0549019607843137}
\definecolor{color2}{rgb}{0.172549019607843,0.627450980392157,0.172549019607843}

\begin{axis}[
tick align=outside,
tick pos=left,
x grid style={white!69.0196078431373!black},
xlabel={norm-$\phi$},
xmin=-0.0495, xmax=1.0395,
xtick style={color=black},
y grid style={white!69.0196078431373!black},
ylabel={stable pair probability},
ymin=-0.029, ymax=1.049,
ytick style={color=black}
]

\addplot [semithick, color0]
table {%
0 1
0.005 1
0.01 0.9998
0.015 0.9988
0.02 0.9968
0.025 0.9908
0.03 0.984
0.035 0.974
0.04 0.9672
0.045 0.96
0.05 0.952
0.055 0.947
0.06 0.9392
0.065 0.937
0.07 0.9216
0.075 0.923
0.08 0.9144
0.085 0.8996
0.09 0.891
0.095 0.884
0.1 0.885
0.105 0.8784
0.11 0.8652
0.115 0.866
0.12 0.8592
0.125 0.861
0.13 0.837
0.135 0.8432
0.14 0.847
0.145 0.8374
0.15 0.8176
0.155 0.8224
0.16 0.816
0.165 0.8124
0.17 0.7962
0.175 0.7892
0.18 0.7864
0.185 0.7674
0.19 0.7844
0.195 0.7602
0.2 0.7522 
0.205 0.741
0.21 0.7372
0.215 0.7262
0.22 0.714
0.225 0.7202
0.23 0.7008
0.235 0.68
0.24 0.6892
0.245 0.678
0.25 0.6774
0.255 0.6626
0.26 0.6614
0.265 0.6524
0.27 0.6292
0.275 0.6288
0.28 0.6244
0.285 0.6106
0.29 0.6108
0.295 0.5954
0.3 0.6008
0.305 0.5794
0.31 0.5642
0.315 0.5622
0.32 0.5704
0.325 0.562
0.33 0.542
0.335 0.5382
0.34 0.5392
0.345 0.5206
0.35 0.5224
0.355 0.5144
0.36 0.509
0.365 0.5086
0.37 0.491
0.375 0.4848
0.38 0.478
0.385 0.4648
0.39 0.4678
0.395 0.4686
0.4 0.4574
0.405 0.4372
0.41 0.4422
0.415 0.4356
0.42 0.4296
0.425 0.4322
0.43 0.4232
0.435 0.4334
0.44 0.415
0.445 0.4014
0.45 0.401
0.455 0.3904
0.46 0.38
0.465 0.368
0.47 0.3612
0.475 0.369
0.48 0.365
0.485 0.3688
0.49 0.356
0.495 0.3476
0.5 0.3512
0.505 0.3462
0.51 0.3232
0.515 0.3384
0.52 0.3244
0.525 0.3174
0.53 0.3064
0.535 0.3048
0.54 0.3166
0.545 0.3062
0.55 0.2884
0.555 0.2898
0.56 0.2832
0.565 0.283
0.57 0.2862
0.575 0.2762
0.58 0.275
0.585 0.262
0.59 0.2496
0.595 0.2514
0.6 0.2538
0.605 0.2518
0.61 0.2528
0.615 0.234
0.62 0.2098
0.625 0.2296
0.63 0.2246
0.65 0.2122
0.67 0.199
0.690000000000001 0.1788
0.710000000000001 0.1534
0.730000000000001 0.1458
0.750000000000001 0.1414
0.770000000000001 0.13
0.790000000000001 0.1066
0.810000000000001 0.0998
0.830000000000001 0.0918
0.850000000000001 0.0826
0.870000000000001 0.0762
0.890000000000001 0.0684
0.910000000000001 0.058
0.930000000000001 0.0514
0.950000000000001 0.0498
0.970000000000001 0.0454
0.990000000000001 0.04
};
\addplot [semithick, color1]
table {%
0 1
0.005 0.8746
0.01 0.7554
0.015 0.6654
0.02 0.5964
0.025 0.5286
0.03 0.4864
0.035 0.446
0.04 0.4322
0.045 0.4016
0.05 0.3966
0.055 0.3862
0.06 0.3802
0.065 0.3774
0.07 0.3758
0.075 0.3678
0.08 0.3528
0.085 0.3544
0.09 0.3452
0.095 0.3476
0.1 0.3484
0.105 0.349
0.11 0.3328
0.115 0.3156
0.12 0.3192
0.125 0.3142
0.13 0.3094
0.135 0.3108
0.14 0.3084
0.145 0.2894
0.15 0.292
0.155 0.2802
0.16 0.2914
0.165 0.2836
0.17 0.2738
0.175 0.255
0.18 0.2688
0.185 0.259
0.19 0.2538
0.195 0.245
0.2 0.2368
0.205 0.245
0.21 0.229
0.215 0.223
0.22 0.2264
0.225 0.2238
0.23 0.2166
0.235 0.2064
0.24 0.1938
0.245 0.1944
0.25 0.2
0.255 0.2006
0.26 0.1832
0.265 0.1782
0.27 0.1672
0.275 0.1702
0.28 0.1656
0.285 0.171
0.29 0.1714
0.295 0.1598
0.3 0.159
0.305 0.145
0.31 0.1484
0.315 0.1494
0.32 0.1342
0.325 0.1442
0.33 0.1356
0.335 0.1406
0.34 0.135
0.345 0.1394
0.35 0.1296
0.355 0.1216
0.36 0.1186
0.365 0.115
0.37 0.108
0.375 0.114
0.38 0.1132
0.385 0.1028
0.39 0.096
0.395 0.097
0.4 0.1114
0.405 0.099
0.41 0.1004
0.415 0.0982
0.42 0.0972
0.425 0.0948
0.43 0.0932
0.435 0.0922
0.44 0.0904
0.445 0.0956
0.45 0.0918
0.455 0.08
0.46 0.0924
0.465 0.086
0.47 0.0894
0.475 0.0824
0.48 0.0836
0.485 0.0836
0.49 0.0842
0.495 0.0756
0.5 0.0802
0.505 0.0806
0.51 0.0764
0.515 0.0746
0.52 0.0752
0.525 0.0746
0.53 0.0696
0.535 0.0774
0.54 0.075
0.545 0.0702
0.55 0.065
0.555 0.0698
0.56 0.0686
0.565 0.0666
0.57 0.0676
0.575 0.0684
0.58 0.0654
0.585 0.06
0.59 0.068
0.595 0.0592
0.6 0.0606
0.605 0.0742
0.61 0.0612
0.615 0.0606
0.62 0.072
0.625 0.063
0.63 0.0628
0.65 0.0574
0.67 0.057
0.690000000000001 0.0572
0.710000000000001 0.0548
0.730000000000001 0.0586
0.750000000000001 0.0564
0.770000000000001 0.05
0.790000000000001 0.053
0.810000000000001 0.0482
0.830000000000001 0.0556
0.850000000000001 0.0464
0.870000000000001 0.0454
0.890000000000001 0.047
0.910000000000001 0.0452
0.930000000000001 0.0442
0.950000000000001 0.0414
0.970000000000001 0.0362
0.990000000000001 0.0402
};
\addplot [semithick, color2]
table {%
0 1
0.005 0.778
0.01 0.585
0.015 0.46
0.02 0.3678
0.025 0.2932
0.03 0.242
0.035 0.191
0.04 0.1612
0.045 0.14
0.05 0.1316
0.055 0.1098
0.06 0.0992
0.065 0.0972
0.07 0.0828
0.075 0.0798
0.08 0.0702
0.085 0.0716
0.09 0.0622
0.095 0.061
0.1 0.0626
0.105 0.0584
0.11 0.0548
0.115 0.0572
0.12 0.052
0.125 0.0526
0.13 0.0524
0.135 0.0494
0.14 0.0474
0.145 0.055
0.15 0.048
0.155 0.047
0.16 0.0516
0.165 0.0444
0.17 0.046
0.175 0.0462
0.18 0.0472
0.185 0.0478
0.19 0.0462
0.195 0.047
0.2 0.0454
0.205 0.0426
0.21 0.0466
0.215 0.0474
0.22 0.0418
0.225 0.049
0.23 0.0458
0.235 0.042
0.24 0.0424
0.245 0.0414
0.25 0.0468
0.255 0.0438
0.26 0.0404
0.265 0.0448
0.27 0.049
0.275 0.0418
0.28 0.0452
0.285 0.046
0.29 0.0434
0.295 0.0434
0.3 0.0458
0.305 0.0452
0.31 0.0426
0.315 0.0426
0.32 0.0464
0.325 0.0416
0.33 0.0416
0.335 0.0508
0.34 0.0464
0.345 0.0464
0.35 0.045
0.355 0.0478
0.36 0.0424
0.365 0.0508
0.37 0.049
0.375 0.0474
0.38 0.0506
0.385 0.05
0.39 0.0486
0.395 0.0548
0.4 0.0526
0.405 0.0482
0.41 0.0454
0.415 0.0498
0.42 0.0472
0.425 0.048
0.43 0.0524
0.435 0.0506
0.44 0.0514
0.445 0.0506
0.45 0.0494
0.455 0.0494
0.46 0.0512
0.465 0.0508
0.47 0.0564
0.475 0.053
0.48 0.0514
0.485 0.0512
0.49 0.0482
0.495 0.055
0.5 0.0596
0.505 0.0572
0.51 0.0556
0.515 0.0496
0.52 0.0494
0.525 0.0576
0.53 0.0528
0.535 0.0574
0.54 0.0608
0.545 0.0516
0.55 0.0536
0.555 0.055
0.56 0.0574
0.565 0.0592
0.57 0.0538
0.575 0.0548
0.58 0.0574
0.585 0.0592
0.59 0.059
0.595 0.0524
0.6 0.0578
0.605 0.0586
0.61 0.0562
0.615 0.058
0.62 0.0512
0.625 0.0584
0.63 0.053
0.65 0.0588
0.67 0.0556
0.690000000000001 0.0628
0.710000000000001 0.0526
0.730000000000001 0.0552
0.750000000000001 0.056
0.770000000000001 0.0546
0.790000000000001 0.0536
0.810000000000001 0.0512
0.830000000000001 0.05
0.850000000000001 0.0536
0.870000000000001 0.0486
0.890000000000001 0.049
0.910000000000001 0.048
0.930000000000001 0.0428
0.950000000000001 0.0382
0.970000000000001 0.0428
0.990000000000001 0.0374
};
\addlegendentry{A (620)}
\addlegendentry{B (705)}
\addlegendentry{C (1630)}
\end{axis}

\end{tikzpicture}}
		\label{fig:dev10}
	\end{minipage}
	\hfill
	\begin{minipage}[t]{0.3\linewidth}
		\centering
		\subfigure[ROB]{
\begin{tikzpicture}[scale=0.6]

\definecolor{color0}{rgb}{0.12156862745098,0.466666666666667,0.705882352941177}
\definecolor{color1}{rgb}{1,0.498039215686275,0.0549019607843137}
\definecolor{color2}{rgb}{0.172549019607843,0.627450980392157,0.172549019607843}

\begin{axis}[
tick align=outside,
tick pos=left,
x grid style={white!69.0196078431373!black},
xlabel={norm-$\phi$},
xmin=-0.04925, xmax=1.03425,
xtick style={color=black},
y grid style={white!69.0196078431373!black},
ylabel={stable pair probability},
ymin=-0.01241, ymax=1.04821,
ytick style={color=black}
]

\addplot [semithick, color2]
table {%
0 1
0.005 1
0.01 1
0.015 1
0.02 1
0.025 1
0.03 1
0.035 1
0.04 1
0.045 1
0.05 1
0.055 1
0.06 1
0.065 1
0.07 1
0.075 1
0.08 1
0.085 1
0.09 1
0.095 1
0.1 1
0.105 1
0.11 1
0.115 1
0.12 1
0.125 0.9998
0.13 1
0.135 0.9996
0.14 0.9998
0.145 0.999
0.15 0.9984
0.155 0.998
0.16 0.9992
0.165 0.998
0.17 0.9976
0.175 0.9986
0.18 0.9974
0.185 0.9978
0.19 0.9982
0.195 0.995
0.2 0.993
0.205 0.9922
0.21 0.994
0.215 0.9914
0.22 0.992
0.225 0.991
0.23 0.9876
0.235 0.9864
0.24 0.9878
0.245 0.9878
0.25 0.9812
0.255 0.9816
0.26 0.9822
0.265 0.9806
0.27 0.9742
0.275 0.9738
0.28 0.967
0.285 0.962
0.29 0.9542
0.295 0.9584
0.3 0.9572
0.305 0.9444
0.31 0.94
0.315 0.9424
0.32 0.9368
0.325 0.9382
0.33 0.9372
0.335 0.9172
0.34 0.9078
0.345 0.9002
0.35 0.9006
0.355 0.8882
0.36 0.8892
0.365 0.8764
0.37 0.8766
0.375 0.8602
0.38 0.8396
0.385 0.8368
0.39 0.8292
0.395 0.816
0.4 0.806
0.405 0.7974
0.41 0.7908
0.415 0.7792
0.42 0.7744
0.425 0.7486
0.43 0.7464
0.435 0.7318
0.44 0.7164
0.445 0.708
0.45 0.6934
0.455 0.6822
0.46 0.673
0.465 0.6576
0.47 0.6546
0.475 0.6394
0.48 0.6188
0.485 0.6062
0.49 0.5972
0.495 0.5996
0.5 0.5774
0.505 0.579
0.51 0.5556
0.515 0.5404
0.52 0.5364
0.525 0.523
0.53 0.5144
0.535 0.4806
0.54 0.4756
0.545 0.475
0.55 0.461
0.555 0.4518
0.56 0.4366
0.565 0.4266
0.57 0.4198
0.575 0.418
0.58 0.402
0.585 0.4084
0.59 0.382
0.595 0.3688
0.6 0.3676
0.605 0.3596
0.61 0.3406
0.615 0.3282
0.62 0.315
0.625 0.3146
0.63 0.3068
0.635 0.3094
0.64 0.2942
0.645 0.2926
0.65 0.2762
0.655 0.2712
0.66 0.2652
0.665 0.2568
0.67 0.2576
0.675 0.248
0.68 0.2324
0.685 0.2232
0.705000000000001 0.2096
0.725000000000001 0.187
0.745000000000001 0.1716
0.765000000000001 0.1414
0.785000000000001 0.1274
0.805000000000001 0.1214
0.825000000000001 0.1022
0.845000000000001 0.0944
0.865000000000001 0.0878
0.885000000000001 0.0724
0.905000000000001 0.067
0.925000000000001 0.065
0.945000000000001 0.0538
0.965000000000001 0.0462
0.985000000000001 0.0416
};
\addplot [semithick, color1]
table {%
0 1
0.005 1
0.01 1
0.015 1
0.02 1
0.025 1
0.03 1
0.035 1
0.04 1
0.045 1
0.05 1
0.055 1
0.06 1
0.065 1
0.07 1
0.075 1
0.08 1
0.085 1
0.09 1
0.095 1
0.1 1
0.105 1
0.11 1
0.115 0.9998
0.12 1
0.125 0.9996
0.13 0.9998
0.135 0.9996
0.14 0.999
0.145 0.999
0.15 0.999
0.155 0.9988
0.16 0.9976
0.165 0.9988
0.17 0.9974
0.175 0.9972
0.18 0.9966
0.185 0.9952
0.19 0.9946
0.195 0.9938
0.2 0.9944
0.205 0.9902
0.21 0.9934
0.215 0.9888
0.22 0.9896
0.225 0.9894
0.23 0.9894
0.235 0.984
0.24 0.983
0.245 0.9844
0.25 0.9802
0.255 0.979
0.26 0.9742
0.265 0.9746
0.27 0.9686
0.275 0.969
0.28 0.9666
0.285 0.9626
0.29 0.9572
0.295 0.9514
0.3 0.9552
0.305 0.9504
0.31 0.9448
0.315 0.939
0.32 0.9276
0.325 0.9288
0.33 0.9218
0.335 0.9136
0.34 0.9022
0.345 0.8966
0.35 0.899
0.355 0.8842
0.36 0.8852
0.365 0.872
0.37 0.865
0.375 0.8454
0.38 0.8456
0.385 0.8356
0.39 0.8216
0.395 0.8082
0.4 0.8048
0.405 0.7862
0.41 0.7832
0.415 0.7798
0.42 0.7582
0.425 0.7456
0.43 0.7366
0.435 0.7292
0.44 0.712
0.445 0.7022
0.45 0.686
0.455 0.6736
0.46 0.6672
0.465 0.647
0.47 0.649
0.475 0.6316
0.48 0.6158
0.485 0.6086
0.49 0.588
0.495 0.5852
0.5 0.5738
0.505 0.5548
0.51 0.5556
0.515 0.5378
0.52 0.5202
0.525 0.5096
0.53 0.4964
0.535 0.4996
0.54 0.4694
0.545 0.469
0.55 0.4542
0.555 0.448
0.56 0.4232
0.565 0.4332
0.57 0.4214
0.575 0.4092
0.58 0.3974
0.585 0.391
0.59 0.3718
0.595 0.3612
0.6 0.3464
0.605 0.3548
0.61 0.3556
0.615 0.3244
0.62 0.3234
0.625 0.3252
0.63 0.3134
0.635 0.292
0.64 0.2936
0.645 0.277
0.65 0.277
0.655 0.274
0.66 0.2672
0.665 0.258
0.67 0.2396
0.675 0.2386
0.68 0.2268
0.685 0.2244
0.705000000000001 0.2
0.725000000000001 0.1848
0.745000000000001 0.165
0.765000000000001 0.1372
0.785000000000001 0.1306
0.805000000000001 0.118
0.825000000000001 0.1016
0.845000000000001 0.0902
0.865000000000001 0.0852
0.885000000000001 0.075
0.905000000000001 0.0652
0.925000000000001 0.0552
0.945000000000001 0.0496
0.965000000000001 0.046
0.985000000000001 0.04
};

\addplot [semithick, color0]
table {%
0 1
0.005 1
0.01 1
0.015 1
0.02 1
0.025 1
0.03 1
0.035 1
0.04 1
0.045 1
0.05 1
0.055 1
0.06 1
0.065 1
0.07 1
0.075 1
0.08 1
0.085 1
0.09 1
0.095 1
0.1 1
0.105 1
0.11 1
0.115 1
0.12 0.9998
0.125 0.9998
0.13 0.9994
0.135 0.9998
0.14 0.9986
0.145 0.9994
0.15 0.998
0.155 0.9976
0.16 0.9976
0.165 0.997
0.17 0.9968
0.175 0.9954
0.18 0.996
0.185 0.992
0.19 0.994
0.195 0.99
0.2 0.988
0.205 0.9894
0.21 0.992
0.215 0.9876
0.22 0.987
0.225 0.9862
0.23 0.9838
0.235 0.9824
0.24 0.9814
0.245 0.9772
0.25 0.9768
0.255 0.9734
0.26 0.9726
0.265 0.9678
0.27 0.9624
0.275 0.9614
0.28 0.957
0.285 0.9592
0.29 0.949
0.295 0.9434
0.3 0.9454
0.305 0.9384
0.31 0.9386
0.315 0.93
0.32 0.9252
0.325 0.921
0.33 0.921
0.335 0.916
0.34 0.9002
0.345 0.9022
0.35 0.8968
0.355 0.8784
0.36 0.8702
0.365 0.8672
0.37 0.8668
0.375 0.8512
0.38 0.8532
0.385 0.8264
0.39 0.8128
0.395 0.7978
0.4 0.7916
0.405 0.7856
0.41 0.7862
0.415 0.7534
0.42 0.753
0.425 0.7448
0.43 0.7498
0.435 0.7342
0.44 0.7108
0.445 0.692
0.45 0.6936
0.455 0.6668
0.46 0.6592
0.465 0.6498
0.47 0.639
0.475 0.6276
0.48 0.6152
0.485 0.6032
0.49 0.588
0.495 0.587
0.5 0.5676
0.505 0.5664
0.51 0.5448
0.515 0.5346
0.52 0.5042
0.525 0.5046
0.53 0.5
0.535 0.4904
0.54 0.4726
0.545 0.4682
0.55 0.462
0.555 0.4502
0.56 0.425
0.565 0.426
0.57 0.4136
0.575 0.3968
0.58 0.4042
0.585 0.3922
0.59 0.3652
0.595 0.3622
0.6 0.351
0.605 0.3582
0.61 0.336
0.615 0.3276
0.62 0.3316
0.625 0.3162
0.63 0.2962
0.635 0.2896
0.64 0.2832
0.645 0.2732
0.65 0.268
0.655 0.2646
0.66 0.257
0.665 0.2516
0.67 0.2426
0.675 0.233
0.68 0.2246
0.685 0.2274
0.705000000000001 0.2054
0.725000000000001 0.1968
0.745000000000001 0.1614
0.765000000000001 0.1436
0.785000000000001 0.1174
0.805000000000001 0.1108
0.825000000000001 0.109
0.845000000000001 0.094
0.865000000000001 0.082
0.885000000000001 0.077
0.905000000000001 0.0614
0.925000000000001 0.0618
0.945000000000001 0.0532
0.965000000000001 0.041
0.985000000000001 0.0386
};
\addlegendentry{A (0)}
\addlegendentry{B (0)}
\addlegendentry{C (0)}
\end{axis}

\end{tikzpicture}}
		\label{fig:dev9}
	\end{minipage}
	\hfill
	\begin{minipage}[t]{0.3\linewidth}
		\centering
		\subfigure[Mal-ROB 0.2]{\input{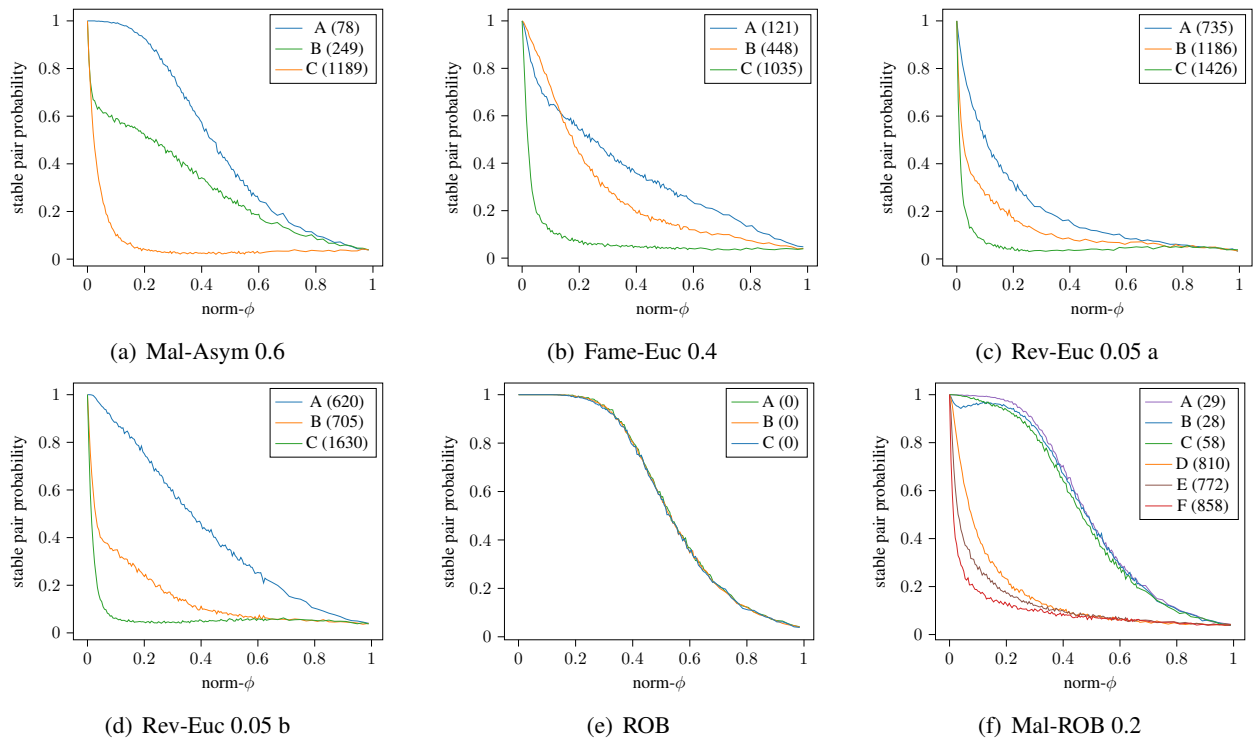}}
		\label{fig:dev12}
	\end{minipage}
	\hfill
	\caption{Average-case robustness of some exemplary stable pairs for six instances. The culture from which each instance has been sampled can be found in the caption. For each pair, its blocking score is written in brackets.}
	\label{fig:devpairs}
\end{figure*}

\subsubsection{Average stable pair robustness}
We plotted the distribution of the average stable pair robustness for all instances of the data set in \Cref{fig:avg-pairs-dist}. We can directly observe the following
The average stable pair robustness is much higher than the matching robustness of an instance. On average, around 50 random swaps per preference list are needed to make a pair unstable. 

This finding matches our intuition, since, while it is enough to create one blocking pair to make a matching unstable, a certain pair in this matching can still be stable in some other stable matching. Thus, it is much harder for a matching to be stable than for a single pair.

As in the matching setting, there are some instances from the Mal-Rob culture that are particularly robust. However, the difference to the ``normal'' instances is smaller in the pairs setting. For example, consider the stable pair robustness of the Mal-Asym instance depicted in \Cref{fig:devpairs}(a). The figure shows the most robust and least robust stable pair as well as a pair with an average robustness. The average stable pair robustness for this instance is 0.15. However, the stable pair probability decreases slowly for most stable pairs: For a norm-$\phi$ value of 0.4, the blue and green pairs have still a quite high stable pair probability. 

Still, the 50\%-stability-threshold is not a bad indicator of the average stable pair robustness, as can be seen in \Cref{fig:corr-avgpair-fif}. The Pearson Correlation Coefficient is 0.65.
\begin{figure}[t]
	\centering
\begin{tikzpicture}[scale=0.8]

\definecolor{color0}{rgb}{0.12156862745098,0.466666666666667,0.705882352941177}

\begin{axis}[
tick align=outside,
tick pos=left,
x grid style={white!69.0196078431373!black},
xlabel={average stable pair robustness},
xmin=-0.017118, xmax=0.571558,
xtick style={color=black},
y grid style={white!69.0196078431373!black},
ylabel={average man-opt matching robustness},
ymin=0, ymax=0.02,
ytick style={color=black}
]
\addplot [semithick, color0, mark=*, mark size=1, mark options={solid}, only marks]
table {%
0.00964 0.001
0.1308 0.0115
0.0138 0.0005
0.5448 0.2926
0.367910448 0.0162
0.362753623 0.2036
0.363970588 0.0062
0.493 0.2257
0.382121212 0.0453
0.356376812 0.0088
0.4944 0.2351
0.29 0.0035
0.320694444 0.0041
0.395079365 0.0135
0.321688312 0.0048
0.491 0.2265
0.336164384 0.0028
0.421724138 0.0171
0.352173913 0.2133
0.343287671 0.0049
0.362352941 0.0087
0.4992 0.2341
0.392380952 0.2185
0.365 0.0083
0.226373626 0.0052
0.214536082 0.0082
0.195728155 0.004
0.199357798 0.0033
0.181709402 0.0063
0.201568627 0.0071
0.181559633 0.0031
0.198113208 0.0075
0.238510638 0.0045
0.158778626 0.0038
0.204272727 0.0041
0.172936508 0.0029
0.158372093 0.0026
0.194554455 0.0058
0.215108696 0.0065
0.173716814 0.0053
0.220326087 0.0051
0.21 0.0054
0.21287037 0.0042
0.162255639 0.0025
0.157659574 0.0028
0.132903226 0.0039
0.14254717 0.0028
0.137118644 0.005
0.164736842 0.0034
0.144628099 0.0035
0.143538462 0.0024
0.14912 0.0028
0.143738318 0.0033
0.141982759 0.0047
0.127913043 0.0026
0.142936508 0.0033
0.142704918 0.0033
0.152641509 0.0032
0.144545455 0.0027
0.147837838 0.0046
0.160380952 0.0034
0.163823529 0.0032
0.13944 0.0041
0.136785714 0.0051
0.16030303 0.0035
0.117314815 0.0037
0.119459459 0.0051
0.151386139 0.0039
0.131810345 0.0026
0.124271845 0.0039
0.153780488 0.0024
0.162328767 0.0051
0.140722892 0.0024
0.135057471 0.0047
0.136936937 0.0043
0.134387755 0.003
0.115081967 0.0035
0.116826923 0.0022
0.136074766 0.0043
0.138461538 0.004
0.123474576 0.0034
0.144615385 0.0043
0.117596154 0.0023
0.124732143 0.0024
0.128247423 0.0023
0.138636364 0.0037
0.149506173 0.0038
0.125263158 0.0034
0.134565217 0.0052
0.113846154 0.0048
0.112685185 0.0027
0.142083333 0.0022
0.132268041 0.0025
0.163194444 0.0034
0.135684211 0.0027
0.127575758 0.0037
0.138953488 0.0028
0.111734694 0.0031
0.118085106 0.0038
0.129120879 0.0062
0.131555556 0.0037
0.130824742 0.0033
0.133255814 0.0038
0.133977273 0.0034
0.09606383 0.0028
0.099494949 0.0023
0.104047619 0.0022
0.100352941 0.0029
0.079836066 0.0027
0.111648352 0.0022
0.117361111 0.0037
0.095222222 0.0019
0.101558442 0.0022
0.107956989 0.0028
0.104222222 0.0023
0.099047619 0.0022
0.109772727 0.0033
0.102021277 0.0031
0.116666667 0.0028
0.095050505 0.0037
0.109873418 0.0024
0.114303797 0.0052
0.086699029 0.0054
0.094216867 0.0041
0.096527778 0.002
0.120923077 0.0026
0.104594595 0.0026
0.097375 0.0023
0.107058824 0.0025
0.098795181 0.003
0.108539326 0.0035
0.094246575 0.0025
0.084725275 0.0018
0.106621622 0.0027
0.094444444 0.0022
0.114626866 0.0026
0.089625 0.0022
0.097 0.0035
0.098315789 0.0032
0.101573034 0.0018
0.100108696 0.0038
0.112112676 0.0025
0.104606742 0.0032
0.101904762 0.0031
0.054333333 0.0029
0.050892857 0.0022
0.072 0.0027
0.061730769 0.0026
0.049649123 0.0018
0.055964912 0.0022
0.072962963 0.0038
0.0716 0.0026
0.064038462 0.0024
0.058461538 0.0023
0.0698 0.0029
0.064230769 0.0023
0.054385965 0.0019
0.0618 0.0025
0.056923077 0.0023
0.062407407 0.0037
0.046226415 0.0019
0.057894737 0.0023
0.0646 0.0029
0.060377358 0.0023
0.09245283 0.003
0.095614035 0.0031
0.098703704 0.0034
0.120555556 0.005
0.096 0.0025
0.114642857 0.0037
0.088103448 0.0056
0.120925926 0.0044
0.07828125 0.003
0.105535714 0.0036
0.1172 0.0047
0.095471698 0.0039
0.133 0.0064
0.0994 0.0028
0.083770492 0.0028
0.084642857 0.0037
0.096491228 0.0039
0.094920635 0.0024
0.1175 0.0025
0.101333333 0.0057
0.148333333 0.0045
0.111428571 0.004
0.1105 0.0028
0.126964286 0.0043
0.1564 0.0035
0.110322581 0.0044
0.120882353 0.0051
0.110597015 0.0047
0.1516 0.0052
0.159056604 0.006
0.102337662 0.0032
0.120175439 0.004
0.078196721 0.0031
0.11 0.0034
0.105846154 0.0028
0.129180328 0.0035
0.10171875 0.0049
0.121886792 0.0038
0.13557377 0.0035
0.113898305 0.0038
0.133611111 0.0033
0.143043478 0.0053
0.11 0.0034
0.139342105 0.0043
0.112921348 0.0031
0.12137931 0.0045
0.120210526 0.0043
0.14442623 0.0024
0.10266055 0.0028
0.122142857 0.0048
0.118571429 0.0035
0.13752809 0.0031
0.132962963 0.0035
0.108080808 0.0035
0.131641791 0.0034
0.119746835 0.0061
0.115555556 0.0047
0.154 0.0028
0.125316456 0.005
0.129538462 0.0028
0.065 0.0013
0.0676 0.0015
0.055 0.0011
0.0628 0.0014
0.0572 0.0014
0.0752 0.0012
0.0562 0.0011
0.0792 0.0017
0.0592 0.0012
0.0598 0.0013
0.0668 0.0015
0.052 0.0011
0.063 0.0012
0.0602 0.0012
0.069 0.0013
0.053 0.0012
0.0396 0.001
0.0696 0.0012
0.0564 0.0011
0.0708 0.0015
0.0598 0.0015
0.0682 0.0015
0.0714 0.002
0.0604 0.0014
0.0754 0.0014
0.0628 0.0013
0.0664 0.0015
0.072 0.0019
0.065 0.0014
0.0602 0.0016
0.0618 0.0012
0.0642 0.0018
0.061 0.0014
0.0804 0.0022
0.077 0.0016
0.0586 0.0015
0.0572 0.0013
0.067 0.0018
0.0668 0.0012
0.0626 0.0015
0.046111111 0.0013
0.0476 0.0021
0.04 0.0014
0.0418 0.0014
0.045 0.0016
0.030961538 0.0013
0.0482 0.0016
0.043 0.0017
0.0298 0.0013
0.0428 0.0014
0.0458 0.0016
0.0384 0.0013
0.0312 0.0014
0.0412 0.0015
0.0516 0.0019
0.050555556 0.0019
0.046 0.0015
0.021730769 0.0013
0.053846154 0.0016
0.0416 0.0021
0.057090909 0.0018
0.0586 0.0017
0.061034483 0.0015
0.059473684 0.0016
0.059454545 0.0019
0.048548387 0.0016
0.051698113 0.0016
0.059122807 0.0018
0.0628 0.0016
0.046818182 0.0016
0.073962264 0.0018
0.047307692 0.0017
0.083333333 0.002
0.063962264 0.0017
0.0836 0.0018
0.0684 0.0015
0.049807692 0.0014
0.065166667 0.0019
0.0578 0.0014
0.072 0.002
0.078545455 0.0023
0.0866 0.0023
0.058333333 0.0017
0.077758621 0.0023
0.068484848 0.002
0.081607143 0.0018
0.074423077 0.0023
0.072666667 0.0019
0.08265625 0.0026
0.0726 0.0015
0.0672 0.0015
0.072982456 0.0018
0.066296296 0.002
0.073508772 0.002
0.085614035 0.002
0.07328125 0.0016
0.063606557 0.0017
0.072307692 0.0016
0.065090909 0.0028
0.0882 0.0023
0.0512 0.0017
0.0532 0.0015
0.0534 0.0017
0.055 0.0014
0.0558 0.0015
0.0568 0.0016
0.0608 0.0016
0.0594 0.0019
0.0578 0.0024
0.0646 0.0018
0.0614 0.0016
0.0544 0.0018
0.062 0.0023
0.0594 0.0018
0.058 0.0021
0.0512 0.0019
0.0554 0.0019
0.0536 0.0017
0.06 0.0018
0.0548 0.0017
0.0552 0.002
0.054 0.002
0.0542 0.002
0.061 0.0027
0.0474 0.0014
0.0752 0.0027
0.07 0.0025
0.048 0.0018
0.0544 0.0017
0.058 0.0015
0.049 0.0016
0.0584 0.0015
0.0528 0.0019
0.0486 0.0018
0.0564 0.0022
0.054 0.0019
0.0608 0.0019
0.0766 0.0022
0.0654 0.0028
0.053 0.0018
0.084477612 0.0024
0.086111111 0.0029
0.07890411 0.0016
0.076545455 0.0018
0.092105263 0.003
0.084423077 0.0026
0.072121212 0.0025
0.06296875 0.0018
0.073793103 0.0015
0.075384615 0.0027
0.079454545 0.002
0.07 0.0021
0.072542373 0.0022
0.070178571 0.0021
0.078857143 0.0031
0.060961538 0.0019
0.058333333 0.002
0.058135593 0.0021
0.101803279 0.0041
0.077222222 0.0023
0.08825 0.0021
0.119152542 0.0028
0.088625 0.0038
0.121029412 0.0028
0.079180328 0.0028
0.084615385 0.0016
0.0921875 0.0033
0.083859649 0.0028
0.092923077 0.0024
0.094230769 0.0023
0.088947368 0.0026
0.095777778 0.0027
0.095961538 0.003
0.084146341 0.0027
0.11 0.0035
0.119310345 0.0028
0.106111111 0.0027
0.09875 0.0026
0.090588235 0.0031
0.105769231 0.0029
0.071153846 0.0019
0.0525 0.0014
0.081153846 0.0024
0.063076923 0.0017
0.053928571 0.0018
0.070769231 0.0021
0.060689655 0.0019
0.0706 0.0021
0.049230769 0.0013
0.064821429 0.0017
0.0725 0.0025
0.077 0.0022
0.05625 0.0016
0.0575 0.0021
0.059259259 0.0019
0.079423077 0.0017
0.060555556 0.0021
0.069322034 0.0024
0.068703704 0.0022
0.078888889 0.0017
0.083214286 0.0022
0.057941176 0.0019
0.054310345 0.0018
0.068035714 0.0027
0.085 0.0029
0.068666667 0.0027
0.1012 0.0023
0.088387097 0.0032
0.075972222 0.0029
0.071071429 0.0022
0.058125 0.0019
0.076190476 0.0026
0.090185185 0.0039
0.088545455 0.0019
0.077272727 0.002
0.073793103 0.0019
0.0832 0.0019
0.067719298 0.0022
0.096851852 0.0023
0.088846154 0.0019
0.075814978 0.0022
0.085035971 0.0023
0.081386861 0.0025
0.080069686 0.002
0.074473684 0.0018
0.076969697 0.0016
0.070280374 0.002
0.073614035 0.0023
0.067537879 0.002
0.080071429 0.0021
0.075480769 0.003
0.076071429 0.0024
0.072169492 0.0019
0.074595745 0.0022
0.080186567 0.0024
0.085727273 0.0029
0.076888889 0.0019
0.069655172 0.0026
0.068432056 0.0021
0.094801762 0.0021
0.126481481 0.0041
0.114553991 0.0028
0.1303125 0.0034
0.127675676 0.0032
0.11247191 0.0029
0.122060606 0.0027
0.112565445 0.0044
0.121497326 0.0029
0.11984252 0.0021
0.118216561 0.0038
0.121363636 0.0025
0.122820513 0.0028
0.121111111 0.0021
0.126666667 0.0057
0.138607595 0.0035
0.116627219 0.0022
0.117731959 0.0029
0.11155914 0.0038
0.129300699 0.0026
0.121 0.0031
0.149454545 0.005
0.125479452 0.0031
0.111633987 0.0034
0.159534884 0.0036
0.150347826 0.0035
0.123484848 0.0032
0.105724138 0.0023
0.142868852 0.0036
0.136831683 0.0033
0.133703704 0.006
0.130857143 0.0044
0.13212766 0.0044
0.126826923 0.003
0.123538462 0.0023
0.152410714 0.0026
0.12140625 0.002
0.1667 0.0036
0.15245614 0.0046
0.131908397 0.0056
0.129507042 0.0025
0.0372 0.002
0.056153846 0.0022
0.033584906 0.0013
0.041 0.002
0.0416 0.0017
0.045 0.0018
0.035 0.0016
0.042 0.002
0.0488 0.0018
0.043653846 0.0021
0.032307692 0.0015
0.039642857 0.0018
0.041153846 0.0018
0.041272727 0.0015
0.0454 0.0021
0.038076923 0.0016
0.0498 0.0022
0.045769231 0.002
0.036 0.0015
0.0344 0.0018
0.060769231 0.0027
0.066153846 0.0027
0.06 0.0031
0.0676 0.0026
0.0668 0.0024
0.062075472 0.0024
0.086346154 0.0027
0.064 0.0023
0.0616 0.0026
0.054107143 0.0025
0.055714286 0.0021
0.0798 0.0027
0.062727273 0.0023
0.056101695 0.002
0.059056604 0.003
0.066551724 0.0022
0.066346154 0.0031
0.069230769 0.0025
0.070384615 0.0027
0.054464286 0.0025
};
\end{axis}

\end{tikzpicture}
		\caption{Correlation between the 50\%-stability threshold and the average stable pair robustness for the instances of our data set. Instances with an extremely high robustness are left out.}
		\label{fig:corr-avgpair-fif}
\end{figure}
\begin{figure*}[t]
	\begin{minipage}[t]{0.49\linewidth}
		\centering
		\includegraphics[width=\textwidth]{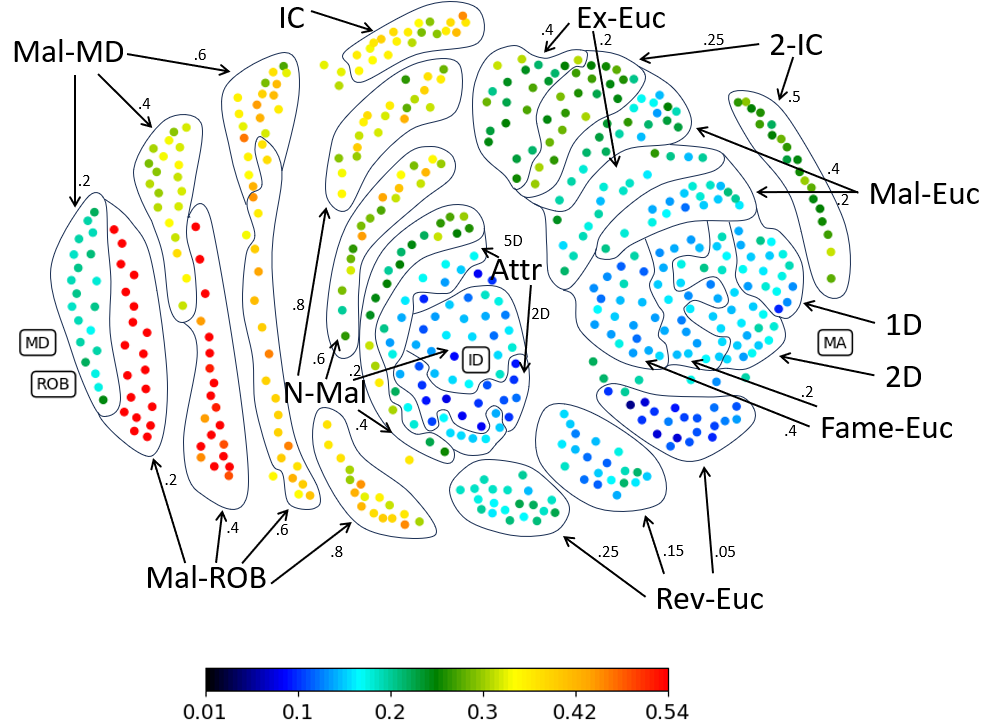}
		\caption{Average stable pair robustness for all instances of the data set.}
		\label{fig:avg-rob-pairs}
	\end{minipage}
	\hfill
	\begin{minipage}[t]{0.49\linewidth}
		\centering
		\includegraphics[width=\textwidth]{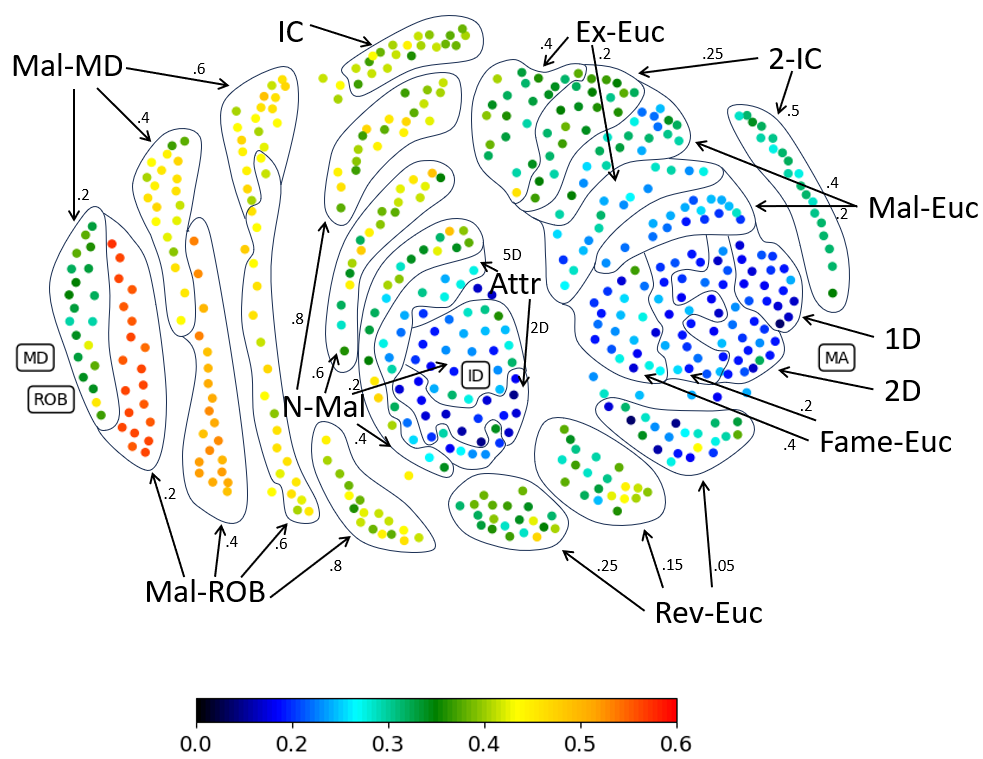}
		\caption{50\%-threshold of the most robust pair for each instance of the data set.}
		\label{fig:map-max-pairs}
	\end{minipage}
\end{figure*}

\subsubsection{Most robust stable pair}
First, one should note that it is not quite clear how to intuitively define the most robust stable pair. While we chose to look for the stable pair with the highest 50\%-threshold, \Cref{fig:devpairs}(b) shows that there could be other legitimate definitions that would yield other results. For our definition, the blue pair is the most robust, but if one is interested in smaller changes/norm-$\phi$ values, one could perfectly argue that the orange pair is the most robust. Generally, stable pairs do not behave as monotonously as stable matchings. In fact, the blue stable pair in \Cref{fig:devpairs}(f), after having a stable pair probability of 0.94 at norm-$\phi=0.04$, increases to 0.97 and only drops below 0.94 again at a norm-$\phi$ value greater 0.2. 

We now turn to \Cref{fig:map-max-pairs}, which depicts the most robust stable pair for all instances of the data set. One can observe that the most robust stable pair greatly differs from instance to instance. Unsurprisingly, pairs in the mallows-robust culture tend to be more robust. Generally, the right part of the map has small values for the most robust stable pair (see \Cref{fig:devpairs}(c) for an example), while the most robust stable pair for the instances on the left of the map has a high 50\%-threshold. An instance has usually a worse most robust stable pair if it is close to the ID or MA instance, as can be seen in \Cref{fig:map-max-pairs}.

The correlation to the matching-50\%-threshold is fairly low (the Pearson Correlation Coefficient being 0.3), among other things because of (extreme) outliers (see \Cref{fig:corr-maxpair-fif}). Thus, finding a robust stable pair in an instance does not imply that the instance admits robust stable matchings.

\begin{figure*}[t]
	\begin{minipage}[t]{0.49\linewidth}
		\centering
\begin{tikzpicture}[scale=0.8]

\definecolor{color0}{rgb}{0.12156862745098,0.466666666666667,0.705882352941177}

\begin{axis}[
tick align=outside,
tick pos=left,
x grid style={white!69.0196078431373!black},
xlabel={most stable pair robustness},
xmin=0.023, xmax=0.617,
xtick style={color=black},
y grid style={white!69.0196078431373!black},
ylabel={average man-opt matching robustness},
ymin=0, ymax=0.02,
ytick style={color=black}
]
\addplot [semithick, color0, mark=*, mark size=1, mark options={solid}, only marks]
table {%
0.05 0.001
0.18 0.0115
0.05 0.0005
0.59 0.2926
0.54 0.0162
0.55 0.2036
0.55 0.0062
0.54 0.2257
0.54 0.0453
0.55 0.0088
0.55 0.2351
0.56 0.0035
0.54 0.0041
0.55 0.0135
0.55 0.0048
0.54 0.2265
0.54 0.0028
0.55 0.0171
0.55 0.2133
0.54 0.0049
0.55 0.0087
0.55 0.2341
0.53 0.2185
0.55 0.0083
0.48 0.0052
0.48 0.0082
0.49 0.004
0.51 0.0033
0.49 0.0063
0.51 0.0071
0.49 0.0031
0.47 0.0075
0.52 0.0045
0.52 0.0038
0.5 0.0041
0.47 0.0029
0.47 0.0026
0.49 0.0058
0.49 0.0065
0.48 0.0053
0.51 0.0051
0.5 0.0054
0.51 0.0042
0.5 0.0025
0.44 0.0028
0.41 0.0039
0.37 0.0028
0.43 0.005
0.45 0.0034
0.39 0.0035
0.4 0.0024
0.44 0.0028
0.4 0.0033
0.41 0.0047
0.38 0.0026
0.44 0.0033
0.44 0.0033
0.44 0.0032
0.41 0.0027
0.43 0.0046
0.44 0.0034
0.44 0.0032
0.39 0.0041
0.41 0.0051
0.36 0.0035
0.39 0.0037
0.39 0.0051
0.39 0.0039
0.39 0.0026
0.4 0.0039
0.33 0.0024
0.44 0.0051
0.35 0.0024
0.41 0.0047
0.37 0.0043
0.38 0.003
0.41 0.0035
0.36 0.0022
0.42 0.0043
0.36 0.004
0.39 0.0034
0.42 0.0043
0.38 0.0023
0.41 0.0024
0.39 0.0023
0.38 0.0037
0.36 0.0038
0.34 0.0034
0.38 0.0052
0.39 0.0048
0.37 0.0027
0.36 0.0022
0.39 0.0025
0.36 0.0034
0.41 0.0027
0.39 0.0037
0.35 0.0028
0.38 0.0031
0.33 0.0038
0.34 0.0062
0.39 0.0037
0.39 0.0033
0.36 0.0038
0.42 0.0034
0.34 0.0028
0.35 0.0023
0.32 0.0022
0.31 0.0029
0.29 0.0027
0.32 0.0022
0.31 0.0037
0.33 0.0019
0.26 0.0022
0.32 0.0028
0.29 0.0023
0.34 0.0022
0.28 0.0033
0.36 0.0031
0.32 0.0028
0.27 0.0037
0.34 0.0024
0.34 0.0052
0.33 0.0054
0.28 0.0041
0.28 0.002
0.28 0.0026
0.28 0.0026
0.31 0.0023
0.32 0.0025
0.28 0.003
0.28 0.0035
0.27 0.0025
0.27 0.0018
0.25 0.0027
0.24 0.0022
0.28 0.0026
0.26 0.0022
0.24 0.0035
0.28 0.0032
0.29 0.0018
0.29 0.0038
0.28 0.0025
0.26 0.0032
0.25 0.0031
0.15 0.0029
0.21 0.0022
0.27 0.0027
0.28 0.0026
0.25 0.0018
0.21 0.0022
0.21 0.0038
0.28 0.0026
0.19 0.0024
0.2 0.0023
0.33 0.0029
0.24 0.0023
0.14 0.0019
0.19 0.0025
0.16 0.0023
0.19 0.0037
0.15 0.0019
0.2 0.0023
0.24 0.0029
0.18 0.0023
0.3 0.003
0.31 0.0031
0.31 0.0034
0.4 0.005
0.37 0.0025
0.37 0.0037
0.23 0.0056
0.32 0.0044
0.38 0.003
0.47 0.0036
0.38 0.0047
0.31 0.0039
0.45 0.0064
0.35 0.0028
0.25 0.0028
0.31 0.0037
0.22 0.0039
0.29 0.0024
0.43 0.0025
0.28 0.0057
0.39 0.0045
0.42 0.004
0.43 0.0028
0.4 0.0043
0.36 0.0035
0.29 0.0044
0.35 0.0051
0.29 0.0047
0.4 0.0052
0.46 0.006
0.38 0.0032
0.39 0.004
0.25 0.0031
0.31 0.0034
0.37 0.0028
0.47 0.0035
0.33 0.0049
0.42 0.0038
0.42 0.0035
0.32 0.0038
0.43 0.0033
0.39 0.0053
0.39 0.0034
0.45 0.0043
0.42 0.0031
0.35 0.0045
0.38 0.0043
0.37 0.0024
0.34 0.0028
0.42 0.0048
0.35 0.0035
0.37 0.0031
0.38 0.0035
0.45 0.0035
0.34 0.0034
0.4 0.0061
0.34 0.0047
0.42 0.0028
0.44 0.005
0.35 0.0028
0.16 0.0013
0.14 0.0015
0.17 0.0011
0.14 0.0014
0.14 0.0014
0.2 0.0012
0.13 0.0011
0.17 0.0017
0.12 0.0012
0.13 0.0013
0.13 0.0015
0.12 0.0011
0.14 0.0012
0.16 0.0012
0.15 0.0013
0.12 0.0012
0.09 0.001
0.15 0.0012
0.12 0.0011
0.15 0.0015
0.21 0.0015
0.16 0.0015
0.15 0.002
0.15 0.0014
0.19 0.0014
0.16 0.0013
0.15 0.0015
0.15 0.0019
0.17 0.0014
0.14 0.0016
0.19 0.0012
0.15 0.0018
0.14 0.0014
0.18 0.0022
0.19 0.0016
0.18 0.0015
0.15 0.0013
0.12 0.0018
0.16 0.0012
0.21 0.0015
0.3 0.0013
0.24 0.0021
0.16 0.0014
0.13 0.0014
0.41 0.0016
0.15 0.0013
0.28 0.0016
0.23 0.0017
0.11 0.0013
0.24 0.0014
0.25 0.0016
0.1 0.0013
0.25 0.0014
0.31 0.0015
0.19 0.0019
0.26 0.0019
0.14 0.0015
0.09 0.0013
0.35 0.0016
0.19 0.0021
0.33 0.0018
0.26 0.0017
0.35 0.0015
0.3 0.0016
0.31 0.0019
0.29 0.0016
0.27 0.0016
0.37 0.0018
0.25 0.0016
0.27 0.0016
0.42 0.0018
0.21 0.0017
0.41 0.002
0.38 0.0017
0.39 0.0018
0.35 0.0015
0.25 0.0014
0.4 0.0019
0.34 0.0014
0.35 0.002
0.36 0.0023
0.31 0.0023
0.31 0.0017
0.3 0.0023
0.29 0.002
0.34 0.0018
0.36 0.0023
0.35 0.0019
0.45 0.0026
0.34 0.0015
0.25 0.0015
0.36 0.0018
0.37 0.002
0.28 0.002
0.4 0.002
0.28 0.0016
0.33 0.0017
0.32 0.0016
0.25 0.0028
0.38 0.0023
0.14 0.0017
0.21 0.0015
0.16 0.0017
0.15 0.0014
0.2 0.0015
0.17 0.0016
0.15 0.0016
0.29 0.0019
0.15 0.0024
0.17 0.0018
0.15 0.0016
0.18 0.0018
0.15 0.0023
0.15 0.0018
0.22 0.0021
0.15 0.0019
0.15 0.0019
0.17 0.0017
0.21 0.0018
0.16 0.0017
0.17 0.002
0.13 0.002
0.13 0.002
0.23 0.0027
0.13 0.0014
0.22 0.0027
0.24 0.0025
0.18 0.0018
0.15 0.0017
0.21 0.0015
0.14 0.0016
0.24 0.0015
0.16 0.0019
0.15 0.0018
0.24 0.0022
0.21 0.0019
0.19 0.0019
0.34 0.0022
0.18 0.0028
0.16 0.0018
0.25 0.0024
0.19 0.0029
0.28 0.0016
0.21 0.0018
0.22 0.003
0.23 0.0026
0.19 0.0025
0.19 0.0018
0.25 0.0015
0.21 0.0027
0.26 0.002
0.19 0.0021
0.23 0.0022
0.2 0.0021
0.24 0.0031
0.16 0.0019
0.28 0.002
0.19 0.0021
0.25 0.0041
0.26 0.0023
0.33 0.0021
0.35 0.0028
0.36 0.0038
0.38 0.0028
0.31 0.0028
0.31 0.0016
0.28 0.0033
0.33 0.0028
0.29 0.0024
0.29 0.0023
0.26 0.0026
0.28 0.0027
0.3 0.003
0.31 0.0027
0.37 0.0035
0.3 0.0028
0.26 0.0027
0.44 0.0026
0.31 0.0031
0.3 0.0029
0.22 0.0019
0.14 0.0014
0.21 0.0024
0.16 0.0017
0.2 0.0018
0.18 0.0021
0.2 0.0019
0.17 0.0021
0.15 0.0013
0.18 0.0017
0.21 0.0025
0.2 0.0022
0.2 0.0016
0.2 0.0021
0.16 0.0019
0.19 0.0017
0.16 0.0021
0.2 0.0024
0.18 0.0022
0.25 0.0017
0.28 0.0022
0.21 0.0019
0.18 0.0018
0.18 0.0027
0.26 0.0029
0.24 0.0027
0.28 0.0023
0.26 0.0032
0.3 0.0029
0.19 0.0022
0.2 0.0019
0.32 0.0026
0.25 0.0039
0.29 0.0019
0.26 0.002
0.29 0.0019
0.22 0.0019
0.18 0.0022
0.31 0.0023
0.27 0.0019
0.35 0.0022
0.3 0.0023
0.3 0.0025
0.3 0.002
0.33 0.0018
0.35 0.0016
0.29 0.002
0.29 0.0023
0.4 0.002
0.32 0.0021
0.32 0.003
0.3 0.0024
0.26 0.0019
0.31 0.0022
0.26 0.0024
0.31 0.0029
0.35 0.0019
0.31 0.0026
0.43 0.0021
0.34 0.0021
0.43 0.0041
0.45 0.0028
0.44 0.0034
0.4 0.0032
0.42 0.0029
0.45 0.0027
0.34 0.0044
0.39 0.0029
0.44 0.0021
0.44 0.0038
0.35 0.0025
0.4 0.0028
0.37 0.0021
0.42 0.0057
0.44 0.0035
0.41 0.0022
0.41 0.0029
0.38 0.0038
0.43 0.0026
0.41 0.0031
0.44 0.005
0.46 0.0031
0.47 0.0034
0.41 0.0036
0.44 0.0035
0.45 0.0032
0.44 0.0023
0.44 0.0036
0.39 0.0033
0.41 0.006
0.38 0.0044
0.41 0.0044
0.45 0.003
0.45 0.0023
0.38 0.0026
0.4 0.002
0.42 0.0036
0.38 0.0046
0.42 0.0056
0.41 0.0025
0.13 0.002
0.13 0.0022
0.12 0.0013
0.12 0.002
0.14 0.0017
0.13 0.0018
0.11 0.0016
0.12 0.002
0.15 0.0018
0.16 0.0021
0.12 0.0015
0.16 0.0018
0.1 0.0018
0.16 0.0015
0.15 0.0021
0.14 0.0016
0.13 0.0022
0.14 0.002
0.15 0.0015
0.1 0.0018
0.24 0.0027
0.26 0.0027
0.19 0.0031
0.2 0.0026
0.27 0.0024
0.14 0.0024
0.29 0.0027
0.2 0.0023
0.23 0.0026
0.19 0.0025
0.18 0.0021
0.24 0.0027
0.19 0.0023
0.31 0.002
0.16 0.003
0.24 0.0022
0.18 0.0031
0.26 0.0025
0.22 0.0027
0.24 0.0025
};
\end{axis}

\end{tikzpicture}
		\caption{Correlation between the 50\%-stability threshold and the most robust stable pair robustness for the instances of our data set. Instances with an extremely high robustness are left out.}
		\label{fig:corr-maxpair-fif}
	\end{minipage}
	\hfill
	\begin{minipage}[t]{0.49\linewidth}
		\centering
		\includegraphics[width=\textwidth]{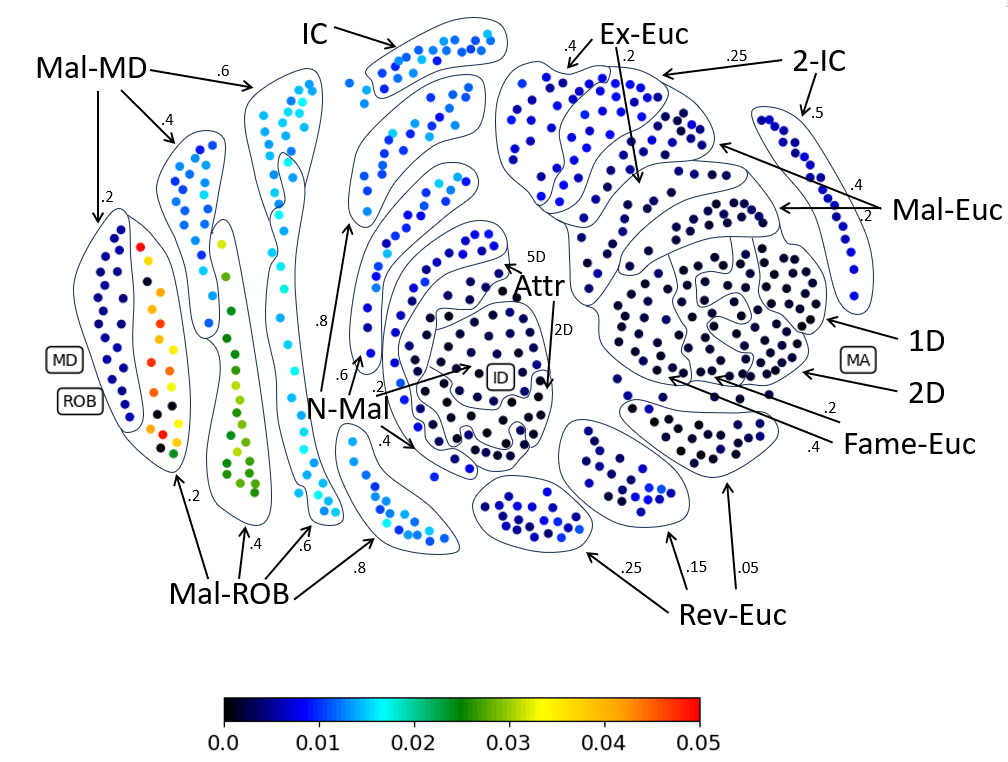}
		\caption{Robustness variance for each instance of the data set.}
		\label{fig:map-var-pairs}
	\end{minipage}
	
\end{figure*}

\subsubsection{Least robust stable pair}
Most instances have stable pairs that are very unrobust. Thus, the least robust stable pair is not a very meaningful measure. For the same reason, we do not examine the difference between the most and least robust stable pair, as it mostly almost corresponds to the value of the most robust pair. Only four non-special instances have a least robust stable pair whose threshold is significantly larger than zero, and all of them are from the Mal-ROB culture.

\subsubsection{Robustness variance}
With this measure, we try to quantify by how much the robustness of two different stable pairs can differ. As can be seen in \Cref{fig:map-var-pairs}, the left side of the map again has clearly greater values. Even though the ROB instance itself has a very low robustness variance (see \Cref{fig:devpairs}(e)), most Mal-ROB instances have a very high robustness variance. More precisely, the stable pairs of the Mal-ROB instances that are also stable in the ROB instance are very robst, while all other stable pairs are not robust. This is illustrated in \Cref{fig:dev12}. For instances of other cultures, we often encounter single stable pairs that are much more robust than any other stable pair of the instance, as in \Cref{fig:devpairs}(d). This results in a greater robustness variance and also shows that it is sensible to regard the average and most robust stable pairs separately. 

One should also notice that the map of the robustness variance is very similar to the map of the number of stable pairs in the initial instance, suggesting that if an instance has only few stable pairs, they are similarly robust.

\subsubsection{Unstable Pairs}
We now want to examine the development of the stable pair probability of unstable pairs. While clearly any unstable pair has a stable pair probabilty of 0 for norm-$\phi=0$, we expect some unstable pairs to have a much higher stable pair probability (for some norm-$\phi$ values) than other stable pairs (e.g. because of a higher mutual agreement of the agents). Contrary to unstable matchings, we observe that such unstable pairs with high stable pair probability exist and are not uncommon:
Pairs that are initially unstable can have a quite high stable pair probability for small norm-$\phi$ values. 

\Cref{fig:unstable_pairs} shows some exemplary unstable pairs for two instances. We show the unstable pairs with the highest, lowest and an average stable pair probability. For most instances, there is some unstable pair that has a stable pair probability around 0.5 for norm-$\phi$ values close to 0.2. However, the stable pair probabilty of most unstable pairs does not increase significantly and behaves similar to the blue pairs in \Cref{fig:unstable_pairs}(a,b).

\begin{figure*}[t]
	\begin{minipage}[t]{0.45\linewidth}
		\centering
		\subfigure[IC]{
\begin{tikzpicture}[scale=0.8]

\definecolor{color0}{rgb}{0.12156862745098,0.466666666666667,0.705882352941177}
\definecolor{color1}{rgb}{1,0.498039215686275,0.0549019607843137}
\definecolor{color2}{rgb}{0.172549019607843,0.627450980392157,0.172549019607843}

\begin{axis}[
tick align=outside,
tick pos=left,
x grid style={white!69.0196078431373!black},
xmin=-0.05, xmax=1.05,
xtick style={color=black},
y grid style={white!69.0196078431373!black},
ymin=-0.02775, ymax=1.02,
ytick style={color=black},
xlabel={norm-$\phi$},
ylabel={stable pair probability}
]
\addplot [semithick, color0]
table {%
0 0
0.01 0
0.02 0
0.03 0
0.04 0
0.05 0
0.06 0
0.07 0
0.08 0
0.09 0
0.1 0
0.11 0
0.12 0
0.13 0
0.14 0.002
0.15 0.002
0.16 0
0.17 0
0.18 0.001
0.19 0.002
0.2 0.001
0.21 0.003
0.22 0.002
0.23 0.005
0.24 0.001
0.25 0.007
0.26 0.003
0.27 0.003
0.28 0.004
0.29 0.009
0.3 0.005
0.31 0.002
0.32 0.01
0.33 0.005
0.34 0.012
0.35 0.012
0.36 0.006
0.37 0.012
0.38 0.016
0.39 0.017
0.4 0.01
0.41 0.017
0.42 0.012
0.43 0.02
0.44 0.022
0.45 0.015
0.46 0.024
0.47 0.03
0.48 0.029
0.49 0.028
0.5 0.024
0.51 0.023
0.52 0.027
0.53 0.026
0.54 0.026
0.55 0.027
0.56 0.019
0.57 0.038
0.58 0.031
0.59 0.041
0.6 0.036
0.61 0.026
0.62 0.026
0.63 0.037
0.64 0.033
0.65 0.034
0.66 0.045
0.67 0.028
0.68 0.033
0.69 0.042
0.7 0.044
0.71 0.03
0.72 0.041
0.73 0.047
0.74 0.042
0.75 0.056
0.76 0.034
0.77 0.041
0.78 0.043
0.79 0.04
0.8 0.043
0.81 0.037
0.820000000000001 0.034
0.830000000000001 0.038
0.840000000000001 0.049
0.850000000000001 0.038
0.860000000000001 0.059
0.870000000000001 0.037
0.880000000000001 0.045
0.890000000000001 0.035
0.900000000000001 0.05
0.910000000000001 0.039
0.920000000000001 0.034
0.930000000000001 0.047
0.940000000000001 0.037
0.950000000000001 0.04
0.960000000000001 0.037
0.970000000000001 0.042
0.980000000000001 0.051
0.990000000000001 0.056
1 0.042
};
\addplot [semithick, color1]
table {%
0 0
0.01 0.037
0.02 0.114
0.03 0.13
0.04 0.158
0.05 0.165
0.06 0.196
0.07 0.218
0.08 0.233
0.09 0.236
0.1 0.26
0.11 0.254
0.12 0.281
0.13 0.252
0.14 0.249
0.15 0.288
0.16 0.281
0.17 0.281
0.18 0.299
0.19 0.287
0.2 0.287
0.21 0.273
0.22 0.273
0.23 0.283
0.24 0.265
0.25 0.243
0.26 0.286
0.27 0.249
0.28 0.29
0.29 0.261
0.3 0.241
0.31 0.24
0.32 0.264
0.33 0.241
0.34 0.237
0.35 0.236
0.36 0.238
0.37 0.225
0.38 0.241
0.39 0.209
0.4 0.213
0.41 0.223
0.42 0.195
0.43 0.222
0.44 0.183
0.45 0.208
0.46 0.178
0.47 0.212
0.48 0.196
0.49 0.183
0.5 0.202
0.51 0.194
0.52 0.187
0.53 0.168
0.54 0.181
0.55 0.169
0.56 0.15
0.57 0.141
0.58 0.161
0.59 0.148
0.6 0.141
0.61 0.151
0.62 0.159
0.63 0.146
0.64 0.136
0.65 0.126
0.66 0.122
0.67 0.141
0.68 0.13
0.69 0.128
0.7 0.098
0.71 0.106
0.72 0.105
0.73 0.096
0.74 0.082
0.75 0.105
0.76 0.085
0.77 0.101
0.78 0.081
0.79 0.095
0.8 0.078
0.81 0.091
0.820000000000001 0.071
0.830000000000001 0.073
0.840000000000001 0.078
0.850000000000001 0.069
0.860000000000001 0.052
0.870000000000001 0.071
0.880000000000001 0.06
0.890000000000001 0.036
0.900000000000001 0.048
0.910000000000001 0.054
0.920000000000001 0.054
0.930000000000001 0.056
0.940000000000001 0.059
0.950000000000001 0.047
0.960000000000001 0.041
0.970000000000001 0.041
0.980000000000001 0.039
0.990000000000001 0.03
1 0.057
};
\addplot [semithick, color2]
table {%
0 0
0.01 0.109
0.02 0.227
0.03 0.308
0.04 0.353
0.05 0.433
0.06 0.453
0.07 0.48
0.08 0.511
0.09 0.5
0.1 0.525
0.11 0.498
0.12 0.538
0.13 0.506
0.14 0.517
0.15 0.551
0.16 0.526
0.17 0.555
0.18 0.475
0.19 0.5
0.2 0.494
0.21 0.48
0.22 0.45
0.23 0.485
0.24 0.445
0.25 0.496
0.26 0.427
0.27 0.403
0.28 0.423
0.29 0.425
0.3 0.418
0.31 0.413
0.32 0.439
0.33 0.382
0.34 0.416
0.35 0.384
0.36 0.356
0.37 0.343
0.38 0.343
0.39 0.328
0.4 0.335
0.41 0.348
0.42 0.33
0.43 0.297
0.44 0.295
0.45 0.287
0.46 0.282
0.47 0.283
0.48 0.267
0.49 0.285
0.5 0.265
0.51 0.23
0.52 0.231
0.53 0.233
0.54 0.213
0.55 0.239
0.56 0.191
0.57 0.206
0.58 0.189
0.59 0.195
0.6 0.189
0.61 0.186
0.62 0.2
0.63 0.18
0.64 0.184
0.65 0.168
0.66 0.165
0.67 0.139
0.68 0.144
0.69 0.128
0.7 0.131
0.71 0.129
0.72 0.12
0.73 0.152
0.74 0.119
0.75 0.117
0.76 0.11
0.77 0.094
0.78 0.09
0.79 0.104
0.8 0.085
0.81 0.078
0.820000000000001 0.078
0.830000000000001 0.08
0.840000000000001 0.066
0.850000000000001 0.08
0.860000000000001 0.073
0.870000000000001 0.072
0.880000000000001 0.061
0.890000000000001 0.052
0.900000000000001 0.063
0.910000000000001 0.069
0.920000000000001 0.061
0.930000000000001 0.06
0.940000000000001 0.05
0.950000000000001 0.047
0.960000000000001 0.05
0.970000000000001 0.047
0.980000000000001 0.047
0.990000000000001 0.036
1 0.036
};
\end{axis}

\end{tikzpicture}}
		\label{fig:dev16}
	\end{minipage}
	\hfill
	\begin{minipage}[t]{0.45\linewidth}
		\centering
		\subfigure[2d-rev 0.05]{
\begin{tikzpicture}[scale=0.8]

\definecolor{color0}{rgb}{0.12156862745098,0.466666666666667,0.705882352941177}
\definecolor{color1}{rgb}{1,0.498039215686275,0.0549019607843137}
\definecolor{color2}{rgb}{0.172549019607843,0.627450980392157,0.172549019607843}

\begin{axis}[
tick align=outside,
tick pos=left,
x grid style={white!69.0196078431373!black},
xmin=-0.05, xmax=1.05,
xtick style={color=black},
y grid style={white!69.0196078431373!black},
ymin=-0.0166, ymax=1.02,
ytick style={color=black},
xlabel={norm-$\phi$},
ylabel={stable pair probability}
]
\addplot [semithick, color0]
table {%
0 0
0.01 0
0.02 0
0.03 0
0.04 0
0.05 0
0.06 0
0.07 0
0.08 0
0.09 0
0.1 0
0.11 0
0.12 0
0.13 0
0.14 0
0.15 0
0.16 0.001
0.17 0
0.18 0
0.19 0
0.2 0
0.21 0
0.22 0.001
0.23 0.001
0.24 0.002
0.25 0
0.26 0
0.27 0.001
0.28 0.001
0.29 0
0.3 0.003
0.31 0.002
0.32 0
0.33 0.002
0.34 0.003
0.35 0.003
0.36 0
0.37 0.003
0.38 0.002
0.39 0.005
0.4 0.007
0.41 0.003
0.42 0.003
0.43 0.004
0.44 0.005
0.45 0.005
0.46 0.008
0.47 0.012
0.48 0.01
0.49 0.012
0.5 0.007
0.51 0.01
0.52 0.007
0.53 0.014
0.54 0.014
0.55 0.022
0.56 0.017
0.57 0.017
0.58 0.016
0.59 0.017
0.6 0.024
0.61 0.032
0.62 0.031
0.63 0.031
0.64 0.03
0.65 0.024
0.66 0.024
0.67 0.025
0.68 0.021
0.69 0.034
0.7 0.034
0.71 0.037
0.72 0.032
0.73 0.035
0.74 0.035
0.75 0.042
0.76 0.035
0.77 0.036
0.78 0.038
0.79 0.048
0.8 0.048
0.81 0.04
0.820000000000001 0.054
0.830000000000001 0.037
0.840000000000001 0.059
0.850000000000001 0.046
0.860000000000001 0.038
0.870000000000001 0.045
0.880000000000001 0.045
0.890000000000001 0.05
0.900000000000001 0.049
0.910000000000001 0.05
0.920000000000001 0.041
0.930000000000001 0.042
0.940000000000001 0.069
0.950000000000001 0.03
0.960000000000001 0.035
0.970000000000001 0.044
0.980000000000001 0.04
0.990000000000001 0.052
1 0.04
};
\addplot [semithick, color1]
table {%
0 0
0.01 0
0.02 0.008
0.03 0.014
0.04 0.024
0.05 0.031
0.06 0.037
0.07 0.061
0.08 0.061
0.09 0.073
0.1 0.077
0.11 0.108
0.12 0.105
0.13 0.108
0.14 0.114
0.15 0.124
0.16 0.126
0.17 0.125
0.18 0.122
0.19 0.109
0.2 0.117
0.21 0.127
0.22 0.101
0.23 0.125
0.24 0.115
0.25 0.117
0.26 0.103
0.27 0.113
0.28 0.105
0.29 0.122
0.3 0.099
0.31 0.091
0.32 0.098
0.33 0.088
0.34 0.063
0.35 0.079
0.36 0.092
0.37 0.071
0.38 0.081
0.39 0.083
0.4 0.06
0.41 0.067
0.42 0.058
0.43 0.066
0.44 0.062
0.45 0.078
0.46 0.049
0.47 0.054
0.48 0.043
0.49 0.044
0.5 0.053
0.51 0.041
0.52 0.055
0.53 0.035
0.54 0.039
0.55 0.042
0.56 0.033
0.57 0.036
0.58 0.036
0.59 0.029
0.6 0.032
0.61 0.042
0.62 0.033
0.63 0.026
0.64 0.036
0.65 0.026
0.66 0.036
0.67 0.034
0.68 0.032
0.69 0.034
0.7 0.033
0.71 0.024
0.72 0.025
0.73 0.028
0.74 0.025
0.75 0.044
0.76 0.029
0.77 0.031
0.78 0.037
0.79 0.036
0.8 0.042
0.81 0.029
0.820000000000001 0.038
0.830000000000001 0.033
0.840000000000001 0.046
0.850000000000001 0.024
0.860000000000001 0.027
0.870000000000001 0.039
0.880000000000001 0.039
0.890000000000001 0.043
0.900000000000001 0.038
0.910000000000001 0.037
0.920000000000001 0.039
0.930000000000001 0.047
0.940000000000001 0.023
0.950000000000001 0.043
0.960000000000001 0.043
0.970000000000001 0.032
0.980000000000001 0.031
0.990000000000001 0.036
1 0.044
};
\addplot [semithick, color2]
table {%
0 0
0.01 0.153
0.02 0.244
0.03 0.283
0.04 0.256
0.05 0.286
0.06 0.312
0.07 0.271
0.08 0.293
0.09 0.287
0.1 0.304
0.11 0.303
0.12 0.299
0.13 0.331
0.14 0.319
0.15 0.321
0.16 0.307
0.17 0.299
0.18 0.332
0.19 0.309
0.2 0.306
0.21 0.317
0.22 0.312
0.23 0.311
0.24 0.294
0.25 0.262
0.26 0.296
0.27 0.273
0.28 0.309
0.29 0.287
0.3 0.277
0.31 0.269
0.32 0.268
0.33 0.265
0.34 0.264
0.35 0.241
0.36 0.232
0.37 0.228
0.38 0.218
0.39 0.253
0.4 0.191
0.41 0.245
0.42 0.198
0.43 0.206
0.44 0.173
0.45 0.193
0.46 0.189
0.47 0.184
0.48 0.193
0.49 0.164
0.5 0.169
0.51 0.158
0.52 0.14
0.53 0.166
0.54 0.131
0.55 0.167
0.56 0.13
0.57 0.138
0.58 0.123
0.59 0.136
0.6 0.119
0.61 0.114
0.62 0.118
0.63 0.121
0.64 0.101
0.65 0.1
0.66 0.094
0.67 0.107
0.68 0.102
0.69 0.108
0.7 0.096
0.71 0.086
0.72 0.081
0.73 0.09
0.74 0.079
0.75 0.078
0.76 0.087
0.77 0.065
0.78 0.061
0.79 0.074
0.8 0.078
0.81 0.067
0.820000000000001 0.065
0.830000000000001 0.057
0.840000000000001 0.051
0.850000000000001 0.059
0.860000000000001 0.075
0.870000000000001 0.053
0.880000000000001 0.054
0.890000000000001 0.048
0.900000000000001 0.062
0.910000000000001 0.046
0.920000000000001 0.036
0.930000000000001 0.042
0.940000000000001 0.036
0.950000000000001 0.042
0.960000000000001 0.046
0.970000000000001 0.037
0.980000000000001 0.034
0.990000000000001 0.041
1 0.029
};
\end{axis}

\end{tikzpicture}}
		\label{fig:dev17}
	\end{minipage}
	\caption{Average-case robustness of some exemplary initially unstable pairs for two different instances. Each line corresponds to one stable pair.}\label{fig:unstable_pairs}
\end{figure*}
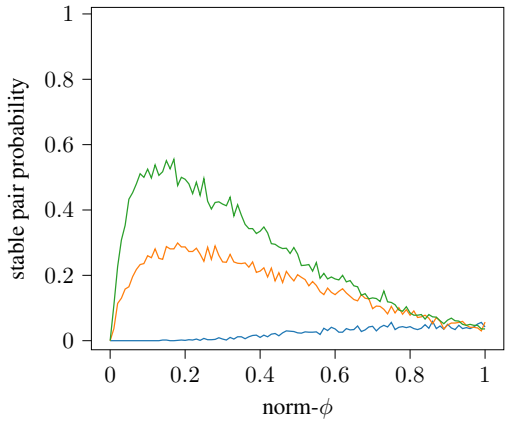
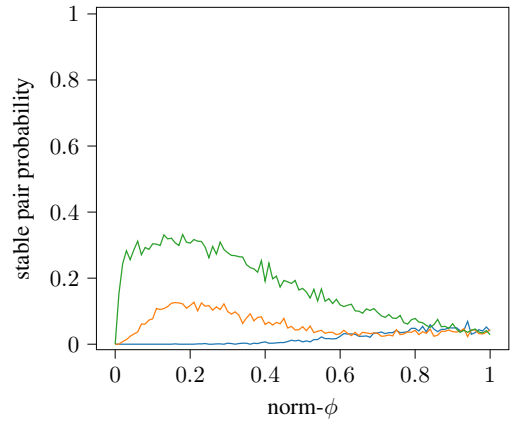

\end{document}